\documentclass[a4paper,12pt]{article}
\usepackage[latin1]{inputenc}
\usepackage[british]{babel}
\usepackage{natbib}
\usepackage{graphicx}
\usepackage[pdftex]{lscape}
\usepackage{verbatim}
\usepackage{amsmath}
\usepackage{amsthm}
\usepackage{amssymb}
\usepackage{amsfonts}
\usepackage{setspace}
\usepackage{tabularx,rotate,multicol,caption}
\captionsetup{skip=0pt,justification=justified,singlelinecheck=false}
\usepackage{arydshln}
\usepackage{latexsym}
\usepackage{a4}
\usepackage{bbm}
\usepackage{mathrsfs}
\usepackage{natbib}
\usepackage{fixmath}
\usepackage{tikz}
\usepackage{makeidx}
\usepackage{ushort}
\usepackage{enumitem}
\usepackage{lscape}
\usepackage{float}
\usepackage{appendix}
\usepackage[top=2.1cm,bottom=2.1cm,left=2.1cm,right=2.1cm]{geometry} 
 
\setcounter{MaxMatrixCols}{10}
\newcommand{\rk}{\mathrm{rank}\,}
\newcommand{\tr}{\mathrm{tr}\,}
\newcommand{\ud}{\mathrm{d}}
\newcommand{\ve}{\mathrm{vec}\,}
\newcommand{\ef}{\mathrm{exp}\,}
\newcommand{\pift}{\tilde{\pi}_\phi}

\newcommand{\Chol}{\Omega_{1,tr}}
\newcommand{\Choli}{\Omega_{1,tr}^{-1}}
\newcommand{\opf}{\omega_{p,1}^{2}}
\newcommand{\ops}{\omega_{p,2}^{2}}
\newcommand{\oqf}{\omega_{q,1}^{2}}
\newcommand{\oqs}{\omega_{q,2}^{2}}
\newcommand{\opqf}{\omega_{pq,1}}
\newcommand{\opqs}{\omega_{pq,2}}
\newcommand{\fb}{\phi_{B}}

\newcommand{\Ai}{A_{0}^{-1}}
\newcommand{\Ait}{A_{0}^{-1\prime}}
\newcommand{\e}{\varepsilon}

\newcommand{\Qa}{\mathcal{Q}(\phi)}

\newcommand{\Qr}{\ensuremath\mathcal{Q}(\phi\,|F,S)}
\newcommand{\IS}{IS(\phi\,|F,S)}
\newcommand{\ja}{j^\ast}
\newcommand{\li}{\lambda_i}
\newcommand{\Qli}{Q(\lambda_i)}
\newcommand{\Qlip}{Q^{\perp}(\lambda_i)}
\newcommand{\qja}{q_{j^\ast}^{i}}
\newcommand{\Pn}{\mathcal{P}(n)}
\newcommand{\Dn}{\mathcal{D}(n)}

\newcommand{\Eji}{\mathcal{E}_{\ja}^i(\phi)}
\newcommand{\qt}{\tilde{q}_1^i}
\newcommand{\qtj}{\tilde{q}_{\ja}^i}
\newcommand{\qj}{q_{\ja}^i}
\newcommand{\Fr}{\textbf{F}(\phi,Q)}
\newcommand{\Sr}{\textbf{S}(\phi,Q)}

\ifdefined\O
    \renewcommand{\O}{\ensuremath\mathcal{O}\left(2\right)}
\else
    \newcommand{\O}{\ensuremath\mathcal{O}\left(2\right)}
\fi
\ifdefined\On
    \renewcommand{\On}{\ensuremath\mathcal{O}\left(n\right)}
\else
    \newcommand{\On}{\ensuremath\mathcal{O}\left(n\right)}
\fi
\ifdefined\Re
    \renewcommand{\Re}{\ensuremath\mathbb{R}}
\else
    \newcommand{\Re}{\ensuremath\mathbb{R}}
\fi
\ifdefined\Ar
    \renewcommand{\Ar}{\ensuremath\mathcal{A}^r(\phi)}
\else
    \newcommand{\Ar}{\ensuremath\mathcal{A}^r(\phi)}
\fi
\ifdefined\An
    \renewcommand{\An}{\ensuremath\mathcal{A}_0^n(\phi)}
\else
    \newcommand{\An}{\ensuremath\mathcal{A}_0^n(\phi)}
\fi


\def\thesection{\Roman{section}}
\makeatletter
\renewcommand{\thefigure}{\arabic{figure}}

\newtheorem{theorem}{Theorem}
\newtheorem{lemma}{Lemma}
\newtheorem{algo}{Algorithm}
\newtheorem{corol}{Corollary}
\theoremstyle{definition}
\newtheorem{assump}{Assumption}
\newtheorem{defin}{Definition}

\newtheorem{ex}{Example}
\newtheorem{remark}{Remark}

\date{}
\makeindex

\begin{document}

\title{\textbf{Partially identified heteroskedastic SVARs}\thanks{Emanuele Bacchiocchi gratefully acknowledges financial support from
Ministero dell'Universit\`a e della Ricerca (MUR), PRIN 20229PFAX5, and the University of Bologna, RFO grants.}}

\author{
Emanuele Bacchiocchi
\thanks{University of Bologna, Department of Economics. Email: e.bacchiocchi@unibo.it}\hspace{2cm}
Andrea Bastianin
\thanks{University of Milan, Department of Economics, Management and Quantitative Methods and Fondazione Eni Enrico Mattei. Email: andrea.bastianin@unimi.it}\\\\
Toru Kitagawa
\thanks{Brown University, Department of Economics. Email: toru\_kitagawa@brown.edu}\hspace{3.5cm}
Elisabetta Mirto
\thanks{Study Center Gerzensee, Foundation of the Swiss National Bank. Email: elisabetta.mirto@szgerzensee.ch}
}


\maketitle

\begin{abstract}

This paper presents new results on the identification of heteroskedastic structural vector autoregressive (HSVAR) models. Point identification of HSVAR models fails when some shifts in the variances of the structural shocks are suspected to be statistically indistinguishable from each other. This paper presents a new strategy that allows researchers to continue using HSVAR models in this empirically relevant case. We show that a combination of heteroskedasticity and zero restrictions can recover point identification in HSVAR models even in the absence of heterogeneous variance shifts. We derive the identified sets for impulse responses and show how to compute them. We perform inference on the impulse response functions, building on the robust Bayesian approach developed for set-identified SVARs. To illustrate our proposal, we present an empirical example based on the literature on the global crude oil market, where standard identification is expected to fail under heteroskedasticity.

\end{abstract}


\begin{flushleft}
\textit{Keywords}: Heteroskedastic SVAR, point and set identification, robust Bayesian approach.\newline
\bigskip
\\[7cm]
\end{flushleft}

\newpage

\defcitealias{LMNS20}{L\"{u}tkepohl et al. (2020)} 
\defcitealias{CMT23WP}{Carriero et al. (2023)} 
\defcitealias{RWZ10RES}{Rubio-Ram{\'{\i}}rez et al. (2010)} 
\defcitealias{GMMO18}{Gafarov et al. (2018)} 
\defcitealias{GMS18}{Granziera et al. (2018)} 
\defcitealias{ARW18}{Arias et al. (2018)}

\doublespacing
\section{Introduction}
\label{sec:intro}

In recent years, there has been growing interest in the use of external information to identify the parameters of econometric models. This external information, derived from features of the data or from proxies designed to capture specific structural shocks, is often combined with restrictions on the parameter space suggested by economic theory. A case in point is the literature that exploits heteroskedasticity to identify simultaneous equations systems \citep{Rigobon03} and structural vector autoregressive (SVARs) models \citep{LanneLutkepohl08JMCB}. When the data exhibit volatility clusters that can be attributed to shifts in the variance of structural innovations -- while holding the parameters of the conditional mean constant -- there are important gains in terms of identification of the structural parameters (see, for example, \citeauthor{KLbook}, \citeyear{KLbook}, Chapter 14, or the recent empirical work of \citeauthor{BPSS21}, \citeyear{BPSS21}).

While heteroskedastic SVAR models have become a standard tool in macroeconometrics, it is important to recognize that two caveats apply when using them. First, identification through heteroskedasticity is a purely statistical identification strategy. In other words, shocks have a structural interpretation only if they give rise to impulse responses with a credible economic interpretation. Second, the size of the shifts in the variance of different shocks must be sufficiently heterogeneous for the identification to be valid.\footnote{\citetalias{LMNS20} developed a formal test for identification by heteroskedasticity. \cite{Lewis22} proposes an alternative test of weak identification in the context of heteroskedastic SVARs (HSVARs). This test, based on the IV literature, can be applied in very specific specifications.}

On the contrary, if some of the shifts in the variances are not distinct, heteroskedasticity cannot be used to identify structural shocks. In this case, the literature does not provide a specific solution for continuing to use HSVAR models, and other identification schemes - provided they are credible - must be considered to solve the identification problem. Interestingly, \citetalias{CMT23WP} propose a blended approach, where heteroskedasticity can be combined with sign restrictions, narrative restrictions, and external instruments.

This paper presents a new strategy that allows researchers to keep using HSVAR models even in the absence of information to identify the shocks of interest. Our approach starts where statistical tests would suggest stopping: namely, when some shifts in the variances of the structural shocks are suspected to be statistically indistinguishable from each other. Although in a rather different setup, based on simultaneous equation models for cross-sectional data, \citet{LMYJoE20} also investigate the situation where (conditional) heteroskedasticity involves only a subset of equations. They propose tests for the heteroskedasticity rank and a way for estimating and doing classical inference on the parameters of the equations associated with conditional heteroskedasticity.

Instead, our idea is to combine the presence of heteroskedasticity with some zero and/or sign restrictions on parameters or functions of them. This strategy allows us to deal with HSVAR models that are not point, but only set identified. The paper makes three main contributions. The first concerns the development of analytical results on identification. 
We show that, apart from normalisation constraints, a combination of heteroskedasticity and zero restrictions allows point identification in HSVAR models even in the absence of heterogeneous variance shifts. Importantly, the number of zero restrictions in this case is much smaller than that generally used for point identification in traditional SVAR models.
 
Our second contribution, closely related to the first, is to extend the topological analysis of the identified set offered in \cite{GK18} for SVAR models to the HSVAR literature, and to derive analytical results for point identification and for set identification with convex identified sets. The mapping between the reduced and structural form parameters facilitates the extension of the literature on set identification, largely used in standard SVAR models, to the HSVAR framework.  

Finally, another contribution concerns estimation and inference on the identified set. In this respect, we adapt the robust Bayesian approach of \cite{GK18} to our setup. In fact, this approach is perfectly suited to the peculiarities of HSVARs, where the identifying assumptions are violated due to heterogeneous variance shifts in the structural shocks. We provide a useful algorithm to implement our strategy for estimation and inference of the identified set. This provides applied economists and econometricians with a new tool for their empirical analyses when clusters of volatility provide a useful but insufficient source of information for the identification of structural shocks.

An empirical example about the identification of structural shocks driving the real price of crude oil illustrates our methodology. In this example, we show how to set or point identify structural shocks when the standard HSVAR approach fails. 

The rest of the paper is organized as follows. The next section briefly surveys the literature; Section \ref{sec:HSVAR} introduces the econometric framework and provides some preliminary results on the identification of HSVARs. Section \ref{sec:SetIdent} is dedicated to the theory of identification in HSVARs. Section \ref{sec:TestInference} focuses on the inferential analysis of identified sets through a Robust Bayesian approach. Section \ref{sec:empirics} presents the empirical example and Section \ref{sec:conclusion} concludes. An appendix with the proofs and many other results completes the paper.

\subsection{Related literature}
\label{sec:literature}

This paper is strongly related to the identification through heteroskedasticity literature both for simultaneous equations systems \citep{Rigobon03,KV2010,Lewbel12,LMYJoE20} and SVARs \citep[see e.g.][]{LanneLutkepohl08JMCB,Bacchiocchi17,KLbook,LN2017}. However, to the best of our knowledge, the idea that heteroskedasticity can be helpful in identifying econometric models has been firstly proposed by \cite{SF01JoE} in a context of factor models, nesting SVAR models as well. Our contribution builds on the results of the statistical tests for heterogeneous variance shifts by \citetalias{LMNS20} and \cite{Lewis22}, in the sense that our approach can be implemented when any statistical tests suggest a failure of the identifying information from heteroskedasticity to point identify the structural shocks.\footnote{Different approaches exploiting heteroskedasticity for the identification of structural shocks are the recent contribution by \cite{Lewis21}, that does not require volatility clusters, but simply needs for time-varying volatility of unknown form, as well as \citet{lutkepohl2021heteroscedastic}, who analyse identification through heteroskedasticity in the context of proxy SVAR models. See also \citeauthor{Sims22}, \citeyear{Sims22}, for a recent contribution on heteroskedastic SVARs with misspecified regimes.}

As previously mentioned, a paper that relates to our research is \citetalias{CMT23WP}, where the authors combine different identification strategies, each of which can contribute resolving the critical issues of the others. Specifically, they propose different algorithms that combine sign restrictions, heteroskedasticity and external instruments. In our approach we focus on weak identification due to lack of (or, better, not enough) heterogeneity in the variances, where zero and sign restrictions are tools for recovering the identification issue. In their approach, instead, heteroskedasticity is mainly intended as a tool to reduce the identified set obtained by sign restrictions, narrative restrictions, and external instruments approaches.

While in our contribution the main assumption is that only the variances of the shocks are subject to breaks, other authors find evidence of structural shifts among the structural parameters of the model, too \citep[see, among others][]{SimsZha06AER,InoueRossi11RESTATS,BG06RESTATS}. This literature, that does not focus solely on SVAR models, allows for impulse responses to be different in the different regimes, while they are the same when only the variances do change.\footnote{However, only few papers constructively use the presence of regime shifts to solve the identification issue \citep{MM14ECTA,BacchiocchiFanelli15,BK20SVARWB}.}

This paper contributes also to the literature on point and set identified SVARs. For point identification, we exploit the general criteria in \citetalias{RWZ10RES}, with subsequent modifications proposed by \cite{BKglob20}, for global identification on SVARs. As for set identification, we use sign restrictions to set identify the structural impulse response functions of interest \citep{RWZ10RES,Uhlig05JME}. 

Strictly connected to this last point is the literature on how to do inference on set identified models. As stated in the introduction, our approach builds on \cite{GK18}, but other approaches have been proposed in the literature to pursue this purpose.\footnote{\citetalias{GMMO18} and \citetalias{GMS18} provide results based on a frequentist setting, while, among others, \cite{BH15} and \citetalias{ARW18} adopt Bayesian inference.} 

\section{SVARs and HSVARs: some definitions}
\label{sec:HSVAR}

\subsection{Econometric framework}

Consider the following Structural Vector Autoregressive (SVAR) model
\begin{equation}
	\label{eq:HSVAR}
	A_{0}y_{t}=a+\sum_{i=1}^{l}A_{i}y_{t-i}+\e_{t}
\end{equation}
where $y_{t}$ is a $n$-dimensional vector of observable variables, $\e_{t}$ is a vector of mutually orthogonal white noise processes, normally distributed with mean zero and time-varying covariance matrix. Specifically, let the covariance matrix of the structural shocks $\e_{t}$ be as follows
\begin{equation}
	\label{eq:CovarHSVAR}
	E(\e_{t}\e_{t}^\prime)=\left\{\begin{array}{lcl}
	I_n & & \text{if }\quad 1 \leq t\leq T_{B}\\
  \Lambda	& & \text{if }\quad T_{B}< t\leq T
	\end{array}\right.
\end{equation}
where $1 < T_B < T$ is the break date, $I_n$ is the $(n\times n)$ identity matrix, and $\Lambda$ is a $(n\times n)$ diagonal matrix made of strictly positive numbers.\footnote{We consider the initial conditions for the first regime, $y_{0},\ldots,y_{1-l}$, as given, while for the second regime they are fixed as the last $l$ observations of the former, in order to guarantee the contiguity of the regimes on the whole sample.} The $n\times 1$ vector $a$ contains the intercepts and the $n\times n$ matrices $A_{i}$, with $i=0,\ldots,l$, collect the structural parameters. The structural parameters can be indicated as $\theta=\left(A_0, A_+, \Lambda\right)\in \Theta \subset \Re^{\left(n+m\right)n+n}$, with $m=nl+1$, and where the $n\times m$ matrix $A_+=\left(a, A_{1},\ldots, A_{l}\right)$. We denote the open dense set of all structural parameters by $\mathbb{P}^S\subset \Re^{\left(n+m\right)n+n}$. The model in Eq.s (\ref{eq:HSVAR})-(\ref{eq:CovarHSVAR}) is a standard SVAR model with structural shocks characterized by different volatility regimes. As shown in Eq. (\ref{eq:CovarHSVAR}), structural innovations have  unit variance before the break, and variance equal to the diagonal elements of $\Lambda$, denoted as $\lambda_i$ after the break. Hereafter, this model is referred to as heteroskedastic SVAR (HSVAR).

The reduced-form VAR model can be written as  
\begin{equation}
	\label{eq:HVAR}
	y_{t}=b+\sum_{i=1}^{l}B_{i}y_{t-i}+u_{t}
\end{equation}
where $b=\Ai a$, $B_{i}=\Ai A_{i}$. Furthermore, for both the regimes, the vector of error terms is defined as $u_t=\Ai\e_t$, with 
\begin{equation}
	\label{eq:CovarHVAR}
	E(u_tu_t^\prime)=\left\{\begin{array}{lcl}
	\Omega_1 = \Ai \Ait & & \text{if }\quad 1\leq t \leq T_{B}\\
  \Omega_2 = \Ai \Lambda \Ait& & \text{if }\quad T_{B}<t\leq 	T.
	\end{array}\right.
\end{equation}
The VAR model, thus, presents different covariance matrices of the error terms $\Omega_1$ and $\Omega_2$, and thus heteroskedasticity, as in,
among others,  \cite{Rigobon03}, \cite{LanneLutkepohl08JMCB}, and \cite{BacchiocchiFanelli15}. The reduced-form parameters are denoted by 
$\phi=(B, \Omega_1, \Omega_2)\in \Phi \subset \Re^{n+n^2l}\times \Omega_n \times \Omega_n$, where the $n\times m$ matrix 
$B=\left(b,B_{1},\ldots,B_{l}\right)$ and $\Omega_n$ is the space of positive-semidefinite matrices of dimension $n\times n$. 
The set of all reduced-form parameters is denoted by $\mathbb{P}^R\subset \Re^{nm+n\left(n+1\right)}$. The reduced form will be denoted HVAR.

Conditional on the restrictions of the domain $\Phi$ such that all the roots of the characteristic polynomial lie outside the unit circle, there exists an equivalent VMA$(\infty)$ representation for the HVAR in Eq. (\ref{eq:HVAR}), assuming the form\footnote{The HVAR in Eq.s (\ref{eq:HVAR})-(\ref{eq:CovarHVAR}) is characterized by the same parameters for the conditional mean over the two regimes, therefore breaks are confined to second moments parameters. As a consequence, the absence of unit roots is a characteristic of the model in the whole sample, and not within each regime. Furthermore, as shown in Eq. (\ref{eq:VMA}), the VMA representation is unique and not regime-specific. It follows that, if shocks have the same magnitude, impulse response functions in the two regimes are equivalent. If instead one considers a one standard deviation structural shock, impulse responses will be of different magnitude, but have exactly the same shape in different regimes.}
\begin{equation}
\label{eq:VMA}
    y_t = c+\sum_{j=0}^{\infty}C_{j}\Ai\e_{t-j}
\end{equation}
where $C_{j}$ is the \textit{j}-th coefficient matrix of $\bigg(I_n-\sum_{i=1}^{l}B_{i}L^i\bigg)^{-1}$. 
Based on the VMA representation, the long-run impulse response, $IR^{\infty}$, and the one at any $h$, $IR^h$, are, respectively
\begin{equation}
\label{eq:IRh}
IR^{\infty}=\lim_{h \to \infty} IR^h=\bigg(I_n-\sum_{j=1}^{l}B_{j}\bigg)\Ai \hspace{1cm} \text{and} \hspace{1cm} IR^h=C_{h}\Ai,
\end{equation}
whose $(i,j)$-element represents the response of the \textit{i}-th variable of $y_{t+h}$ to a \textit{unit} shock on the \textit{j}-th element of $\varepsilon_{t}$, independently of the regime considered. 


\subsection{Preliminary results on the identification of HSVARs}
\label{sec:identification}

As is well known in the SVAR literature, without any restriction it is impossible to uniquely pin down the structural parameters based on the reduced form of the model. If, instead, we suppose the parameters of the conditional mean in the HSVAR in Eq. (\ref{eq:HSVAR}) to remain stable across the two regimes, then \cite{Rigobon03}, for a bivariate case, and \cite{LanneLutkepohl08JMCB}, for the general case, proved there is some gain in terms of identification. In this section we introduce some general theoretical results for the HSVAR in Eq. (\ref{eq:HSVAR}). All the results will be formally presented in Appendix \ref{sec:Preliminary}. 
 
Consider an $n$-variable HSVAR model as in Eq.s (\ref{eq:HSVAR})-(\ref{eq:CovarHSVAR}). Following 
the parametrization and notations we have been using so far, we analyze identification of the $n \times n$ matrix $C$ that represents the inverse of structural coefficient matrix $A_0$, i.e. $C\equiv A_0^{-1}$, and $\Lambda$, $n \times n$ diagonal matrix with strictly positive elements. Given the reduced-form covariance matrix at regime 1 and 2, denoted by $\Omega_1$ and $\Omega_2$, respectively, $C$ and $\Lambda$ solve 
\begin{equation}
\Omega_1 = CC' \hspace{1cm}\text{and}\hspace{1cm} \Omega_2 = C \Lambda C'. \label{eqSys}
\end{equation}

The first important result on the identification of the HSVAR is about the set of solutions, in terms of $(C,\Lambda)$, of the system of equations in (\ref{eqSys}). In Theorem \ref{theo:SignPerm} in Appendix \ref{sec:Preliminary} we show that the solution is not unique, but any permutations and change of signs of the column vectors in $C$, as far as the same permutations are applied to the diagonal elements of $\Lambda$, remain observationally equivalent. In order to solve this indeterminacy, one possibility is to fix a specific ordering for the equations, that corresponds to fixing a specific ordering for the variances of the structural shocks in the second regime, i.e. the diagonal elements in $\Lambda$. Of course, this becomes problematic when some of the variances are equivalent.

In this direction, the second important result, reported in Theorem \ref{theo:HSVAR_Ident} in Appendix \ref{sec:Preliminary}, states how point identification is possible only once the solution for $\Lambda$ is characterized by all distinct elements on the main diagonal. In this case, fixing a specific ordering of the structural shocks, as well as the standard sign normalization, leads the solution $(C,\Lambda)$ to be unique, and thus point identified. 

Finally, the third important result concerns the representation of the system in (\ref{eqSys}) as an eigen-decomposition problem. To see this, let $\Omega_{1,tr}$ be a lower triangular Cholesky decomposition of $\Omega_1$. Following Proposition A.1 of \citet{Uhlig05JME}, the set of non-singular matrices solving Eq. (\ref{eq.1st}) can be expressed as $C=\Omega_{1,tr} Q$, $Q \in \mathcal{O}(n)$, where $\mathcal{O}(n)$ is the set of $n \times n$ orthogonal matrices. Plugging this representation of $C$ into Eq. (\ref{eqSys}), leads to
\begin{equation}
\label{eigen decomposition}
\begin{array}{l}
C = \Omega_{1,tr} Q\\
\Omega_{1,tr}^{-1} \Omega_2 \Omega_{1,tr}^{-1\prime} = Q \Lambda Q^{\prime}.   
\end{array}
\end{equation}
Symmetry of $\Omega_{1,tr}^{-1} \Omega_2 \Omega_{1,tr}^{-1\prime}$ and orthogonality of $Q$ implies that solving Eq. (\ref{eigen decomposition}) is precisely the eigen-decomposition problem. Identification of $(C,\Lambda)$ can be therefore cast as uniqueness of the eigen-decomposition of $\Omega_{1,tr}^{-1} \Omega_2 \Omega_{1,tr}^{-1\prime}$ into the diagonal matrix of eigenvalues and the corresponding eigenvectors collected in $Q$. According to the previous result, an HSVAR can be point identified, up to permutations and sign changes, if the eigenvalues of $\Omega_{1,tr}^{-1} \Omega_2 \Omega_{1,tr}^{-1\prime}$ are all distinct. A nice geometric interpretation for this result, for the simple bivariate case, is discussed in Appendix \ref{sec:SetIdentBivariate}.

\section{Set-identification due to proportional volatility shifts}
\label{sec:SetIdent}

In this section we extend the preliminary results reported in Section \ref{sec:identification} where, conditional on the reduced-form
parameters, the identification issue was addressed as an eigen-decomposition problem. Specifically, let 
$\Lambda=(\lambda_1,\,\lambda_2,\,\ldots,\,\lambda_n)$ be the eigenvalues of the eigen-decomposition problem in Eq. 
(\ref{eigen decomposition}), where $C=A_0^{-1}$ collects impact responses, $Q$ is the orthogonal matrix containing the eigenvectors, and $\Omega_i$, $i=\{1,\,2\}$, are the reduced-form covariance matrices with $\Omega_{i,tr}$ the related Cholesky lower triangular matrices. Theorem \ref{theo:HSVAReigen} shows that a necessary condition for point identification of the structural parameters is the absence of multiplicity in eigenvalues in $\Lambda$. In the case of multiple eigenvalues, in fact, the identification of $C$ fails. In particular, if two (or more) eigenvalues are equal, say $\lambda_j=\lambda_{j+1}$, then the corresponding columns of $Q$ -- i.e. the eigenvectors associated to $\lambda_j$ and $\lambda_{j+1}$ -- denoted by $q_j$ and $q_{j+1}$, are not unique. In fact they represent a basis for the two-dimensional vector space in 
$\Re^{n}$, but any other couple of orthogonal unitary vector belonging to such a space could be an acceptable candidate to enter in the 
$Q$ matrix. The matrix $Q$, thus, will not be a singleton in $\mathcal{O}(n)$ anymore, but will be a set of admissible orthogonal 
matrices solving the eigen-decomposition problem in Eq. (\ref{eigen decomposition}). 

More suitable notations and formalization are thus necessary. We start by formalizing the eigen-decomposition problem, with the possibility of multiple eigenvalues.

\begin{defin}[Eigenspace of multiple eigenvalues]
\label{def:eigenspace}
	Let the eigen-decomposition problem in Eq. (\ref{eigen decomposition}) be characterized by the following eigenvalues
	\[
	\lambda_	1\neq\ldots\neq\lambda_{k}
	\]
	where the generic \textit{i}-th distinct eigenvalue has algebraic multiplicity equal to $m_i$, i.e. $g(\lambda_i)=m_i$, $i=1,\ldots,k$, with 
	$\sum_{i=1}^{k}m_i=n$.
	Let $Q(\lambda_i)$ be the eigenspace associated to the \textit{i}-th eigenvalue $\lambda_i$, i.e.
	\begin{equation}
	\label{eq:eigenspace}
		Q(\lambda_i)=\bigg(\text{span}\left(q_1^i,\,\ldots,q_{m_i}^i\right)\:\cap\:\mathcal{S}^{n-1} \bigg)\subset \Re^{n}
	\end{equation}	
	where $q_1^i,\,\ldots,q_{m_i}^i$ are linearly independent (not unique) eigenvectors associated to $\lambda_i$ with
	$\mathcal{S}^{n-1}$ being the unit sphere in $\Re^{n}$.
	Moreover, given the result in Lemma \ref{lemma:multiplicity} in Appendix \ref{app:Proofs}, $\text{dim}\,\big(Q(\lambda_i)\big)=m_i$.
	\qed
\end{defin}

According to Definition \ref{def:eigenspace}, let $Q_\lambda=Q(\lambda_1)\times\cdots\times Q(\lambda_k)$. It is possible to 
introduce the set of all admissible matrices $Q$ as follows
\begin{equation}
	\label{eq:AdmSpace}
	\Qa = \Big\{\big(q_1,\,q_2,\,\ldots\,,q_n\big) \in Q_\lambda\Big\}.
\end{equation}
As in the case of multiplicities $\Qa$ is not a singleton in $\On$, one could think of imposing restrictions, likewise it is traditionally
done in SVARs. This will be the topic of the next section.

\subsection{Normalization, equality and sign restrictions}
\label{sec:restriction}

One of the characteristics of HSVARs is that the identification is obtained from a statistical point of view, without imposing restrictions on the parameters. However, we have seen in Section \ref{sec:identification} that normalization restrictions are important and play a relevant role. Moreover, we will see that in some cases, imposing equality or sign restrictions can be interesting to improve the results obtained through HSVARs, especially when some of the assumptions in Theorems \ref{theo:HSVAR_Ident} to \ref{theo:SVD} are no longer valid, as the presence of multiplicities. In this section we discuss normalization restrictions first, then we move to the equality restrictions before concluding with sign restrictions. 

\vspace{0.7cm}
\noindent
\textit{Normalization restrictions}
\vspace{0.4cm}

\noindent
The normalization issue has been largely debated in econometrics. Specifically for SVAR models, we refer to \cite{WZ03} and, more recently, to \cite{HWZ07}. They show that a poor normalization rule can invalidate statistical inference on the parameters. In our setup, the first normalization restriction consists in imposing the covariance matrix of the structural shocks to be the identity matrix in the first regime, i.e. $E(\e_t \e_t^\prime)=I_n$, and, as a consequence, the $\Lambda$ matrix in the second regime, i.e. $E(\e_t \e_t^\prime)=\Lambda$, as already introduced in Eq. (\ref{eq:CovarHSVAR}). 
However, as discussed in Section \ref{sec:identification}, indeterminacy of the solutions also arises in terms of the sign of the columns of $C$ and the particular ranking of the variaces in $\Lambda$. The normalization rule used in this paper is summarized here below. 

\begin{defin}(Normalization) 
				\label{def:norm} 
				A normalization rule can be characterized by a set $N\subset\mathbb{P}^S$ such that for any structural parameter point $\theta=\left(A_0, A_+, \Lambda\right)\in \mathbb{P}^S$, there exists a unique permutation matrix $P\in \Pn$ and a unique diagonal matrix $S\in \Dn$, with +1 and -1 along the diagonal, such that $(PSA_0,PSA_+,P\Lambda P^\prime) \in N$.\qed
\end{defin}

For the sake of simplicity, concerning the ordering of the elements in the diagonal matrix $\Lambda$, we assume
\begin{equation}
\label{eq:NormOrd}
	\lambda_1\geq\,\lambda_2\,\geq\ldots\geq\,\lambda_n.
\end{equation}
It is important to stress that all the results developed in the paper can be rephrased in terms of different normalization rules coherent with
Definition \ref{def:norm}.

\vspace{0.7cm}
\noindent
\textit{Equality restrictions}
\vspace{0.4cm}

\noindent
The classical approach to address the identification issue in SVARs is to impose equality restrictions on the structural parameters or on particular linear and non-linear functions of them. Although not common in the literature of heteroskedastic SVARs, we do not preclude this possibility and allow for possible equality and sign restrictions. We first consider the former, while the latter will be presented in the next section.

\cite{GK18} and \citetalias{ARW18}, in the context of SVARs, stress that imposing constraints on the structural parameters, or on suitable functions of them, such as on the impulse responses, corresponds to restrict the columns of the orthogonal matrix $Q\in \Qa$. Here below, we show that it also happens in the context of HSVARs. In fact, as shown in Eq. (\ref{eigen decomposition}), the structural-form parameters can be defined as the product of the orthogonal matrix $Q$ and quantities coming from the reduced form. As these latter elements are unrestricted, imposing restrictions is equivalent to constrain the columns of $Q$. 
The set of equality restrictions we consider, in compact notation, are as follows:
\begin{equation}
		\label{eq:GenFormRest}
        \Fr \equiv 
        \Big(
                \big(F_{1}(\phi)q_1\big)^\prime\:,\:
                \big(F_{2}(\phi)q_2\big)^\prime\:,\:
                \ldots \:,\:
                \big(F_{n}(\phi)q_n\big)^\prime
        \Big)^\prime 
				= \bf{0}
\end{equation}
where $F_{i}(\phi)$, of dimension $f_{i}\times n$, depends on the reduced-form parameters $\phi=(B, \Omega_1, \Omega_2)$ only, while $q_{i}$ is the \textit{i}-th column of $Q$. The total number of restrictions characterizing the HSVAR is given by $f=f_1+\cdots+f_n$.

Focusing our attention on the \textit{i}-th eigenspace as in the eigen-decomposition in Definition \ref{def:eigenspace}, let the set of restrictions on the vectors $\big(q_1^i,\,\ldots,\,,q_{m_i}^i\big)\in Q(\lambda_i)$ be contained in the $f^i\times n$ matrix $\textbf{F}^i(\phi,Q)$, with $f^i$ denoting the total number of restrictions on the vectors $\big(q_1^i,\,\ldots,\,,q_{m_i}^i\big)$. Moreover, let the $f_j^i$ restrictions on the \textit{j}-th vector $q_j^i$ be defined as $F_j^i(\phi) q_j^i=0$.
This allows us to introduce the following definition: 

\begin{defin}[Non redundant restrictions]
\label{def:RedRes}
	Given reduced-form parameter $\phi=(B,\Omega_1,\Omega_2)$, let the HSVAR be characterized by the eigen-decomposition in Definition \ref{def:eigenspace}. Moreover, let the $m_i$ vectors $\big(q_1^i,\,\ldots,\,,q_{m_i}^i\big)\in Q(\lambda_i)$, $i=1,\ldots,k$, be characterized by  zero restrictions of the form $F_j^i(\phi)q_j^i=0$. Such identifying restrictions are \textit{non redundant} if, for $j=1,\ldots,m_i$, the orthogonal vectors $\big(q_1^i,\,\ldots,\,,q_{j-1}^i\big)$ are linearly independent of the row vectors of $F_j^i(\phi)$.\qed
\end{defin}

\cite{BKglob20} introduced first this definition of non redundant restrictions to complement the result in Theorem 7 in \cite{RWZ10RES} on the identification of SVARs. In the same way, we will use it for developing conditions for point identification in our HSVARs.

\vspace{0.7cm}
\noindent
\textit{Sign restrictions}
\vspace{0.4cm}

\noindent
\cite{Uhlig05JME}, among others, proposes sign restrictions to impulse responses in order to obtain identified sets rather than point identification. \cite{GK18} and \citetalias{ARW18} combine sign and zero restrictions to tighten the impulse response identified sets. 

As for the equality restrictions, sign restrictions can be seen as constraints on the columns of the $Q$ matrix. Suppose to impose a set of $s_{h,i}$ restrictions on the impulse responses to the \textit{i}-th shock at the \textit{h}-th horizon. We can write the sign restrictions as $S_{h,i}(\phi)q_i\,\geq\,\textbf{0}$, where, given the definition of the impulse response provided in Eq. (\ref{eq:IRh}), $S_{h,i}\equiv D_{h,i}\,C_h(B) \Chol$ is a $s_{h,i}\times n$ matrix with $D_{h,i}$, of dimension $s_{h,1}\times n$, a selection matrix made of $1$ and $-1$ elements indicating the restricted impulse responses. A compact notation for all the sign restrictions can be defined by 
\begin{equation}
\label{eq:SignRest}
		\Sr\geq \textbf{0}. 
\end{equation}

\vspace{0.7cm}
\noindent
\textit{Admissible structural parameters and identified set}
\vspace{0.4cm}

\noindent
Based on the parametrization of the model and the set of all possible restrictions considered above, it is now possible to formally define when a point in the parametric space can be indicated as admissible.

\begin{defin}[Admissible parameters]
        \label{def:admissible}
        A structural parameter point $\left(A_0,A_+,\Lambda\right)$ is said admissible if it satisfies the normalization restrictions 
				of Definition \ref{def:norm}, the equality restrictions in Eq. (\ref{eq:GenFormRest}) and the sign restrictions in 
				Eq. (\ref{eq:SignRest}). Given the set of reduced-form parameters $\phi\in \Phi$, the set of admissible parameters can be 
				defined as
				\begin{equation}
						\label{eq:Arestr}
						\resizebox{0.91\hsize}{!}{$\Ar\equiv\Big\{\big(A_0,A_+,\Lambda\big)=\big(Q^\prime\Choli,Q^\prime\Choli B,\Lambda\big)\in N\,\Big|\,Q\in\Qa,
						\,\:\textbf{F}(\phi,Q)=\textbf{0},\,\:\textbf{S}(\phi,Q)\geq\textbf{0}\Big\}$}\nonumber.\hspace{0.5cm}\qed
				\end{equation}
 			
\end{defin}
 
At the same time, it is interesting to focus on the set of all the admissible matrices $Q$. We thus provide the following definition.

\begin{defin}[Admissible $Q$ matrices]
				\label{def:AdmQ}
				An orthogonal matrix $Q$ is said admissible if, conditional on the reduced-form parameters, it satisfies the normalization restrictions
				of Definition \ref{def:norm}, the equality restrictions in Eq. (\ref{eq:GenFormRest}) and the sign restrictions in 
				Eq. (\ref{eq:SignRest}).
				The set of all admissible Q matrices is defined as
				\begin{equation}
						\label{eq:Qrestr}
						\Qr\equiv\Big\{Q\in\Qa\:\Big|\: \big(A_0,A_+,\Lambda\big)\in \Ar\Big\}\nonumber.\hspace{3cm}\qed
				\end{equation}
\end{defin}

Finally, given that the attention could not be limited to the structural parameters but on transformations of them, like impulse response functions, it is also important to define the so called \textit{identified set}.

\begin{defin}[Identified set]
				Given the set of admissible $Q$ matrices $\Qr$ according to Definition \ref{def:AdmQ}, an identified set is defined as
				\label{def:IDset}
				\begin{equation}
						\label{eq:IS}
						\IS\equiv\Big\{\eta\big(\phi,Q\big)\:\Big|\:Q\in\Qr\Big\}\nonumber.
				\end{equation}
				with $\eta(\phi,Q)$ being the transformation of the structural parameters of interest, defined as
				\begin{equation}
					\label{eq:Trasf}
					\eta(\phi,Q) = IR_{gj}^h=e_g^\prime C_h(B)\Chol Q e_j \equiv c_{gh}^\prime(\phi)\,q_j\nonumber
				\end{equation}
				where $IR_{gj}^h$ is the (\textit{g,j})-th element of $IR^h$ and $c_{gh}^\prime(\phi)$ is the \textit{g}-th row of $C_h(B)\Chol$. \qed
\end{defin}

\subsection{Point-identification in HSVARs with proportional volatility shifts}
\label{sec:PointIdHSVAR}

As discussed in the previous sections, point identification in HSVAR can be achieved only if the eigen-decomposition problem in Eq. (\ref{eigen decomposition}) is characterized by $n$ distinct eigenvalues. This feature does correspond to non proportional shifts in the variances of the structural shocks among the two regimes.

If this is not the case, or, practically speaking, we do not have credible evidence on structural breaks on the second moments of some of the variables in our HSVAR, point identification can still be reached by combining heteroskedasticity with zero restrictions. The next theorem formalizes this intuition.

\begin{theorem}
		\label{theo:PointIdHSVAR}
		Consider an HSVAR characterized by the eigenvalues and eigenspaces as in Definition \ref{def:eigenspace} and by the admissible parameters
		as in Definition \ref{def:admissible}. The structural parameter $(A_0,A_+,\Lambda)\in \Ar$ is point identified if and only if, for each 
		$\lambda_i$, $i=1,\ldots,k$, $Q(\lambda_i)=\left(q_1^i,\,\ldots,q_{m_i}^i\right)$, the unit-length vector $q_j^i$
		is subject to $f_j^i=m_i-j$ non-redundant zero restrictions, for $j=1,\ldots,m_i$.
\end{theorem}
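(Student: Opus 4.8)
The plan is to recast the point identification of $(A_0,A_+,\Lambda)$ as the uniqueness of the admissible rotation matrix $Q\in\Qr$, and then to decompose that uniqueness question eigenspace by eigenspace, reducing each piece to a counting argument in the spirit of \cite{RWZ10RES} and \cite{BKglob20} carried out on the lower-dimensional orthogonal group $\mathcal{O}(m_i)$. First I would note that, holding $\phi$ fixed, the map $Q\mapsto(A_0,A_+,\Lambda)=(Q^\prime\Choli,Q^\prime\Choli B,\Lambda)$ is a bijection from $\Qr$ onto the set of admissible structural parameters, inverted by $Q=\Choli A_0^{-1}$, with $\Lambda$ equal to the fixed and ordered spectrum of $\Choli\Omega_2\Omega_{1,tr}^{-1\prime}$. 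Hence $(A_0,A_+,\Lambda)$ is point identified if and only if $\Qr$ is a singleton, and the inequality restrictions $\Sr\geq\textbf{0}$ cannot be responsible for this: only the zero restrictions and the normalization can collapse a positive-dimensional admissible set to a point. By Definition \ref{def:eigenspace} and Lemma \ref{lemma:multiplicity}, every $Q\in\Qa$ has its columns split into orthonormal bases $(q_1^i,\ldots,q_{m_i}^i)$ of the fixed $m_i$-dimensional subspaces $\mathrm{span}\,(\Qli)$; fixing once and for all a reference orthonormal basis $V_i$ (an $n\times m_i$ matrix) of each of these subspaces, every admissible $Q$ can be written $Q=[V_1W_1\mid\cdots\mid V_kW_k]$ with $W_i\in\mathcal{O}(m_i)$ and $q_j^i=V_iw_j^i$, where $w_j^i$ denotes the $j$-th column of $W_i$. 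The sign normalization of Definition \ref{def:norm} fixes the sign of each $q_j^i$, hence of each $w_j^i$. Since distinct eigenspaces are mutually orthogonal and each matrix $F_j^i(\phi)$ multiplies only the corresponding column, the admissibility conditions decouple across $i$: $\Qr$ is a singleton if and only if, for every $i$, there is a unique sign-normalized $W_i\in\mathcal{O}(m_i)$ satisfying $\tilde F_j^i w_j^i=0$ for $j=1,\ldots,m_i$, where $\tilde F_j^i\equiv F_j^i(\phi)V_i$ has size $f_j^i\times m_i$; and the non-redundancy of Definition \ref{def:RedRes} (the rows of $F_j^i(\phi)$ may be taken inside $\mathrm{span}\,(\Qli)$, since they only ever multiply $q_j^i$) translates into the statement that the rows of $\tilde F_j^i$ together with $w_1^i,\ldots,w_{j-1}^i$ are linearly independent in $\Re^{m_i}$.

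For sufficiency I would assume that within each eigenspace the $m_i$ vectors are indexed so that $q_j^i$ carries exactly $f_j^i=m_i-j$ non-redundant restrictions, fix $i$, and argue by induction on $j$. If $w_1^i,\ldots,w_{j-1}^i$ have already been determined uniquely (vacuously at $j=1$), then $w_j^i$ is a unit vector orthogonal to these $j-1$ columns and lying in the kernel of $\tilde F_j^i$; that is, $w_j^i$ is annihilated by $(j-1)+(m_i-j)=m_i-1$ linear forms, linearly independent by non-redundancy, so its direction is uniquely pinned down and the sign normalization selects the vector itself. At $j=m_i$ we have $f_{m_i}^i=0$ and $w_{m_i}^i$ is forced by orthonormality to $w_1^i,\ldots,w_{m_i-1}^i$ together with its sign. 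Thus each $W_i$, and therefore $Q$, is unique; since $(A_0,A_+,\Lambda)\in\Ar$ guarantees that at least one admissible $Q$ exists, $\Qr$ is a singleton and the parameter is point identified.

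For necessity I would argue by contraposition: if no indexing of the columns inside some eigenspace $i$ produces the pattern $(f_1^i,\ldots,f_{m_i}^i)=(m_i-1,\ldots,0)$ of non-redundant restriction counts, I would exhibit a second admissible $W_i$, hence a second admissible $Q$. Nonemptiness of $\Ar$ rules out any column carrying more than $m_i-j$ non-redundant restrictions (orthogonality to the other $n-m_i$ eigenspace directions and to the earlier columns, together with more than $m_i-j$ further independent forms, would force $q_j^i=0$), so the failure is necessarily a deficit; ordering the columns of $W_i$ by decreasing restriction count, the recursion of the previous paragraph fixes the columns uniquely up to the first stage $j$ at which fewer than $m_i-j$ independent restrictions are available, and there the admissible direction of $w_j^i$ ranges over a sphere of dimension at least one inside the orthogonal complement of the already-fixed columns. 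Completing two distinct such choices to full orthonormal bases proceeds exactly as in \cite{BKglob20} for $\mathcal{O}(n)$ in the SVAR case, the only difference being that the argument — and the propagation of non-redundancy that keeps the later restrictions effective — now runs inside $\mathcal{O}(m_i)$ with the projected matrices $\tilde F_j^i$. I expect the main obstacle to be exactly this last point: making the residual rotational freedom at the deficient stage produce a genuinely distinct \emph{globally} admissible $Q$ rather than mere local non-uniqueness, together with the bookkeeping in the reduction, i.e. verifying that the inter-eigenspace orthogonality constraints are automatically independent of the within-eigenspace ones, so that the governing count is $m_i-j$ and not a quantity involving $n$.
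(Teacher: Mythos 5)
Your proposal is correct and follows essentially the same route as the paper: both reduce the problem to each eigenspace, expand $q_j^i$ in an orthonormal basis of $Q(\lambda_i)$ (your $q_j^i=V_iw_j^i$ is the paper's $q_j^i=v_1^ix_1+\cdots+v_{m_i}^ix_{m_i}$), count that the $j-1$ orthogonality conditions plus $f_j^i=m_i-j$ non-redundant zero restrictions leave a one-dimensional kernel resolved by unit length and sign normalization, and defer the necessity direction to Lemma 4 of \cite{BKglob20}. Your explicit reformulation on $\mathcal{O}(m_i)$ with the projected matrices $\tilde F_j^i$ is a slightly cleaner packaging of the same argument, and the rank/independence caveat you flag for $\tilde F_j^i$ is handled in the paper by the same $\phi$-a.s.\ genericity appeal.
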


\medskip

\begin{proof} 
	See the Appendix \ref{app:Proofs}.
\end{proof} 

\medskip

\begin{remark}
	\label{rem:HSVARandSVAR}
	The previous theorem generalizes two important results in the literature of SVAR models. Firstly, when all the eigenvalues are distinct, 
    then $k=n$ and the algebraic multiplicity $m_i=1$, $i=1,\ldots,n$. As a consequence, $Q(\lambda_i)=\left(q^i\right)$, and no 
	zero restriction is needed for point identification, being $f^i=1-1=0$, as originally introduced by \cite{LanneLutkepohl08JMCB}, 
	and restated in the previous Theorems \ref{theo:HSVAR_Ident} and \ref{theo:HSVAReigen}. 
	Secondly, if one has no credible believes about structural breaks on the second moments of the observable variables, or the shift 
	produces perfectly proportional covariance matrices in the two regimes, i.e. $\Omega_2=\lambda \Omega_1$ for any positive scalar $\lambda$ 
	(specifically, $\lambda=1$ in the case of no breaks), then $\Omega=\Omega_{1,tr}^{-1}\Omega_2 \Omega_{1,tr}^{-1\prime}=
	\Omega_{1,tr}^{-1}\sqrt{\lambda}\Omega_{1,tr} \sqrt{\lambda}\Omega_{1,tr}^{\prime}\Omega_{1,tr}^{-1\prime}=\lambda I_n$; there is just an 
	eigenvalue whose associated eigenspace is the whole $\mathcal{S}^{n-1}$, i.e the unit sphere in $\Re^{n}$. The condition in 
	Theorem \ref{theo:PointIdHSVAR} reduces to the identification condition for global identification in \citetalias{RWZ10RES} (Theorem 7).
\end{remark}	

\smallskip

\begin{remark}
	\label{rem:IdHSVARs}
	Identification in HSVARs is essentially a statistical issue, in the sense that, once the information contained in the data in terms of
	the two volatility regimes allows to point identify all the structural parameters, the path of the impulse responses allows the researcher 
	to identify \textit{a posteriori} the shocks of interest. 
	In this respect, if the eigenvalues do not present multiplicity, the only task will be to see which shocks produce impulse responses 
	coherent with the economic theory and label these shocks accordingly. The same happens if, even in the case of multiplicity,
	the shocks of interest are those corresponding to the eigenvalues with no multiplicity, whose eigenvectors (uniquely identified) will 
	constitute the columns of $Q$ one is interested in. A problem could arise when, in the case of multiplicity, none of the already identified impulse responses are consistent with what expected from economic theory for the shocks of interest. In this case, the results of Theorem 
	\ref{theo:PointIdHSVAR} can be of extreme interest as including zero restrictions allows to point identify such shocks that, thus,
	will be identified based on economic restrictions rather than on statistical basis. Importantly, the number of restrictions is much less than what is required for traditional SVARs, as some of the columns of the $Q$ matrix have been already identified by the heteroskedasticity. In our view, this can be an important added value of Theorem \ref{theo:PointIdHSVAR}.
\end{remark}

\smallskip

\begin{ex}[Distinct eigenvalues]
\label{ex:ex_dist_eig}
Consider an HSVAR with three variables, ($n=3$), and $k = 3$ distinct eigenvalues, $\lambda_1 \neq \lambda_2 \neq \lambda_3$. In this case, $m_i = 1$ for all $i = 1,..,3$. Theorem \ref{theo:PointIdHSVAR} states that the HSVAR is point identified if and only if the unit vector $q^i_j$ is subject to $f^i_j = m_i - j$ zero restrictions for $j = 1$ and $i = 1,...,3$. It follows that no restrictions are needed because each $\lambda_i$ is associated with a unique eigenvector $q^i$. 
\end{ex}

\smallskip

\begin{ex}[Eigenvalue multiplicity]
\label{ex:ex_mult_eig}
Consider a HSVAR with three variables ($n=3$) and $k = 2$ distinct eigenvalues, $\lambda_1 > \lambda_2$, $\lambda_2 = \lambda_3$. In this case, the first eigenvalue is distinct from the others, hence $m_1 = 1$, while the second eigenvalue has multiplicity $m_2 = 2$. 
Theorem \ref{theo:PointIdHSVAR} implies that, as far as the first unique eigenvalue $\lambda_1$ is concerned, we do not need any restriction on $q^{1}_1$ (i.e. $f^1_1 = 1 -1 = 0$). The second eigenvalue, $\lambda_2$, is associated with $m_2=2$ linearly independent, not unique, eigenvectors (i.e. $q^2_1$ and $q^2_2$). Writing $Q$ as $\left[q_1^1 \ q_1^2 \ q_2^2 \right]$ point identification is achieved with $f^2_1 = 2 - 1 = 1$ zero restriction on $q^2_1$ and $f^2_2 = 2 - 2 = 0$ restriction on $q^2_2$. 
\end{ex}

\subsection{Set-identification in HSVARs with proportional volatility shifts}
\label{sec:SetIdHSVAR}

The results obtained in Section \ref{sec:identification} allow to point identify all the columns of $Q\in\On$ associated with eigenvalues without multiplicity. For all the other columns, they can be point identified according to the particular pattern of zero restrictions suggested by Theorem \ref{theo:PointIdHSVAR}. Let $\lambda_i$ be an eigenvalue with algebraic multiplicity $g(\lambda_i)=m_i$, in this section we consider restrictions that make the $(q_1^i,\ldots,q_{m_i}^i)$ columns of $Q$ only set identified, being

\begin{equation}
\label{eq:RestSet}
	f_j^i\leq m_i-j, \hspace{1cm}j=1,\ldots,m_i
\end{equation}
with strict inequality for at least one $j=\{1,\ldots,m_i\}$.

\smallskip

\begin{ex}[Set identification of an HSVAR with multiple eigenvalues]
\label{ex:ex_setid}
Consider an HSVAR model with three variables $y_t = \left(y_{1,t}, y_{2,t},y_{2,t}\right)^\prime$. Let us assume that there are $k = 2$ distinct eigenvalues: $\lambda_1 > \lambda_2$ and $\lambda_2 = \lambda_3$. Suppose that plotting the impulse responses to the first shock, we observe a pattern that is consistent with an economically meaningful structural shock. Given that the second eigenvalue, $\lambda_2$, has multiplicity $m_2 = 2$, we can write the matrix $Q$ as $\left[q^1_1 \ q^2_1 \ q^2_2\right]$. Without any zero restriction, $f_1^2 < 1$, and the second and third column of $Q$ are only set identified.
\end{ex}

 
As in many empirical applications, suppose we are interested in one single shock, i.e. one column of $Q$. Moreover, according to Remark \ref{rem:IdHSVARs},
it is crucial to understand whether the point identified impulse responses obtained through the eigenvalues without multiplicity can be 
compatible with the shock of interest. If this is not the case, it is likely to suppose this latter to be associated with the eigenspace 
generated by $\lambda_i$, with multiplicity $m_i$. We first introduce a specific ordering of the shocks according to the identifying 
restrictions in Eq. (\ref{eq:RestSet}), and then provide the conditions for the identified set to be convex.

\begin{defin}[Ordering of variables]
	\label{def:ordering}
	The variables associated with the eigenvalue $\li$, with algebraic multiplicity $m_i$, are ordered according to the number of zero
	restrictions on $(q_1^i,\ldots,q_{m_i}^i)$, and specifically, such that they follow the relation
	\begin{equation}
		\label{eq:ordering}
		f_1^i\geq f_2^i \geq \ldots \geq f_{m_i}^i\geq 0.
	\end{equation}
	In case of ties, the shock of interest, represented by the $\ja$-th column of $(q_1^i,\ldots,q_{m_i}^i)$, is ordered first.
	In other words, let $\ja=1$ if no other column has a larger number of restrictions than $q_{\ja}^1$. If $\ja\geq 2$, then let 
	the variables be ordered such that $f_{\ja-1}^i>f_{\ja}^i$.
\end{defin}

The next theorems, based on Proposition 3 in \cite{GK18}, provides sufficient conditions for the impulse response identified set $\IS$ to 
be convex. Precisely, we first consider the case of zero restrictions only, and then extend to the case of zero and sign restrictions.
\medskip

\begin{theorem}[Convexity of identified set under zero restrictions] 
	\label{theo:SetIdZero}
	Consider an HSVAR characterized by the eigenvaules and eigenspaces as in Definition \ref{def:eigenspace} and by the admissible parameters
	as in Definition \ref{def:admissible}. Let $\li$ be an eigenvalue of algebraic multiplicity $g(\li)=m_i$, with associated eigenspace 
	$\Qli$ as in Eq. (\ref{eq:eigenspace}), containing $\qja$, the column of $Q$ associated with the $\ja$-th structural shock (shock of
	interest). Moreover, let $r=\eta(\phi,Q)=c_{lh}^\prime(\phi)\qja\in\IS$ be the impulse responses to the shock of interest.
	Finally, let the variables be ordered as in Definition \ref{def:ordering}. 
	
	Then, the identified set for $r$ is non empty and bounded for any $l\in\{1,\ldots,n\}$ and $h=1,2,\ldots$, $\phi$-a.s.
	Moreover, a sufficient condition for the identified set to be convex is that any of the following exclusive conditions holds:
	\begin{enumerate}
		\item $\ja=1$ and $f_1^i<m_i-1$;
		\item $\ja\geq 2$ and $f_j^i<m_i-j$, for $j=1,\ldots,(\ja-1)$;
		\item $\ja\geq 2$ and there exists $1\leq k < (\ja-1)$ such that $(q_1^i,\ldots,q_k^i)$ is exactly identified as in Theorem 
		\ref{theo:PointIdHSVAR} and $f_j^i<m_i-j$, for $j=k+1,\ldots,\ja$.
	\end{enumerate}
\end{theorem}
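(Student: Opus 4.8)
Suppose, as the title of the theorem and the preceding discussion indicate, that only zero restrictions (and the normalization of Definition~\ref{def:norm}) are active. The plan is to adapt the argument behind Proposition~3 in \cite{GK18} to the present setting, in which the columns of $Q$ attached to a repeated eigenvalue $\li$ are no longer free over $\On$ but are confined to the linear span $V_i\subset\Re^n$ of $\Qli$, a subspace of dimension $m_i$ (Lemma~\ref{lemma:multiplicity}), whereas the columns attached to the other eigenvalues lie in $V_i^\perp$ and play no role. All the reasoning is conditional on $\phi$. For \emph{non-emptiness} of $\Qr$ --- hence of $\IS$, which is its image under $\eta$ --- one would build an admissible $Q$ column by column: inside the block of $\li$ the $j$-th column $q_j^i$ must satisfy the $j-1$ orthogonality constraints with respect to the previously chosen columns of $V_i$ together with the $f_j^i\leq m_i-j$ zero restrictions, i.e.\ at most $m_i-1<\dim V_i$ linear constraints, so a non-zero solution exists, its sign being fixed by the normalization; mutual orthogonality of the eigenspaces then assembles the blocks into an element of $\On$. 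This works $\phi$-a.s., the exceptional set being where non-redundancy (Definition~\ref{def:RedRes}) or the sign-normalization rule breaks down. \emph{Boundedness} is immediate: $\vert r\vert=\vert c_{lh}^\prime(\phi)\qja\vert\leq\Vert c_{lh}(\phi)\Vert<\infty$ since $\Vert\qja\Vert=1$.

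For \emph{convexity}, let $\mathcal{Q}_{\ja}\subseteq V_i$ be the set of unit vectors that can occur as the $\ja$-th column of an admissible $Q$, so that $\IS=\big\{c_{lh}^\prime(\phi)q:q\in\mathcal{Q}_{\ja}\big\}$. The plan is to show that, under each condition, $\mathcal{Q}_{\ja}$ equals the unit sphere of a subspace $W\subseteq V_i$ with $\dim W\geq 2$, intersected with a closed half-space $\{q:\gamma^\prime q\geq 0\}$ through the origin induced by the diagonal-sign normalization; being connected, this set has connected --- hence interval --- image under $q\mapsto c_{lh}^\prime(\phi)q=(P_Wc_{lh}(\phi))^\prime q$, and the bounded interval so obtained is the identified set. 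Under condition~1 ($\ja=1$, $f_1^i<m_i-1$) there is no earlier column in the block, so $W=V_i\cap\ker F_1^i(\phi)$, with $\dim W\geq m_i-f_1^i\geq 2$. Under condition~2 ($\ja\geq 2$, $f_j^i<m_i-j$ for $j=1,\ldots,\ja-1$) I would take $W=V_i\cap\ker F_{\ja}^i(\phi)$ and argue that every unit $q\in W$ is indeed a valid $\ja$-th column: the earlier columns $q_1^i,\ldots,q_{\ja-1}^i$ are chosen recursively, the $j$-th subject to the $j$ orthogonality constraints against the orthonormal set $\{q_1^i,\ldots,q_{j-1}^i,q\}$ and to $f_j^i<m_i-j$ zero restrictions, hence to fewer than $m_i$ constraints so that a unit solution exists, while the remaining columns of this and the other blocks are completed as in the non-emptiness step; and $\dim W\geq m_i-f_{\ja}^i\geq\ja\geq 2$ because $f_{\ja}^i\leq m_i-\ja$ and $\ja\geq 2$.

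Condition~3 I would reduce to condition~2: by the ordering of Definition~\ref{def:ordering} and (the argument of) Theorem~\ref{theo:PointIdHSVAR}, exact identification of $(q_1^i,\ldots,q_k^i)$ fixes these to vectors $\bar q_1,\ldots,\bar q_k$ (signs immaterial here), so the later columns all lie in $V_i':=V_i\cap\{\bar q_1,\ldots,\bar q_k\}^\perp$, of dimension $m_i-k$; re-indexing within $V_i'$, $\qja$ is the $(\ja-k)$-th column of a block of size $m_i-k$, the inequalities $f_j^i<m_i-j$, $j=k+1,\ldots,\ja$, are the hypotheses of condition~2 transported to $V_i'$, and $k<\ja-1$ gives $W=V_i'\cap\ker F_{\ja}^i(\phi)$ with $\dim W\geq(m_i-k)-f_{\ja}^i\geq\ja-k\geq 2$, so convexity follows exactly as before. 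The step I expect to be the main obstacle is the recursive characterization used under conditions~2 and~3 --- that fixing an arbitrary admissible $\ja$-th column never obstructs the construction of the earlier (and later) columns --- which comes down to checking that the number of active, linearly independent linear constraints stays strictly below $m_i$ at every step of the recursion; this is where the inequalities $f_j^i<m_i-j$ are used sharply and where non-redundancy (Definition~\ref{def:RedRes}) is invoked to count independent constraints correctly, and where the $\phi$-a.s.\ qualifier enters. Putting the three cases together, $\IS$ is in each case a bounded interval, hence convex.
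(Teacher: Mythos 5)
Your proposal is correct and follows essentially the same route as the paper: non-emptiness and boundedness via the recursive column-by-column construction and the bound $\vert r\vert\leq\Vert c_{lh}(\phi)\Vert$, and convexity by showing that the set of admissible $\qja$ is the intersection of the unit sphere with a half-space of a linear subspace of dimension at least two, hence path connected with interval image. The only cosmetic difference is that the paper delegates the key ``fixing $\qja$ does not obstruct the construction of the other columns'' step to Lemmas A.1 and A.2 of Giacomini and Kitagawa, whereas you sketch that constraint-counting argument inline; the content is the same.
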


\begin{proof}
		See Appendix \ref{app:Proofs}.
\end{proof}

The previous theorem just consider zero restrictions on the vectors of $\Qli$. The following one, instead, also allows for sign restrictions, 
although these last can be imposed on the vector $\qja$ associated with the shock of interest. 

\begin{theorem}[Convexity of identified set under zero and sign restrictions] 
	\label{theo:SetIdZeroSign}
	Consider an HSVAR as in Theorem \ref{theo:SetIdZero}, where, as before, $\qja$ is the column vector corresponding to the shock of interest, 
	and let $\qja\in\Qli$, the eigenspace associated with the eigenvalue $\li$, of algebraic multiplicity $g(\li)=m_i$. Moreover, let the
	sign restrictions be imposed on the shock of interest, only.
	\begin{enumerate}
		\item Let the zero restrictions $F^i(\phi,Q)=0$ satisfy one of the conditions (1) and (2) of Theorem \ref{theo:SetIdZero}.
			If there exists a unit length vector $q\in \Re^n$ such that
			\begin{equation}
				\label{eq:Cond1ConvexSign}
				F_{\ja}^i(\phi)\,q=0 \hspace{0.5cm}\text{and}\hspace{0.5cm}
				\left(\begin{array}{c}S_{\ja}(\phi)\\\sigma^{\ja\prime}\end{array}\right)\,q>0
			\end{equation}
			then the identified set is non empty and convex for every $l\in\{1,\ldots,n\}$ and $h=0,1,2,\ldots$
		\item Let the zero restrictions $F^i(\phi,Q)=0$ satisfy condition (3) of Theorem \ref{theo:SetIdZero}, and let 
			$\big(q_1^i(\phi),\ldots,q_k^i(\phi)\big)$ be the first $k$ vectors that are exactly identified. If there exists a unit vector $q\in \Re^n$ such that
			\begin{equation}
				\label{eq:Cond2ConvexSign}
				\Big(F_{\ja}^i(\phi)^{\prime}\:,\:v_1^{i}\:,\:\ldots\:,\:v_{(n-m_i)}^{i}\:,\:
			     q_1^{i}(\phi)\:,\:\ldots\:,\:q_k^{i}(\phi)\Big)^\prime\,q=0 
				\hspace{0.5cm}\text{and}\hspace{0.5cm}\left(\begin{array}{c}S_{\ja}(\phi)\\\sigma^{\ja\prime}\end{array}\right)\,q>0
			\end{equation}
			where $(v_1^{i},\ldots,v_{(n-m_i)}^i)$ is a basis for the space $Q^{\perp}(\li)$, then the identified set is non empty and convex for every $l\in\{1,\ldots,n\}$ and $h=0,1,2,\ldots$
	\end{enumerate}
\end{theorem}
 
\begin{proof}
		See Appendix \ref{app:Proofs}.
\end{proof}

Taken jointly, the results of Theorems \ref{theo:SetIdZero} and \ref{theo:SetIdZeroSign} generalize Proposition 3 in \cite{GK18} to the case 
of a structural break on the variances of the shocks, with potential eigenvalue multiplicity. 

On the other side, the two theorems provide important insights on the possibility to apply the standard 
identification-through-heteroskedasticity approach to the case in which some of the switches in the variances are the same. As we will see in 
the next section, these new results represent the foundations for developing an estimator for the bounds of the identified set and produce 
the related inference. 

\section{Inference in Set-identified HSVARs: a Robust Bayes Approach}
\label{sec:TestInference}

In this section we present a completely brand new approach to conduct inference on set-identified HSVARs, where the set identification comes from the fact that not all the shifts in the variances of the shocks are statistically different. Details on how to estimate the reduced-form parameters and on how to check for proportional variance shifts are reported in Appendix \ref{app:Est} and Appendix \ref{app:hsvartest}, respectively. In this section, instead, we deal with all situations in which some of such variances are not significantly different each other, and we introduce our Robust Bayes approach to conduct inference on the identified set of interest.

\subsection{Inference on the identified set}
\label{sec:Inference}

Now suppose some of the eigenvalues obtained by the eigen-decomposition in Eq. (\ref{eigen decomposition}) present potential multiplicity. This evidence, for example, could be statistically checked by the \citetalias{LMNS20} or \cite{Lewis22} tests. Once this evidence is ``statistically confirmed'', a natural way of proceeding is to impose such eigenvalues to be effectively equal. This choice, however, imposes implicitly restrictions on the covariance matrices of the reduced form. While ML estimator subject to constraints on the parameters is generally implementable, it is rather problematic in the specific case of imposing equality restrictions among the eigenvalues. In such particular case, in fact, firstly, imposing the restrictions makes the model no longer identified, and thus creating convergence problems of the algorithm maximizing the likelihood function and, secondly, it is technically difficult to impose restrictions on some parameters that are observationally equivalent to permutations, as highlighted in Theorem \ref{theo:SignPerm} in the Appendix.

Our strategy, instead, is based on the following lemma, that, according to evidence of potential eigenvalue multiplicity, suggests imposing the multiplicities as the result of a minimization problem about the unrestricted and restricted covariance matrices of the HVAR. 

\begin{lemma}[Similarities of positive-definite symmetric real matrices]
\label{lemma:similar}
	Let $\Omega$ be a $n\times n$ symmetric and positive definite real matrix characterized by the eigen-decomposition 
	$\Omega=Q\Lambda Q^\prime$, with the eigenvalues contained in the diagonal matrix 
	$\Lambda=\text{diag}(\lambda_1,\lambda_2,\ldots,\lambda_n)$, and the associated eigenvectors contained in the $n\times n$ 
	orthogonal matrix $Q$. Moreover, let $\tilde{\Omega}=Q\tilde{\Lambda}Q^\prime$, where the diagonal matrix $\tilde{\Lambda}$ contains 
	the first $m$ elements fixed to a scalar $\tilde{\lambda}$, while the remaining $n-m$ are the corresponding eigenvalues in $\Lambda$.
	
	Then, according to the Frobenius metric, $\min\limits_{\tilde{\lambda}}\:\left\|\Omega-\tilde{\Omega}\right\|_F^2$ is reached when
	\begin{equation}
	\label{eq:LMean}
		\tilde{\lambda}=\frac{1}{m}\sum_{h=1}^{m} \lambda_h.
	\end{equation}
\end{lemma}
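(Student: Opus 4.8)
The plan is to reduce the Frobenius-norm minimization to a simple one-dimensional calculus problem by exploiting the fact that $\Omega$ and $\tilde\Omega$ share the \emph{same} orthogonal matrix $Q$ of eigenvectors. First I would observe that $\Omega - \tilde\Omega = Q(\Lambda - \tilde\Lambda)Q^\prime$, and since the Frobenius norm is invariant under orthogonal conjugation (as $\|QMQ^\prime\|_F^2 = \tr(QMQ^\prime Q M^\prime Q^\prime) = \tr(MM^\prime) = \|M\|_F^2$), we get
\begin{equation}
\label{eq:FrobReduction}
\left\|\Omega-\tilde\Omega\right\|_F^2 = \left\|\Lambda-\tilde\Lambda\right\|_F^2.
\end{equation}
Because both $\Lambda$ and $\tilde\Lambda$ are diagonal, this last quantity is just the sum of squared differences of diagonal entries. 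By construction $\tilde\Lambda$ agrees with $\Lambda$ in its last $n-m$ entries, so those terms vanish, leaving
\begin{equation}
\label{eq:ObjReduced}
\left\|\Lambda-\tilde\Lambda\right\|_F^2 = \sum_{h=1}^{m}\left(\lambda_h-\tilde\lambda\right)^2.
\end{equation}

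Next I would minimize the right-hand side of \eqref{eq:ObjReduced} over the scalar $\tilde\lambda\in\Re$. This is a strictly convex quadratic in $\tilde\lambda$ (the second derivative is $2m>0$), so it has a unique global minimizer, obtained by setting the first-order condition $-2\sum_{h=1}^{m}(\lambda_h-\tilde\lambda)=0$, which gives $\tilde\lambda = \frac{1}{m}\sum_{h=1}^m\lambda_h$, exactly Eq.~\eqref{eq:LMean}. (Equivalently, one recognizes the classical fact that the arithmetic mean minimizes the sum of squared deviations.)

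Honestly, there is no real obstacle here: the only point that deserves a line of care is the orthogonal-invariance step \eqref{eq:FrobReduction}, i.e.\ making explicit that $\tilde\Omega$ is built with the \emph{same} $Q$ as $\Omega$ so that the difference factors cleanly as $Q(\Lambda-\tilde\Lambda)Q^\prime$; everything after that is a one-variable least-squares computation. One might also add the brief remark that $\tilde\Omega$ remains symmetric positive definite provided $\tilde\lambda>0$, which holds automatically since $\tilde\lambda$ is an average of the positive eigenvalues $\lambda_1,\ldots,\lambda_m$ — consistent with the interpretation of $\tilde\lambda$ as the common post-break variance ratio of the shocks whose eigenvalues were deemed indistinguishable.
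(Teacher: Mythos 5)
Your proof is correct and follows essentially the same route as the paper's: both reduce $\left\|\Omega-\tilde{\Omega}\right\|_F^2$ to $\sum_{h=1}^{m}(\lambda_h-\tilde{\lambda})^2$ by exploiting the common $Q$ (you via orthogonal invariance of the Frobenius norm, the paper via the cyclic property of the trace) and then minimize the resulting quadratic, whose minimizer is the arithmetic mean. Your explicit first-order condition merely spells out the step the paper dismisses as ``clearly.''
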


\begin{proof} 
	See Appendix \ref{app:Proofs}. 
\end{proof}

The previous lemma provides a theoretical ground for fixing the common eigenvalues, when they are not statistically distinct, such that the unrestricted and restricted reduced-form covariance matrices are as close as possible, according to a specific metric. The assumption considered in the previous lemma is that the matrix $Q$, containing the eigen-vectors, is common in the two matrices $\Omega$ and $\tilde{\Omega}$. This assumption is completely reasonable for our problem in that it states that for the eigenvalues without multiplicity the eigenvectors are common in $\Omega$ and $\tilde{\Omega}$. Those associated to the eigenvalue with multiplicity, say $\lambda_i$, being not identified, must simply lay on the sub-space $Q(\li)$, orthogonal to $Q^{\perp}(\li)$. In this respect, the eigenvectors $(q_1^i,\ldots,q_m^i)$ obtained through the eigen-decomposition of $\Omega$ share this feature and can be used also as eigenvectors of $\tilde{\Omega}$. This explains the common $Q$ matrix used in Lemma \ref{lemma:similar} both for $\Omega$ and $\tilde{\Omega}$.

Let $\pift$ be a probability measure on the space $\Phi$ of reduced-form parameters. In order to obtain a prior distribution for $\phi$ we need to restrict the support of $\pift$ such that its elements satisfy the sign, normalization and equality restrictions, as well as the fact that they show eigenvalue multiplicities as in Eq. (\ref{eq:AdmSpace}). In this respect, we define the prior distribution for the reduced-form parameters as follows
\begin{equation}
\label{eq:PriorRF}
	\pi_\phi=\frac{\pift\,\mathbbm{1}\big\{\Qr\neq \emptyset\big\}}{\pift\,\Big(\big\{\Qr\neq \emptyset\big\}\Big)}\nonumber
\end{equation}
that, by construction, assigns probability one to the distribution of data that admits eigenvalue multiplicity and is consistent with the identifying restrictions. As the structural parameters are a function of $\big(\phi,Q\big)\in \Phi\times \On$, we define a joint prior for the two sets of parameters $\big(\phi,Q\big)$ as $\pi_{\phi,Q}=\pi_{Q|\phi}\:\pi_\phi$, where $\pi_{Q|\phi}$ is supported on $\Qr\in \On$.

In the case of no multiplicity, sign and permutation normalizations allow to pin down just one admissible $Q$, and $\pi_{Q|\phi}$ becomes a degenerate distribution centered on such $Q$. In the case of multiplicity and zero restrictions satisfying the pattern in Eq. (\ref{eq:RestSet}), instead, the HSVAR will be only set identified and the prior $\pi_{Q|\phi}$ has to be specified in order to obtain a posterior distribution for the structural parameters and impulse responses, as desired in standard Bayesian approach. Other than being a challenging task for applied economists to specify $\pi_{Q|\phi}$, it has been shown that the choice of such a prior, being never updated by the data, can have non-negligible impact on the posterior inference even asymptotically (\citeauthor{BH15}, \citeyear{BH15}).

In order to fix this unpleasant issue, we use the robust Bayes inference proposed by \cite{GK18}. This approach consists in fixing a single prior $\pi_\phi$ for the reduced-form parameters, but a set of priors for $\pi_{Q|\phi}$. This strategy allows to obtain a class of posteriors for $(\phi,Q)$ and, as a consequence, for the impulse response of interest $r=r(\phi,Q)$.

According to \cite{GK18}, the results of this procedure can be summarized by reporting the posterior mean bounds interval, that can be seen as an estimator for the identified set, and an associated robustified credible region measuring the uncertainty related to the former. In particular, if we define $\ell(\phi)=\mathrm{inf}\big\{r(\phi,Q):\,Q\in\Qr\big\}$ and 
$u(\phi)=\mathrm{sup}\big\{r(\phi,Q):\,Q\in\Qr\big\}$, the posterior mean bounds interval can be written as 
$\Big[\int_\Phi\,\ell(\phi)\ud \phi_\phi\:;\:\int_\Phi\,u(\phi)\ud \phi_\phi\Big]$.
The robustified credible region, instead, consists in an interval $C_\alpha$ for which the posterior probability is greater than or equal to $\alpha$ uniformly, i.e. $\pi_{r|Y}(C_\alpha)\geq \alpha$.

\subsection{Computing posterior bounds}
\label{sec:algo}

In this subsection we present an algorithm to be used in the case of two volatility regimes in the data with known break date. This last assumption is rather standard in the literature, being the break dates associated to well documented changes in the policy conduct or to financial crises.\footnote{See, among many others, \cite{LanneLutkepohl08JMCB}, \cite{BG06RESTATS}, \cite{ABCF19}, \cite{Rigobon03}, \cite{Bacchiocchi17}, \cite{CMT23WP}. Moreover, \cite{Rigobon03} also shows consistency of the estimated parameters in the case of break date 
miss-specification.} 

\begin{algo}
\label{algo:PostBounds}
	Let $y_{-l+1},\ldots,y_0,y_1,\ldots,y_T$ be a sample of observations characterized by a break in the volatility occurred at time $T_B$, that is known or exogenously determined. Fix a normalization rule $N$.
	\begin{itemize}[left=35pt, parsep=0pt]
	\item[(Step 1)] Estimate the HSVAR model through the ML estimator as in Eq. (\ref{eq:ML}),\footnote{Or 
	    through the feasible GLS as in Eq. (\ref{eq:GLS}).} obtain the estimated $\hat{Q}$ and $\hat{\Lambda}$ and check for eigenvalue multiplicity (e.g. \citeauthor{LMNS20}, \citeyear{LMNS20}, or \citeauthor{Lewis21}, \citeyear{Lewis21}). If there is no multiplicity, or the shock of interest can be attributed to a particular $q_{\ja}$ that comes out from 
		an eigenvalue without multiplicity, then such shock is point identified (apart from sign) and the inference on the IRFs is standard. 
		Then STOP.
		
		If there are multiplicities and the shock of interest cannot be attributed to the already identified columns of $Q$, then consider equality and sign restrictions, $\Fr$ and $\Sr$, respectively, to identify the shock of interest associated to $\qj\in \Qli$; 
		then move to Step 2.
	\item[(Step 2)] Specify a prior for the reduced-form parameters $\pift$ and estimate a Bayesian 
	    HVAR as suggested in Appendix \ref{app:Est} and obtain draws from the posterior distribution of $\tilde{\pi}_{\phi|Y}$, the parameters of the reduced form of the HVAR. 
	\item[(Step 3)] Take one draw $\phi=(B,\Omega_1,\Omega_2)$ from the posterior distribution of $\tilde{\pi}_{\phi|Y}$. 
	    From this draw obtain the covariance matrices $\Omega_1$ and $\Omega_2$. Solve the eigen-decomposition in Eq. (\ref{eigen decomposition}) and collect the eigenvalues in the matrix $\Lambda$ and the eigenvectors in the matrix $Q$. 
	\item[(Step 4)] Extract from $Q$ the basis of the space $\Qli$, whose columns are associated with 
	    the possible multiple eigenvalues, and define the matrix $\bar{Q}_{\lambda_i}$ containing the $n-m_i$ eigenvectors orthogonal to $\Qli$. If the zero restrictions meet the rank condition in Theorem \ref{theo:PointIdHSVAR}, then $\Qr$ is non-empty and the point identified columns of $Q(\lambda_i)$, for the draw of $\phi$, can be easily determined through Algorithm 1 in RWZ; then move to Step 5. If, instead, the zero restrictions do not meet the rank condition in Theorem \ref{theo:PointIdHSVAR}, then the model is only set identified and, given the draw of $\phi$, check whether $\Qr$ is empty or not by following the sub-routines below:
		\begin{itemize}[left=27pt, parsep=0pt]
		\item[(Step 4.1)] Let $z_1 \sim N(0,I_n)$ be a draw of an \textit{n}-variate standard normal random variable. 
			Let $\tilde{q}_i^1 = M_1z_1$ be the $n \times 1$ residual vector in the linear projection of $z_1$ onto an $n \times f_1^i$ regressor matrix $F_1^i(\phi)^\prime$. For $k = 2,\ldots m_i$, run the following procedure sequentially: draw $z_k\sim N(0,I_n)$ and compute $\tilde{q}_k^i = M_k z_k$, where $M_k z_k$ is the residual vector in the linear projection of $z_k$ onto the $n\times(f_k^i+n-m_i+k-1)$ matrix $\big(F_k^i(\phi)^\prime,\bar{Q}_{\lambda_i},\tilde{q}_1^i,\ldots,\tilde{q}_{k-1}^i\big)$. The vectors $\tilde{q}_1^i,\ldots,\tilde{q}_{m_i}^i$ are mutually orthogonal, orthogonal to $\bar{Q}_{\lambda_i}$, and satisfy 
			the equality restrictions.
		\item[(Step 4.2)] Given $\tilde{q}_1^i,\ldots,\tilde{q}_{m_i}^i$ obtained in the previous step, define
			\begin{equation}
			\label{eq:algoSignSt}
				Q_{\lambda_i} = \bigg[\pm\frac{\tilde{q}_1^i}{\|\tilde{q}_1^i\|},\ldots,
				\pm\frac{\tilde{q}_{m_i}^i}{\|\tilde{q}_{m_i}^i\|}\bigg],
				\nonumber
			\end{equation}
			where $\|\cdot\|$ is the Euclidean metric in $\Re$, then arrange the sign of each column of $Q_{\lambda_i}$ according to the sign normalization as defined by $S\in \Dn$.
			Based on the obtained $Q_{\lambda_i}$ with appropriate sign normalization, form the $Q$ matrix by collecting the columns in $\bar{Q}_{\lambda_i}$ and $Q_{\lambda_i}$ according to the correct ordering determined by the permutation matrix $P\in \Pn$.
		\item[(Step 4.3)] Check whether $Q$ obtained in (Step 4.2) is such that\\
		    $(A_0,A_+)=\big(PSQ^\prime\Sigma_{1,tr}^{-1},\,PSQ^\prime\Sigma_{1,tr}^{-1}B\big)$, for appropriate $S\in \Dn$ and $P\in \Pn$, satisfies the sign restrictions $S(\phi,Q) \geq 0$. If so, retain this $Q$ and proceed to (Step 5). Otherwise, repeat (Step 4.1) and (Step 4.2) a maximum of $L$ times (e.g. L = 3000) or until $Q$ is obtained satisfying $S(\phi,Q) \geq 0$. If none of the $L$ draws of $Q$ satisfies $S(\phi,Q) \geq 0$, approximate $\Qr$ as being empty and return to (Step 3) with the following draw of $\phi$.
		\end{itemize}
	\item[(Step 5)] Given $\phi$ and $Q=\big(\bar{Q}_{\lambda_i},Q_{\lambda_i}\big)$, 
	    with the correct    ordering determined by $P\in \Pn$, and correct sign normalization determined by $S\in \Dn$, obtained in (Step 4), compute the lower and upper bounds of $\IS$ by solving the following constrained nonlinear optimization problem:
		\begin{eqnarray}
		\label{eq:algoOpt}
			& & \ell(\phi) = \mathrm{arg}\: \underset{Q_{\lambda_i}}{\mathrm{min}}\: c_{gh}^\prime(\phi)\qja,\nonumber\\
			\mathrm{s.t.}	& & Q^\prime Q=I_n,\hspace{0.2cm} \textbf{F}(\phi,Q)=0\nonumber\\
			& & \big(PSQ^\prime\Sigma_{1,tr}^{-1},\,PSQ^\prime\Sigma_{1,tr}^{-1}B\big)\in N, \hspace{0.2cm}\mathrm{ and }\hspace{0.2cm} \textbf{S}(\phi,Q)\geq 0\nonumber
		\end{eqnarray}
		and $u(\phi) = \mathrm{arg}\: \underset{Q_{\lambda_i}}{\mathrm{max}}\: c_{gh}^\prime(\phi)\qja$ under the same set of constraints. If the zero restrictions meet the rank condition in Theorem \ref{theo:PointIdHSVAR}, then $Q=\big(\bar{Q}_{\lambda_i},Q_{\lambda_i}\big)$ is a singleton and $\ell(\phi)=u(\phi)$.
	\item[(Step 6)] Repeat (Step 3) - (Step 5) $M$ times to obtain $\big[\ell(\phi_m),\: u(\phi_m)\big]$, $m=1,\ldots,M$. Approximate the 
		set of posterior means by the sample averages of $\Big(\ell(\phi_m),\:m=1,\ldots,M\Big)$ and $\Big(u(\phi_m),\:m=1,\ldots,M\Big)$.
    \item[(Step 7)] To obtain an approximation of the smallest robust credible 
        region with credibility $\alpha \in (0,1)$, define $d(\eta, \phi) = \mathrm{max}\,\{|\eta - \ell(\phi)| , |\eta - u(\phi)|\}$, and let $\tilde{z}_\alpha(\eta)$ be the sample $\alpha$-th quantile of $\big(d(\eta, \phi_m):\:m = 1,\ldots M\big)$. An approximated smallest robust credible region for $\eta$ is an interval centered at $\mathrm{arg}\: \mathrm{min}_\eta \,\tilde{z}_\alpha(\eta)$ with radius $\mathrm{min}_\eta\, \tilde{z}_\alpha(\eta)$.
	\end{itemize}
\end{algo}

Some remarks about the algorithm are in order. The first one is about the prior for the two covariance matrices $\Omega_1$ and $\Omega_2$. 
As the aim of the analysis is to highlight the possible eigenvalue multiplicity, it would be preferable to use diffuse priors, like diagonal matrices with equal values on the main diagonal, that from one side are non-informative and on the other side consider all the eigenvalues to be equal and let the likelihood function to play the relevant role in this respect. 

Second, the way the draws from the posterior distribution are obtained depends on the theoretical results of Appendix \ref{app:Est}. Using independent priors for $\Omega_1$, $\Omega_2$ and $\fb$ allows to develop a Gibbs sampler that is rather simple and permits to explore the joint posterior distribution in a very convenient way. Step 3 of our algorithm is based on this approach for generating the draws $\phi$ from the distribution $\tilde{\pi}_{\phi|Y}$. Any alternative way, however, can be performed without altering the other steps of the algorithm.

Third, checking for the emptiness of the identified set in the case of sign restrictions is performed in Step 4 by using linear projections starting from normal draws, as in \cite{GK18} and many other contributions in the Bayesian literature. As an alternative, we could use the QR decomposition as proposed by \citetalias{ARW18}.

Fourth, for each of the draws consistent with the zero and sign restrictions, we consider it as if there were eigenvalue multiplicities, regardless of whether it is effectively so from a statistical point of view, for the draw  $\phi$. In this respect, the eigenvectors associated to the potential multiple eigenvalues act as a basis for the space of the not identified columns of $Q$. From one side, this way of proceeding is extremely conservative as it completely ignores the amount of information contained in all those draws where all the eigenvalues are substantially distinguished. From the other side, however, it avoids the consequences of a pre-testing step to be applied to each draw to statistically check for eigenvalue multiplicity. Our inference, thus, is robust to eigenvalue multiplicity in the sense that we apply the robust Bayesian approach to the set identified columns of $Q$ associated to the suspected multiple eigenvalues. 

Fifth, the constrained nonlinear optimization problem in Step 5 is less demanding than the one in Algorithm 1 by \cite{GK18}, as the argument is not the entire matrix $Q$ but just a subset of its columns. Even if the HVAR model is relatively large, the number of eigenvectors generating the subspace $\Qli$ is in general relatively small and we do not expect concerns about the convergence properties of the numerical optimization step. On the contrary, we could replace Step 5 by a new algorithm in the spirit of Algorithm 2 in \cite{GK18}, where the constrained nonlinear optimization problem is substituted by iterating many times Step 4.1-Step 4.3 and approximate the interval $\big[\ell(\phi_m),\: u(\phi_m)\big]$ with the minimum and maximum values obtained in such iterations. If the number of iterations goes to infinity, such alternative bounds still provide a consistent estimator of the identified set.

Sixth, the algorithm works even in the case the zero restrictions allow to point identify the matrix $Q$. In this case, the set $\Qr$ is always non-empty, and the constrained nonlinear optimization problem simply returns $\ell(\phi)=u(\phi)$. The inference, then, becomes standard.

From a theoretical point of view, \cite{GK18} discuss the importance of convexity, continuity and differentiability of the identified set $\IS$ for the posterior means to have a valid frequentist interpretation. Obviously, the same has to be verified in our setup. If this is the case, they show that the set of posterior means is a consistent estimator of the true identified set and the robust credible 
region is an asymptotically valid confidence set for the true identified set.

Concerning convexity, we have already proved it in the previous Theorems \ref{theo:SetIdZero} and \ref{theo:SetIdZeroSign}. About continuity and differentiability, since we use the same set of zero and sign restrictions as in \cite{GK18}, extending their Proposition 4 and Proposition 5 to our setup is straightforward. Definitely, appropriate choice of zero and sign restrictions associated with mild regularity conditions on the coefficient matrices of such restrictions guarantee our results on HSVARs to have a valid frequentist interpretation, as for traditional SVARs.


\section{Empirical Application}
\label{sec:empirics}

We apply our methodology to the SVAR model for the global crude oil market of \citet{kilian200AER} that includes three variables: the percent change in global crude oil production ($\Delta prod_t$), an index of global real economic activity ($rea_t$) and the logarithm of the real price of crude oil ($rpo_t$). Data are monthly and the sample period runs from January 1973 through December 2007.%
\footnote{We rely on exactly the same dataset as \cite{kilian200AER}. See Appendix \ref{app:Empirics} for details and additional results.} While \cite{kilian200AER} identifies three structural shocks that drive the real price of crude oil using a simple recursive scheme, \citetalias{LMNS20}, \citet{LN2014JaE} and \citet{LukAdvEcnm} show that the same structural innovations can also be recovered by exploiting the existence of distinct volatility regimes.\footnote{Identification through heteroskedasticity has found other applications in studies of the crude oil market. See e.g. \citet{bruns2023have} and \citet{kanzig2021macroeconomic}.} We set the lag order of the VAR equal to 12 and follow \citetalias{LMNS20} that distinguish between two volatility regimes with -- an exogenously determined -- change point in October 1987.\footnote{Note that \citetalias{LMNS20}, \citet{LN2014JaE} and \citet{LukAdvEcnm} rely on a VAR of order 3, while \cite{kilian200AER} stresses that a VAR of order 24 is necessary to capture long price cycles of crude oil and hence for accurately estimating the impulse responses in global oil market models. The selected lag order is thus a compromise between these two approaches.}

Table \ref{tab:TabLam} in Appendix \ref{app:Empirics} reports the estimated $\lambda$s and the results of the test by \citetalias{LMNS20} for eigenvalue multiplicity. From the test it clearly emerges that $H_0: \lambda_2 = \lambda_3$ cannot be rejected by the data, implying that standard identification through heteroskedasticity fails. In fact, only one structural shock can be statistically identified relying on changes in volatility. If the identified shock cannot be given an interpretation that is consistent with economic theory, or if one wishes to identify other structural shocks, additional sign or exclusion restrictions are needed. In this case, Theorem \ref{theo:PointIdHSVAR} and its implementation in Algorithm \ref{algo:PostBounds} become extremely useful.

\subsection{Point and set identification of oil supply and demand shocks}
We write the relationship between structural innovations, $\varepsilon_t$ and reduced-form errors, $u_t = C \varepsilon_t$ as follows:
 \begin{equation}
	\label{eq:KilianOil}
	\left(\begin{array}{l}
	u_t^{\Delta prod}\\u_t^{rea}\\u_t^{rpo}\end{array}\right)=\left(\begin{array}{ccc}
	c_{11} & c_{12} & c_{13}\\%
	c_{21} & c_{22} & c_{23}\\%
	c_{31} & c_{32} & c_{33}\end{array}\right)%
\left(\begin{array}{l}
	\e_{t}^{oil\ supply\ shock}\\\e_t^{aggregate \ demand \ shock}\\\e_t^{oil-specific \ demand \ shock}\end{array}\right)
	\end{equation}
where $C$ corresponds to the impact response matrix ($A_0^{-1}$). We consider four SVARs based on different identifying restrictions:
\begin{itemize}
    \item $\mathcal{M}_0$, a recursively identified SVAR model with $c_{12} = c_{13} = c_{23} = 0$;
    \item $\mathcal{M}_1$, a standard HSVAR model identified exploiting changes in volatility and assuming distinct eigenvalues;
    \item $\mathcal{M}_2$, an HSVAR model that imposes eigenvalue multiplicity and exploits static and dynamic sign restrictions;
    \item $\mathcal{M}_3$, an HSVAR model that allows for eigenvalue multiplicity and imposes one exclusion restriction ($c_{21}=0$).
\end{itemize}

\begin{table}[t]
\centering
\newcolumntype{L}[1]{>{\footnotesize\hsize=#1\hsize\raggedright\arraybackslash}X}
\newcolumntype{C}[1]{>{\footnotesize\hsize=#1\hsize\centering\arraybackslash}X}
\renewcommand{\arraystretch}{1.2}
\caption{Sign Restrictions on impact responses ($C\equiv A^{-1}_0$) in model $\mathcal{M}_2$}
\begin{tabularx}{\textwidth}{L{1.6}C{.8}C{.8}C{.8}}\hline
 & Oil supply disruption & Positive aggregate demand shock & Positive oil-specific demand shock \\\hline
$\Delta prod_t$ & (-) & + & *  \\
$rea_t$ &  - & (+) & *\\
$rpo_t$ & + & + & (+)\\\hline
\end{tabularx}
\label{tab:Sign}
\caption*{\scriptsize{\textit{Notes}: `` * '' denotes that the sign of the impact response is unrestricted. Signs along the main diagonal are in brackets to highlight that these are not actual sign restrictions, but sign normalizations placed on $C\equiv A^{-1}_0$.}}
\end{table}

\noindent Model $\mathcal{M}_0$ is the recursively identified SVAR model of \cite{kilian200AER} used as a benchmark against which we compare results from HSVAR models. Three exclusion restrictions -- $c_{12}=c_{13}=c_{23}=0$ -- allow to point identify an oil supply shock and two demand shocks (i.e. aggregate and oil-specific demand shocks). Oil supply shocks represent innovations to the current physical availability of crude oil. Aggregate demand shocks capture unexpected changes of the demand for all industrial commodities driven by fluctuations in the global business cycle, while oil-specific demand shocks represent shifts in the precautionary demand for crude oil triggered by concerns about the future availability of supplies.

The HSVAR model $\mathcal{M}_1$ exploits changes in volatility for identifying structural shocks, without imposing additional exclusion or sign restrictions. We assume the eigenvalues to be all distinct, although results in Table \ref{tab:TabLam}(\textit{b}) in the Appendix highlight the existence of eigenvalue multiplicity.

For this reason, in model $\mathcal{M}_2$ we impose the constraint $\lambda_2=\lambda_3$ in Step 3 of Algorithm \ref{algo:PostBounds}. With two distinct eigenvalues, we can point identify only one structural shock that, as will be shown in Section \ref{sec:results}, yields impulse responses consistent with those associated with an oil-specific demand shock. The remaining structural shocks are set identified combining static and dynamic sign restrictions. See Table \ref{tab:Sign} in Appendix \ref{app:Empirics}. We postulate that a negative oil supply shock increases the real price of crude oil and depresses global real economic activity on impact. A positive aggregate demand shock is expected to raise oil price and production on impact. Notice that we also place sign normalizations on the main diagonal of $C\equiv A^{-1}_0$. Furthermore, we constrain the sign of the response of real crude oil price to oil supply disruptions to be positive for twelve months, starting from the impact response. These additional restrictions rule out models with the real price of crude oil decreasing below its starting level after a negative oil supply shock.

While specification $\mathcal{M}_2$ allows to point identify a single structural shock, because of eigenvalue multiplicity the HSVAR model remains set identified. In this case, Theorem \ref{theo:PointIdHSVAR} shows that point identification of the HSVAR model can be achieved with a single exclusion restriction. In model $\mathcal{M}_3$, we add one exclusion restriction on the impact response matrix: $c_{21}=0$. This restriction implies that the first shock does not affect real economic activity within the same month. Compared to the recursively identified model, $\mathcal{M}_0$, when a volatility shift is exploited the identification scheme is less demanding. In fact in model $\mathcal{M}_3$ point identification is achieved -- at least in statistical sense -- combining the shift in the volatility of one of the structural shocks with a single zero restriction.

\subsection{Impulse response analysis}
\label{sec:results}

All specifications are estimated with Bayesian techniques drawing from the posterior of reduced-form parameters until we obtain 1000 realizations of the non-empty identified set. We estimate the whole set of structural impulse response functions over an horizon of 24 months. Notice that we report the implied response of world crude oil production obtained by cumulating that for $\Delta prod_t$. We focus on shocks that are expected to raise the real price of crude oil.
Therefore, in the case of supply shocks we plot the responses to a negative shock representing a disruption of crude oil supply.

Impulse responses for $\mathcal{M}_0$ and $\mathcal{M}_1$, as not informative for the implementation of our methodology, are reported in Appendix \ref{app:Empirics} to save space (Figures \ref{fig:IRF_M0} and \ref{fig:IRF_M1}, respectively). Not surprisingly, the impulse responses for $\mathcal{M}_0$ are totally in line with the original ones in \citet{kilian200AER}.\footnote{As for the estimation, we rely on a noninformative improper Jeffreys' prior that allows to draw reduced-form parameters from a normal-inverse-Wishart posterior.} Model $\mathcal{M}_1$, instead, builds on the standard identification through heteroskedasticity approach assuming implicitly that all eigenvalues are distinct. However, as the test by \citetalias{LMNS20} shows, changes in volatility alone here might not convey enough information to point identify all the structural shocks. We overcome this issue by imposing restrictions in  $\mathcal{M}_2$ and $\mathcal{M}_3$. Impulse responses from HSVAR models $\mathcal{M}_2$-$\mathcal{M}_3$ are displayed in Figures \ref{fig:IRF_M2_notest}-\ref{fig:IRF_M3}.\footnote{Since HSVAR models normalize structural residuals to have identity covariance matrix in the first regime, the scaling of these figures is not the same as that of Figure \ref{fig:IRF_M0}.}

\begin{figure}[ht!]
\caption{Impulse response functions $\mathcal{M}_{2}$}

    \includegraphics[width=.3\linewidth]{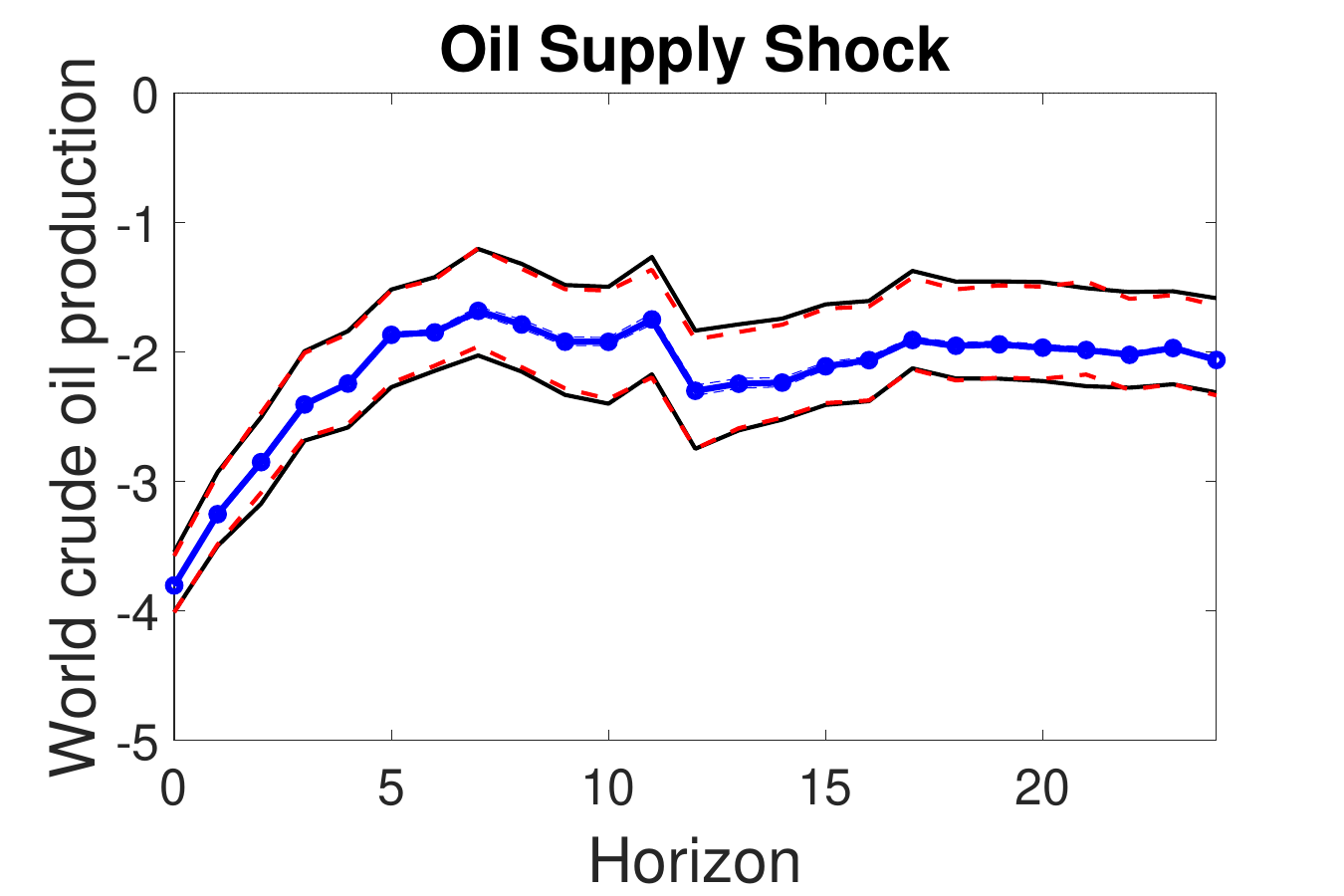}%
  \includegraphics[width=.3\linewidth]{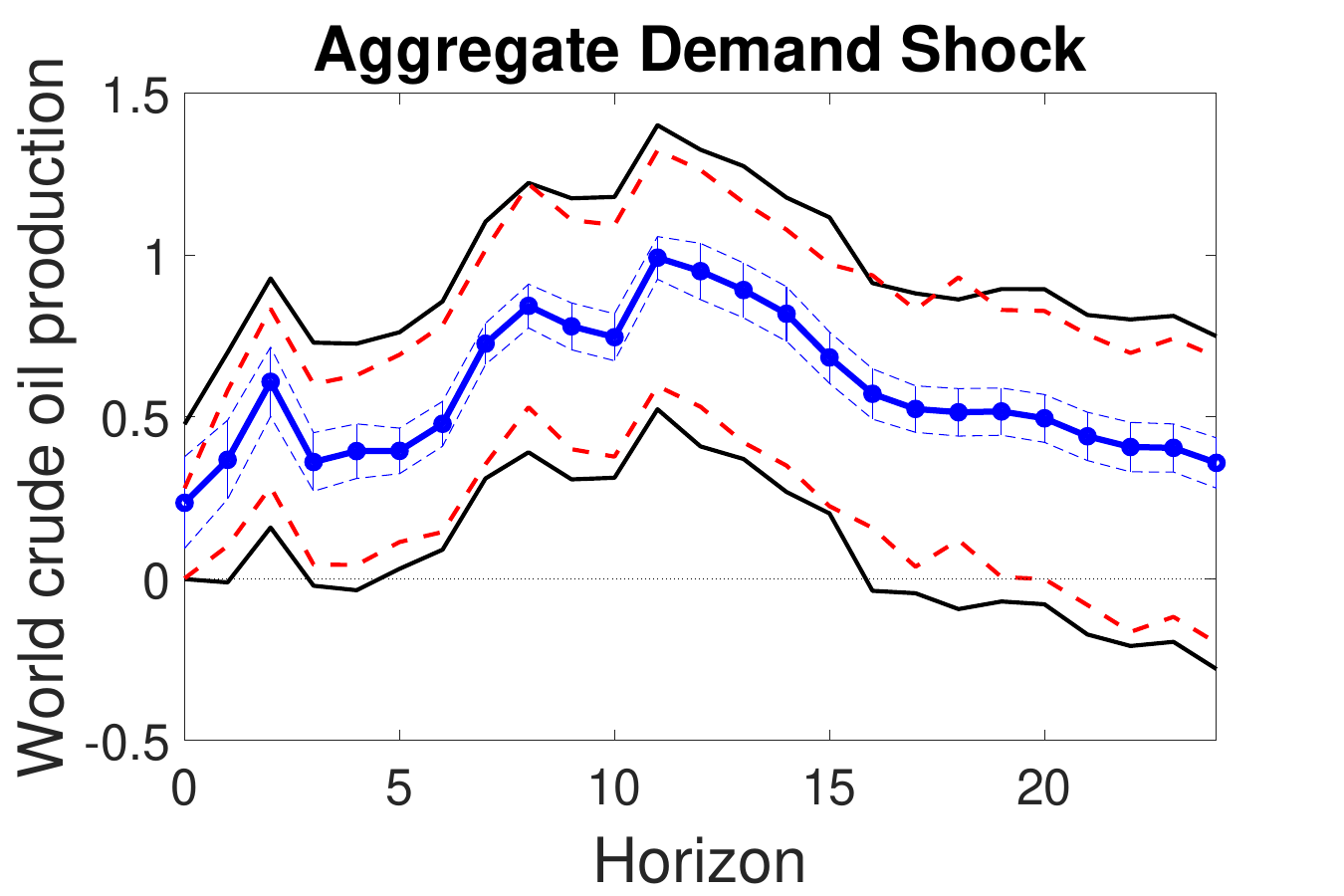}
  \includegraphics[width=.3\linewidth]{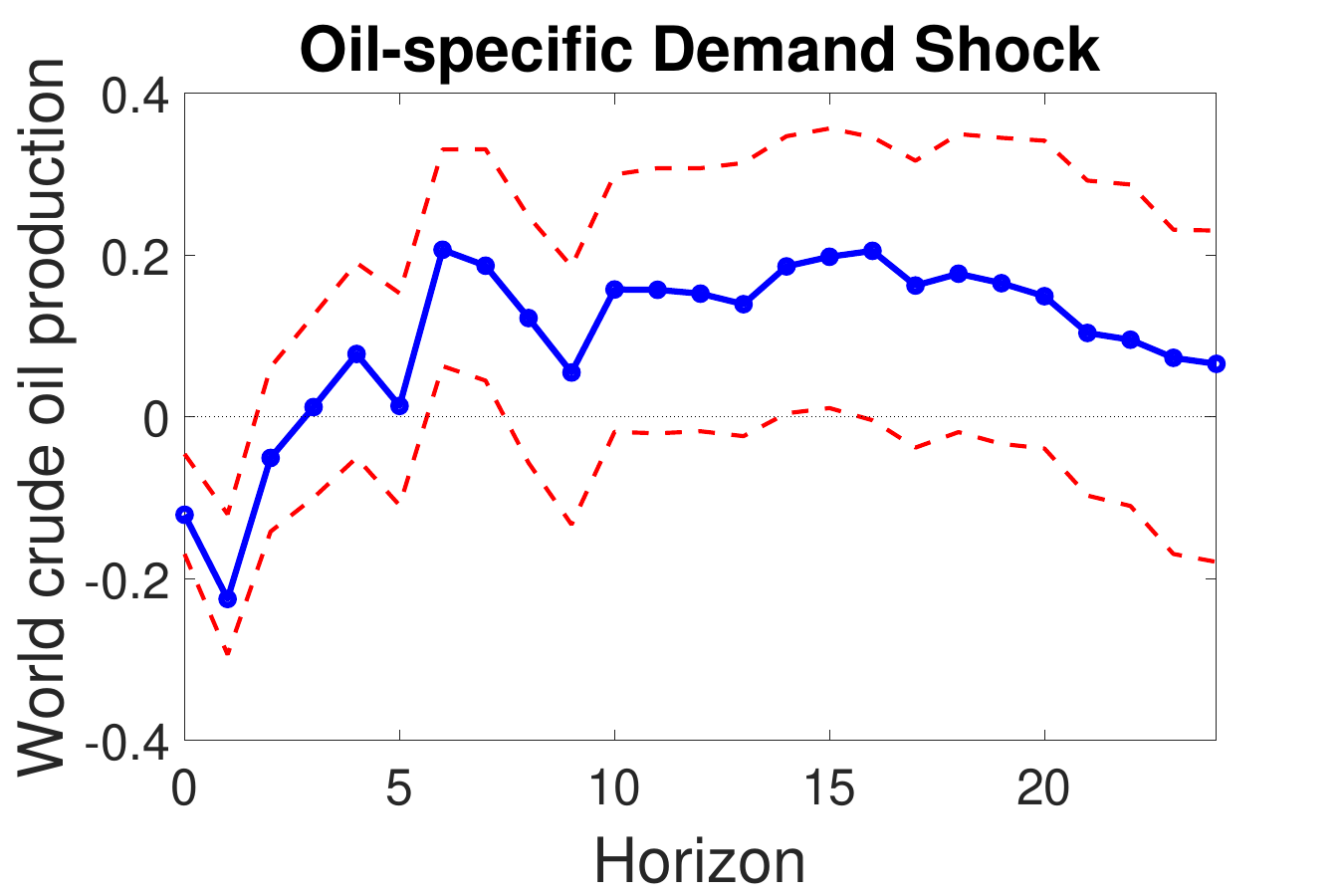}\\
  \includegraphics[width=.3\linewidth]{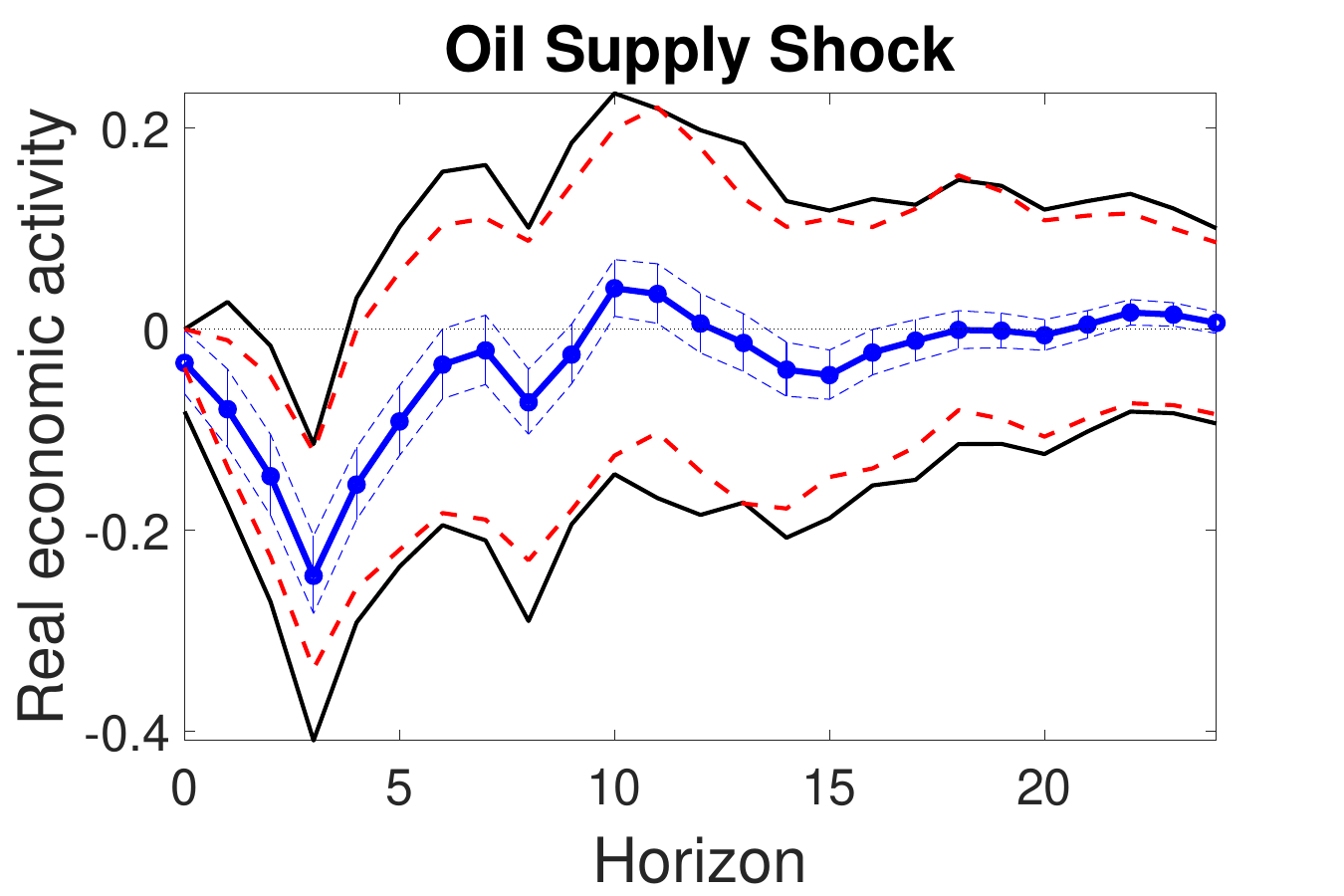}%
  \includegraphics[width=.3\linewidth]{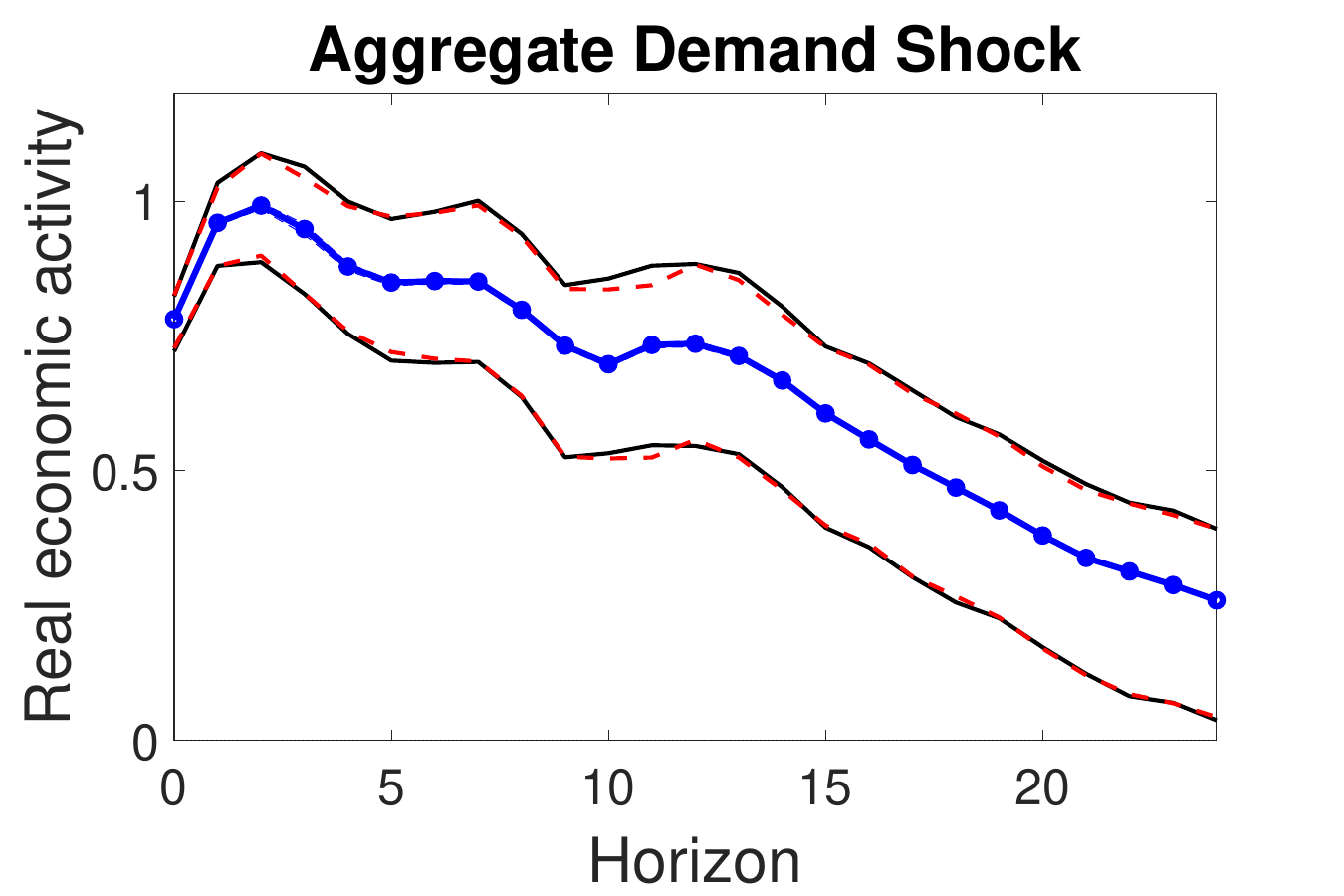}%
  \includegraphics[width=.3\linewidth]{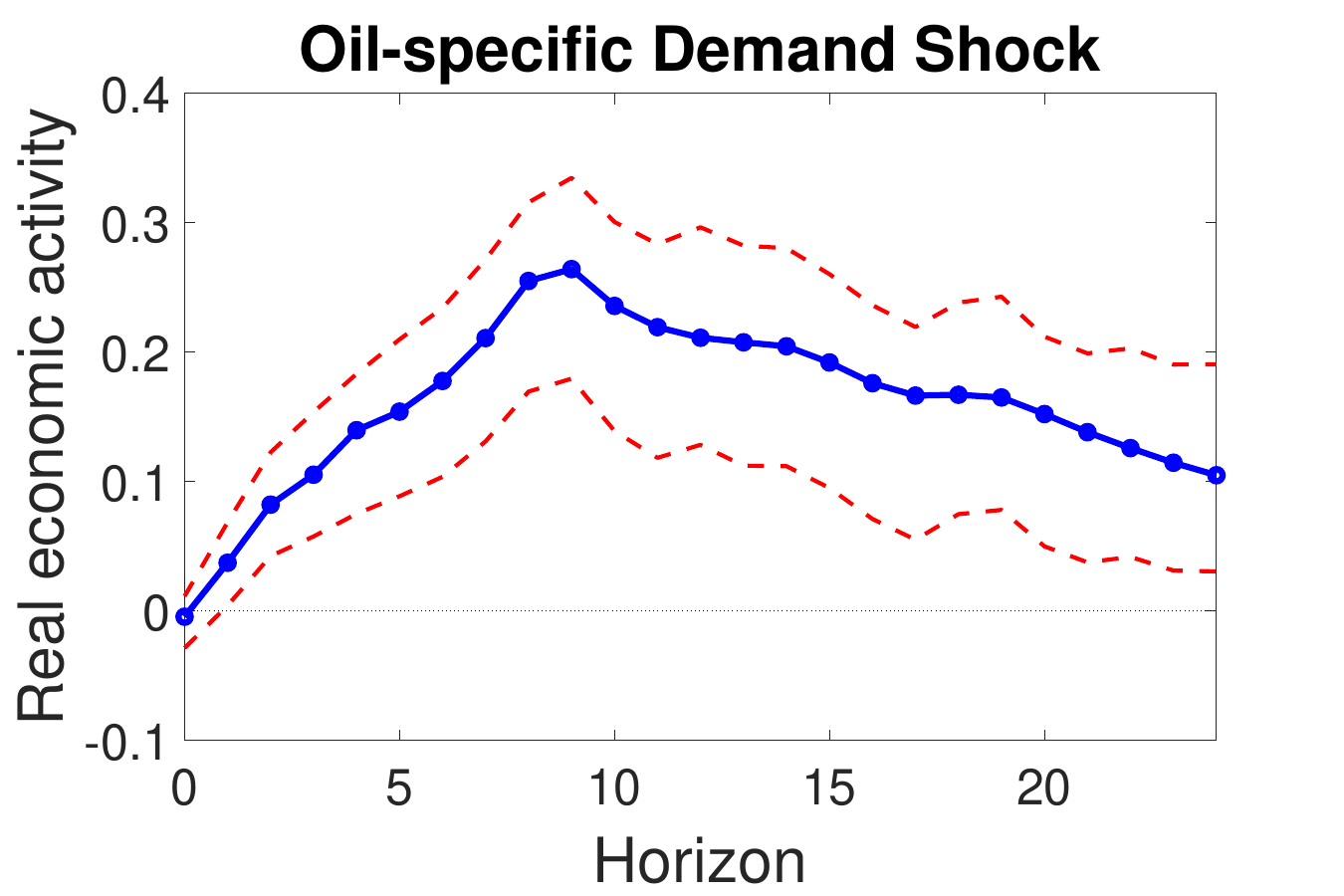}\\
  \includegraphics[width=.3\linewidth]{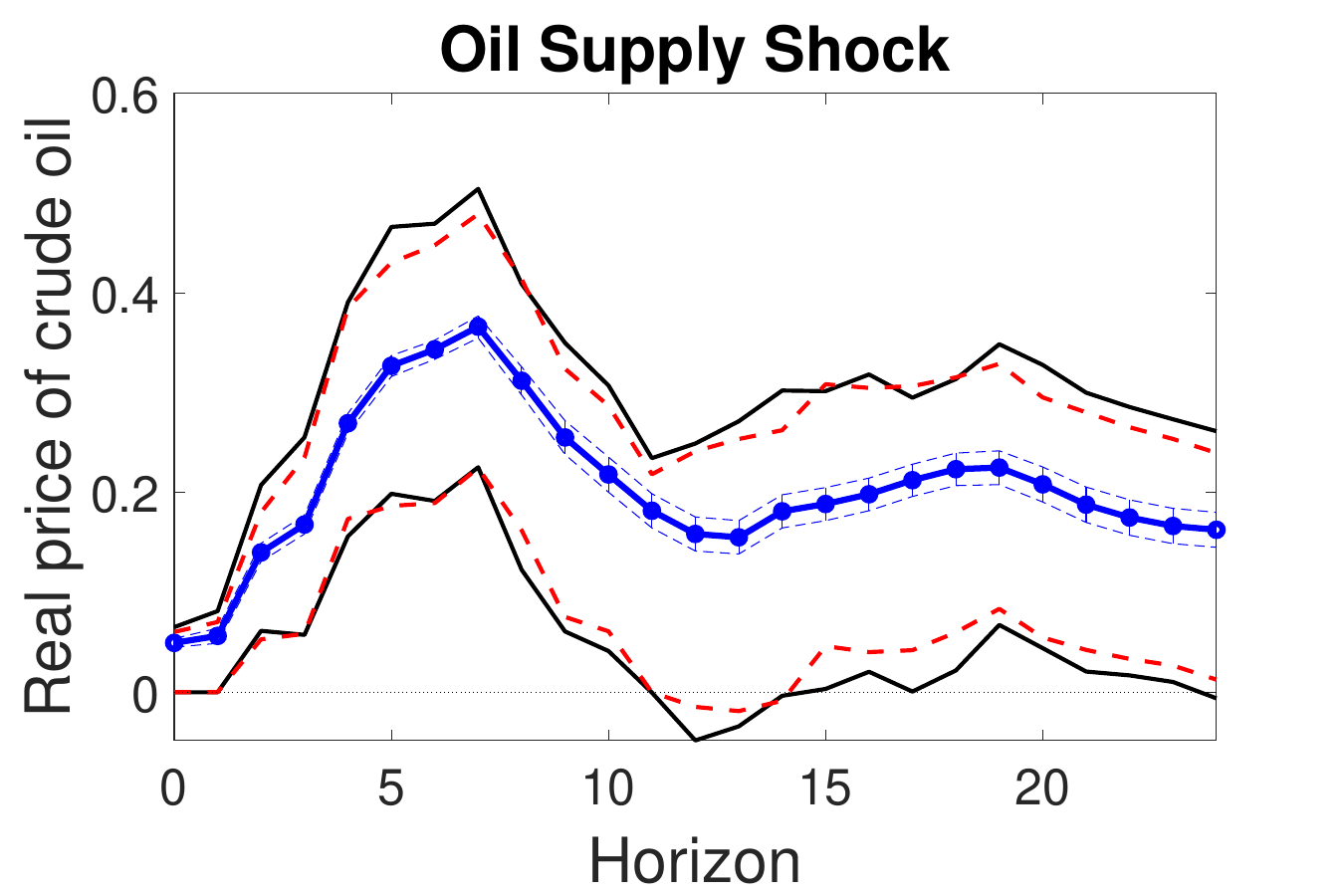}%
  \includegraphics[width=.3\linewidth]{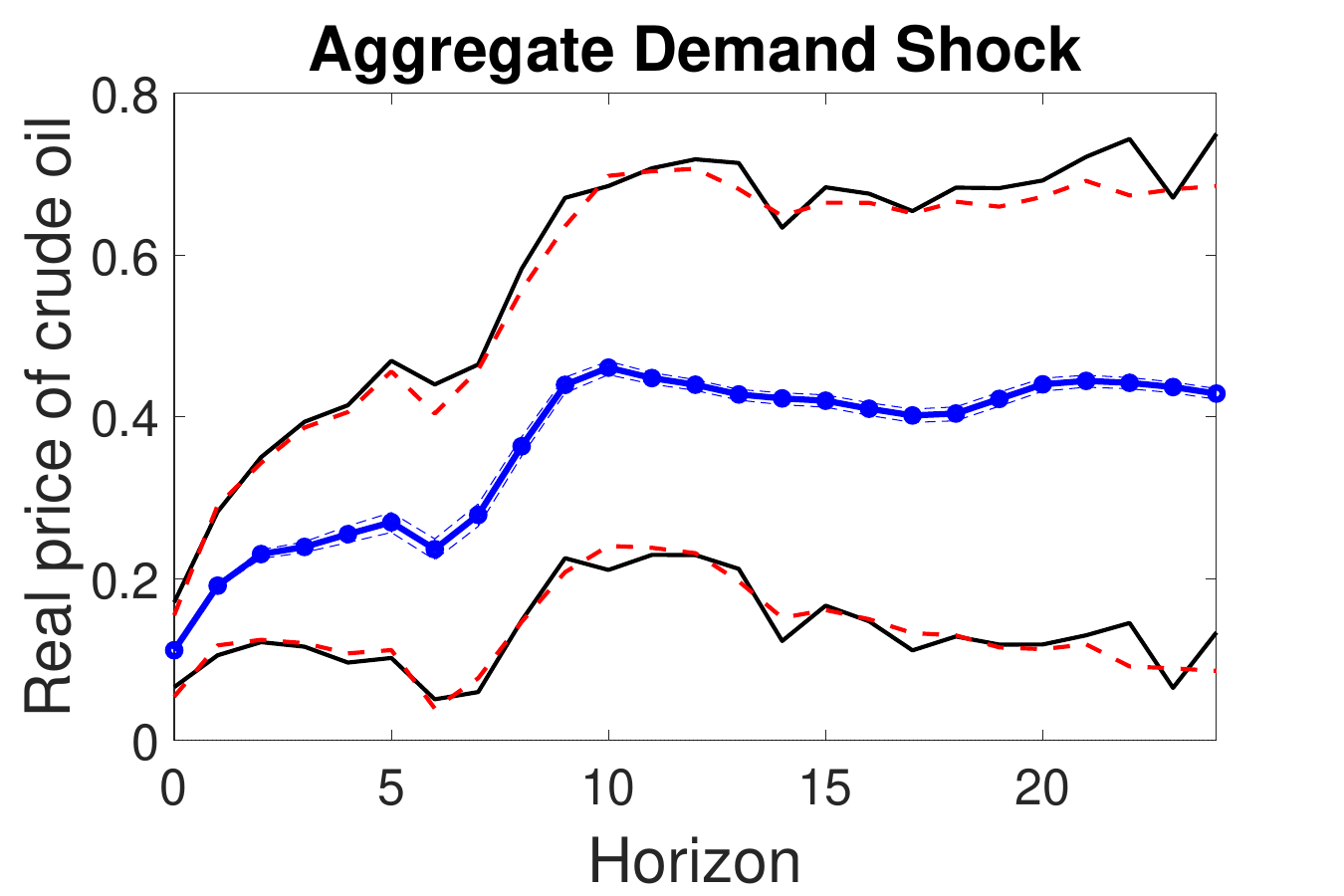}%
  \includegraphics[width=.3\linewidth]{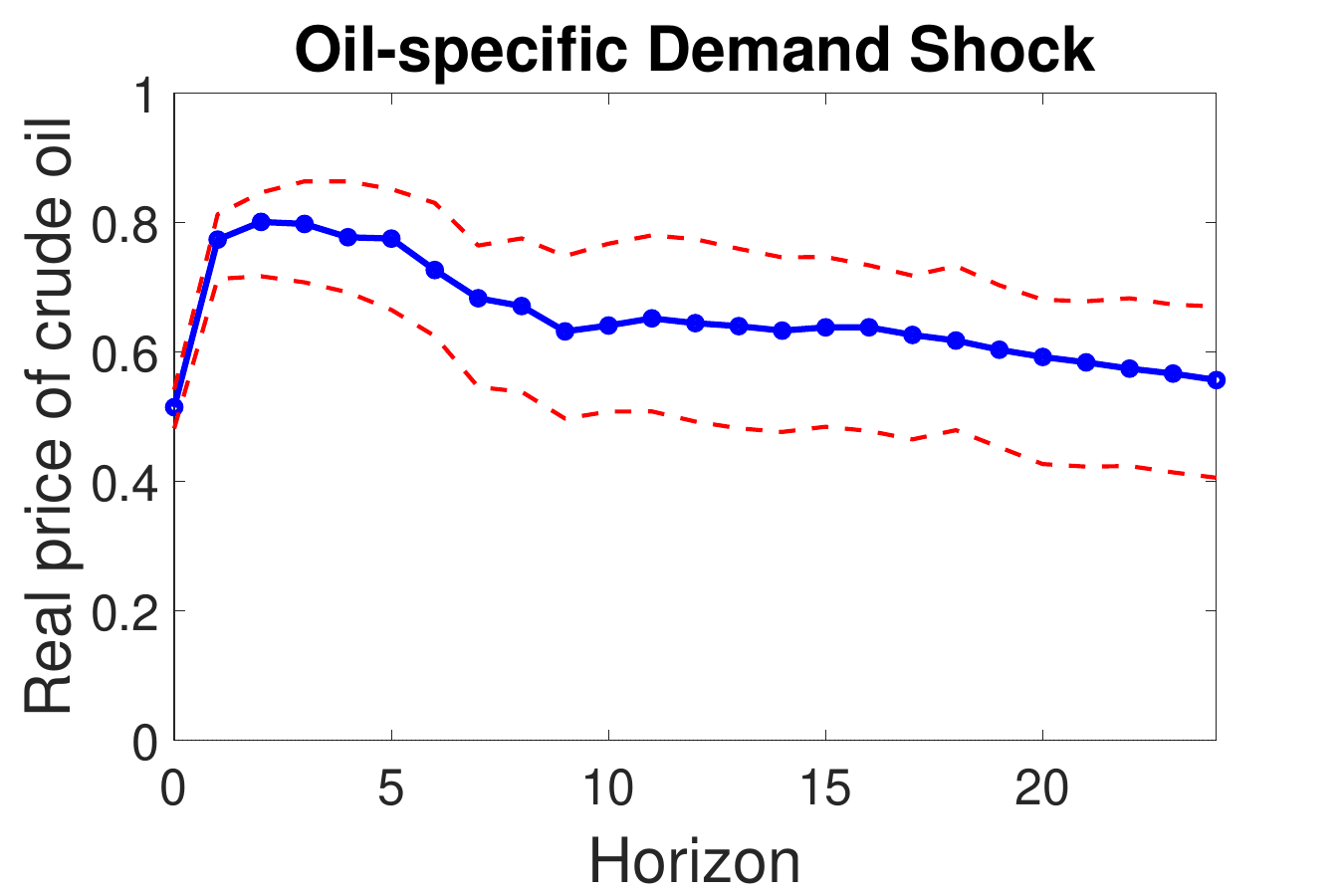}
\caption*{\scriptsize\textit{Notes}: the blue line with dots represents the standard Bayesian posterior mean response, the dashed red lines identify upper and lower bounds of the highest posterior density region with credibility 68\%. Plots in first and second columns of the figure also report the set of posterior means (blue vertical bars) and the bounds of the robust credible region with credibility 68\% (solid black curves). Identification via heteroskedasticity with multiple eigenvalues (i.e. only one shock is point identified), static and dynamic sign restrictions.}
\label{fig:IRF_M2_notest}
\end{figure}

\begin{figure}[ht!]
\caption{Impulse response functions $\mathcal{M}_{3}$}

    \includegraphics[width=.3\linewidth]{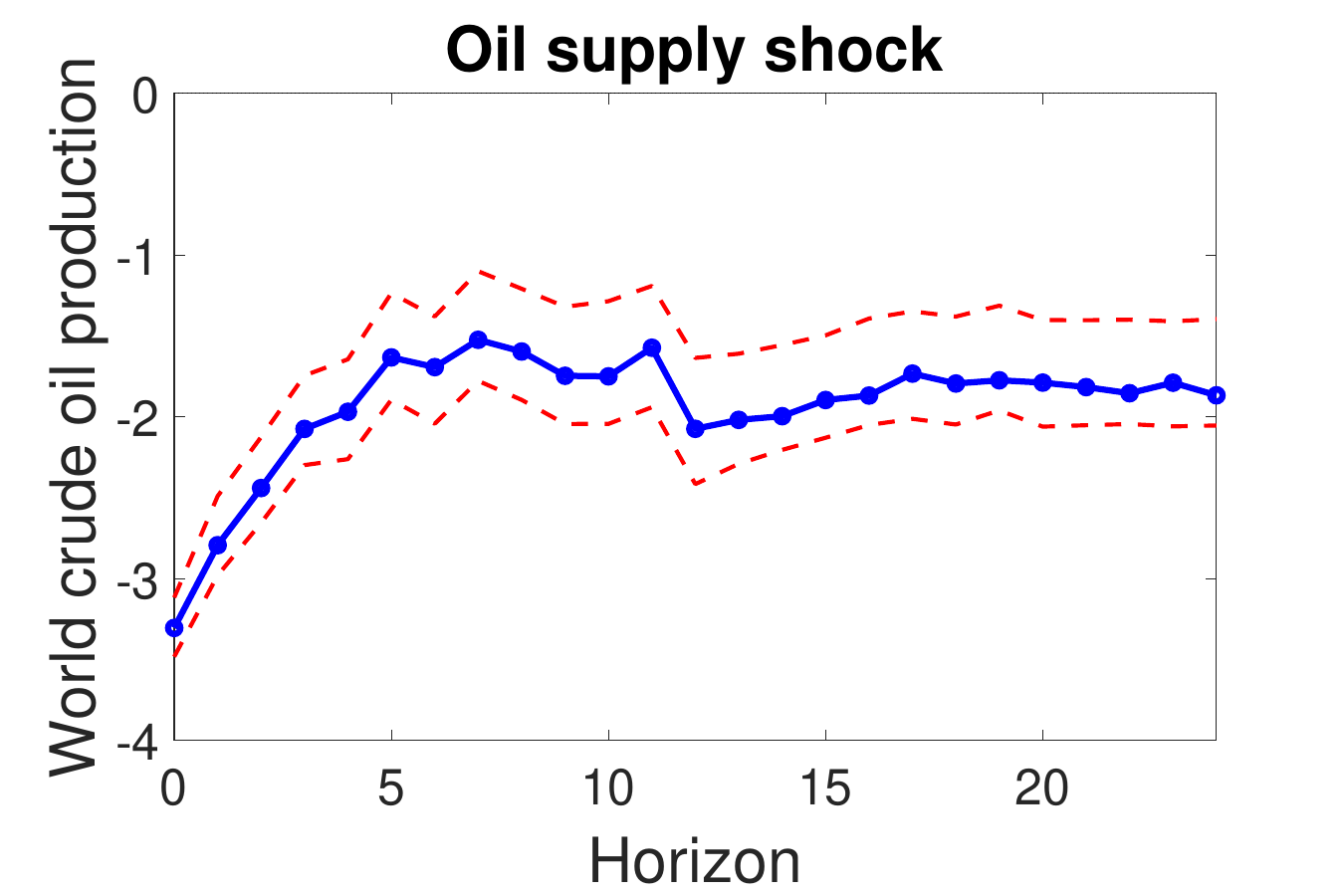}%
  \includegraphics[width=.3\linewidth]{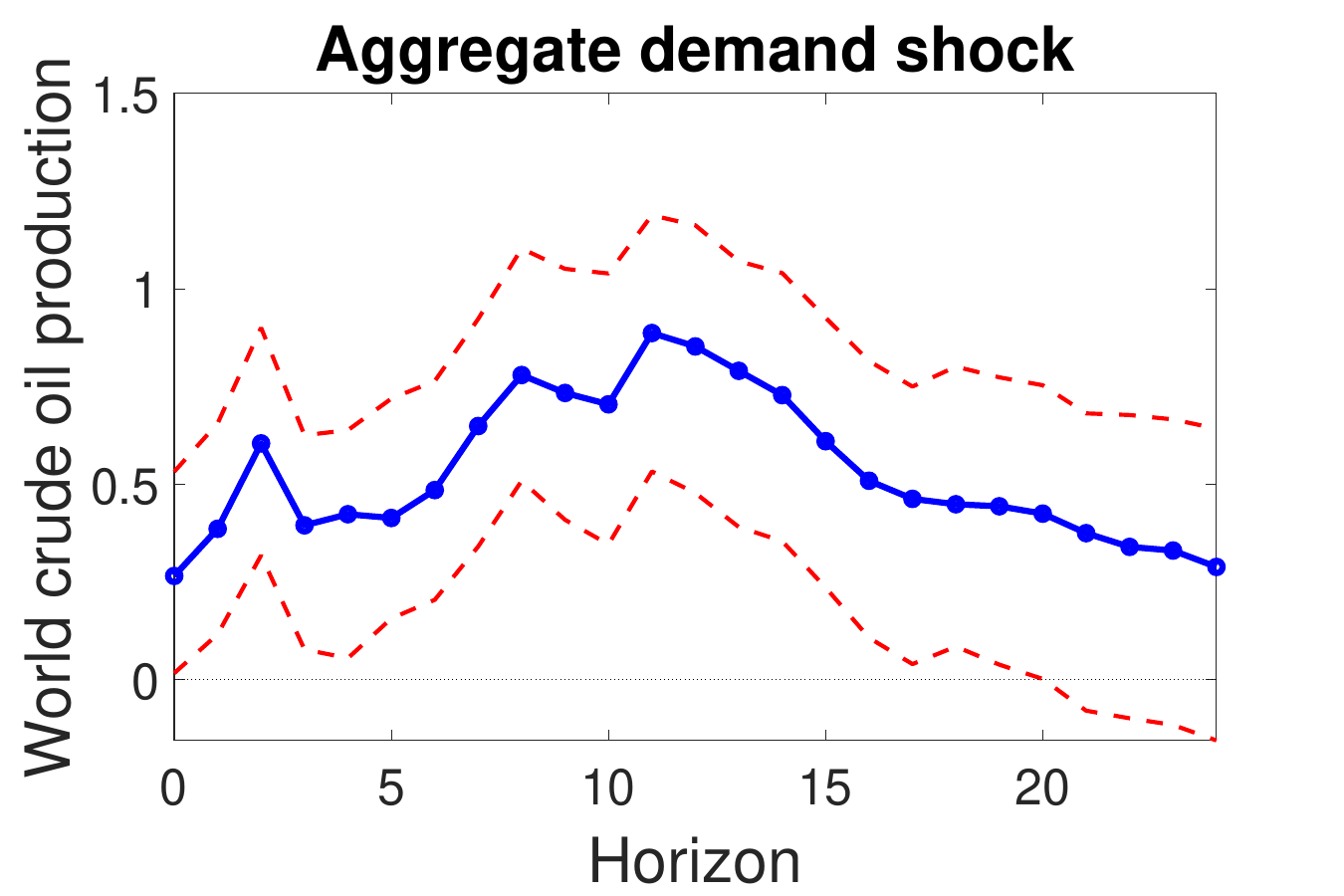}
  \includegraphics[width=.3\linewidth]{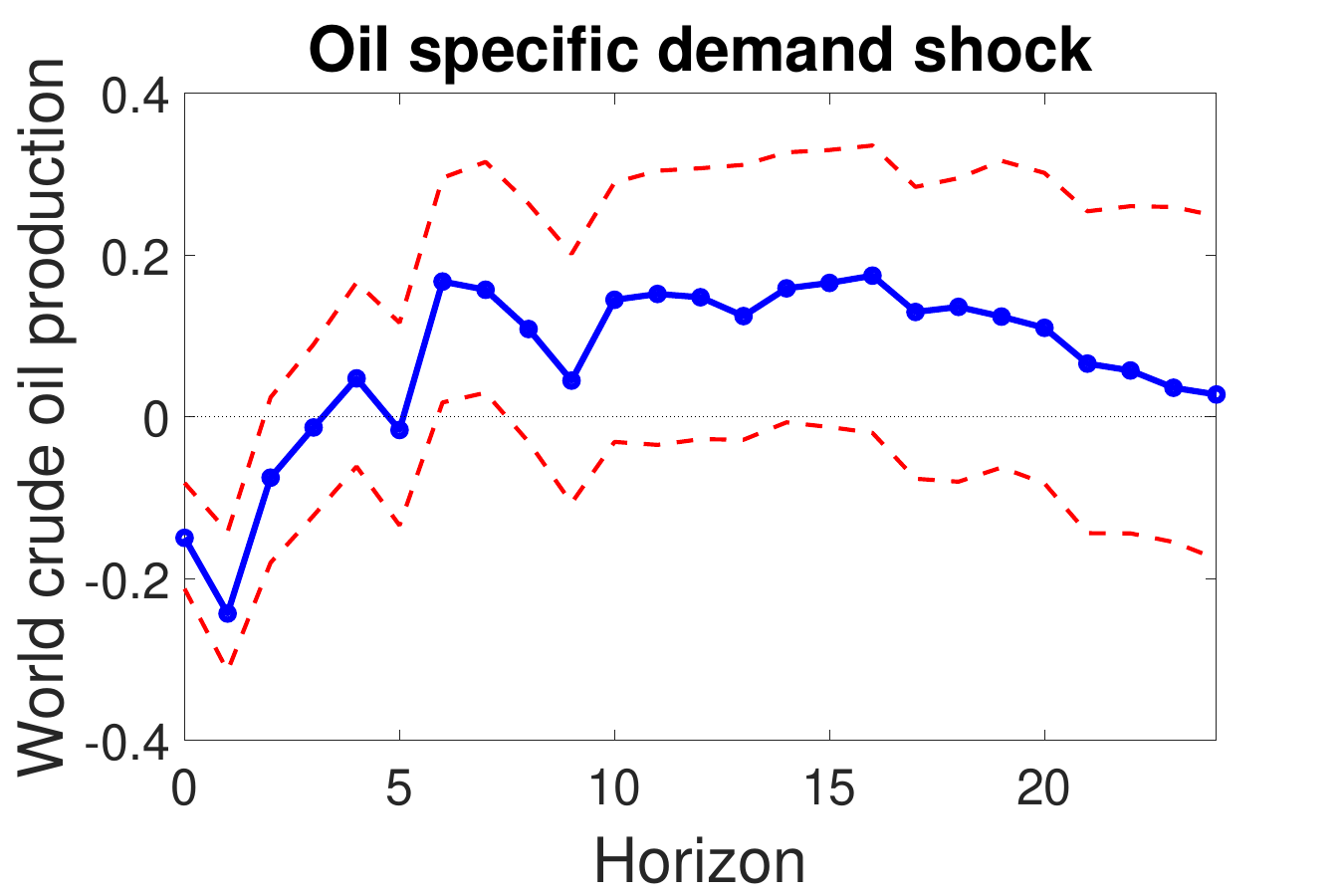}\\
  \includegraphics[width=.3\linewidth]{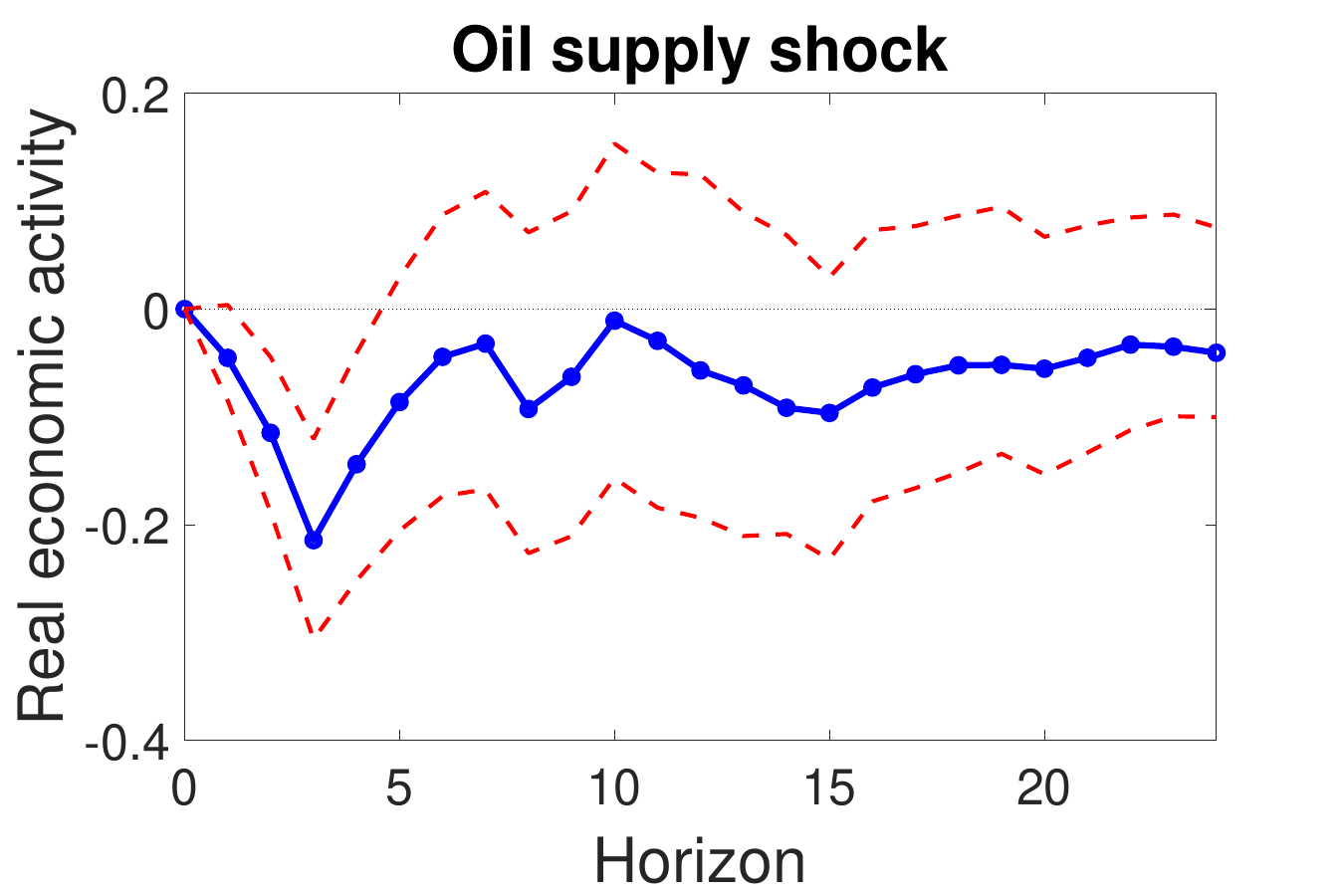}%
  \includegraphics[width=.3\linewidth]{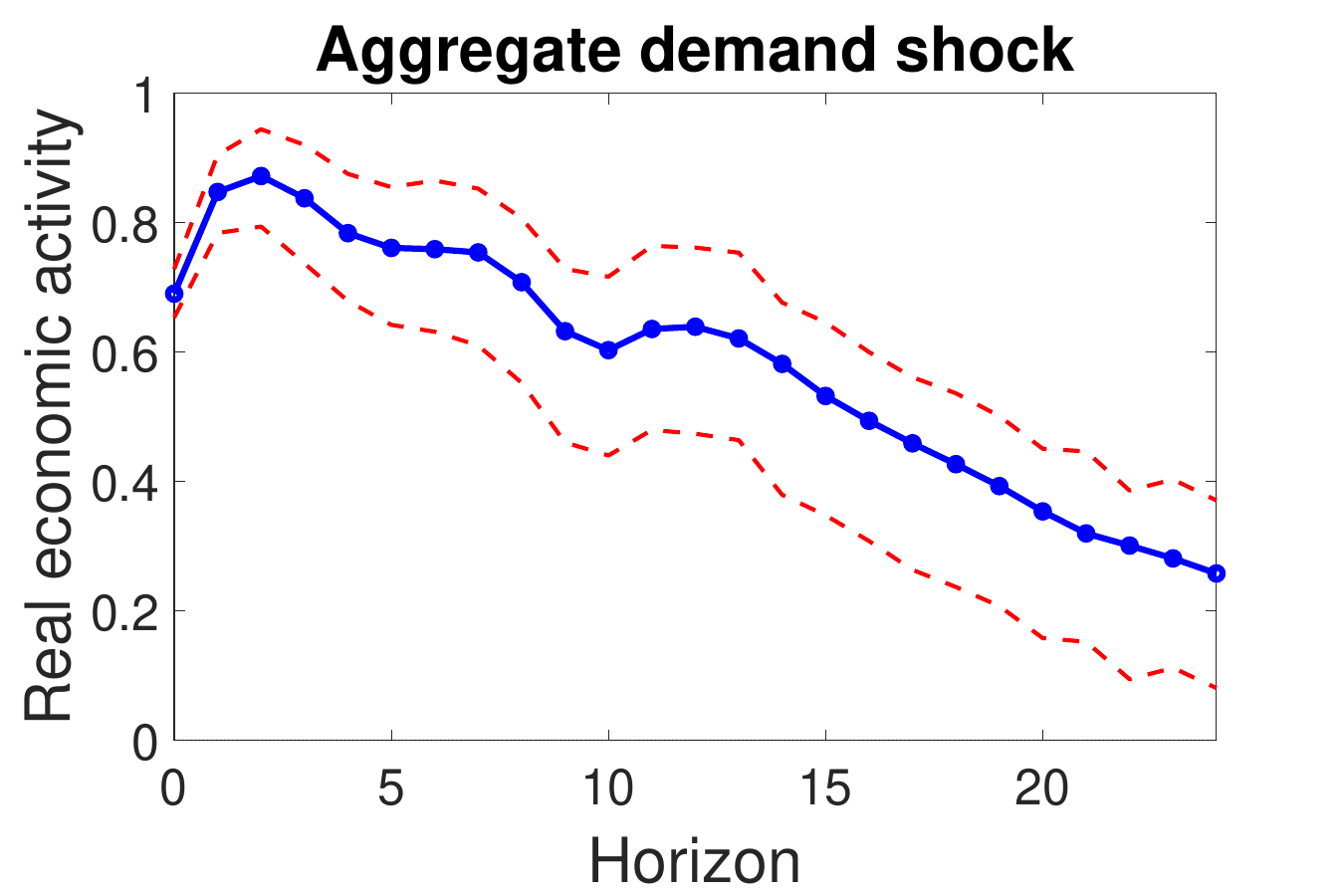}%
  \includegraphics[width=.3\linewidth]{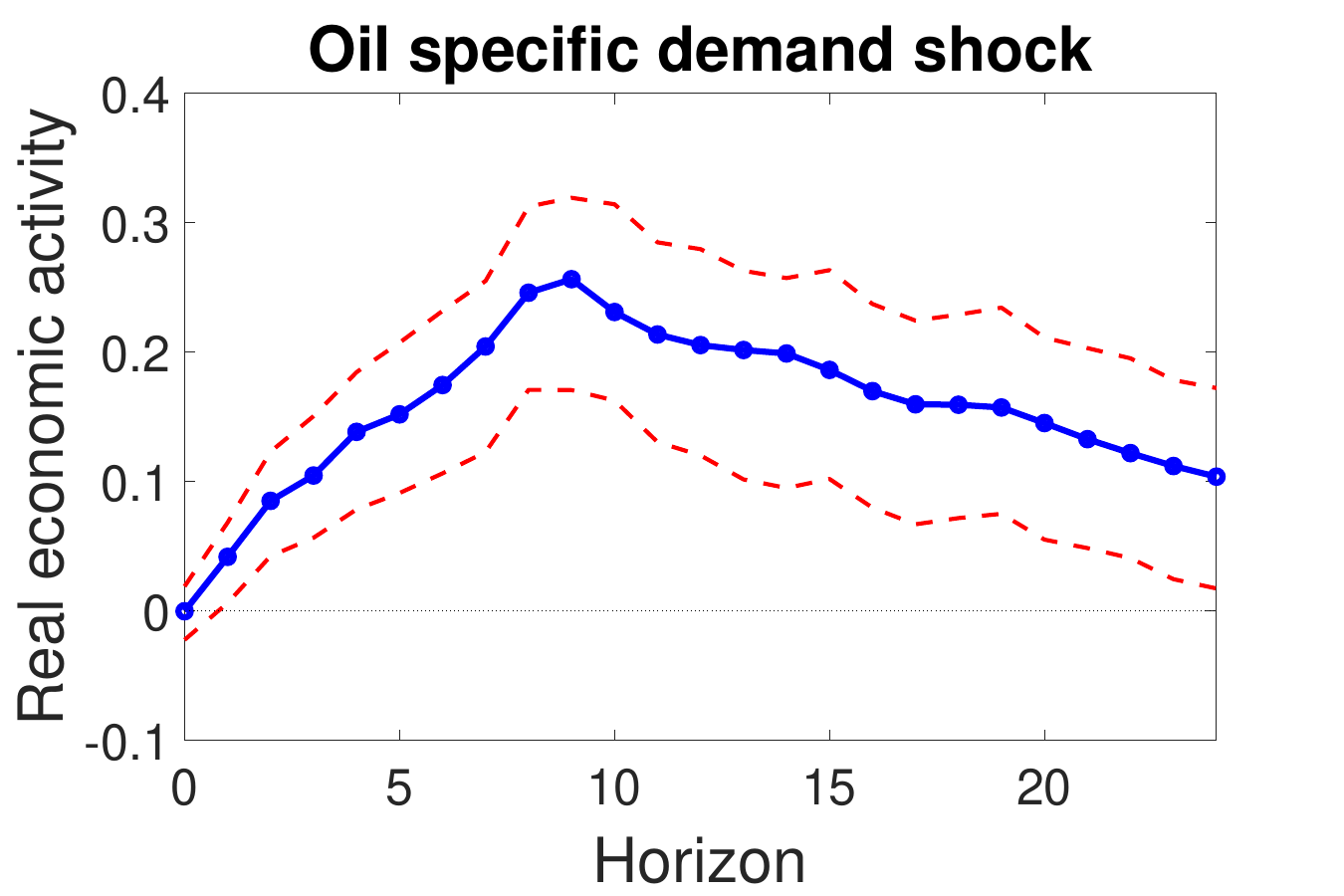}\\
  \includegraphics[width=.3\linewidth]{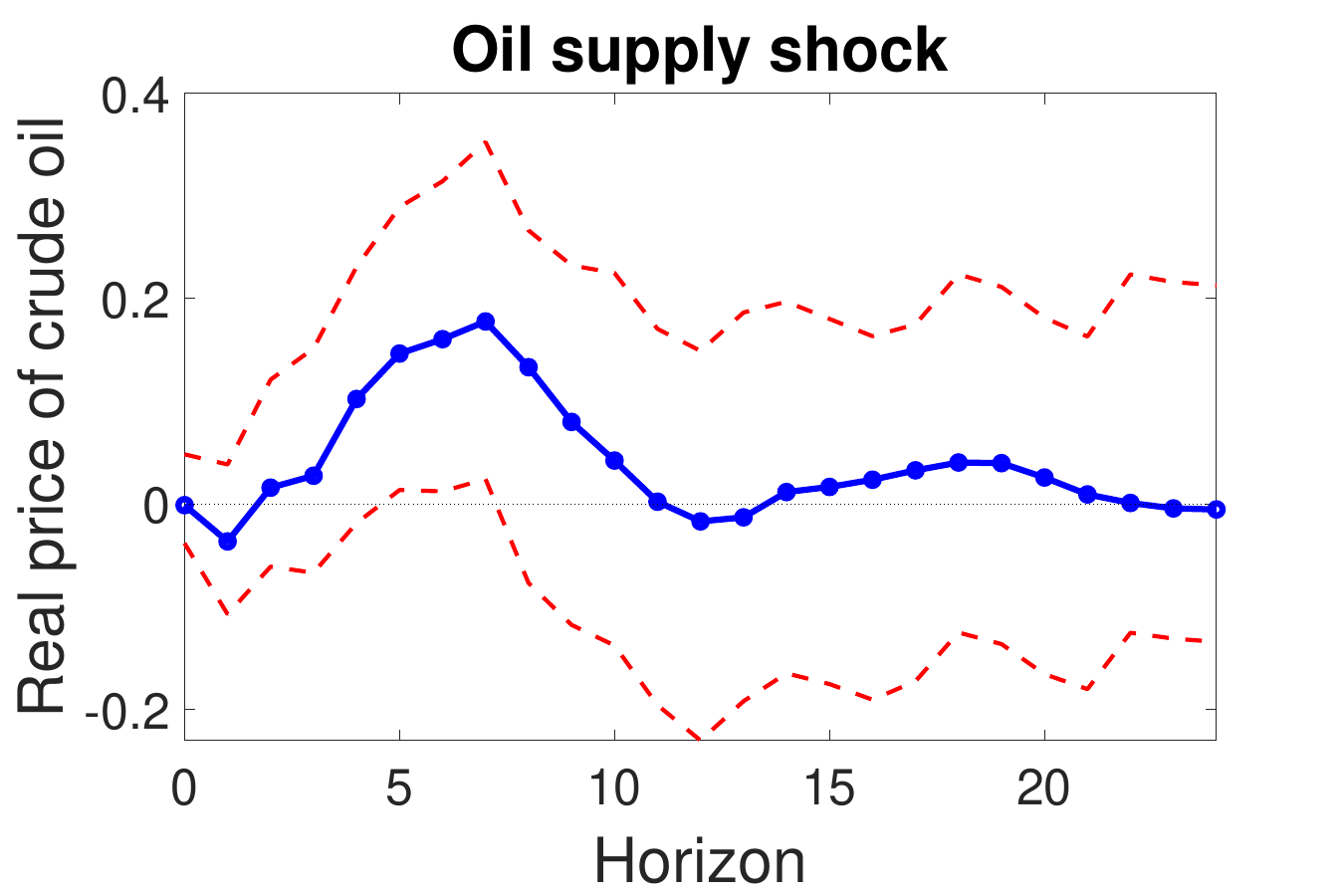}%
  \includegraphics[width=.3\linewidth]{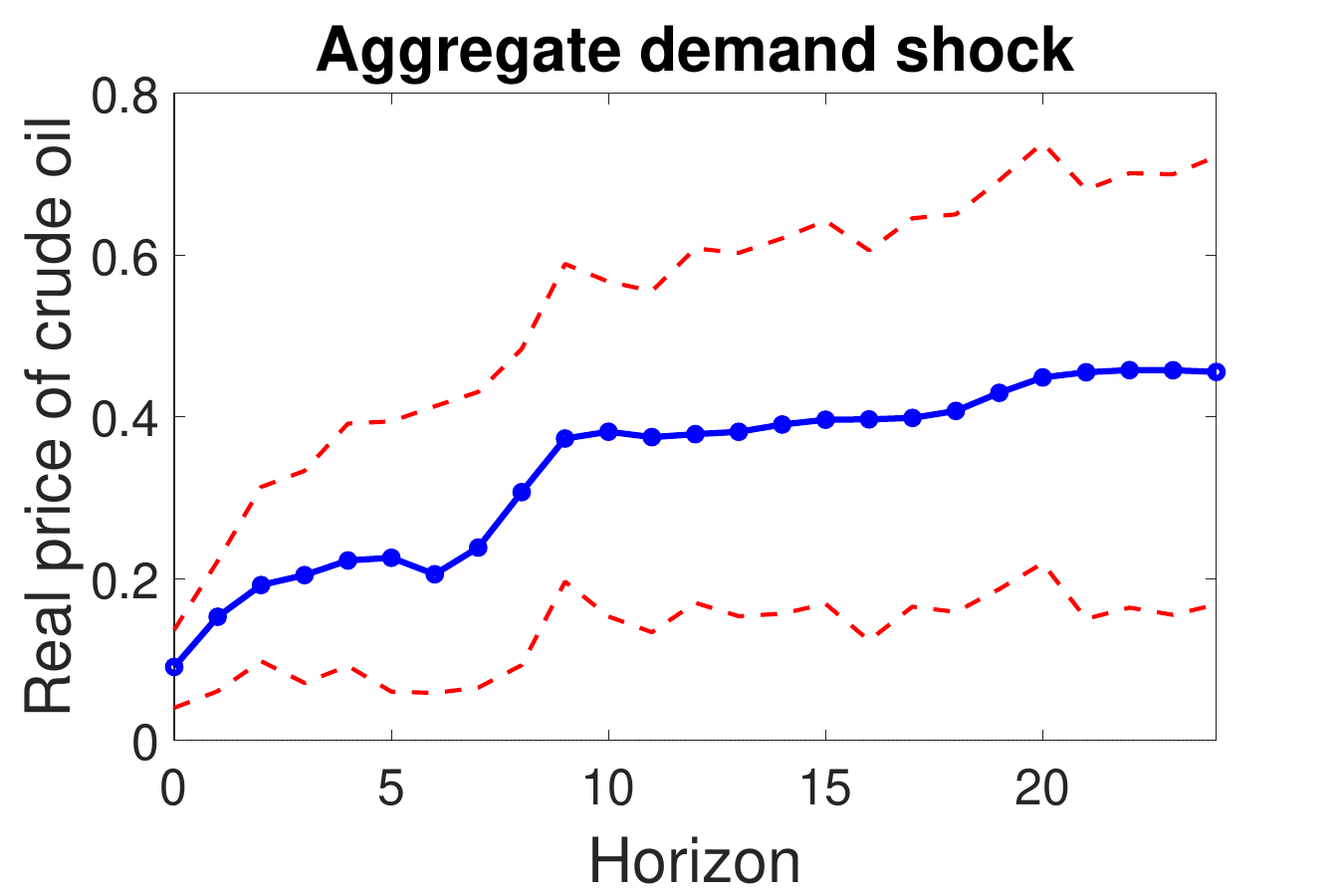}%
  \includegraphics[width=.3\linewidth]{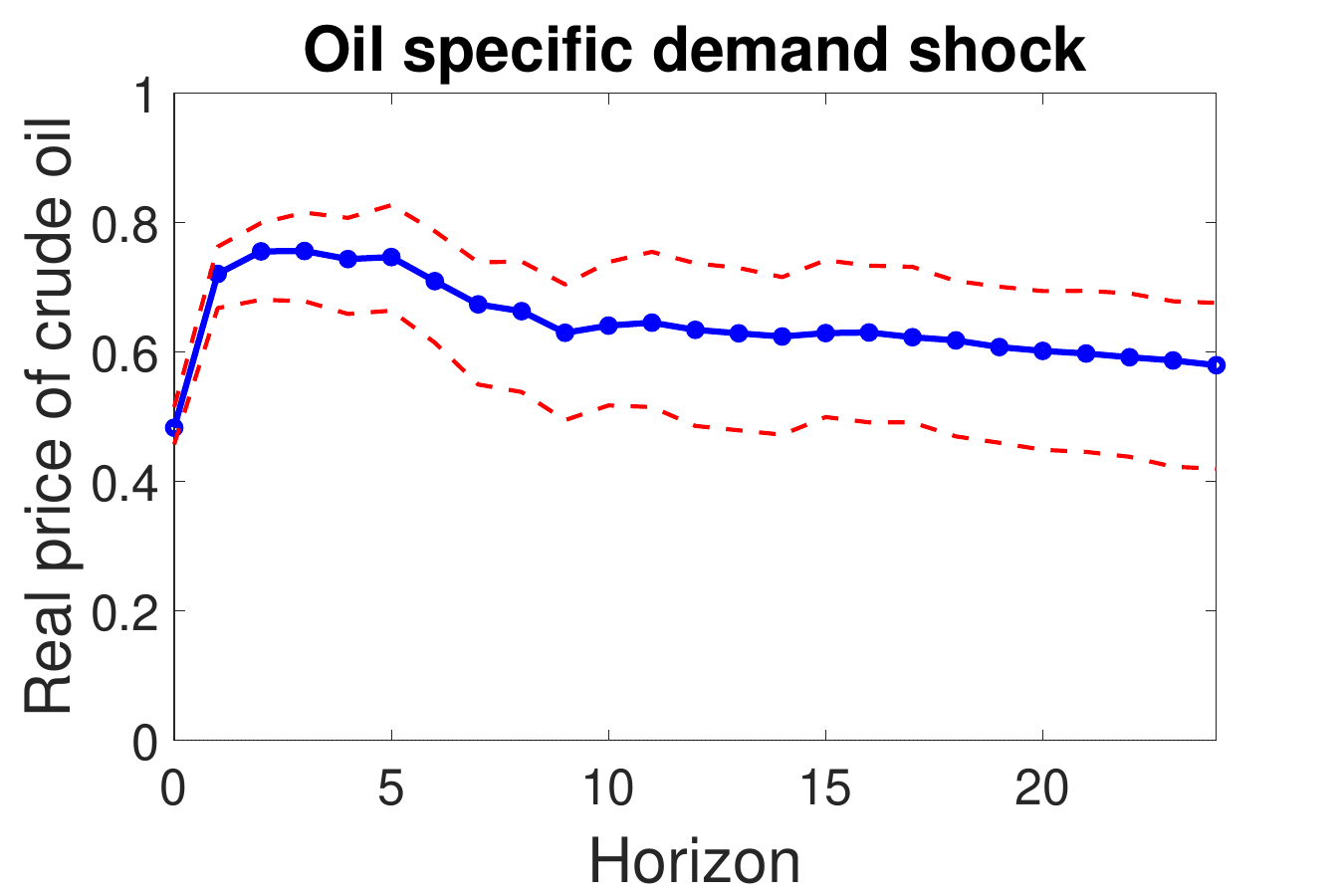}
\caption*{\scriptsize\textit{Notes}: the blue line with dots represents the posterior mean response, the dashed red lines identify upper and lower bounds of the highest posterior density region with credibility 68\%. Identification through heteroskedasticity exploiting the fact that one eigenvalue is distinct from the others; moreover, we impose $c_{21}=0$.}
\label{fig:IRF_M3}
\end{figure}

Static and dynamic sign restrictions in model $\mathcal{M}_2$ allow to set identify supply and demand shocks that are expected to drive the real price of crude oil. Since the only distinct eigenvalue is associated with the oil-specific demand shock, sign restrictions are imposed on the remaining columns of $Q$. The impulse responses appearing in the first two columns of Figure \ref{fig:IRF_M2_notest} specifically refer to those columns of $Q$ and, as such, we also present the set of posterior means (blue vertical bars) and the bounds of the robust credible region with credibility 68\% obtained with Algorithm \ref{algo:PostBounds}.\footnote{Our implementation is based on the constrained nonlinear optimization problem highlighted in Step 5 of Algorithm \ref{algo:PostBounds}. We follow \citet{GK18} that mitigate possible convergence problems using five different starting values for the optimization problem in Step 5.}

Imposing sign restrictions we recover oil supply and aggregated demand shock whose effects on the endogenous variables of the VAR are consistent with expectations from economic theory and previous analyses \citep[see e.g.][]{kilian2012agnostic}. The shape of impulse responses in the first two columns of Figure \ref{fig:IRF_M2_notest} are in line with those implied by the recursively identified model, $\mathcal{M}_0$. Also notice that the combination of  volatility changes and sign restrictions delivers reasonable impulse responses even in the presence of multiple eigenvalues and using less sign restrictions than what is usually done in the literature.\footnote{In Appendix E we show results based on the implementation of Algorithm \ref{algo:PostBounds} while performing the \citetalias{LMNS20} test for heteroskedasticity on each draw of $\phi$ in the context of set identified impulse response functions. Figure \ref{fig:IRF_M2_DynSign} shows how performing the test for identification thorugh heteroskedasticity does not affect the results.} In fact, we leave one of the columns of $Q$ completely unrestricted and exploit changes in volatility to point identify the corresponding shock. Contrary to the standard HSVAR, $\mathcal{M}_1$, oil supply shocks implied by $\mathcal{M}_2$ are associated with impulse responses that reasonably summarize the expected effects of such shocks on real economic activity and the real price of crude oil. Here we observe a positive response of the price of crude oil with a peak after 6 months.

The width of the HPD and of the robust credible regions for both oil supply and aggregate demand shocks are similar. We can thus draw essentially the same conclusions using any of them.\footnote{Results based on an alternative implementation of Algorithm \ref{algo:PostBounds} are almost identical. In such implementation, we follow \citet{GK18} and substitute Step 5 with 10000 iterations of Step 4.1-Step 4.3. The interval $\big[\ell(\phi_m),\: u(\phi_m)\big]$ is then approximated by the minimum and maximum values over such iterations. This also confirms the convergence of the numerical algorithm in Step 5. See Appendix \ref{app:Empirics}.} \cite{GK18} propose a measure of the informativeness of the choice of an unrevisable prior for $Q$ that compares the width of such regions. The fact that in our case such measure is generally small (at any horizon), indicates that the fraction of the credible region tightened by choosing a particular unrevisable prior is very modest.\footnote{The informativeness of the prior with credibility $\alpha$ is defined as $\left\{1  - \left[\text{width highest posterior density}(\alpha)/\text{width robust credible region}(\alpha)\right]\right\}$. Such fraction is in the range 0.05-0.52 for the impact response to an oil supply shock (i.e. the largest fraction is that associated with the response of real economic activity) and in the range 0.03-0.37 for the impact response to an aggregate demand shock (i.e. the largest fraction is that associated with the response of world crude oil production). At horizon 12 such intervals become 0.03-0.21 and 0.06-0.22.}

Model $\mathcal{M}_3$ illustrates that our methodology allows to point identify HSVAR models in the presence of multiple eigenvalues and that this can be achieved with less zero restrictions than in the case of recursive identification. In $\mathcal{M}_3$ we impose that the first shock does not affect real economic activity within the same month. Interestingly, this restriction is consistent with the evidence in Figures \ref{fig:IRF_M2_notest} and \ref{fig:IRF_M0} where we see that the impact response of real economic activity to an oil supply shock is close to zero. \\
Results for $\mathcal{M}_3$ are reported in Figure \ref{fig:IRF_M3}. Here, the zero restriction has the effect of tightening the width of the HPD regions, when compared to the standard HSVAR model $\mathcal{M}_1$. Focusing on the response of real price of oil to an oil disruption, we see that in this case the response is slightly more significant, as the HPD region does not always contain the zero. One difference with the benchmark, $\mathcal{M}_0$, concerns the response of world crude oil production to an aggregate demand shock that here is positive, with highest posterior density region that does not contain the zero up to horizon 20.


\section{Conclusion}
\label{sec:conclusion}

This paper deals with SVAR models with structural breaks on the second moments of the structural shocks, offering some new contributions. We first study the identification theory and propose a set of results for easily checking whether the model is globally identified. Second, we study the consequences on the impulse response functions of HSVARs that do not satisfy such identifying conditions. We deal with a SVAR model with heteroskedasticity, where non distinct changes in the variance shifts raise an identification issue. We solve the identification problem by imposing equality and sign restrictions and provide a methodology that helps giving a structural economic interpretation to the set or point identified shocks, by requiring fewer restrictions to be imposed. A way to do inference on the model both in case of point and set-identification is also proposed, as well as an empirical application of our approach to the global crude oil market model. Some issues remain to be addressed by future research, such as extending the model to more than two volatility regimes and analysing the consequences of having an unknown break date. 

\newpage
\singlespacing
\bibliographystyle{econometrica}
\bibliography{acompat,HSVAR}
\newpage

\appendixtitleon
\appendixtitletocon
\begin{appendices}

\renewcommand{\thesection}{\Alph{section}}\setcounter{section}{0}
\setlength{\baselineskip}{12pt}
\renewcommand{\baselinestretch}{1.1}
\renewcommand\thefigure{\thesection.\arabic{figure}}    

\doublespacing
\section{Preliminary results on the identification of HSVARs}
\label{sec:Preliminary}


Some of the theoretical results we present in this section are not completely new in the literature, although they have been derived independently from other authors. However, we have decided to report them as prerequisites for a better understanding of the main results provided in the paper. Detailed references will be reported accordingly.  

Consider an $n$-variable SVAR model with two regimes in structural shock variances, but maintaining the homogeneity of the structural coefficients. We normalize the covariance matrix of the structural shocks to $n \times n$ identity matrix in the first regime. Following 
the parametrization and notations we have been using so far, we analyze identification of the $n \times n$ matrix $C$ that represents the inverse of structural coefficient matrix $A_0$, i.e. $C\equiv A_0^{-1}$, and $\Lambda$, $n \times n$ diagonal matrix with strictly positive elements. Given the reduced-form covariance matrix at regime 1 and 2, denoted by $\Omega_1$ and $\Omega_2$, respectively, $C$ and $\Lambda$ solve
\begin{eqnarray}
\Omega_1 & = & CC', \label{eq.1st} \\
\Omega_2 & = & C \Lambda C' \label{eq.2nd}
\end{eqnarray}

The next theorem characterizes the set of $(C,\Lambda)$ solving this equation system. To state it, we define $\Pn$ as the set of $n \times n$ permutation matrices, such that pre-multiplying $P \in \Pn$ to any matrix $M$ performs a row-permutation of $M$, and post-multiplying it performs a column-permutation. Moreover, in the case of a diagonal matrix $D$ of size $n$, $P^{\prime} D P$ performs a permutation of the diagonal elements of $D$. Let $\Dn$ be the set of $n \times n$ diagonal matrices whose diagonal entries are either $+1$ or $-1$. That is, if the $i$-th diagonal entry of $S \in \Dn$ is $-1$, pre-multiplying (post-multiplying) $S$ to any matrix $M$ flips the sign of the $i$-th row (resp. column) vector of $M$. 

\begin{theorem}[Sign normalization and column-permutation]
\label{theo:SignPerm}
Assume $\Omega_1$ and $\Omega_2$ are non-singular. Suppose $(C^{\ast},\Lambda^{\ast})$ is a solution of the equation system (\ref{eq.1st})-(\ref{eq.2nd}). Then, the set of solutions solving (\ref{eq.1st})-(\ref{eq.2nd}) is represented as
\begin{equation}
	\label{obs equiv sol}
	\big\{ (C,\Lambda) = (C^{\ast} SP', P \Lambda^{\ast}P^{\prime}): P\in \Pn, S \in \Dn \big\}. 
\end{equation} 
\end{theorem}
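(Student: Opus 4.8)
The plan is to show the two inclusions. The easy direction is to verify that every pair $(C,\Lambda)=(C^{\ast}SP', P\Lambda^{\ast}P')$ with $P\in\mathcal{P}(n)$, $S\in\mathcal{D}(n)$ indeed solves (\ref{eq.1st})--(\ref{eq.2nd}). For (\ref{eq.1st}): $CC'=C^{\ast}SP'PS'C^{\ast\prime}=C^{\ast}SS'C^{\ast\prime}=C^{\ast}C^{\ast\prime}=\Omega_1$, using $P'P=I_n$ and $SS'=I_n$ (since $S$ is diagonal with $\pm1$ entries). For (\ref{eq.2nd}): $C\Lambda C'=C^{\ast}SP'(P\Lambda^{\ast}P')PS'C^{\ast\prime}=C^{\ast}S\Lambda^{\ast}S'C^{\ast\prime}=C^{\ast}\Lambda^{\ast}C^{\ast\prime}=\Omega_2$, where the last simplification uses that $S$ and $\Lambda^{\ast}$ are both diagonal, hence commute, so $S\Lambda^{\ast}S'=\Lambda^{\ast}SS'=\Lambda^{\ast}$. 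This shows the right-hand side of (\ref{obs equiv sol}) is contained in the solution set.

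For the reverse inclusion, suppose $(C,\Lambda)$ is any solution. Since $C^{\ast}C^{\ast\prime}=CC'=\Omega_1$ with $\Omega_1$ non-singular, the matrix $R\equiv C^{\ast-1}C$ is well defined and orthogonal: $RR'=C^{\ast-1}CC'C^{\ast-1\prime}=C^{\ast-1}\Omega_1 C^{\ast-1\prime}=C^{\ast-1}C^{\ast}C^{\ast\prime}C^{\ast-1\prime}=I_n$. Thus $C=C^{\ast}R$ for an orthogonal $R$. Substituting into (\ref{eq.2nd}) and using $\Omega_2=C^{\ast}\Lambda^{\ast}C^{\ast\prime}$ gives $C^{\ast}R\Lambda R'C^{\ast\prime}=C^{\ast}\Lambda^{\ast}C^{\ast\prime}$, and since $C^{\ast}$ is invertible this reduces to $R\Lambda R'=\Lambda^{\ast}$, i.e. $\Lambda^{\ast}=R\Lambda R'$. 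This is an orthogonal diagonalization of the diagonal matrix $\Lambda^{\ast}$ with eigenvalue matrix $\Lambda$; in particular $\Lambda$ and $\Lambda^{\ast}$ have the same multiset of diagonal entries, so $\Lambda=P'\Lambda^{\ast}P$ for some $P\in\mathcal{P}(n)$, equivalently $\Lambda^{\ast}=P\Lambda P'$.

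The main obstacle is the final step: pinning down the orthogonal matrix $R$. The relation $R\Lambda R'=\Lambda^{\ast}=P\Lambda P'$ gives $(P'R)\Lambda(P'R)'=\Lambda$, so writing $O\equiv P'R$ we have an orthogonal $O$ commuting with the diagonal matrix $\Lambda$. Here the argument must accommodate the fact that $\Lambda$ may have repeated diagonal entries — this is precisely the partially identified case the paper is about — so $O$ need not be diagonal in general. However, one should note that the claimed solution set only asserts $C=C^{\ast}SP'$, i.e. $R=SP'$ up to the freedom already absorbed; the correct reading is that the theorem describes the solution set \emph{modulo} the block-orthogonal stabilizer of $\Lambda$, or else implicitly treats the generic case where, after accounting for any repeated eigenvalues, the orthogonal matrices commuting with $\Lambda$ that also preserve the remaining structure reduce to signed permutations. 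I would therefore handle the distinct-eigenvalue case cleanly — where $O$ commuting with $\Lambda$ forces $O$ diagonal, hence $O=S\in\mathcal{D}(n)$ by orthogonality, giving $R=SP'$ and $C=C^{\ast}SP'$ — and flag that the general statement is to be understood up to the orthogonal symmetries of $\Lambda$, deferring the full treatment of eigenvalue multiplicity to the set-identification analysis in Section \ref{sec:SetIdent}. Assembling: $C=C^{\ast}R=C^{\ast}SP'$ and $\Lambda^{\ast}=P\Lambda P'$, which is exactly (\ref{obs equiv sol}).
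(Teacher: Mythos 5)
Your argument follows the same route as the paper's: both proofs reduce the question to characterizing the orthogonal transition matrix between two solutions (you write $C=C^{\ast}R$ with $R=C^{\ast-1}C$; the paper writes $C^{\ast}=CA$, so $A$ is the transpose of your $R$) and then ask which orthogonal matrices conjugate one admissible diagonal $\Lambda$ into another. Your verification of the forward inclusion, which the paper omits, is correct, and the derivation of $R\Lambda R'=\Lambda^{\ast}$ matches the paper's $\Lambda=A\Lambda^{\ast}A'$.

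The difficulty you flag at the final step is real, and it is a gap in the paper's own proof rather than in yours. The paper asserts that the conditions $\sum_{k}\lambda_k^{\ast}a_{ik}a_{jk}=0$ for $i\neq j$, combined with orthogonality of $A$, force every row of $A$ to have a single nonzero entry, hence $A=PS$. That implication is valid only when the diagonal entries of $\Lambda^{\ast}$ are pairwise distinct. If, for instance, $\Lambda^{\ast}=I_n$ (proportional covariance matrices), those conditions are implied by orthogonality alone, so every orthogonal $A$ is feasible and $(C^{\ast}A',I_n)$ solves (\ref{eq.1st})--(\ref{eq.2nd}) without being of the form $(C^{\ast}SP',I_n)$; the solution set is then strictly larger than (\ref{obs equiv sol}). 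This is of course precisely the set-identified situation analysed in Section \ref{sec:SetIdent}, so the theorem as stated needs either the additional hypothesis that $\Lambda^{\ast}$ has distinct diagonal entries or the reading you propose, namely that the representation is modulo the orthogonal stabilizer of $\Lambda^{\ast}$ (block-orthogonal within each eigenspace). Your decision to prove the distinct-eigenvalue case cleanly --- where an orthogonal $O$ commuting with $\Lambda$ must be diagonal with $\pm 1$ entries --- and to flag the multiplicity case explicitly is the correct call; the paper's proof silently assumes distinctness at exactly that step. The only cosmetic remark is that your bookkeeping delivers $R=PS$ rather than $SP'$, but since signed permutation matrices are exactly the products $PS$ as well as the products $SP'$ with $P\in\Pn$, $S\in\Dn$, the set-level conclusion is unaffected.
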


\medskip
 
\begin{proof} 
See the appendix.
\end{proof} 

This theorem clarifies the fundamental indeterminacy of the solutions in the equation system (\ref{eq.1st})-(\ref{eq.2nd}). Specifically, the representation of the solutions in Eq. (\ref{obs equiv sol}) shows that $(C,\Lambda)$ remains observationally equivalent with respect to any permutation and change of signs of the column vectors in $C$ as far as the same permutation is applied to the diagonal elements of $\Lambda$. The observational equivalence with respect to $S \in \mathcal{D}(n)$ corresponds to the indeterminacy of the signs of structural shocks common in any SVAR modelling (see \citeauthor{LanneLutkepohlMaciejowska}, \citeyear{LanneLutkepohlMaciejowska}, for an equivalent result on HSVARs). We often control such sign indeterminacy by imposing the sign normalization restrictions that pin down $S$, e.g., restricting the diagonal elements of $A_0=C^{-1}$ to be non-negative. The observational equivalence with respect to the permutations corresponds to the indeterminacy of the structural parameters with respect to the reordering of the structural equations. \citet{Rigobon03} noted this indeterminacy of the ordering of the structural equations in bivariate HSVAR models and argued that sign restrictions placed on the off-diagonal elements of $A_0=C^{-1}$ resolve such indeterminacy.  

\bigskip

Theorem \ref{theo:SignPerm} implies that with sign normalization restrictions imposed, point-identification of $(C,\Lambda)$ requires an assumption that pins down the ordering of the equations (i.e., permutation matrix $P$). One way to constrain the ordering of the equations is to exploit available knowledge on the ratios of the structural shock variances of regime 1 to regime 2. In particular, assuming a complete ordering of the structural shocks according to their variance ratios can fix the order of the structural equations based on the diagonal entries of the true $\Lambda$. Hence, if a solution of $\Lambda$ is such that all of its diagonal elements are distinct, a complete ordering of such elements reduces the set of solutions in Eq. (\ref{obs equiv sol}) to a singleton. The following theorem hence follows as a corollary of Theorem \ref{theo:SignPerm}. 

\begin{theorem}[Point identification]
	\label{theo:HSVAR_Ident} 
	In addition to the assumptions of Theorem \ref{theo:SignPerm}, assume that a solution of $\Lambda$ has the diagonal terms all distinct. 
	Then, with sign normalization restrictions and complete ordering of the structural shocks according to the variance ratios imposed, $(C,\Lambda)$ is point-identified.
\end{theorem}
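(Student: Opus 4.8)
The plan is to read the result off the description of the solution set given in Theorem \ref{theo:SignPerm}. By that theorem every solution $(C,\Lambda)$ of (\ref{eq.1st})-(\ref{eq.2nd}) equals $(C^\ast S P',\, P\Lambda^\ast P')$ for some $P\in\Pn$ and $S\in\Dn$, so point identification amounts to showing that the sign-normalization restrictions and the complete ordering of the shock variances together select a unique pair $(P,S)$, hence a unique $(C,\Lambda)$. The convenient feature is that the two restrictions decouple: the ordering restriction constrains only $P$ (the candidate $\Lambda$'s, namely $P\Lambda^\ast P'$, do not involve $S$), and the sign normalization then constrains only $S$ once $P$ is fixed.

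First I would pin down $P$ from the ordering restriction. Since the diagonal entries of $\Lambda^\ast$ are all distinct, so are those of $P\Lambda^\ast P'$ for every $P$, and the map $P\mapsto P\Lambda^\ast P'$ is injective on $\Pn$: if $P_1\Lambda^\ast P_1'=P_2\Lambda^\ast P_2'$ then $Q\Lambda^\ast Q'=\Lambda^\ast$ with $Q=P_2'P_1$ a permutation, which forces $Q=I$ because a nontrivial permutation must move some pair of (distinct) diagonal entries. The complete ordering of the structural shocks by their regime-2-to-regime-1 variance ratios -- which are exactly the diagonal entries of $\Lambda$ -- prescribes one arrangement of these $n$ distinct numbers, so exactly one permutation $P_0$ realizes it, and $\Lambda=P_0\Lambda^\ast P_0'$ is uniquely determined. (Note that by Theorem \ref{theo:SignPerm} the hypothesis ``a solution of $\Lambda$ has distinct diagonal'' is equivalent to ``every solution does'', so there is no ambiguity here.)

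Next, with $P=P_0$ fixed, the only remaining freedom is $C=C^\ast S P_0'$ with $S\in\Dn$, equivalently $A_0=C^{-1}=P_0 S A_0^\ast$ where $A_0^\ast=(C^\ast)^{-1}$, using $S^{-1}=S$ and $(P_0')^{-1}=P_0$. Writing $\pi$ for the permutation encoded by $P_0$, one gets $(A_0)_{ij}=s_{\pi(i)}(A_0^\ast)_{\pi(i),j}$, so the $2^n$ candidate matrices $A_0$ differ precisely in the sign pattern of their rows. Imposing the sign normalization -- say, that the diagonal of $A_0$ be non-negative -- forces $s_{\pi(i)}=\mathrm{sign}\big((A_0^\ast)_{\pi(i),i}\big)$ for each $i$, and since $\pi$ is a bijection this pins down every entry of $S$. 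Hence $(C,\Lambda)=(C^\ast S P_0',\, P_0\Lambda^\ast P_0')$ is the unique solution of (\ref{eq.1st})-(\ref{eq.2nd}) compatible with the restrictions, which is the claim.

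The one point that needs care is this last step: for the sign normalization to pin down each $s_k$ one needs the relevant entries of $A_0$ -- here its diagonal -- to be nonzero, which holds on the open dense set $\mathbb{P}^S$ of structural parameters. I would therefore state the conclusion on $\mathbb{P}^S$ or flag this genericity caveat explicitly; apart from it, the proof is just bookkeeping of how $P$ and $S$ act on $C$, $A_0$ and $\Lambda$, together with the injectivity observation used to fix $P$.
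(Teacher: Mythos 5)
Your proof is correct and takes essentially the route the paper intends: the text preceding Theorem \ref{theo:HSVAR_Ident} states that the result ``follows as a corollary of Theorem \ref{theo:SignPerm}'' because the complete ordering collapses the solution set (\ref{obs equiv sol}) to a singleton, and your argument simply makes this explicit by showing the ordering pins down $P$ (via injectivity of $P\mapsto P\Lambda^\ast P'$ under distinct eigenvalues) and the sign normalization then pins down $S$; the paper's formal proof merely delegates to Proposition 1 of Lanne, L\"utkepohl and Maciejowska. Your genericity caveat --- that the diagonal entries of $A_0$ must be nonzero for the sign normalization to determine $S$ --- is a point the paper leaves implicit (it is covered by working on the open dense set $\mathbb{P}^S$), and flagging it is appropriate.
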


\medskip
 
\begin{proof} 
    The theorem, apart from the potential indeterminacy due to the column permutation, corresponds to Proposition 1 in \cite{LanneLutkepohlMaciejowska}, and can be proved in exactly the same way. However, according to Theorem \ref{theo:SignPerm} here before, fixing the ordering of the shocks is also necessary in order to have point identified $(C,\Lambda)$.
\end{proof} 

If a solution of $\Lambda$ has some of the diagonal elements identical, then invariance of $\Lambda$ with respect to a permutation that permutes only these elements fails to uniquely pin down $C$ within the set of solutions in Eq. (\ref{obs equiv sol}). Partial identification of $C$ matrix in this case is to be considered below.

The identification result of Theorem \ref{theo:HSVAR_Ident} is not constructive and it does not provide an explicit analytical expression of $(C,\Lambda)$ as a function of $(\Omega_1,\Omega_2)$. A more constructive identification result for $(C,\Lambda)$ can be obtained by representing the equation systems (\ref{eq.1st}) and (\ref{eq.2nd}) as a certain eigen-decomposition problem, as already presented in Eq. (\ref{eigen decomposition}). This perspective yields the following succinct analytical characterization of the solutions of the equation system.    

\begin{theorem}[Identification and eigen-decomposition] 
	\label{theo:HSVAReigen}
	The set of solutions solving system (\ref{eq.1st})-(\ref{eq.2nd}) can be represented by $ (\Omega_{1,tr}Q,\Lambda)$, 
	where $\Lambda$ is a diagonal matrix of eigenvalues of $\Omega_{1,tr}^{-1} \Omega_2 \Omega_{1,tr}^{-1\prime}$ 
	and $Q$ is an orthogonal matrix of the corresponding eigenvectors. 
	Hence, if the eigenvalues of $\Omega_{1,tr}^{-1} \Omega_2 \Omega_{1,tr}^{-1\prime}$ are all distinct, 
	$(C,\Lambda)$ is identified up to permutations and sign changes of the structural equations.
\end{theorem}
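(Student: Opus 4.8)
The plan is to build directly on the discussion preceding the statement, adding only the spectral theorem. First I would recall that, by Proposition~A.1 of \citet{Uhlig05JME} (already invoked above), every non-singular $C$ satisfying Eq.~(\ref{eq.1st}) can be written uniquely as $C=\Omega_{1,tr}Q$ with $Q\in\mathcal{O}(n)$, where $\Omega_{1,tr}$ is the fixed lower-triangular Cholesky factor of $\Omega_1$; this sets up a bijection between the solutions of Eq.~(\ref{eq.1st}) and $\mathcal{O}(n)$. Substituting this into Eq.~(\ref{eq.2nd}) and pre-/post-multiplying by $\Omega_{1,tr}^{-1}$ and $\Omega_{1,tr}^{-1\prime}$ gives, exactly as in Eq.~(\ref{eigen decomposition}), the equivalent requirement
\[
\Omega_{1,tr}^{-1}\Omega_2\Omega_{1,tr}^{-1\prime}=Q\Lambda Q^{\prime}.
\]

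Next I would observe that $M\equiv\Omega_{1,tr}^{-1}\Omega_2\Omega_{1,tr}^{-1\prime}$ is symmetric and positive definite, since $\Omega_2$ is symmetric positive definite and $\Omega_{1,tr}^{-1}$ is non-singular. Hence the displayed identity is precisely the statement that $(Q,\Lambda)$ is an eigen-decomposition of $M$: the columns of $Q$ form an orthonormal system of eigenvectors of $M$, and $\Lambda$ is the diagonal matrix of the corresponding eigenvalues. By the spectral theorem such a decomposition exists, so the solution set of (\ref{eq.1st})-(\ref{eq.2nd}) is nonempty and equals $\{(\Omega_{1,tr}Q,\Lambda):(Q,\Lambda)\text{ an eigen-decomposition of }M\}$; the reverse inclusion is an immediate back-substitution, and $\Lambda$ automatically has strictly positive diagonal, consistent with the HSVAR parametrization.

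Finally I would extract the identification claim when the eigenvalues of $M$ are all distinct. Fix one eigen-decomposition $(Q^{\ast},\Lambda^{\ast})$ of $M$ and let $(Q,\Lambda)$ be a generic one. Distinctness of the eigenvalues makes every eigenspace one-dimensional, so each unit eigenvector is determined up to a factor $\pm 1$; consequently $Q$ is obtained from $Q^{\ast}$ by (i) reordering columns, encoded by a permutation matrix $P\in\mathcal{P}(n)$, and (ii) independent sign flips, encoded by $S\in\mathcal{D}(n)$, with $\Lambda$ obtained from $\Lambda^{\ast}$ by the same permutation — that is, $(C,\Lambda)=(\Omega_{1,tr}Q^{\ast}SP^{\prime},P\Lambda^{\ast}P^{\prime})$, which is exactly the description of the solution set in Eq.~(\ref{obs equiv sol}) with $C^{\ast}=\Omega_{1,tr}Q^{\ast}$, and hence the asserted identification up to permutations and sign changes of the structural equations. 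I expect the only delicate point to be step~(i)--(ii): arguing that distinctness of the eigenvalues leaves \emph{no} residual freedom beyond $\mathcal{P}(n)\times\mathcal{D}(n)$. This is the precise place where the contrast with repeated eigenvalues enters — a multi-dimensional eigenspace admits internal rotations and thus a continuum of eigenvector choices — which is what motivates the partial-identification analysis developed afterwards.
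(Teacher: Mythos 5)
Your proposal is correct and follows essentially the same route as the paper: the reduction to the eigen-decomposition of the symmetric positive-definite matrix $\Omega_{1,tr}^{-1}\Omega_2\Omega_{1,tr}^{-1\prime}$ is already set up in the text preceding the theorem, and the paper's proof simply cites Theorem A9.9 of Muirhead for the uniqueness (up to ordering and sign of eigenvectors) under distinct eigenvalues, which is exactly the spectral-theorem argument you spell out. Your version is just more self-contained, making explicit the step the paper delegates to the reference.
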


\medskip
 
\begin{proof} 
    The result immediately follows from Theorem A9.9, and the related proof, in \cite{Muirhead}.
\end{proof} 

The claim of this theorem simplifies computation of an estimator of $(C,\Lambda)$; the maximum likelihood estimator for $(C,\Lambda)$ can be computed by performing an eigen-decomposition on the maximum likelihood estimator of $\Omega_{1,tr}^{-1} \Omega_2 \Omega_{1,tr}^{-1\prime}$ subject to the sign normalization. If a complete ordering assumption on $\Lambda$ is available (e.g., the diagonal elements of $\Lambda$ is decreasing), we can obtain a point-estimator for $(C,\Lambda)$ by ordering the eigenvalues accordingly through the decomposition. If the ordering assumption is not available, then permutations of the diagonal elements in $\Lambda$ and the corresponding eigenvectors in $Q$ span the identified set of $(C,\Lambda)$. The idea of treating identification and estimation of HSVARs as an eigen-decomposition issue has been also pursued by \cite{LMNS20}, that developed their test for identification via heteroskedasticity as a test on equivalent eigenvalues.

An alternative way to see and address the identification problem of $(C,\Lambda)$ is to look at system (\ref{eq.1st})-(\ref{eq.2nd}) in a slightly different way. In fact, given that $\Lambda$ is made of positive elements, it is possible to rewrite Eq. (\ref{eq.2nd}) as 
$\Omega_2=C\Lambda^{1/2}\Lambda^{1/2}C^\prime$. The quantity $C\Lambda^{1/2}$ could not be unique because of the presence of an orthogonal matrix $Q_2$ such that $\Omega_2=C\Lambda^{1/2}Q_2Q_2^\prime\Lambda^{1/2}C^\prime$. Using the result in Proposition A.1 of \citet{Uhlig05JME} for the decomposition of $\Omega_1$ and $\Omega_2$ yields the following system
\begin{eqnarray}
C & = & \Omega_{1,tr}Q_1\nonumber\\
\Omega_{2,tr} & = & C\Lambda^{1/2}Q_2\nonumber
\end{eqnarray}
and plugging the definition of $C$ into the second equation we obtain
\begin{equation}
\label{eq:SVDrel} 
\Omega_{1,tr}^{-1}\Omega_{2,tr}=Q_1\Lambda^{1/2}Q_2.
\end{equation}
The next theorem discusses the identification issue of the structural parameters $(C,\Lambda)$ in terms of the uniqueness of $Q_1$ and $Q_2$.

\begin{theorem}[Identification and Single Value Decomposition]
	\label{theo:SVD} 
	The set of solutions of system (\ref{eq.1st})-(\ref{eq.2nd}) can be represented by $(\Omega_{1,tr}Q_1,\Lambda)$, 
	where $\Lambda$ is a diagonal matrix made of positive elements and $Q_1$ is an orthogonal matrix solving the Single Value Decomposition 
	of $\Omega_{1,tr}^{-1}\Omega_{2,tr}=Q_1\Lambda^{1/2}Q_2$. 
	If the entries in $\Lambda$ are all distinct, then $Q_1$ and $Q_2$ are unique apart from simultaneous sign changes and permutation 
	of their corresponding columns. 
	Hence, $(C,\Lambda)$ is identified up to permutations and sign changes of the structural equations.
\end{theorem}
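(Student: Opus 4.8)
The plan is to recast solving the system (\ref{eq.1st})--(\ref{eq.2nd}) as computing a singular value decomposition (SVD) of the matrix $M\equiv\Omega_{1,tr}^{-1}\Omega_{2,tr}$ appearing in Eq.\ (\ref{eq:SVDrel}), to verify that this correspondence exhausts the solution set, and then to read off the claimed indeterminacy from the well-known uniqueness of the SVD when the singular values are distinct.

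First I would establish the correspondence in both directions. Given a solution $(C,\Lambda)$, nonsingularity of $\Omega_1$ and Proposition A.1 of \citet{Uhlig05JME} give $C=\Omega_{1,tr}Q_1$ for some $Q_1\in\mathcal{O}(n)$; writing $\Lambda=\Lambda^{1/2}\Lambda^{1/2}$ with $\Lambda^{1/2}$ diagonal and strictly positive (admissible since $\Lambda$ has strictly positive entries), the nonsingular matrices $C\Lambda^{1/2}$ and $\Omega_{2,tr}$ both satisfy $XX'=\Omega_2$, so the same proposition yields $\Omega_{2,tr}=C\Lambda^{1/2}Q_2$ for some $Q_2\in\mathcal{O}(n)$; substituting $C$ and premultiplying by $\Omega_{1,tr}^{-1}$ delivers $M=Q_1\Lambda^{1/2}Q_2$, which is an SVD of $M$. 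Conversely, from any $M=Q_1\Lambda^{1/2}Q_2$ with $Q_1,Q_2\in\mathcal{O}(n)$ and $\Lambda^{1/2}$ diagonal positive, put $C=\Omega_{1,tr}Q_1$; then $CC'=\Omega_1$, and since $\Omega_{2,tr}=\Omega_{1,tr}Q_1\Lambda^{1/2}Q_2$ we get $\Omega_2=\Omega_{2,tr}\Omega_{2,tr}'=\Omega_{1,tr}Q_1\Lambda Q_1'\Omega_{1,tr}'=C\Lambda C'$, so $(C,\Lambda)$ solves the system. Hence the solution set equals $\{(\Omega_{1,tr}Q_1,\Lambda): M=Q_1\Lambda^{1/2}Q_2\ \text{is an SVD of}\ M\}$, and the diagonal of $\Lambda^{1/2}$ is the multiset of singular values of $M$.

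Next I would invoke SVD uniqueness and translate it through $C=\Omega_{1,tr}Q_1$. Since $\Omega_1,\Omega_2$ are nonsingular, $M$ is nonsingular, hence all its singular values are strictly positive; moreover $MM'=\Omega_{1,tr}^{-1}\Omega_{2,tr}\Omega_{2,tr}'\Omega_{1,tr}^{-1\prime}=\Omega_{1,tr}^{-1}\Omega_2\Omega_{1,tr}^{-1\prime}$, so the squared singular values of $M$ coincide with the eigenvalues of this matrix, i.e.\ with the $\lambda_i$, and ``$\Lambda$ has distinct entries'' is equivalent to ``$M$ has distinct singular values.'' Fixing a decreasing order of the singular values, the eigendecomposition of the symmetric matrix $MM'$ with distinct eigenvalues determines the columns of $Q_1$ up to their signs (the uniqueness statement already used in Theorem \ref{theo:HSVAReigen}, cf.\ Theorem A9.9 of \citet{Muirhead}), and then $Q_2=\Lambda^{-1/2}Q_1'M$ is determined by $Q_1$; thus, for a fixed ordering, $(Q_1,Q_2)$ is unique up to a common $S\in\mathcal{D}(n)$ acting on the columns of $Q_1$ and, correspondingly, on the rows of $Q_2$, and $S$ leaves $\Lambda$ unchanged because $S\Lambda^{1/2}S=\Lambda^{1/2}$. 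Dropping the ordering convention adds exactly the freedom of a common permutation $P\in\mathcal{P}(n)$, since $Q_1\Lambda^{1/2}Q_2=(Q_1P')(P\Lambda^{1/2}P')(PQ_2)$ is again an SVD, now with $\Lambda$ replaced by $P\Lambda P'$. Passing to $C=\Omega_{1,tr}Q_1$ turns the joint $(S,P)$ indeterminacy into $(C,\Lambda)=(C^\ast SP',P\Lambda^\ast P')$, which is precisely the set of Theorem \ref{theo:SignPerm}, and this proves identification up to sign changes and permutations of the structural equations.

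The only delicate point is phrasing the SVD-uniqueness step in the ``unordered'' form needed here: textbook statements fix the order of the singular values, so one must argue separately that dropping the ordering contributes nothing beyond the simultaneous column permutation (together with the matching permutation of $\Lambda$), and that a zero singular value --- which would enlarge the indeterminacy --- cannot occur because $M$ is nonsingular. Everything else is routine bookkeeping with commuting diagonal matrices.
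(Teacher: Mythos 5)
Your proof is correct and follows essentially the same route as the paper's: both recast the system as the singular value decomposition of $\Omega_{1,tr}^{-1}\Omega_{2,tr}$, identify its squared singular values with the eigenvalues of $\Omega_{1,tr}^{-1}\Omega_2\Omega_{1,tr}^{-1\prime}$ from Theorem \ref{theo:HSVAReigen}, and conclude from SVD uniqueness under distinct singular values. The only difference is one of detail: the paper cites the uniqueness result from \cite{Magnus_Neudecker_2007} directly, whereas you make the two-way correspondence between solutions and SVDs explicit and derive the sign/permutation indeterminacy from the eigendecomposition of $MM'$, which is a welcome but inessential elaboration.
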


\medskip
 
\begin{proof} 
See the Appendix \ref{app:Proofs}.
\end{proof}

\section{Geometry of identification in bivariate SVARs and HSVARs}
\label{sec:SetIdentBivariate}

This appendix is dedicated to the identification issue in bivariate SVAR models. We first introduce the notion of set identification in standard SVARs as in \cite{GK18}, and derive point identification as a particular case. Second, we move to the core of the paper and extend the set and point identification notions to SVARs characterized by structural breaks, that, as shown in \cite{BacchiocchiFanelli15}, is more general than the separate analysis of each single regime. 

Consider the following bivariate model, where, for simplicity, the dynamics is omitted, as not directly involved in the identification issue:
\begin{equation}
\label{eq:SVAR}
\left(
\begin{array}{cc}1 & -\beta\\-\alpha & 1\end{array}
\right)
\left(
\begin{array}{c} p_t \\ q_t \end{array}
\right)=
\left(
\begin{array}{c} \varepsilon_t \\ \eta_t \end{array}
\right)
\end{equation}
or, more compactly,
\begin{equation}
\label{eq:SVARcomp}
AY_t = \epsilon_t.
\end{equation}
In order to ease the explanation, we introduce a theoretical foundation to the model and interpret the first equation as a demand equation while the latter as a supply equation.
The vector $Y_t=(p_t\,,\,q_t)^\prime$ collects the two observable variables and $\epsilon_t=(\varepsilon_t\,,\,\eta_t)^\prime$ the two structural shocks. Furthermore, let the structural shocks be characterized by null expected values and by the following covariance matrix:
\begin{equation}
\label{eq:Sigma}
\Sigma\equiv \text{Cov}\left(\begin{array}{c} \varepsilon_t \\ \eta_t \end{array}\right)
= \left(\begin{array}{cc} \sigma_\varepsilon^2 & 0 \\ 0& \sigma_\eta^2 \end{array}\right).\nonumber
\end{equation}
The $A$ matrix, containing the parameters of the simultaneous relationships among the observable variables, $\alpha$ and $\beta$, can also be rescaled by dividing for the standard deviations of the shocks and obtain
\begin{equation}
\label{eq:Azero}
A_0\equiv \Sigma^{-1/2} A =
\left(\begin{array}{cc} 1/\sigma_\varepsilon & -\beta/\sigma_\varepsilon \\ & \\ -\alpha/\sigma_\eta & 1/\sigma_\eta \end{array}\right).\nonumber
\end{equation}
Actually, from the observation of $p_t$ and $q_t$, the amount of information is contained in the estimable covariance matrix
\begin{equation}
\label{eq:Omega}
\Omega\equiv \text{Cov}\left(\begin{array}{c} p_t \\ q_t \end{array}\right)
= \left(\begin{array}{cc} \omega_p^2 & \omega_{pq} \\ \omega_{pq} & \omega_q^2 \end{array}\right)\nonumber
\end{equation}
that is connected to the structural parameters through the non-linear system of equations
\begin{eqnarray}
\label{eq:sys}
\Omega & = & A^{-1}\Sigma A^{-1\prime}\nonumber\\ 
			 & = & A_0^{-1}A_0^{-1\prime} 
\end{eqnarray}
for which, when the solution with respect to the structural parameters is unique, the identification problem is clearly solved. As is well known, however, without imposing any restriction on the structural parameters, the solution cannot be unique as the three (estimable) empirical moments contained in $\Omega$ are not sufficient to consistently estimate the four structural parameters in $A_0$. In fact, the amount of information contained in $\Omega$ is the same as the one contained in the three elements of its lower triangular Cholesky factorization, that can be given by
\begin{equation}
\label{eq:Chol}
\Chol = \left(
\begin{array}{cc} \omega_p & 0 \\ \omega_{pq}/\omega_p & \left(\omega_q^2-\omega_{pq}^2/\omega_p^2\right)^{1/2}\end{array}
\right)\nonumber
\end{equation}
whose inverse is given by
\begin{equation}
\label{eq:CholInv}
\Choli = \left(
\begin{array}{cc} \frac{1}{\omega_p} & 0 \\ -\frac{\omega_{pq}}{\omega_p^2}\left(\omega_q^2-\frac{\omega_{pq}^2}{\omega_p^2}\right)^{-1/2} & \left(\omega_q^2-\frac{\omega_{pq}^2}{\omega_p^2}\right)^{-1/2}\end{array}
\right)
= \left(\omega_1 \,,\,\omega_2\right)
\end{equation}
where the two $(2\times 1)$ vectors $\omega_1$ and $\omega_2$ are the two columns of $\Choli$.

According to \cite{Uhlig05JME}, the identification issue can be seen in terms of the non-uniqueness of an orthogonal matrix $Q \in \O$, where $\O$ is the set of $(2\times 2)$ orthonormal matrices, such that $A_0=Q^\prime\Choli$ for which $\Omega=(A_0^\prime A_0)^{-1}$. In other words, a unique $Q$ guarantees a unique $A_0$ through a suitable rotation of $\Choli$, containing all the information coming from the reduced form parameters (or, better, the information coming from the data).

Denoting with $Q\equiv(q_1\,,\,q_2)$, with $q_1$ and $q_2$ being the columns of $Q$, the $A_0$ matrix can be given by
\begin{equation}
\label{eq:AzeroFact} 
A_0 = Q^\prime \Choli = \left(\begin{array}{c}q_1^\prime \\ q_2^\prime\end{array}\right) \left(\omega_1\,,\,\omega_2\right).
\end{equation}
Moreover, we consider the following assumptions:

\begin{assump}
\label{ass:SignNorm}
(Sign normalization) Coherently with the sign normalization presented in Section \ref{sec:restriction}, among all the possible rotations of $\Choli$ through the orthogonal matrix $Q \in \O$, we consider only those guaranteeing the elements on the main diagonal of $A_0=Q^\prime\Choli$ to be positive. In different words, we consider only $q_1$ and $q_2$ such that $q_1^\prime \omega_1>0$ and $q_2^\prime \omega_2>0$.
\end{assump}

\begin{assump}
\label{ass:SignRestr}
(Sign restriction on $\alpha$ and $\beta$) Coherently with the demand and supply curves in Eq. (\ref{eq:SVAR}), we assume $\alpha\geq 0$ and $\beta\leq 0$.
\end{assump}

The first assumption, that is standard in the SVAR literature, asserts that, as the product of $A_0^{-1} A_0^{-1\prime}$ is invariant to sign changes on the columns of $A_0^{-1}$, we select only $A_0=Q^\prime\Choli$ such that the elements on the main diagonal are strictly positive. Equivalently, we assume the shock to have a positive on impact effect on the corresponding observable variable. The second assumption, instead, refers to the economic interpretation of the two equations of the bivariate model in Eq. (\ref{eq:SVAR}) as a demand and a supply equation, respectively.

\subsection{Set identification in bivariate SVARs}
\label{sec:SetIdentSVAR}

Given the bivariate SVAR model discussed before, the following proposition provides the identification set for the two structural parameters $\alpha$ and $\beta$, according to the two potential cases of $\omega_{pq}\geq0$ or $\omega_{pq}<0$.

\begin{theorem}[Set identification in bivariate SVARs]
\label{theo:IdSVAR}
Given the bivariate model in Eq. (\ref{eq:SVAR}), under Assumption \ref{ass:SignNorm}, then:\\
\indent (Case I): if $\omega_{pq}\geq 0$, then $\alpha \in \left( -\infty\,;\, \frac{\omega_q^2}{\omega_{pq}} \right]$ and $\beta\in \left(-\infty\,;\,\infty \right)$;\\
\indent (Case II): if $\omega_{pq}< 0$, then $\alpha \in \left(-\infty\,;\,\infty \right)$ and $\beta\in \left( -\infty \,;\, \infty \right)$;\\
under Assumption \ref{ass:SignNorm} and Assumption \ref{ass:SignRestr}, then:\\
\indent (Case I): if $\omega_{pq}\geq 0$, then $\alpha \in \left[ \frac{\omega_{pq}}{\omega_p^2}\,;\, \frac{\omega_q^2}{\omega_{pq}} \right]$ and $\beta\in \left(-\infty\,;\,0 \right]$;\\
\indent (Case II): if $\omega_{pq}< 0$, then $\alpha \in \left[0 \,;\, \infty \right)$ and $\beta\in \left[ \frac{\omega_p^2}{\omega_{pq}}\,;\, \frac{\omega_{pq}}{\omega_q^2} \right]$.
\end{theorem}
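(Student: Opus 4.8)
The plan is to reduce the characterization to a one-parameter problem. By Eq.~(\ref{eq:AzeroFact}), every $A_0$ compatible with the reduced form $\Omega$ equals $Q^{\prime}\Choli$ for an orthogonal $Q=(q_1,q_2)$, and reading its entries against the structural form $A_0=\Sigma^{-1/2}A$ gives the two slopes as $\beta=-(q_1^{\prime}\omega_2)/(q_1^{\prime}\omega_1)$ and $\alpha=-(q_2^{\prime}\omega_1)/(q_2^{\prime}\omega_2)$, where $\omega_1,\omega_2$ are the columns of $\Choli$ displayed in Eq.~(\ref{eq:CholInv}). Writing $Q$ through a single angle, $q_1=(\cos\theta,\sin\theta)^{\prime}$ and $q_2=\pm(-\sin\theta,\cos\theta)^{\prime}$, and abbreviating $d=(\omega_q^{2}-\omega_{pq}^{2}/\omega_p^{2})^{1/2}>0$ (so that $\omega_p^{2}d^{2}=\omega_p^{2}\omega_q^{2}-\omega_{pq}^{2}=\det\Omega$), substitution turns $\alpha$ and $\beta$ into explicit rational functions of $\tan\theta$ with a single pole each; in particular $\beta=-\omega_p^{2}\sin\theta/(\omega_p d\cos\theta-\omega_{pq}\sin\theta)$, and the $\pm$ in $q_2$ is fixed by the sign normalization and cancels from both ratios.

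Next I would impose Assumption~\ref{ass:SignNorm}. The requirement that the diagonal of $A_0$ be positive reads $q_1^{\prime}\omega_1>0$ and $q_2^{\prime}\omega_2>0$, i.e.\ $\omega_p d\cos\theta-\omega_{pq}\sin\theta>0$ together with a sign condition on $\cos\theta$; this confines $\theta$ to an open arc, whose two endpoints are exactly the directions where $q_1\perp\omega_1$ (there $\beta$ diverges and $\alpha$ equals the finite value $-\|\omega_1\|^{2}/(\omega_1^{\prime}\omega_2)=\omega_q^{2}/\omega_{pq}$, using $\omega_p^{2}d^{2}=\omega_p^{2}\omega_q^{2}-\omega_{pq}^{2}$) and where $q_2\perp\omega_2$ (there $\alpha$ diverges and $\beta$ equals $\omega_p^{2}/\omega_{pq}$). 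On that arc $\theta\mapsto\alpha$ and $\theta\mapsto\beta$ are strictly monotone away from their poles, so their images follow from continuity, monotonicity and the boundary evaluations above. Which of the two poles actually lies in the interior of the admissible arc is governed by the sign of $\omega_{pq}$, which is exactly why the conclusion splits into (Case I) $\omega_{pq}\geq0$ and (Case II) $\omega_{pq}<0$; the degenerate $\omega_{pq}=0$ is the obvious limiting case of Case I.

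For the sharper statement I would then intersect the admissible arc with the half-lines $\{\alpha\geq0\}$ and $\{\beta\leq0\}$ of Assumption~\ref{ass:SignRestr}. Because $\alpha$ and $\beta$ are monotone on the relevant portions of the arc, the constraint $\alpha\geq0$ (respectively $\beta\leq0$) simply moves one open endpoint of the interval to the point where $\alpha$ (respectively $\beta$) vanishes; evaluating $\alpha$ and $\beta$ there yields $\alpha\in[\omega_{pq}/\omega_p^{2},\,\omega_q^{2}/\omega_{pq}]$, $\beta\in(-\infty,0]$ in Case I and the mirror-image intervals $\alpha\in[0,\infty)$, $\beta\in[\omega_p^{2}/\omega_{pq},\,\omega_{pq}/\omega_q^{2}]$ in Case II, once more splitting on $\mathrm{sgn}(\omega_{pq})$.

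The main obstacle is the bookkeeping involved in imposing the sign normalization: one must track both components of the orthogonal group (equivalently the $\pm$ in $q_2$) together with the column sign normalization so that every observationally distinct structural parameter corresponds to exactly one admissible rotation, and one must evaluate the one-sided limits of the two rational maps as $\theta$ tends to a boundary direction \emph{from inside} the admissible arc in order to decide, in each case, whether each candidate endpoint ($\pm\infty$, $0$, $\omega_q^{2}/\omega_{pq}$, $\omega_{pq}/\omega_p^{2}$, $\omega_{pq}/\omega_q^{2}$) belongs to the identified set. Once the arc and its endpoints are pinned down, the remaining steps are routine algebra.
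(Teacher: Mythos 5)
Your proposal is correct and follows essentially the same route as the paper's proof: starting from $\beta=-(q_1'\omega_2)/(q_1'\omega_1)$ and $\alpha=-(q_2'\omega_1)/(q_2'\omega_2)$, restricting $(q_1,q_2)$ to the arcs allowed by the sign normalization (and then the sign restrictions), and reading off the interval endpoints from the same boundary evaluations (e.g.\ $q_2=-\omega_1/\|\omega_1\|$ giving $\omega_q^2/\omega_{pq}$, $q_2=(0,1)'$ giving $\omega_{pq}/\omega_p^2$). The only difference is presentational: you make the paper's rotating-vector picture explicit via a single-angle parametrization and monotonicity of the resulting rational maps, which is a fine way to formalize the same argument.
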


\medskip
 
\begin{proof} 
See the Appendix \ref{sec:ProofsBivariate}.
\end{proof} 

As is well known, unless we impose at least one equality restriction on one of the structural parameters, the bivariate model cannot be point identified. More specifically, according to the recent contribution by \cite{RWZ10RES}, if either $\alpha=0$ or $\beta=0$ (homogeneous restrictions) are imposed, the model will be globally identified, otherwise, if any other non-homogeneous restriction is imposed, the model will be simply locally identified, see \cite{BK20}. If no point restriction is imposed, the structural model remains unidentified, but focusing on the sign restrictions coherent with the theoretical interpretation of the model, together with the sign and magnitude of the elements in the reduced form covariance matrix among the observable variables, it is possible to obtain an identified set for $\alpha$ and $\beta$. This is what Theorem \ref{theo:IdSVAR} reports.

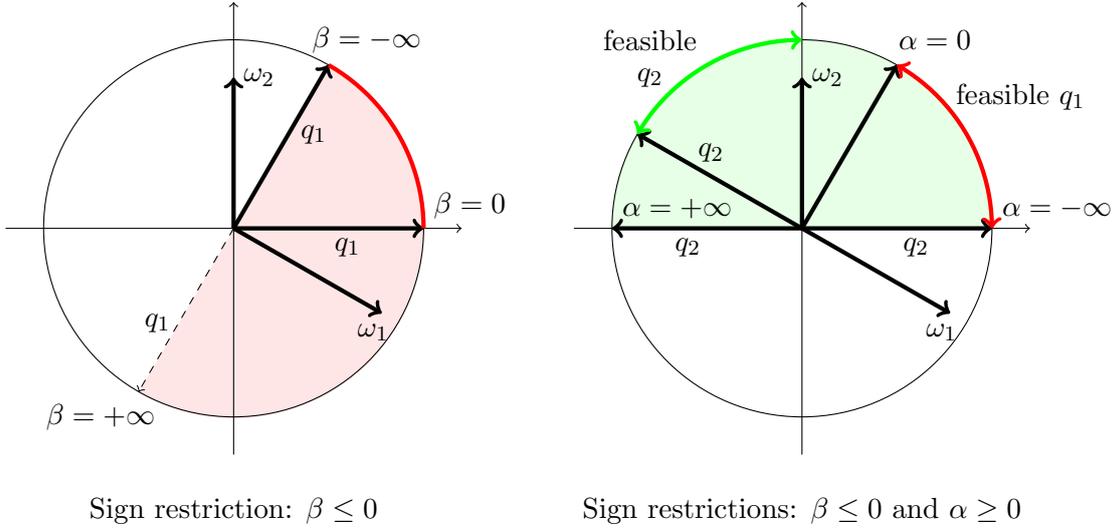
\begin{figure}
\caption{Identification of $\alpha$ and $\beta$. Case I: $\omega_{pq}\geq 0$}
\label{fig:SetIdentCaseI}
\begin{center}
\begin{tikzpicture}[scale=2.5]
\filldraw[color=red!10](0.7,-0.87) arc(-120:60:1);
\draw [->] (0,0) -- (2.4,0);
\draw [->] (1.2,-1.2) -- (1.2,1.2);
\draw [black, ultra thick] [->] (1.2,0) -- (1.2,0.8);
\node [right] at (1.2,0.8) {$\omega_2$};
\draw [black, ultra thick][->] (1.2,0) -- ++(330:0.9);
\node [right] at (1.8,-0.55) {$\omega_1$};
\draw [black, ultra thick][->] (1.2,0) -- ++(60:1);
\node at (1.9,1) {$\beta=-\infty$};
\node [right] at (1.5,0.5) {$q_1$};
\draw [black, dashed][->] (1.2,0) -- ++(240:1);
\node at (0.8,-0.5) {$q_1$};
\node at (0.5,-1) {$\beta=+\infty$};
\draw [black, ultra thick][->] (1.2,0) -- ++(0:1);
\node [above right] at (2.2,0) {$\beta=0$};
\node at (1.8,-0.1) {$q_1$};
\draw (1.2,0) circle [radius=1];
\draw [red,ultra thick,domain=0:60] plot ({1.2+cos(\x)}, {sin(\x)});
\node at (1.2,-1.5) {Sign restriction: $\beta\leq 0$};
\end{tikzpicture}
\hspace{0.5cm}
\begin{tikzpicture}[scale=2.5]
\filldraw[color=green!10](2.2,0) arc(0:180:1);
\draw [->] (0,0) -- (2.4,0);
\draw [->] (1.2,-1.2) -- (1.2,1.2);
\draw [black, ultra thick] [->] (1.2,0) -- (1.2,0.8);
\node [right] at (1.2,0.8) {$\omega_2$};
\draw [black, ultra thick][->] (1.2,0) -- ++(330:0.9);
\node [right] at (1.8,-0.55) {$\omega_1$};
\draw [black, ultra thick][->] (1.2,0) -- ++(60:1);
\node at (1.9,1) {$\alpha=0$};
\draw [black, ultra thick][->] (1.2,0) -- ++(150:1);
\node at (0.6,-0.1) {$q_2$};
\node [right] at (0.6,0.4) {$q_2$};
\draw [black, ultra thick][<->] (0.2,0) -- ++(0:2);
\node [above right] at (0.2,0) {$\alpha=+\infty$};
\node [above right] at (2.2,0) {$\alpha=-\infty$};
\node at (1.8,-0.1) {$q_2$};
\draw (1.2,0) circle [radius=1];
\draw [red,ultra thick,domain=0:60][<->] plot ({1.2+cos(\x)}, {sin(\x)});
\node at (2.35,0.7) {feasible $q_1$};
\node at (1.2,-1.5) {Sign restrictions: $\beta\leq 0$ and $\alpha\geq 0$ };
\draw [green,ultra thick,domain=90:150][<->] plot ({1.2+cos(\x)}, {sin(\x)});
\node at (0.4,1) {feasible};
\node at (0.4,0.8) {$q_2$};
\end{tikzpicture}
\end{center}
{\scriptsize{\textit{Notes}: Set identification of the parameter $\beta$ (left panel) and joint set identification of $\alpha$ and $\beta$ (right panel). The identified set, under the sign restriction consistent with a demand curve, i.e. $\beta<0$, is represented by the red arc in both panels. In the right panel, the set identification  of $\alpha$ under the further sign restriction consistent with a supply curve, i.e. $\alpha\geq 0$, is represented by the green arc. In both cases, the standard assumption of positive diagonal terms on $A_0$ is considered 
($\sigma_{\varepsilon}>0$ and $\sigma_{\eta}>0$): in light red for the first equation and in light green for the second equation.\par}}
\end{figure}

\begin{figure}
\caption{Identification of $\alpha$ and $\beta$. Case II: $\omega_{pq}< 0$}
\label{fig:SetIdentCaseII}
\begin{center}
\begin{tikzpicture}[scale=2.5]
\filldraw[color=red!10](1.7,-0.87) arc(-60:120:1);
\draw [->] (0,0) -- (2.4,0);
\draw [->] (1.2,-1.2) -- (1.2,1.2);
\draw [black, ultra thick] [->] (1.2,0) -- (1.2,0.8);
\node [right] at (1.2,0.8) {$\omega_2$};
\draw [black, ultra thick][->] (1.2,0) -- ++(30:0.9);
\node [right] at (1.8,0.55) {$\omega_1$};
\draw [black, ultra thick][->] (1.2,0) -- ++(120:1);
\node [above left] at (0.8,0.9) {$\beta=-\infty$};
\node at (1.4,-0.5) {$q_1$};
\node [right] at (0.7,0.5) {$q_1$};
\draw [black, dashed][->] (1.2,0) -- ++(300:1);
\node at (2.1,-0.9) {$\beta=+\infty$};
\draw [black, ultra thick][->] (1.2,0) -- ++(0:1);
\node [above right] at (2.2,0) {$\beta=0$};
\node at (1.8,-0.1) {$q_1$};
\draw (1.2,0) circle [radius=1];
\draw [red,ultra thick,domain=0:120] plot ({1.2+cos(\x)}, {sin(\x)});
\node at (1.2,-1.5) {Sign restriction: $\beta\leq 0$};
\end{tikzpicture}
\hspace{0.5cm}
\begin{tikzpicture}[scale=2.5]
\filldraw[color=green!10](2.2,0) arc(0:180:1);
\draw [->] (0,0) -- (2.4,0);
\draw [->] (1.2,-1.2) -- (1.2,1.2);
\draw [black, ultra thick] [->] (1.2,0) -- (1.2,0.8);
\node [right] at (1.2,0.8) {$\omega_2$};
\draw [black, ultra thick][->] (1.2,0) -- ++(30:0.9);
\node [right] at (1.8,0.55) {$\omega_1$};
\draw [black, ultra thick][->] (1.2,0) -- ++(120:1);
\node [above left] at (0.9,0.9) {$\alpha=0$};
\node at (0.6,-0.1) {$q_2$};
\node [right] at (0.7,0.5) {$q_2$};
\draw [black, ultra thick][<->] (0.2,0) -- ++(0:2);
\node [above right] at (0.2,0) {$\alpha=+\infty$};
\node [above right] at (2.2,0) {$\alpha=-\infty$};
\node at (1.8,-0.1) {$q_2$};
\draw (1.2,0) circle [radius=1];
\draw [red,ultra thick,domain=30:90][<->] plot ({1.2+cos(\x)}, {sin(\x)});
\node at (1.9,1) {feasible $q_1$};
\node at (1.2,-1.5) {Sign restrictions: $\beta\leq 0$ and $\alpha\geq 0$ };
\draw [green,ultra thick,domain=120:180][<->] plot ({1.2+cos(\x)}, {sin(\x)});
\node at (0.2,0.8) {feasible};
\node at (0.2,0.6) {$q_2$};
\end{tikzpicture}
\end{center}
{\scriptsize{\textit{Notes}: Set identification of the parameter $\beta$ (left panel) and joint set identification of $\alpha$ and $\beta$ (right panel). The identified set, under the sign restriction consistent with a demand curve, i.e. $\beta<0$, is represented by the red arc in both panels. In the right panel, the set identification  of $\alpha$ under the further sign restriction consistent with a supply curve, i.e. $\alpha\geq 0$, is represented by the green arc. In both cases, the standard assumption of positive diagonal terms on $A_0$ is considered 
($\sigma_{\varepsilon}>0$ and $\sigma_{\eta}>0$): in light red for the first equation and in light green for the second equation.\par}}
\end{figure}
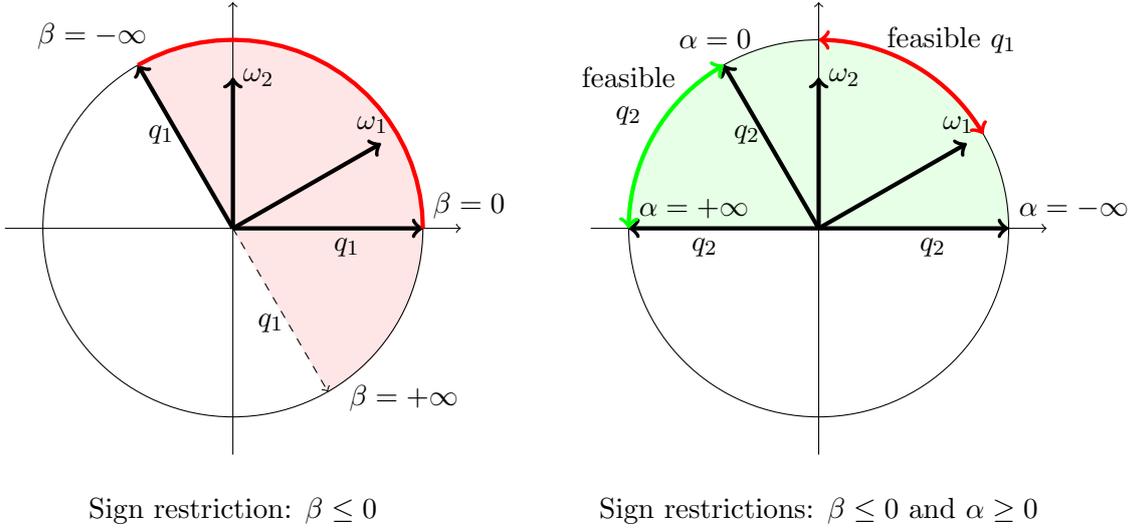

Although the formal proof is confined in the appendix, the intuition of the results can be obtained from the graphical representations reported in Figure \ref{fig:SetIdentCaseI} and Figure \ref{fig:SetIdentCaseII}, depending on the two potential values of $\omega_{pq}\geq 0$ (Case I) and $\omega_{pq}< 0$ (Case II), respectively, under the sign normalization restriction discussed in Assumption \ref{ass:SignNorm} and sign restrictions of Assumption \ref{ass:SignRestr}.

In Figure \ref{fig:SetIdentCaseI}, left panel, we report the $\omega_1$ and $\omega_2$ vectors when the estimated $\omega_{pq}\geq 0$ (Case I), as well as all the possible $q_1$ vectors generating strictly negative values for the $\beta$ parameter, as in an hypothetical demand curve. In the right panel, we also include the further restriction of positive values for the parameter $\alpha$, consistent with a supply curve. The two sign restrictions, jointly, given the orthogonality condition regarding $q_1$ and $q_2$, implicitly impose a set restriction for $\alpha$, in terms of the \textit{feasible} $q_2$ vectors highlighted with the green arc in the right panel of Figure \ref{fig:SetIdentCaseI}. The width of the set, as discussed in Proposition \ref{theo:IdSVAR}, depends on the estimable elements on the covariance matrix of the observable variables $\Omega$, or equivalently, on the inverse of its Cholesky decomposition $\Choli$.

Similarly, in Figure \ref{fig:SetIdentCaseII}, we discuss the set identification of $\alpha$ and $\beta$ when $\omega_{pq}< 0$ (Case II). In this latter case, a joint analysis of the sign restrictions on $\alpha$ and $\beta$ (as well as the sign normalization), provides an identified set for $\beta$, leaving instead $\alpha$ to be unrestricted (though positive). The identified set for $\beta$, in the right panel, is highlighted by the red arrow indicating all feasible values for the $q_1$ vector.

\begin{corol}[Point identification and OLS estimation of $\alpha$ and $\beta$]
\label{corol:IdSVAR}
Given the bivariate model in Eq. (\ref{eq:SVAR}), under Assumption \ref{ass:SignNorm}:\\
(Case I): if $\omega_{pq}\geq 0$ and we restrict $\beta=0$, then $\alpha = \frac{\omega_{pq}}{\omega_p^2}$\\
(Case II): if $\omega_{pq}< 0$ and we restrict $\alpha=0$, then $\beta = \frac{\omega_{pq}}{\omega_q^2}$.
\end{corol}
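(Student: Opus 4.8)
The plan is to obtain the corollary directly from the moment-matching identity $\Omega = A^{-1}\Sigma A^{-1\prime}=A_0^{-1}A_0^{-1\prime}$ recorded in Eq.~(\ref{eq:sys}), exploiting that $\Sigma$ is diagonal. Writing the rows of $A$ as $(1,-\beta)$ and $(-\alpha,1)$, the off-diagonal entry of $A\Omega A'$ equals
\[
-\alpha\,\omega_p^2 + \alpha\beta\,\omega_{pq} + \omega_{pq} - \beta\,\omega_q^2 ,
\]
so diagonality of $\Sigma$ forces this scalar to vanish. This is the joint restriction relating $\alpha$ and $\beta$ that complements the separate (projection) identified sets of Theorem~\ref{theo:IdSVAR}; imposing one homogeneous zero restriction then pins the remaining parameter down.

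For Case~I ($\omega_{pq}\ge 0$) I would impose $\beta=0$, which is admissible because Theorem~\ref{theo:IdSVAR} shows $0$ lies in the identified set for $\beta$ under Assumption~\ref{ass:SignNorm}. Substituting $\beta=0$, the displayed equation collapses to $-\alpha\,\omega_p^2+\omega_{pq}=0$, i.e.\ the singleton $\alpha=\omega_{pq}/\omega_p^2$. To verify this is a legitimate point-identified solution I would check the sign normalization: the implied structural variances are the diagonal entries of $A\Omega A'$, namely $\sigma_\varepsilon^2=\omega_p^2>0$ and $\sigma_\eta^2=\omega_q^2-\omega_{pq}^2/\omega_p^2=\det\Omega/\omega_p^2>0$ by positive definiteness of $\Omega$, so after the sign choice fixed by Assumption~\ref{ass:SignNorm} the columns of $A_0$ have positive diagonal. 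Moreover $\alpha=\omega_{pq}/\omega_p^2\le\omega_q^2/\omega_{pq}$, the upper bound of the identified set in Case~I, is exactly the Cauchy--Schwarz inequality $\omega_{pq}^2\le\omega_p^2\omega_q^2$, and for $\omega_{pq}\ge0$ the resulting $\alpha\ge0$ is consistent with the supply-curve restriction of Assumption~\ref{ass:SignRestr}.

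Case~II ($\omega_{pq}<0$) is symmetric: imposing $\alpha=0$ — again an admissible boundary point of the identified set — reduces the off-diagonal equation to $\omega_{pq}-\beta\,\omega_q^2=0$, so $\beta=\omega_{pq}/\omega_q^2$, with $\sigma_\varepsilon^2=\omega_p^2-\omega_{pq}^2/\omega_q^2>0$ and $\sigma_\eta^2=\omega_q^2>0$ ensuring the sign normalization and $\beta<0$ matching the demand-curve restriction. Finally I would record the ``OLS'' content of the statement: with $\beta=0$ the first structural equation is $p_t=\varepsilon_t$, hence $p_t$ is orthogonal to $\eta_t$ and $\omega_{pq}/\omega_p^2=\mathrm{Cov}(p_t,q_t)/\mathrm{Var}(p_t)$ is the population least-squares slope of $q_t$ on $p_t$ (symmetrically for Case~II), so the identified value is consistently estimated by OLS. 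I do not expect a genuine obstacle, the algebra being elementary; the only step needing care is making explicit that the chosen homogeneous restriction is compatible with Assumption~\ref{ass:SignNorm} (and, case by case, with the sign pattern of Assumption~\ref{ass:SignRestr}), which is precisely what couples $\beta=0$ to Case~I and $\alpha=0$ to Case~II rather than permitting either restriction in either case.
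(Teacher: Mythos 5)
Your proof is correct, but it travels a different road from the paper's. The paper stays inside the rotation parametrization it uses throughout Appendix A: from $A_0=Q^\prime\Omega_{1,tr}^{-1}$ it reads $\beta=-(q_1^\prime\omega_2)/(q_1^\prime\omega_1)$ and $\alpha=-(q_2^\prime\omega_1)/(q_2^\prime\omega_2)$ as in Eq.~(\ref{eq:beta})--(\ref{eq:alpha}), observes that $\beta=0$ forces $q_1\perp\omega_2$, hence (by orthonormality and the sign normalization) $q_2=(0,1)^\prime$, and then substitutes into the formula for $\alpha$ using Eq.~(\ref{eq:CholInv}); Case II is handled symmetrically by pinning $q_1$ parallel to $\omega_1$. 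You instead bypass the $Q$-machinery entirely and work with the moment identity behind Eq.~(\ref{eq:sys}) in the form $\Sigma=A\Omega A^\prime$: diagonality of $\Sigma$ yields the single bilinear equation $-\alpha\omega_p^2+\alpha\beta\omega_{pq}+\omega_{pq}-\beta\omega_q^2=0$, which becomes linear with nonzero leading coefficient once one of $\alpha,\beta$ is set to zero, giving the stated formulas immediately; your checks that the implied structural variances are positive (so Assumption~\ref{ass:SignNorm} can be met) and that the solutions sit at the relevant endpoints of the identified sets of Theorem~\ref{theo:IdSVAR} close the argument. Your route is more elementary and makes the population-OLS interpretation transparent (with $\beta=0$, $p_t$ is exogenous in the second equation, so $\alpha$ is the regression slope of $q_t$ on $p_t$), while the paper's route buys continuity with the geometric pictures in Figures~\ref{fig:SetIdentCaseI}--\ref{fig:SetIdentCaseII} and with the $Q$-based identification framework used in the general $n$-dimensional results. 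One small interpretive remark: the corollary is stated under Assumption~\ref{ass:SignNorm} alone, under which either zero restriction is admissible in either case; the pairing of $\beta=0$ with $\omega_{pq}\geq0$ and $\alpha=0$ with $\omega_{pq}<0$ is, as you note, driven by compatibility with the sign pattern of Assumption~\ref{ass:SignRestr}, not by any failure of the algebra in the opposite pairing.
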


\medskip
 
\begin{proof} 
	See the Appendix \ref{sec:ProofsBivariate}.
\end{proof} 

The previous Corollary \ref{corol:IdSVAR} simply restates a standard result in econometrics. In fact, if we introduce a zero restriction on one of the two structural parameters, then the identified set, depending on the observed $\omega_{pq}$, reduces to a single point (point identification of the other parameter), that can be consistently obtained through the OLS estimator, as stated in the corollary.

\subsection{Point and set identification in bivariate HSVARs}
\label{sec:SetIdentSVARWB}

Consider the model in Eq.s (\ref{eq:SVAR})-(\ref{eq:SVARcomp}), but with a clear evidence of a shift in the variances of the observable variables. According to the HSVAR model introduced in Section \ref{sec:HSVAR}, such shift is simply due to a structural change involving the 
variances of the structural shocks, leaving unaffected the structural relationships among the variables, captured by the two parameters 
$\alpha$ and $\beta$, i.e.
\begin{equation}
	\label{eq:Sigmai}
	\Sigma_i\equiv \left(\begin{array}{cc} \sigma_{\varepsilon,i}^2 & 0 \\ 0& \sigma_{\eta,i}^2 \end{array}\right),\hspace{0.5cm}
	i=\left\{1,2\right\},\nonumber
\end{equation}
where $i=1$ denotes the first regime (before the break), $i=2$ indicates the second regime (after the break), while the $A$ matrix remains as defined in Eq. (\ref{eq:SVARcomp}). Similarly as before, we standardize with respect to the standard deviations of the structural
shocks in the first regime, and define
\begin{equation}
	\label{eq:AzeroLambda}
\resizebox{0.9\hsize}{!}{$	A_0\equiv \Sigma_1^{-1/2} A =
	\left(\begin{array}{cc} 1/\sigma_{\varepsilon,1} & -\beta/\sigma_{\varepsilon,1} \\ & \\ -\alpha/\sigma_{\eta,1} & 1/\sigma_{\eta,1} 
	\end{array}\right)\hspace{0.4cm}\text{and}\hspace{0.4cm}
	\Lambda\equiv \Sigma_1^{-1}\Sigma_2 =
	\left(\begin{array}{cc} \sigma_{\varepsilon,2}^2/\sigma_{\varepsilon,1}^2 & 0 \\ & \\ 0 & \sigma_{\eta,2}^2/\sigma_{\eta,1}^2 
	\end{array}\right)$},\nonumber
\end{equation}
and, coherently with the definitions in Sections \ref{sec:HSVAR} and \ref{sec:identification}, let $C\equiv A_0^{-1}$.

From a different perspective, we consider that the change involves only the second moments of the distribution of the observable variables in 
$Y_t=(p_t\,,\,q_t)^\prime$, showing the two covariance matrices
\begin{equation}
	\label{eq:Omegai}
	\Omega_i\equiv \left(\begin{array}{cc} \omega_{p,i}^2 & \omega_{pq,i} \\ \omega_{pq,i} & \omega_{q,i}^2 \end{array}\right),\hspace{0.5cm}
	i=\left\{1,2\right\},
\end{equation}
that are connected to the structural parameters through the non-linear system of equations (\ref{eq.1st})-(\ref{eq.2nd}). When the solution of
the system with respect to the structural parameters is unique, the identification problem is clearly solved. As before, we define the lower 
triangular Cholesky factorization of $\Omega_i$ as $\Omega_{tr,i}$, $i=\left\{1,2\right\}$, whose inverses are given by
\begin{equation}
	\label{eq:CholInvi}
	\Omega_{i,tr}^{-1} = \left(
	\begin{array}{cc} \frac{1}{\omega_{p,i}} & 0 \\ & \\ -\frac{\omega_{pq,i}}{\omega_{p,i}^2}\gamma_i & \gamma_i\end{array}
	\right)
	= \left(\omega_{1,i} \,,\,\omega_{2,i}\right),\hspace{0.5cm}i=\left\{1,2\right\},
\end{equation}
where the two $(2\times 1)$ vectors $\omega_{1,i}$ and $\omega_{2,i}$ are the two columns of $\Omega_{tr,i}^{-1}$, and where 
\begin{equation}
	\label{eq:gamma}
	\gamma_i=\left(\omega_{q,i}^2-\frac{\omega_{pq,i}^2}{\omega_{p,i}^2}\right)^{-1/2},\hspace{0.5cm}i=\left\{1,2\right\}.
\end{equation}

Based on the connections between the reduced-form and the structural-form parameters highlighted in Eq.s (\ref{eq.1st})-(\ref{eq.2nd}), the 
identification issue can be addressed by studying the solutions of the system of equations reported in Eq. (\ref{eigen decomposition}). 
Theorem \ref{theo:HSVAReigen} shows that it can be addressed as an eigen-decomposition problem that, under the condition of distinct 
eigenvalues, proves the structural parameters contained in $(C,\Lambda)$ (or, equivalently in $A$, $\Sigma_1$ and $\Sigma_2$) to be 
point identified, up to permutations and sign changes of the structural equations.
However, as a matter of comparison, we first report the Rigobon's condition for identification in bivariate HSVARs.

\begin{theorem}[\cite{Rigobon03} condition for point identification in bivariate HSVAR]
\label{theo:Rigobon}
	Given the HSVAR model described in Eq. (\ref{eq:SVAR}) with the two covariance matrices reported in Eq. (\ref{eq:Omegai}), 
	under Assumption \ref{ass:SignNorm}, a necessary and sufficient condition for the uniqueness of the structural parameters 
	$(C,\Lambda)$ is that 
	\begin{equation}
		\label{eq:BivPointIdentNecSuff}
		\Omega_1\neq a\Omega_2
	\end{equation}
	for any scalar $a>0$.
\end{theorem}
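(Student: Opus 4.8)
The plan is to read the claim off the eigen-decomposition characterisation already in hand, so that for $n=2$ it becomes a short corollary of Theorems \ref{theo:SignPerm} and \ref{theo:HSVAReigen}. Write $M\equiv\Omega_{1,tr}^{-1}\Omega_{2}\Omega_{1,tr}^{-1\prime}$, a $2\times2$ symmetric positive-definite matrix. By Theorems \ref{theo:SignPerm} and \ref{theo:HSVAReigen}, every solution $(C,\Lambda)$ of (\ref{eq.1st})--(\ref{eq.2nd}) has the form $C=\Chol Q$ with $\Lambda$ the diagonal matrix of eigenvalues of $M$ and $Q$ a matrix of associated unit eigenvectors, the only residual indeterminacy being a common permutation of the eigenvalue/eigenvector pairs together with the column signs of $Q$. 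The ordering convention (\ref{eq:NormOrd}) removes the permutation the moment the two eigenvalues are unequal, and Assumption \ref{ass:SignNorm} fixes the column signs (at least on the generic set of reduced-form parameters on which no diagonal entry of $A_{0}=Q^{\prime}\Omega_{1,tr}^{-1}$ vanishes). Hence the whole theorem reduces to one equivalence: \emph{$M$ has a repeated eigenvalue if and only if $\Omega_{1}=a\Omega_{2}$ for some scalar $a>0$}, i.e.\ condition (\ref{eq:BivPointIdentNecSuff}) fails.

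First I would record the elementary linear-algebra fact behind this. A real symmetric $2\times2$ matrix with diagonal entries $m_{11},m_{22}$ and off-diagonal entry $m_{12}$ has characteristic polynomial whose discriminant equals $(m_{11}-m_{22})^{2}+4m_{12}^{2}$; this vanishes precisely when $m_{12}=0$ and $m_{11}=m_{22}$, i.e.\ when the matrix is a scalar multiple of $I_{2}$. Applying this to $M$ and using the Cholesky identity $\Chol\Chol^{\prime}=\Omega_{1}$ together with the nonsingularity of $\Chol$, one finds $M=cI_{2}\iff\Omega_{2}=c\,\Chol\Chol^{\prime}=c\,\Omega_{1}$, with $c>0$ forced by positive definiteness of $M$; this is exactly (\ref{eq:BivPointIdentNecSuff}) failing with $a=1/c$. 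So the eigenvalues of $M$ are distinct if and only if $\Omega_{1}\neq a\Omega_{2}$ for every $a>0$.

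It then remains to assemble the two directions. \emph{Sufficiency}: if $\Omega_{1}\neq a\Omega_{2}$ for all $a>0$, the eigenvalues obey $\lambda_{1}>\lambda_{2}>0$, each eigenspace is one-dimensional, so $Q$ varies only over the finite sign/permutation orbit of one fixed pair of unit eigenvectors, and (\ref{eq:NormOrd}) together with Assumption \ref{ass:SignNorm} single out one element of that orbit; thus $(C,\Lambda)$, and hence $(\alpha,\beta,\sigma_{\varepsilon,1},\sigma_{\eta,1},\sigma_{\varepsilon,2},\sigma_{\eta,2})$, is uniquely pinned down. \emph{Necessity}: if $\Omega_{1}=a\Omega_{2}$ then $M=a^{-1}I_{2}$, the single eigenvalue $a^{-1}$ has the whole unit circle as its eigenspace, and $C=\Chol Q$ with $\Lambda=a^{-1}I_{2}$ solves (\ref{eq.1st})--(\ref{eq.2nd}) for \emph{every} $Q\in\mathcal{O}(2)$; Assumption \ref{ass:SignNorm} trims this down only to a nondegenerate arc, not to a point, so $C$ (let alone $(C,\Lambda)$) is not uniquely determined and the model is not point identified. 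Combining the two directions gives the stated equivalence.

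The genuinely delicate part is not the algebra — the $2\times2$ discriminant computation is immediate — but the normalisation bookkeeping: one must verify that the permutation-and-sign indeterminacy described by Theorems \ref{theo:SignPerm}--\ref{theo:HSVAReigen} is \emph{exactly} what the ordering rule (\ref{eq:NormOrd}) and the positive-diagonal rule of Assumption \ref{ass:SignNorm} remove, and deal cleanly with the measure-zero configurations of $(\Omega_{1},\Omega_{2})$ at which a diagonal entry of $A_{0}=Q^{\prime}\Omega_{1,tr}^{-1}$ is zero and the sign normalisation is not well-defined. I would handle the latter by restricting attention to that generic set, or by appealing to the normalisation conventions already adopted in Section \ref{sec:restriction}; everything else is routine.
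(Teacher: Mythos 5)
Your proof is correct, and it reaches the conclusion by a genuinely cleaner route than the paper's. The paper proves this statement by deferring to the proof of Theorem \ref{theo:PointIdentSVARWB}, which computes $M=\Omega_{1,tr}^{-1}\Omega_2\Omega_{1,tr}^{-1\prime}$ entrywise in the $\omega$'s, writes the discriminant of the characteristic polynomial as $\left(\opf\oqs-\ops\oqf\right)^2+4\left(\opf\opqs-\ops\opqf\right)\left(\oqf\opqs-\oqs\opqf\right)$, and then argues that the second (product) term is nonnegative by passing to the structural parametrization and invoking the sign restrictions $c_{12}\geq 0$, $c_{21}\leq 0$ from Assumption \ref{ass:SignRestr} --- a hypothesis that does not actually appear in the statement of the theorem. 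Your version sidesteps all of this: writing the discriminant in terms of the entries of $M$ itself gives $(m_{11}-m_{22})^2+4m_{12}^2$, which is manifestly a sum of squares and vanishes precisely when $M$ is scalar, whence $\Omega_2=c\,\Omega_1$ via the Cholesky identity. This both shortens the algebra and proves the equivalence under exactly the stated hypotheses. The remaining assembly (distinct eigenvalues plus the ordering rule and sign normalization yield a unique $Q$; a repeated eigenvalue leaves a nondegenerate arc of admissible $Q$'s) matches the paper's logic, and your explicit appeal to the ordering convention (\ref{eq:NormOrd}) to kill the permutation indeterminacy of Theorem \ref{theo:SignPerm} is a point of care the theorem statement itself glosses over. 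Your handling of the measure-zero configurations where the sign normalization is ill-posed is also appropriate and consistent with the paper's conventions.
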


\medskip
 
\begin{proof} 
	See the proof of Theorem 1 in \cite{Rigobon03} and the proof of our Theorem \ref{theo:PointIdentSVARWB} in the Appendix 
	\ref{sec:ProofsBivariate}.
\end{proof} 

The following theorem, instead, restates the \citeauthor{Rigobon03}'s condition (Proposition 1, page 780, or our Theorem \ref{theo:Rigobon})
as the solution of the eigen-decomposition problem based on the observed covariance matrices in Eq. (\ref{eq:Omegai}). 
Moreover, it extends the results to the case of lack of identification due to equivalent eigenvalues.

\begin{theorem}[Point and set identification in bivariate HSVAR]
\label{theo:PointIdentSVARWB}
	Given the HSVAR model described in Eq. (\ref{eq:SVAR}) with the two covariance matrices reported in Eq. (\ref{eq:Omegai}), 
	the structural parameters $(C,\Lambda)$ are obtained through the eigen-decomposition problem discussed in Theorem \ref{theo:HSVAReigen}. 
	In particular, the two variances of the structural shocks contained in $\Lambda$ are given by the two eigenvalues of 
	$\Omega_{i,tr}^{-1}\Omega_2 \Omega_{i,tr}^{-1\prime}$, i.e. 
	\begin{equation}
	\label{eq:eigenvalues}
		\lambda_{1,2}=\frac{\opf\oqs+\ops\oqf-2\opqf\opqs\pm\Delta}{2\left(\opf\oqf-\omega_{pq,1}^2\right)}
	\end{equation}
	with 
	\begin{equation}
	\label{eq:delta}
		\Delta = \bigg[\left(\opf\oqs-\ops\oqf\right)^2+4\left(\opf\opqs-\ops\opqf\right)\left(\oqf\opqs-\oqs\opqf\right)\bigg]^{1/2}\nonumber.
	\end{equation}
	The associated unit eigenvectors $q_1$ and $q_2$ form the columns of the orthogonal matrix $Q=(q_1\,,\,q_2)$ such that $C=\Omega_{1,tr}Q$.
	Under Assumption \ref{ass:SignNorm}, the necessary and sufficient condition for the uniqueness of the structural parameters $(C,\Lambda)$ 
	is that $\Delta\neq 0$.
	If, instead, $\Delta = 0$, then Rigobon's condition fails and the HSVAR will only be set identified according to the results of the previous
	Theorem \ref{theo:IdSVAR}.
\end{theorem}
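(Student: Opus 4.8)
The plan is to reduce everything to Theorem \ref{theo:HSVAReigen}, which already tells us that the solution set of system (\ref{eq.1st})--(\ref{eq.2nd}) is $(\Omega_{1,tr}Q,\Lambda)$, with $\Lambda$ the diagonal matrix of eigenvalues of $M:=\Omega_{1,tr}^{-1}\Omega_2\Omega_{1,tr}^{-1\prime}$ and $Q$ the orthonormal matrix of the corresponding eigenvectors. With $n=2$ there is nothing left to prove about this representation, so the work splits into three parts: (i) make the two eigenvalues explicit in the reduced-form moments; (ii) show that $\Delta\neq 0$ is equivalent both to the two eigenvalues being distinct and to Rigobon's condition (\ref{eq:BivPointIdentNecSuff}); and (iii) describe the $\Delta=0$ case and connect it to Theorem \ref{theo:IdSVAR}.

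For (i), rather than inverting $\Omega_{1,tr}$ I would work with the generalized eigenvalue problem: since $\det(\Omega_2-\lambda\Omega_1)=(\det\Omega_{1,tr})^2\det(M-\lambda I_2)$, the eigenvalues of $M$ are the roots of $\det(\Omega_2-\lambda\Omega_1)=0$. Expanding this $2\times 2$ determinant produces a quadratic $A\lambda^2-B\lambda+C_0=0$ with $A=\opf\oqf-\omega_{pq,1}^2=\det\Omega_1>0$, $C_0=\ops\oqs-\omega_{pq,2}^2=\det\Omega_2>0$, and $B=\opf\oqs+\ops\oqf-2\opqf\opqs$; the quadratic formula then gives exactly Eq. (\ref{eq:eigenvalues}). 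The one genuine computation is to verify that the discriminant $B^2-4AC_0$ collapses to the expression $\Delta^2$ in the statement — the point being that the cross terms reorganize into $4(\opf\opqs-\ops\opqf)(\oqf\opqs-\oqs\opqf)$, added to $(\opf\oqs-\ops\oqf)^2$. Since $M$ is real symmetric, $\Delta^2\ge 0$, so $\Delta$ is a well-defined real number with $\lambda_1-\lambda_2=\Delta/A$; and congruence of $M$ with the positive-definite $\Omega_2$ gives $\lambda_1,\lambda_2>0$, consistent with $\Lambda=\Sigma_1^{-1}\Sigma_2$.

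For (ii), from $\lambda_1-\lambda_2=\Delta/A$ with $A>0$ the eigenvalues coincide if and only if $\Delta=0$. When $\Delta=0$, the repeated eigenvalue $\lambda$ of the real symmetric matrix $M$ forces $M=\lambda I_2$, hence $\Omega_2=\lambda\,\Omega_{1,tr}\Omega_{1,tr}^\prime=\lambda\Omega_1$ with $\lambda>0$; conversely $\Omega_2=a\Omega_1$ gives $M=aI_2$ and $\Delta=0$, so $\Delta\neq 0$ is equivalent to Rigobon's condition (\ref{eq:BivPointIdentNecSuff}). When $\Delta\neq 0$, Theorem \ref{theo:HSVAReigen} identifies $(C,\Lambda)$ up to column sign changes; there is no permutation ambiguity because the two structural equations are labelled by the demand/supply interpretation in Eq. (\ref{eq:SVAR}), and Assumption \ref{ass:SignNorm} fixes the signs, so $(C,\Lambda)$ is unique. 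When $\Delta=0$, instead, $\Lambda=\lambda I_2$ is determined but every unit vector in $\Re^2$ is an eigenvector of $M$, so the eigenspace is the whole unit sphere and $Q$ ranges over all of $\mathcal O(2)$ compatible with the sign normalization; then $C=\Omega_{1,tr}Q$ sweeps out every solution of $CC^\prime=\Omega_1$, which is exactly the bivariate SVAR identification problem of Section \ref{sec:SetIdentSVAR}, so the identified sets for $\alpha$ and $\beta$ are those given in Theorem \ref{theo:IdSVAR}.

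The main obstacle is purely algebraic: checking that $B^2-4AC_0$ simplifies to the stated $\Delta^2$. Everything else is either a direct invocation of Theorem \ref{theo:HSVAReigen} or the elementary linear-algebra fact that a real symmetric matrix whose eigenvalues coincide is a scalar matrix — and it is precisely that fact that makes the $\Delta=0$ case reduce cleanly to Rigobon's $\Omega_2=a\Omega_1$ and thence to the set-identification result of Theorem \ref{theo:IdSVAR}.
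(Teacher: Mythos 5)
Your proposal is correct, and it reaches the same eigenvalue formula and the same characterization of the failure case, but by a genuinely cleaner route in the one step that carries the real content. For the formula itself, you pass through the generalized eigenvalue problem $\det(\Omega_2-\lambda\Omega_1)=0$, using $\det(\Omega_2-\lambda\Omega_1)=(\det\Omega_{1,tr})^2\det(M-\lambda I_2)$, whereas the paper multiplies out $\Omega_{1,tr}^{-1}\Omega_2\Omega_{1,tr}^{-1\prime}$ entrywise before extracting the characteristic polynomial; both land on the quadratic with $A=\det\Omega_1$, $C_0=\det\Omega_2$ and $B=\opf\oqs+\ops\oqf-2\opqf\opqs$, and the algebraic identity $B^2-4AC_0=\left(\opf\oqs-\ops\oqf\right)^2+4\left(\opf\opqs-\ops\opqf\right)\left(\oqf\opqs-\oqs\opqf\right)$ that you flag as ``the one genuine computation'' does check out. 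The more substantive divergence is in proving that $\Delta=0$ is equivalent to Rigobon's condition failing. The paper argues that each of the two addends in the numerator of the discriminant is separately non-negative, which it establishes by re-expressing the cross terms through the structural parametrization $\Omega_1=CC^\prime$, $\Omega_2=C\Lambda C^\prime$ and invoking the sign restrictions $c_{12}\geq 0$, $c_{21}\leq 0$ from Assumption \ref{ass:SignRestr}; it then solves the system obtained by setting both addends to zero and recognizes the solutions as $\Omega_1=a\Omega_2$. Your argument bypasses all of this: $\lambda_1-\lambda_2=\Delta/A$ with $A>0$, so $\Delta=0$ forces a repeated eigenvalue of the real symmetric matrix $M$, hence $M=\lambda I_2$ and $\Omega_2=\lambda\Omega_1$, with the converse immediate. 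This is shorter, does not decompose the discriminant into separately signed pieces, and — notably — does not require the structural sign restrictions at all, which is arguably preferable since the claim concerns only the reduced-form covariance matrices. The only element of the paper's proof you omit is the explicit closed form for the eigenvectors $q_1$ and $q_2$, but those formulas are not part of the theorem statement, so nothing in the claim is left unproved.
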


\medskip
 
\begin{proof} 
	See the Appendix \ref{sec:ProofsBivariate}.
\end{proof} 

\medskip

Theorem \ref{theo:PointIdentSVARWB} provides analytical formula to calculate the structural parameters as a function of the eigenvalues and  eigenvectors of observable matrices, i.e. the covariance matrices of the reduced form in the two regimes, $\Omega_1$ and $\Omega_2$. Furthermore, in proving the theorem, in the Appendix, we also show that the necessary and sufficient condition in Eq. 
(\ref{eq:BivPointIdentNecSuff}), as expected, is equivalent to say that the two eigenvalues in Eq. (\ref{eq:eigenvalues}) must be distinct 
(as postulated in Theorem \ref{theo:HSVAReigen}) or, put differently, the shift in the variances of the structural shocks must be different. 

However, when the quantity $\Delta=0$, from Eq. (\ref{eq:eigenvalues}) we have that the two eigenvalues $\lambda_1$ and $\lambda_2$ coincide 
and, according to Theorem \ref{theo:HSVAReigen}, there will be infinite (not parallel) eigenvectors $q_1$ and $q_2$ and, as a consequence, 
the orthogonal $Q=(q_1\,,\,q_2)$ is not unique. The model, thus, is not point identified though we are in the presence of a structural 
break with distinct covariance matrices $\Omega_1$ and $\Omega_2$. Put differently, the information coming from the two different covariance 
matrices is not sufficient for point-identifying the structural parameters of the bivariate model.

We now move to the geometric interpretation of this result. Starting from Eq. (\ref{eigen decomposition}), we easily obtain that 
$Q^\prime \Omega_{1,tr}^{-1}\Omega_{2,tr}\Omega_{2,tr}^{\prime}\Omega_{1,tr}^{-1\prime}Q=\Lambda$. 
Fixing the quantity $\Upsilon\equiv \Omega_{2,tr}^{\prime}\Omega_{1,tr}^{-1\prime}Q$, then we have that 
$\Upsilon^{\prime}\Upsilon=\Lambda$, or, equivalently, $\Upsilon\Lambda^{-1}\Upsilon^{\prime}=I_n$, with $I_n$ the $(n\times n)$ identity
matrix. Interestingly, the columns of $\Upsilon$, obtained as a linear transformation of the columns of $Q$, maintain the orthogonality
condition, although their length is no longer unity, but given by the elements on the main diagonal of $\Lambda$. 
Coming back to the bivariate case, it is easy to remark that the equation $\Upsilon\Lambda^{-1}\Upsilon^{\prime}=I_n$ is the representation of 
two orthogonal vectors, of length $\left\Vert \upsilon_1 \right\Vert=\lambda_1$ and $\left\Vert \upsilon_2 \right\Vert=\lambda_2$, 
belonging to an ellipse of equation $\frac{x^2}{\lambda_1}+\frac{y^2}{\lambda_2}=1$, as shown in Figure \ref{fig:IdentHSVAR}.

Once we know $\lambda_1$ and $\lambda_2$, being the two eigenvalues of the eigen-decomposition highlighted in Theorem \ref{theo:HSVAReigen}, 
there will be just two pairs of orthogonal vectors (other than their opposite), $(\upsilon_1\,,\,\upsilon_2)$ and 
$(\tilde{\upsilon}_1\,,\,\tilde{\upsilon}_2$), having $\lambda_1$ and $\lambda_2$ as their length, shown in blue and in red, respectively, 
in Figure \ref{fig:IdentHSVAR}. 
Starting from these four pairs of vectors, using the definition of $\Upsilon$, it is possible to obtain four values of $Q$ simply by linearly transforming the columns of $\Upsilon$ by the known quantities $\Omega_{1,tr}$ and $\Omega_{2,tr}$, i.e. 
$Q=\Omega_{1,tr}^{\prime}\Omega_{2,tr}^{-1\prime}\Upsilon$. Based on Assumption \ref{ass:SignNorm} (sign normalization), just two of the four 
$Q$ matrices will be retained (upper-half or lower-half of the ellipse). Fixing a specific ordering of the eigenvalues, or, equivalently, 
fixing the permutation matrix $P \in \Pn$, helps reducing to one single admissible $Q$, making the HSVAR point identified.
The problem arises when $\lambda_1=\lambda_2$. The ellipse will collapse into a circle and infinite orthogonal vectors will be potentially
admissible. In this case, of course, the HSVAR will be no longer identified. 

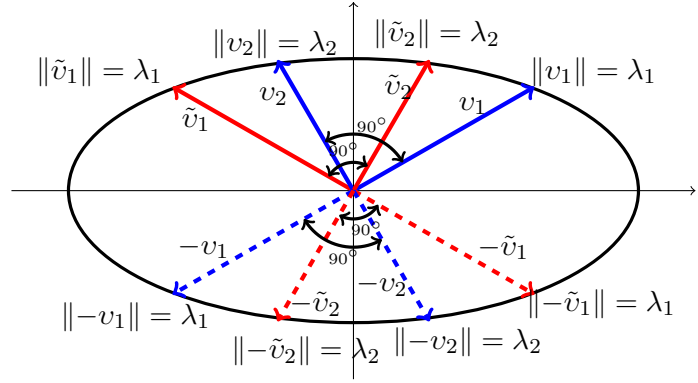
\begin{figure}
\caption{Identification of a bivariate HSVAR}
\label{fig:IdentHSVAR}
\begin{center}
\begin{tikzpicture}[scale=2.5]
\draw[black, very thick] (1.2,0) ellipse (1.5 and 0.7);
\draw [->] (-0.6,0) -- (3.0,0);
\draw [->] (1.2,-1) -- (1.2,1);

\draw [blue, ultra thick][->] (1.2,0) -- ++(30:1.1);
\node [above left] at (2.85,0.5) {$\left\Vert \upsilon_1 \right\Vert=\lambda_1$};
\node [right] at (1.7,0.45) {$\upsilon_1$};
\draw [red, ultra thick][->] (1.2,0) -- ++(150:1.1);
\node [above left] at (0.25,0.5) {$\left\Vert \tilde{\upsilon}_1 \right\Vert=\lambda_1$};
\node [left] at (0.5,0.38) {$\tilde{\upsilon}_1$};

\draw [blue, ultra thick][->] (1.2,0) -- ++(120:0.8);
\node [above right] at (0.4,0.65) {$\left\Vert \upsilon_2 \right\Vert=\lambda_2$};
\node [right] at (0.65,0.5) {$\upsilon_2$};
\draw [red, ultra thick][->] (1.2,0) -- ++(60:0.8);
\node [above right] at (1.25,0.7) {$\left\Vert \tilde{\upsilon}_2 \right\Vert=\lambda_2$};
\node [right] at (1.31,0.55) {$\tilde{\upsilon}_2$};

\draw [blue, ultra thick, dashed][->] (1.2,0) -- ++(210:1.1);
\node [above left] at (0.5,-0.8) {$\left\Vert -\upsilon_1 \right\Vert=\lambda_1$};
\node [left] at (0.6,-0.3) {$-\upsilon_1$};
\draw [red, ultra thick, dashed][->] (1.2,0) -- ++(330:1.1);
\node [left] at (2.95,-0.6) {$\left\Vert -\tilde{\upsilon}_1 \right\Vert=\lambda_1$};
\node [right] at (1.8,-0.3) {$-\tilde{\upsilon}_1$};

\draw [blue, ultra thick, dashed][->] (1.2,0) -- ++(300:0.8);
\node at (1.8,-0.8) {$\left\Vert -\upsilon_2 \right\Vert=\lambda_2$};
\node at (1.35,-0.5) {$-\upsilon_2$};
\draw [red, ultra thick, dashed][->] (1.2,0) -- ++(240:0.8);
\node at (0.95,-0.85) {$\left\Vert -\tilde{\upsilon}_2 \right\Vert=\lambda_2$};
\node at (1,-0.6) {$-\tilde{\upsilon}_2$};

\draw [black,very thick,domain=30:120][<->] plot ({1.2+0.3*cos(\x)}, {0.3*sin(\x)});
\node at (1.30,0.35) {$\scriptscriptstyle 90^{\circ}$};
\draw [black,very thick,domain=60:150][<->] plot ({1.2+0.15*cos(\x)}, {0.15*sin(\x)});
\node at (1.15,0.22) {$\scriptscriptstyle 90^{\circ}$};
\draw [black,very thick,domain=210:300][<->] plot ({1.2+0.3*cos(\x)}, {0.3*sin(\x)});
\node at (1.27,-0.2) {$\scriptscriptstyle 90^{\circ}$};
\draw [black,very thick,domain=240:330][<->] plot ({1.2+0.15*cos(\x)}, {0.15*sin(\x)});
\node at (1.15,-0.35) {$\scriptscriptstyle 90^{\circ}$};

\end{tikzpicture}
\end{center}
{\scriptsize{\textit{Notes}: Representation of the ellipse of equation $\frac{x^2}{\lambda_1}+\frac{y^2}{\lambda_2}=1$, 
where $\lambda_1$ and $\lambda_2$ are the eigenvalues of $\Omega_{1,tr}^{-1} \Omega_2 (\Omega_{1,tr}^{-1})^{\prime}$. 
In blue the pairs of orthogonal vectors $(\upsilon_1\,,\,\upsilon_2)$ and $(-\upsilon_1\,,\,-\upsilon_2)$. In red the other two pairs of
orthogonal vectors $(\tilde{\upsilon}_1\,,\,\tilde{\upsilon}_2)$ and $(-\tilde{\upsilon}_1\,,\,-\tilde{\upsilon}_2)$. 
The vectors $\upsilon_1,\tilde{\upsilon}_1,-\upsilon_1,-\tilde{\upsilon}_1$ have length exactly equal to $\lambda_1$, and 
$\upsilon_2,\tilde{\upsilon}_2,-\upsilon_2,-\tilde{\upsilon}_2$ have length exactly equal to $\lambda_2$. The following orthogonality 
conditions hold: $(\upsilon_1 \perp \upsilon_2)$, $(\tilde{\upsilon}_1 \perp \tilde{\upsilon}_2)$, $(-\upsilon_1 \perp -\upsilon_2)$, 
$(-\tilde{\upsilon}_1 \perp -\tilde{\upsilon}_2)$. \par}}
\end{figure}

\section{Proofs} 
\label{app:Proofs}

We first introduce some notation that will be used in the following proofs. For any reduced-form parameter $\phi\in \Phi$, let $\li$ be
an eigenvalue of the eigenproblem as in Definition \ref{def:eigenspace} with algebraic multiplicity $g(\li)=m_i$, with associated eigenspace 
$\Qli$ as in Eq. (\ref{eq:eigenspace}), containing $\qja$, the column of $Q$ associated with the $\ja$-th structural shock (shock of interest).
We have defined the zero restrictions on the vectors $\big(q_1^i,\ldots,q_{m_i}^i\big)\in\Qli$ in terms of the matrix $F_j^i(\phi)$, with 
$\phi$-a.s. full row rank equal to $f_j^i$. Let $\Qlip$, instead, be the linear space in $\Re^n$, of dimension $(n-m_i)$, whose elements are orthogonal to $\Qli$. A basis for this linear space is given by $\big(v_1^i,\ldots,v_{(n-m_i)}\big)$. We define $\mathcal{F}_j^{i\perp}(\phi)$ the linear subspace of $\Re^n$ that is orthogonal to the row vectors of $F_j^i(\phi)$ and to $\Qlip$. 
We let $\mathcal{H}_j(\phi)$ be the half-space in $\Re^n$ defined by the sign-normalization constraint $\big\{x\in\Re^n\,\big|
(\sigma^j)\prime x\geq 0\big\}$, with $\sigma^i$ being the $j$-th column of $\Omega_{1,tr}^{-1}$. As before, $\mathcal{S}^{n-1}$ indicates the 
unit sphere in $\Re^n$. Finally, given $k$ linearly independent vectors in $\Re^n$, $V=\big(v_1,\ldots,v_k\big)\in \Re^{n\times k}$, let 
$\mathcal{P}(V)$ be the linear subspace in $\Re^n$, of dimension $(n-k)$ that is orthogonal to the column vectors of $V$. 

\vspace{0.5cm}

\begin{lemma}[Diagonalization of symmetric matrices]
\label{lemma:eigendecomposition}
	Let $\Omega$ be a symmetric matrix in $\Re^{n\times n}$, then it is diagonalizable, i.e. there exists an orthogonal matrix $Q\in \On$,
	made of the (unit) eigenvectors of $\Omega$, such that $\Omega Q=Q D$, or equivalently, $Q^\prime \Omega Q = D$, where $D$ is diagonal.
	Moreover, the matrix $D$ contains the (real) eigenvalues of $\Omega$, corresponding to the eigenvectors in $Q$. 
\end{lemma}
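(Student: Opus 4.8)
The statement is the classical spectral theorem for real symmetric matrices, so the plan is to give the standard inductive proof (or, since it is a textbook fact, simply to invoke a reference such as Theorem A9.9 in \cite{Muirhead}, already cited for Theorem \ref{theo:HSVAReigen}). I will sketch the self-contained argument. First I would dispose of a preliminary fact: every eigenvalue of $\Omega$ is real. Taking a possibly complex eigenpair $(\lambda,v)$ with $v\neq 0$ and writing $\bar v^\prime$ for the conjugate transpose, one has $\bar v^\prime \Omega v=\lambda\,\bar v^\prime v$. Since $\Omega$ is real and symmetric, the scalar $\bar v^\prime \Omega v$ equals its own conjugate and hence is real, while $\bar v^\prime v=\|v\|^2>0$; therefore $\lambda$ is real. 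In particular the characteristic polynomial of $\Omega$, which has a root by the fundamental theorem of algebra, has a real root, so $\Omega$ admits a real unit eigenvector.

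Next I would proceed by induction on $n$, the case $n=1$ being immediate. For the inductive step, let $\lambda_1$ be a real eigenvalue of $\Omega$ and $q_1$ an associated unit eigenvector. Extend $q_1$ to an orthonormal basis of $\Re^n$, collecting the remaining basis vectors as the columns of an $n\times(n-1)$ matrix $U_2$, so that $U=(q_1\,,\,U_2)\in\On$. Then $U^\prime \Omega U$ is again symmetric, and its first column is $U^\prime \Omega q_1=\lambda_1 U^\prime q_1=\lambda_1 e_1$; by symmetry its first row is $\lambda_1 e_1^\prime$. Hence $U^\prime \Omega U=\mathrm{diag}(\lambda_1,\Omega^\ast)$ in block form, where $\Omega^\ast=U_2^\prime \Omega U_2$ is an $(n-1)\times(n-1)$ symmetric matrix. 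By the inductive hypothesis there is an orthogonal $(n-1)\times(n-1)$ matrix $Q^\ast$ with $Q^{\ast\prime}\Omega^\ast Q^\ast=D^\ast$ diagonal.

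Finally I would assemble the result: set $Q=U\cdot\mathrm{diag}(1,Q^\ast)$, which is a product of orthogonal matrices, hence $Q\in\On$, and compute $Q^\prime \Omega Q=\mathrm{diag}(\lambda_1,D^\ast)\equiv D$, a diagonal matrix; equivalently $\Omega Q=QD$. Reading this identity column by column, the $i$-th column $q_i$ of $Q$ satisfies $\Omega q_i=d_{ii}\,q_i$ with $\|q_i\|=1$, so the columns of $Q$ are unit eigenvectors and the diagonal entries of $D$ are the corresponding eigenvalues of $\Omega$, which are real by the preliminary step. The only nontrivial bookkeeping is checking that the off-diagonal blocks of $U^\prime \Omega U$ vanish, which is precisely where symmetry of $\Omega$ is used; the rest is routine.
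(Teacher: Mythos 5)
Your proof is correct. The paper does not actually supply an argument for this lemma: its ``proof'' is a one-line citation to \cite{Magnus_Neudecker_2007} (Chapter 1, Theorem 13), so you have in effect written out the standard argument that the cited reference contains. Your two steps --- realness of the eigenvalues via $\bar v^\prime \Omega v$ being self-conjugate, followed by induction with orthogonal deflation $U^\prime\Omega U=\mathrm{diag}(\lambda_1,\Omega^\ast)$ --- are exactly the classical spectral-theorem proof, and the deflation device is moreover the same one the authors themselves deploy in their proof of Lemma \ref{lemma:multiplicity}. The one point worth being explicit about, which you flag correctly, is that the vanishing of the off-diagonal blocks of $U^\prime\Omega U$ is where symmetry enters; everything else is routine. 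No gaps.
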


\begin{proof}[Proof of Lemma \ref{lemma:eigendecomposition}]
	See \cite{Magnus_Neudecker_2007}, Chapter 1, Theorem 13 (page 17).
\end{proof}

\begin{lemma}
\label{lemma:multiplicity}
	In real symmetric matrices the algebraic multiplicity does correspond to the geometric multiplicity.
\end{lemma}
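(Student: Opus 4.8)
The plan is to deduce the claim directly from the spectral decomposition established in Lemma~\ref{lemma:eigendecomposition}. Let $\Omega\in\Re^{n\times n}$ be symmetric and write $\Omega=QDQ^\prime$ with $Q\in\On$ and $D=\mathrm{diag}(d_1,\ldots,d_n)$ collecting the real eigenvalues. First I would recall two elementary facts valid for any invertible matrix $Q$: (i) $\Omega$ and $D$ are similar, hence share the same characteristic polynomial, so the algebraic multiplicity of an eigenvalue $\lambda$ of $\Omega$ equals $\#\{j:d_j=\lambda\}$; and (ii) the map $x\mapsto Q^\prime x$ is a linear isomorphism from $\ker(\Omega-\lambda I_n)$ onto $\ker(D-\lambda I_n)$, so the geometric multiplicity of $\lambda$ for $\Omega$ equals that for $D$. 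Since $\ker(D-\lambda I_n)$ is spanned precisely by the standard basis vectors $e_j$ with $d_j=\lambda$, its dimension is again $\#\{j:d_j=\lambda\}$. Combining (i) and (ii), both multiplicities of $\lambda$ coincide with this count, which is the claim.

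An alternative and perhaps cleaner packaging avoids referring to $D$ explicitly and rests on a counting argument. For any square matrix the geometric multiplicity of an eigenvalue is at most its algebraic multiplicity. By Lemma~\ref{lemma:eigendecomposition} all eigenvalues of $\Omega$ are real, so its characteristic polynomial splits over $\Re$ and the sum of the algebraic multiplicities equals $n$; the same lemma also furnishes a basis of $\Re^n$ consisting of eigenvectors of $\Omega$, so the sum of the geometric multiplicities equals $n$ as well. A term-by-term inequality between two tuples of nonnegative integers having equal sums forces equality in each term, whence algebraic and geometric multiplicities agree for every eigenvalue.

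I do not expect any genuine obstacle here. The only point needing care is to be explicit that the characteristic polynomial splits over $\Re$ and that the eigenspaces span $\Re^n$ --- but both of these are exactly what Lemma~\ref{lemma:eigendecomposition} delivers. I would therefore present the short counting argument as the main proof, noting in passing that the underlying reason is the orthogonal diagonalizability of symmetric matrices.
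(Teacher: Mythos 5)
Your proposal is correct, but it takes a different route from the paper. The paper proves the lemma by a constructive deflation argument: starting from one unit eigenvector $p_{i1}$ of $\lambda_i$, it forms an orthogonal matrix $B=(p_{i1}\;C)$, observes that $B^\prime A B$ is block diagonal with lower-right block $C^\prime A C$, notes that algebraic multiplicity greater than one forces $\det(C^\prime A C-\lambda_i I_{n-1})=0$, and thereby extracts a second eigenvector $p_{i2}=Cq$ orthogonal to the first; iterating yields $m_i$ mutually orthogonal eigenvectors spanning the eigenspace. You instead treat the lemma as an immediate corollary of the spectral decomposition already recorded in Lemma~\ref{lemma:eigendecomposition}, either by transporting kernels through the similarity $\Omega=QDQ^\prime$ or by the counting argument (geometric $\leq$ algebraic termwise, both families summing to $n$). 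Both routes are sound. Your version is shorter and makes explicit use of the preceding lemma, which the paper cites but does not actually invoke in its own proof; the paper's version is self-contained and, as a by-product, exhibits an orthonormal basis of each eigenspace, which is the object the paper goes on to use in Definition~\ref{def:eigenspace}. The only point to keep explicit in your counting argument is the standard fact that the eigenspaces of distinct eigenvalues are in direct sum, so that the geometric multiplicities genuinely add up to $n$ once Lemma~\ref{lemma:eigendecomposition} supplies a basis of eigenvectors; with that said, there is no gap.
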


\begin{proof}[Proof of Lemma \ref{lemma:multiplicity}] 
	Let $A$ be and $n\times n$ symmetric matrix whose \textit{i}-th eigenvalue is represented by $\lambda_i$, with algebraic multiplicity
	equal to $1<m_i\leq n$. Then, there exists some unit-length eigenvector $p_{i1}$. Let $B=\big(p_{i1}\:\:\:C\big)$ be an orthogonal matrix.
	Then we have
	\begin{equation}
	\label{eq:ProofLemmaBAB1}
		B^\prime A B = \left(\begin{array}{cc}\lambda_i & 0\\0 & C^\prime A C \end{array}\right).\nonumber
	\end{equation}
	As the algebraic multiplicity $m_i$ is greater than one, from the characteristic polynomial we have that 
	$\big|C^\prime A C-\lambda_i I_{n-1}\big|=0$, that implies there will be some non-null vector $q$ such that 
	$\big(C^\prime A C-\lambda_i I_{n-1}\big)q=0$.
	Let $p_{i2}=Cq$. It is easy to show that $p_{i2}$ is an eigenvector of $A$. In fact, from the previous relation 
	$\big(C^\prime A C-\lambda_i I_{n-1}\big)q=0$ we have $A\,C\,q=\lambda_i C\,q$, that implies $Ap_{i2}=\lambda_ip_{i2}$.
	Moreover, by construction, $p_{i1}$ will be orthogonal to $p_{i2}$. It will be possible, thus, to define a new $B$ of the form 
	$B=\big(p_{i1}\:\:\:p_{i2}\:\:\:C\big)$ such that
	\begin{equation}
	\label{eq:ProofLemmaBAB2}
		B^\prime A B = \left(\begin{array}{ccc}\lambda_i & 0 & 0\\0 & \lambda_i & 0\\
		0 & 0 & C^\prime A C \end{array}\right).\nonumber
	\end{equation}
	and proceed as before for all the algebraic multiplicity of $\lambda_i$. The matrix $E = \big(p_{i1},\:\ldots\:,p_{im_i}\big)$ will 
	be a basis for the eigenspace of $A$ associated with $\lambda_i$, and the dimension of such space will be clearly $m_i$, 
	being all the columns of $E$ orthogonal.	
\end{proof}

\vspace{1cm}

\begin{proof}[Proof of Theorem \ref{theo:SignPerm}] Let $(C,\Lambda)$ be a solution of the equation system different from $(C^{\ast},\Lambda^{\ast})$ with non singular $C$. Then, there exists a $n \times n$ matrix $A$ such that $C^{\ast}=CA$ holds. Note that $A$ has to be an orthogonal matrix $AA^{\prime}=I$ as otherwise $\Omega_1=C^{\ast}C^{\ast \prime} = CC^{\prime}$ violates. In order for (\ref{eq.2nd}) to hold for both $(C^{\ast},\Lambda^{\ast})$ and $(C,\Lambda)$, 
\begin{equation}
\Omega_2=C^{\ast}\Lambda^{\ast} C^{\ast \prime}=CA\Lambda^{\ast} A^{\prime} C^{\prime}
\end{equation} 
must hold and, hence, $\Lambda=A \Lambda^{\ast}A'$ holds. We therefore investigate the conditions on orthogonal matrix $A$ such that $A \Lambda^{\ast}A'$ yields a diagonal matrix with non negative entries. 

Let $(\lambda_1^{\ast},\dots,\lambda_n^{\ast})$ be the diagonal elements of $\Lambda^{\ast}$ and $a_{ij}$ be $(i,j)$-element of $A$. Note that non-singularity of $\Omega_2$ implies $\lambda_k^{\ast}>0$ for all $k=1,\dots,n$. Noting that the $(i,j)$-element of $A \Lambda^{\ast}A^{\prime}$ can be expressed as $\sum_{k=1}^n \lambda_k^{\ast}a_{ik}a_{jk}$, $A$ has to satisfy
\begin{equation}
\begin{cases}
\sum_{k=1}^{n}\lambda_k^{\ast}a_{ik}a_{jk} =0, \mspace{10mu} \forall i \neq j \notag \\
\sum_{k=1}^{n}\lambda_k^{\ast}a_{ik}a_{jk} \geq 0,\mspace{10mu} \forall i = j. \notag
\end{cases}
\end{equation}
The second set of conditions does not at all constrain $A$, while the first set of conditions constrains $A$ to those such that every row vector in $A$ has only one nonzero element and none of the row vectors in $A$ shares the column-index for the non-zero entry. Combined with orthogonality of $A$, feasible $A$'s can be therefore represented by $A=PS$.
\end{proof}

\vspace{1cm}

\begin{proof}[Proof of Theorem \ref{theo:SVD}] 
	The proof of the theorem is trivial and completely based on the proof of the Single Value Decomposition for square matrices, see among 
	many others \cite{Magnus_Neudecker_2007}, pages 19-20. The first point to remark is that, if we call 
	$\Omega_{tr}=\Omega_{1,tr}^{-1}\Omega_{2,tr}$, then $\Lambda^{1/2}$ contains the positive square root of the eigenvalues of 
	$\Omega = \Omega_{1,tr}^{-1}\Omega_{2,tr}\Omega_{2,tr}^\prime\Omega_{1,tr}^{-1\prime}=\Omega_{1,tr}^{-1}\Omega_{2}\Omega_{1,tr}^{-1\prime}$,
	as described in Theorem \ref{theo:HSVAReigen}. However, for symmetric and non-singular real matrices like $\Omega$, the number of identical
	eigenvalues (real and different from zero) corresponds to the number of degenerate singular values in $\Omega_{tr}$. As a consequence, if
	all the elements in $\Lambda^{1/2}$ are distinct, then all the singular values are non-degenerate, and the singular value decomposition is
	unique ($Q$ and $Q_2$ are unique), up to multiplication of a specific column of $Q$ and $Q_2$ by -1, or changing the ordering of the
	elements in $\Lambda^{1/2}$ (or $\Lambda$).
\end{proof}

\vspace{1cm}

\begin{proof}[Proof of Theorem \ref{theo:PointIdHSVAR}] 
	The proof of the theorem takes inspiration from \cite{RWZ10RES} (proof of their Theorem 7). When the two covariance matrices are perfectly
	proportional, or even equal, then the condition in the theorem collapses to the well known condition in \cite{RWZ10RES} and \cite{BKglob20},
	and the proof is thus immediate. On the other side, if all eigenvalues are distinct, i.e. $k=n$, then the results of Theorem 
	\ref{theo:HSVAReigen} apply. Similar results apply for all eigenvalues with algebraic multiplicity equal to one, i.e. $m_i=1$. 
	According to Lemma \ref{lemma:multiplicity}, the geometric multiplicity is equal to the algebraic one, and thus, if $m_i=1$ the eigenspace
	associated to such eigenvalues will generate spaces of dimension one, each. Imposing unit length and sign normalization allows to uniquely 
	identify such vectors. Moreover, given Lemma \ref{lemma:eigendecomposition}, such vectors will be mutually orthogonal. They will constitute the columns of $Q$ associated with the eigenvalues of multiplicity one.
	
	We will now turn to the case when $\lambda$ has multiplicity greater than 1. Let $\lambda_i$ be characterized by algebraic multiplicity 
	$m_i\leq 2$. Given Lemma \ref{lemma:multiplicity}, the $m_i$ associated eigenvectors, although not unique, represent an orthonormal basis
	for the subspace $\Qli$, of dimension $m_i$ in $\Re^n$. For the condition in Theorem \ref{theo:PointIdHSVAR} to be sufficient, we need to 
	prove that imposing such particular pattern of zero restrictions allows to uniquely pin down orthonormal vectors lying in $\Qli$.
	Let $V(\lambda_i)=\big(v_1^i,\,\ldots\,,v_{m_i}^i\big)$ be a basis for the eigenspace associated to $\lambda_i$. The identified vectors
	$\left(q_1^i,\,\ldots,q_{m_i}^i\right)$ must satisfy the following conditions:
	\begin{itemize}
		\item[-] they must be a linear combination of the orthonormal basis identified through the eigen-problem;
		\item[-] they must be orthogonal each other;
		\item[-] they must satisfy the zero and normalization restrictions;
		\item[-] they must have unit length.
	\end{itemize}	
	We can think of writing a system of equations. The number of unknowns is $m_i$ for each vector $q_j^i$, $j=1,\ldots,m_i$.
	Let the ordering of the vectors be fixed according to the number of restrictions, from the more the less constrained. We start from the first 
	and most constrained vector
	\begin{equation}
		\label{eq:qi}
		q_1^i=v_1^ix_1+\,\cdots\,+v_{m_i}^ix_{m_i}\nonumber
	\end{equation}
	that is subjected to the $m_i-1$ zero restrictions $F_1^i(\phi)q_1^i=0$. 
	Substituting the definition of $q_i$ according to the previous relation, we simply obtain:
	\begin{equation}
		\label{eq:qir}
		\Bigg[F_1^i(\phi)v_1^i\:\:\:\cdots\:\:\:F_1^i(\phi)v_{m_i}^i\Bigg]
		\left(\begin{array}{c}x_1\\\vdots\\x_{m_i}\end{array}\right)=0.
	\end{equation}
	As $\rk \big(F_1^i(\phi)\big)=m_i-1$, $\phi$-almost surely (a.s.), then the matrix 
	$\Bigg[F_1^i(\phi)v_1^i\:\:\:\cdots\:\:\:F_1^i(\phi)v_{m_i}^i\Bigg]$ projecting $m_i$ orthogonal vectors in $\Re^n$ onto an $m_i-1$ 
	dimensional space, will generate $m_i-1$ linearly independent vectors in $\Re^{m_i}$. As a consequence, there will be just a uni-dimensional
	space in $\Re^{m_i}$ that is orthogonal to $F_1^i(\phi)$. Let $\tilde{x}=(\tilde{x}_1,\ldots,\tilde{x}_{m_i})^\prime$ be a unit 
	vector representing a basis for this vector space, then
	\begin{equation}
		\label{eq:qiv}
		q_i=v_1^i\alpha\tilde{x}_1+\:\cdots\:+v_{m_i}^i\alpha\tilde{x}_{m_i}.\nonumber
	\end{equation}
	However, $q_i$ must have unit length, thus
	\begin{eqnarray}
		q_1^\prime q_1 = 1 & \Longrightarrow & \alpha^2\big(v_1^i\tilde{x}_1+\:\cdots\:+v_{m_i}^i\tilde{x}_{m_i}\big)^\prime
         \big(v_1^i\tilde{x}_1+\:\cdots\:+v_{m_i}^i\tilde{x}_{m_i}\big)=1\nonumber\\
	   & \Longrightarrow & \alpha^2\big(\tilde{x}_1^2+\:\cdots\:+\tilde{x}_{m_i}^2\big)=1\nonumber\\
        & \Longrightarrow & \alpha^2 = 1 \hspace{0.3cm} \Longrightarrow \hspace{0.3cm} \alpha = \pm 1,\nonumber
	\end{eqnarray}
	indicating that there will be two opposite vectors candidates for $q_1^i$, one of which, however, is ruled out by the normality sign restrictions. The following step consists in determining $q_2^i\in \Qli$, orthogonal to $q_1^i$ and satisfying the $(m_i-2)$ restrictions $F_2^i(\phi)q_2^i=0$.
	We can think of a system of equations of the form
	\begin{equation}
		\label{eq:sysq2}
		\left\{\begin{array}{rcl} F_2^i(\phi)q_2^i&=&0\\q_1^{i\prime}q_2^i&=&0\end{array}\right.\nonumber
	\end{equation}
	where $q_1^i$ is known from the previous step. Substituting for the definition of $q_2^i$ in terms of the basis of the vector space it 
	belongs, i.e. $V(\lambda_i)=\big(v_1^i,\,\ldots\,,v_{m_i}^i\big)$, with simple algebra, the system can also be written as
	\begin{equation}
		\label{eq:q2def}
		\left[\left(\begin{array}{c}F_2^i(\phi)\\q_1^i\end{array}\right)v_1^i\:\:\:\cdots+\:\:\:
		\left(\begin{array}{c}F_2^i(\phi)\\q_1^i\end{array}\right)v_{m_i}^i\right]
		\left(\begin{array}{c}x_1\\\vdots\\x_{m_i}\end{array}\right)=0.\nonumber
	\end{equation}
	According to the assumed non-redundancy of the restrictions, as in Definition \ref{def:RedRes}, the quantity 
	$\left(\begin{array}{c}F_2^i(\phi)\\q_1^i\end{array}\right)$ has full row rank $m_i-1$, we are exactly in the same situation as in 
	Eq. (\ref{eq:qir}). We can thus proceed as before and obtain two potential opposite unit-length vectors $q_2^i$, one of which, however,
	is ruled out by the sign normalization restrictions. This strategy allows to prove the point identification of all the 
	$\big(q_1^i,\ldots,q_{m_i}^i\big)$ vectors associated with the \textit{i}-th multiple eigenvalue $\lambda_i$.
	The \textit{sufficient} direction of the condition is thus proved.
	
	\vspace{0.3cm}

	The \textit{necessary} part of the condition can be proved as follows. 
	Let the parameter $(A_0,A_+,\Lambda)\in \Ar$ be point identified. As a consequence, the set of admissible orthogonal matrices $\Qr$ will be a 
	singleton, say $Q$. If the \textit{i}-th column of $Q$ is the eigenvector associated to an eigenvalue with no multiplicity ($m_i=1$), 
	then it is unique and no zero restriction is needed, thus $f_j^i=m_i-j=1-1=0$ as predicted by the theorem.
	For those columns of $Q$ associated with an eigenvalue with multiplicity $m_i>1$ the condition can be	directly proved by using Lemma 4 
	in \cite{BKglob20}, that extends Lemma 9 in \cite{RWZ10RES} to the case of non-redundant restrictions.
\end{proof}

\vspace{1cm}

\begin{proof}[Proof of Theorem \ref{theo:SetIdZero}] 
	Let $\ja$ the shock of interest, that is associated with $\qja\in\Qli$, the eigenspace related to $\li$, with multiplicity $m_i$. According 
	to Definition \ref{def:eigenspace}, the space $\Qli$ is orthogonal to the linear space generated by all the other eigenvectors of the 
	eigenproblem, that we denote by $\Qlip$. The dimension of $\Qlip$ is $(n-m_i)$ and let the vectors $(v_1^i,\ldots,v_{(n-m_i)}^i)$
	be a possible basis. 
	Moreover, let the vectors $(q_1^i,\ldots,q_{m_i})\in \Qli$ be defined in the following recursive way:
	\begin{equation}
	\label{eq:qRecDef}
		\begin{array}{lcl}
		q_1^i& \in & V_1^i(\phi)\equiv \mathcal{F}_1^{i\perp}(\phi)\cap \mathcal{H}_1(\phi)\cap \mathcal{S}^{n-1}\\
		q_2^i& \in & V_2^i(\phi,q_1^i)\equiv \mathcal{F}_2^{i\perp}(\phi)\cap \mathcal{H}_2(\phi)\cap \mathcal{P}(q_1^i)\cap \mathcal{S}^{n-1}\\
		q_3^i& \in & V_3^i(\phi,Q_{1:2}^i)\equiv \mathcal{F}_3^{i\perp}(\phi)\cap \mathcal{H}_3(\phi)\cap \mathcal{P}(Q_{1:2}^i)\cap 
						\mathcal{S}^{n-1}\\
		& \vdots &\\
		q_{m_i}^i& \in & V_{m_i}^i(\phi,Q_{1:m_i-1}^i)\equiv \mathcal{F}_{m_i}^{i\perp}(\phi)\cap \mathcal{H}_{m_i}(\phi)\cap 
						\mathcal{P}(Q_{1:m_i-1}^i)\cap \mathcal{S}^{n-1}
		\end{array}
	\end{equation}
	where the generic $\mathcal{F}_j^{i\perp}(\phi)$ is the linear subspace of $\Re^n$ that is orthogonal to the row vectors of $F_j^i(\phi)$ 
	and to $\Qlip$. The dimension of $\mathcal{F}_j^{i\perp}(\phi)$ is $\text{dim}\big(\mathcal{F}_j^{i\perp}(\phi)\big)=n-(n-m_i)-f_j^i
	=m_i-f_j^i$.
	
	If $\ja=1$, then we know that $f_1^i\leq m_i-1$, and $\mathcal{F}_1^{i\perp}(\phi)\cap \mathcal{H}_1(\phi)$ is the half-space of the linear
	subspace of $\Re^2$ with dimension $m_i-f_1^i\geq 1$. As a consequence, $V_1^i(\phi,q_i^i)$ is non empty for every $\phi\in\Phi$. 
	Similarly, if $\ja = 2,\ldots,m_i$, $\mathcal{F}_{\ja}^{i\perp}(\phi)\cap \mathcal{H}_{\ja}(\phi)\cap \mathcal{P}(Q_{1:\ja})$ is the 
	half-space of the linear subspace of $\Re^n$ of dimension at least $m_i-f_{\ja}^i-(\ja-1)\geq 1$, being $f_{\ja}^i\leq m_i-\ja$. 
	For $\ja=1,\ldots,m_i$, thus, $V_{\ja}^i(\phi,Q_{1:\ja-1}^i)$ is non empty and, as a consequence, $Q(\phi,F)$ will be non empty, too.
	Non emptiness of the impulse responses is a direct consequence.
	Concerning the boundedness, it immediately follows from the fact that $|r_{l\ja}^h|\leq \left\|c_{lh}(\phi)\right\|\leq\infty$ for any
	$l\in\{1,\ldots,n\}$, $\ja\in\{1,\ldots,m_i\}$ and $h=0,1,2,\ldots$, where $\left\|c_{lh}(\phi)\right\|\leq\infty$ is guaranteed by 
	the invertibility of the VAR characteristic polynomial. The first part of the proof is thus complete. We can move to prove the convexity 
	of the identified set.
	
	Let $\ja=1$ and $f_1^i\leq m_i-1$ (condition 1). Being $V_1^i(\phi)$ the intersection of an half-space of dimension at least 2 and an unit 
	sphere it is path connected for all values of the reduced-form parameters $\phi$. Hence, the identified set 
	$r_{l1}^h=c_{lh}(\phi)q_1^i$ will be an interval, being the impulse response a continuous function with a path connected domain always 
	an interval. 
	Concerning condition (2) of the Theorem, we can prove the result by applying Lemma A.1 in \cite{GK14}. According to the definition of 
	$\mathcal{F}_j^{i\perp}(\phi)$, that collects not simply vectors orthogonal to the row vectors of $F_j^i(\phi)$, but also orthogonal 
	to all other vectors belonging to $\Qlip$, Lemma A.1 in \cite{GK14} allows to simplify the set of admissible $\qja$. In fact, if we define
	$\Eji$ the set of admissible $\qja$, then using Lemma A.1 we derive that 
	$\Eji=\mathcal{F}_{\ja}^{i\perp}(\phi)\cap \mathcal{H}_{\ja}(\phi)\cap \mathcal{S}^{n-1}$. Hence, being $\Eji$ the intersection of a 
	half-space of a linear subspace with dimension $m_i-f_{\ja}^i\geq \ja \geq 2$ with the unit sphere, it is a path connected set on 
	$\mathcal{S}^{n-1}$, and the convexity of the identified set immediately follows.
	
	In a similar way we can prove the result for condition (3) of the theorem. In this respect we can use Lemma A.2 in \cite{GK14}, that,
	based on our definition of $\mathcal{F}_j^{i\perp}(\phi)$, allows to derive the set of potential $\qja$ subject to condition (3),
	i.e. $\Eji=\mathcal{F}_{\ja}^{i\perp}(\phi)\cap \mathcal{H}_{\ja}(\phi)\cap \mathcal{P}(Q_{1:k}^i)\cap \mathcal{S}^{n-1}$.
	According to this definition, $\Eji$ is the intersection of a half-space of a linear subspace of dimension $n-(n-m_i)-f_{\ja}^i-k
	>\ja-k\geq 2$ and a unit sphere, and, thus, it is a path connected set on $\mathcal{S}^{n-1}$. The convexity of the identified set, hence,
	clearly holds.
	
	In all cases, the convexity of the identified set depends on $\phi\in\Phi$, being the multiplicity of $\li$ equal to $m_i$ only 
	$\phi$-a.s. Thus, convexity of the identified set holds $\phi$-a.s.
\end{proof}

\vspace{1cm}

\begin{proof}[Proof of Theorem \ref{theo:SetIdZeroSign}] 
	The proof builds on Lemma A.2 in \cite{GK14}. Let first $\ja=1$ and $f_1^i<m_i-1$. According to the notation introduced in Eq. 
	(\ref{eq:qRecDef}), the set of admissible $q_1^i$ becomes $V_1^i(\phi)\cap \big\{x\in \Re^n: S_1^i(\phi)\,x\geq 0\big\}$. Moreover, let
	$\qt1$ be another arbitrary unit length vector satisfying the zero, sign normalization and sign restrictions. Clearly, according to the 
	sign restrictions, it must hold that $q_1^i\neq \qt1$. The intuition for proving the result consists in observing that any weighted average
	of the two admissible vectors, with positive weights summing to one, continues to belong to the set. Then, if we define
	\begin{equation}
		\label{eq:WAV1}
		q_i^i(\delta)=\frac{\delta q_1^i+(1-\delta)\qt1}{\left\|\delta q_1^i+(1-\delta)\qt1\right\|},
		\hspace{1cm}\delta\in[0\,,\,1]
	\end{equation}
	it represents a connected path in $V_1^i(\phi)\cap \big\{x\in \Re^n: S_1^i(\phi)\,x\geq 0\big\}$, as the denominator is always different 
	then zero, given that $q_1^i\neq \qt1$. Any generic couple of admissible vectors, thus, can be connected by a connected path. The 
	convexity of the impulse response, thus, immediately follows.
	We now assume that condition (2) in Theorem \ref{theo:SetIdZero} holds. Now, let $\Eji$ be the set of admissible $\qja$ satisfying 
	zero, sign normalization and sign restrictions. Let $\qj$ and $\qtj$ be two arbitrary vectors belonging to $\Eji$. Clearly, due to the sign 
	restrictions, $\qj\neq \qtj$. As before, we consider a path between these two vectors as follows
	\begin{equation}
		\label{eq:WAVj}
		\qj(\delta)=\frac{\delta\qj+(1-\delta)\qtj}{\left\|\delta \qj+(1-\delta)\qtj\right\|},
		\hspace{0.5cm}\delta\in[0\,,\,1]
	\end{equation}
	which is a continuous path on the unit sphere as the denominator is always different than zero, being $\qj\neq \qtj$. Now, the path 
	connectedness of $\Eji$ depends on whether it is possible to obtain an admissible set of vectors 
	$Q^i(\delta)=\big(q_1^i(\delta), \ldots ,q_{m_i}^i(\delta)\big)$ whose $\ja$-th element is represented by the $\qj(\delta)$ vector.
	Conditional on a basis $\big(v_1^i, \ldots ,v_{(n-m_i)}^i\big)$ for the space $\Qlip$, the first $k$ vectors in $Q^i(\delta)$,
	$k=1,\ldots ,\ja-1$ can be obtained through the solutions of the recursive system of equations
	\begin{equation}
		\label{eq:SysCon2}
		\left(\begin{array}{c}
		F_s^i(\phi)\\v_1^i\\\vdots\\ v_{n-m_i}\\q_1^i(\delta)\\\vdots\\q_{k-1}^i(\delta)\\\qj(\delta)
		\end{array}\right)\,q_k^i(\delta)=0,\hspace{0.5cm}\delta\in[0\,,\,1]
	\end{equation}
	satisfying the further sign normalization restriction. As the rank of the matrix in the system is at most $n-m_i+k+f_k^i$, that is always
	less then $n$ because $f_k^i<m_i-k$, a solution always exists. The remaining vectors for $\ja+1,\ldots,m_i$ can be obtained recursively by
	extending the system in Eq. (\ref{eq:SysCon2}). The set $\Eji$, thus, is non empty and path connected. The convexity of the impulse 
	response identified set immediately follows.
	
	Concerning point (2) of the theorem, let the zero restriction satisfy condition (3) of Theorem \ref{theo:SetIdZero}, and let 
	$\big(q_1^i,\ldots,q_k^i\big)$ be the exactly identified vectors, common to all admissible $Q(\lambda)$ matrices. As before, we chose two 
	arbitrary vectors $\qj$ and $\qtj$, both satisfying the zero, sign normalization and sign restrictions, and obtain the further vector 
	$\qtj(\delta)$ as in Eq. (\ref{eq:WAVj}). We can thus construct the set $Q^i(\delta)$, whose first $k$ columns are given by 
	$\big(q_1^i,\ldots,q_k^i\big)$. Conditional on the choice of $\delta$, for $s=k+1,\ldots,\ja-1$, we can recursively derive $q_s^i(\delta)$
	by solving 
	\begin{equation}
		\label{eq:SysCon3}
		\left(\begin{array}{c}
		F_s^i(\phi)\\v_1^i\\\vdots\\ v_{n-m_i}\\q_1^i\\\vdots\\q_{k}^i\\q_{k+1}^i(\delta)\\\vdots\\q_{s-1}^i(\delta)\\\qj(\delta)
		\end{array}\right)\,q_s^i(\delta)=0,\hspace{0.5cm}\delta\in[0\,,\,1]
	\end{equation}
	where $q_s^i(\delta)$ satisfies the sign normalization restriction, and where $\big(v_1^i, \ldots ,v_{(n-m_i)}^i\big)$ is a basis
	for the space $\Qlip$. The system always admits a solution, being the rank of the matrix less than $n$ by the assumption on the number of
	zero restrictions on $q_{k+1},\ldots,q_{j^\ast-1}^i$, being $f_s^i<m_i-s$, for $s=1,\ldots,\ja-1$. The remaining $q_{j^\ast+1}^i(\delta),
	\ldots,q_{m_i}^i(\delta)$ vectors can be recursively derived by extending the system in Eq. (\ref{eq:SysCon3}). Once proved on how
	to derive $Q^i(\delta)$ as a function of $\delta\in[0\,,\,1]$, the set $\Eji$ is path connected, and the associated impulse response
	identified set is convex for every variable at any horizon.
\end{proof}

\vspace{1cm}

\begin{proof}[Proof of Lemma \ref{lemma:similar}] 
	The (squared of the) Frobenius norm states that
	\begin{align}
		\left\|\Omega-\tilde{\Omega}\right\|_F^2=\sum_{i,j}^{n}\left(\Omega_{ij}-\tilde{\Omega_{ij}}\right)^2
		=\tr \Big[\big(\Omega-\tilde{\Omega}\big)^\prime\big(\Omega-\tilde{\Omega}\big)\Big].\nonumber
	\end{align}	
	However, given the definition of $\Omega$ and $\tilde{\Omega}$
	\begin{align}
		\Omega=Q\Lambda Q^\prime\hspace{0.5cm} \text{and} \hspace{0.5cm} \tilde{\Omega}=Q\tilde{\Lambda}Q^\prime,\nonumber
	\end{align}	
	we have that
	\begin{eqnarray}
		\left\|\Omega-\tilde{\Omega}\right\|_F^2 &=& \tr\Big[\big(Q\Lambda Q^\prime-Q\tilde{\Lambda} Q^\prime\big)^\prime
											\big(Q\Lambda Q^\prime-Q\tilde{\Lambda} Q^\prime\big)\Big]\nonumber\\
				&=& \tr\Big[\big(Q(\Lambda-\tilde{\Lambda}) Q^\prime\big)^\prime \big(Q(\Lambda-\tilde{\Lambda}) Q^\prime\big)\Big]\nonumber\\
				&=& \tr\Big[Q(\Lambda-\tilde{\Lambda})(\Lambda-\tilde{\Lambda}) Q^\prime\Big]\nonumber\\
				&=& \tr\Big[QQ^\prime(\Lambda-\tilde{\Lambda})^2 \Big]\nonumber\\		
				&=& \tr\big[(\Lambda-\tilde{\Lambda})^2 \big]\nonumber\\
				&=& \sum_{h=1}^{m}(\lambda_h-\tilde{\lambda})^2.\nonumber
	\end{eqnarray}
	Clearly, this is minimized when $\tilde{\lambda}=\frac{1}{m}\sum_{h=1}^{m}\lambda_{h}$, i.e. the mean of the eigenvalues corresponding to
	those restricted to be equal.
\end{proof}

\section{Proofs of Theorems on bivariate SVARs and HSVARs}
\label{sec:ProofsBivariate}

\begin{proof}[Proof of Theorem \ref{theo:IdSVAR}]
	Given the decomposition of $A_0$ as in Eq. (\ref{eq:Azero}), then
	\begin{eqnarray}
		\label{eq:beta}
		\beta  &=&-\left[A_0\right]_{\left(1,2\right)}/\left[A_0\right]_{\left(1,1\right)}=
		-\left(q_1^\prime\omega_2\right)/\left(q_1^\prime\omega_1\right)\\
		\label{eq:alpha}
		\alpha &=&-\left[A_0\right]_{\left(2,1\right)}/\left[A_0\right]_{\left(2,2\right)}=
		-\left(q_2^\prime\omega_1\right)/\left(q_2^\prime\omega_2\right)
	\end{eqnarray}
	where $\omega_1$ and $\omega_2$ are defined as in Eq. (\ref{eq:CholInv}) while $q_1$ and $q_2$ are the two columns of the 
	orthogonal matrix $Q$. The proof of the theorem is extremely intuitive when observing the two graphs in Figure \ref{fig:SetIdentCaseI} 
	and Figure \ref{fig:SetIdentCaseII}. Consider the situation of $\omega_{pq}\geq 0$ (Case I, Figure \ref{fig:SetIdentCaseI}), first. 
	In both panels we report the observable $\omega_1$ and $\omega_2$ vectors, compatible with $\omega_{pq}\geq 0$. 
	In the left panel we focus on all the admissible $\beta$. First of all, if we look at the definition of $\beta$ in Eq. (\ref{eq:beta}), 
	it is very simple to obtain the two values of $q_1$ featuring the extreme values of $\beta=-\infty$ and $\beta=\infty$. 
	In both cases, $q_1$ must be orthogonal to $\omega_1$; however, in one case the numerator of $\beta$ is negative (solid line), while in 
	the other the numerator is positive (dotted line). The vector $q_1$ featuring $\beta=0$ has to be orthogonal to $\omega_2$. 
	This generates two potential $q_1$ vectors, i.e. $q_1=(0,1)$ and $q_1=(0,-1)$. The latter, however, has to be discarded given 
	Assumption \ref{ass:SignNorm}. It can be deduced, thus, that the admissible $\beta$ are those generated by the $q_1$ vectors lying on 
	the right half of the unit circle (light red area) and, without any restriction, $\beta\in(-\infty\,,\,\infty)$. 
	The arc in red, instead, highlight the feasible $q_1$ vectors consistent with the sign restriction $\beta\leq 0$, as reported in 
	Assumption \ref{ass:SignRestr}. In the right panel, instead, we focus on the coefficient $\alpha$. 
	Even in this case we report the two observable vectors $\omega_1$ and $\omega_2$. Following the same strategy as before, we determine 
	all the admissible vectors $q_2$ compatible with the $\alpha$ coefficient. However, keeping in mind that $q_1$ and $q_2$ must be 
	orthogonal by construction, it can be remarked that, when $q_1$ reaches the two extreme values $\beta=-\infty$ and $\beta=\infty$, 
	it can no longer rotate counterclockwise, as it is at odds with Assumption \ref{ass:SignNorm}. 
	This implies that $q_2$ can not reach the limit case of $\alpha=\infty$. Without any further restriction, 
	$\alpha\in(-\infty\,,\,\omega_q^2/\omega_{pq})$, where this upper bound of the interval is obtained by substituting in the definition 
	of $\alpha$ in Eq. (\ref{eq:alpha}) the value of $q_2$ that is orthogonal to $q_1$ featuring $\beta=-\infty$ (or, equivalently, 
	the one featuring $\beta=\infty$), i.e. $q_2=-\frac{\omega_1}{\left\Vert \omega_1 \right\Vert}$, being $\left\Vert \omega_1 \right\Vert$
	the Euclidean norm of $\omega_1$. Thus, with simple algebra, we obtain
	\begin{equation}
		\label{eq:alphasup1}
		\alpha=-\left(q_2^\prime\omega_1\right)/\left(q_2^\prime\omega_2\right)=
		-\bigg(\frac{1}{\left\Vert \omega_1 \right\Vert}(\omega_1^\prime \omega_1)\bigg)\bigg/
		\bigg(\frac{1}{\left\Vert \omega_1 \right\Vert}(\omega_1^\prime \omega_2)\bigg)=\frac{\omega_q^2}{\omega_{pq}}.
	\end{equation}
	If, instead, we consider the sign restrictions in Assumption \ref{ass:SignRestr},
	then the admissible vectors for $q_1$ and $q_2$ are indicated in red and green, respectively. Specifically, being $q_2$ orthogonal to $q_1$,
	it must be in between $\omega_2$ and $-\omega_1$ (green arc), providing thus a `natural' restriction on the set of possible $\alpha$'s 
	consistent with the two sign restrictions. In particular, the lower bound of the identified set can be obtained through the definition of 
	$\alpha$ in Eq. (\ref{eq:alpha}) when $q_2$ is the unitary vector parallel to $\omega_2$, i.e. $q_2=(0\hspace{0.3cm}1)^\prime$. This leads to
	\begin{equation}
		\label{eq:alphainf1}
		\alpha=-\left(q_2^\prime\omega_1\right)/\left(q_2^\prime\omega_2\right)=
		-\bigg((0\hspace{0.3cm}1)^\prime \;\omega_1\bigg)\bigg/\bigg((0\hspace{0.3cm}1)^\prime \;\omega_2\bigg)=\frac{\omega_{pq}}{\omega_p^2}.
	\end{equation}

	The second case, when $\omega_{pq}< 0$, can be addressed in the same way, but now the two sign restrictions in Assumption \ref{ass:SignRestr}
	induce an identified set for $\beta$. In particular, the upper bound of the identified set for $\beta$ can be obtained by using the 
	definition in Eq. (\ref{eq:beta}), when $q_1$ is a unitary vector parallel to $\omega_1$, while the lower bound can be obtained when 
	$q_1$ is a unitary vector parallel to $\omega_2$. Simple algebra provides the result for Case II in the theorem.
\end{proof}

\begin{proof}[Proof of Corollary \ref{corol:IdSVAR}] 
	We know that when $\omega_{pq}\geq 0$ and $\beta=0$, from Eq. (\ref{eq:beta}), $q_1^\prime\;\omega_2=0$. As a consequence, from Figure 
	\ref{fig:SetIdentCaseI} (left panel), this implies $q_2=(0\hspace{0.3cm}1)^\prime$. Thus, substituting for $q_2$ in Eq. ({eq:alpha}) and 
	using the definition of $\omega_1$ and $\omega_2$ in Eq. (\ref{eq:CholInv}) leads to the same result as in Eq. (\ref{eq:alphainf1}), 
	proving thus the first part of the corollary.
	
	When $\omega_{pq}< 0$ and $\alpha=0$, then $q_2^\prime\; \omega_1=0$. This implies that $q_1$ will be the unit vector parallel to 
	$\omega_1$, i.e. $q_1=\frac{\omega_1}{\left\Vert \omega_1 \right\Vert}$. Substituting in the definition of $\beta$ in Eq. (\ref{eq:beta}),
	leads to
	\begin{equation}
		\label{eq:beta0}
		\beta=-\left(q_1^\prime\omega_2\right)/\left(q_1^\prime\omega_1\right)=
		-\bigg(\frac{1}{\left\Vert \omega_1 \right\Vert}(\omega_1^\prime \omega_2)\bigg)\bigg/
		\bigg(\frac{1}{\left\Vert \omega_1 \right\Vert}(\omega_1^\prime \omega_1)\bigg)=\frac{\omega_{pq}}{\omega_q^2}.
	\end{equation}
	which proves the second part of the corollary.
\end{proof}

\begin{proof}[Proof of Theorem \ref{theo:PointIdentSVARWB}] 
Based on the definition of $\Omega_1$ and $\Omega_2$, the first step is to calculate the analytical expression for $\Omega_{i,tr}^{-1}\Omega_2 \Omega_{i,tr}^{-1\prime}$, i.e.
\begin{eqnarray}
\Omega_{i,tr}^{-1}\Omega_2 \Omega_{i,tr}^{-1\prime} & = & 
\left(\begin{array}{cc}\frac{1}{\omega_{p,1}} & 0 \\ -\frac{\opqf}{\opf}\gamma_1 & \gamma_1\end{array}\right)
\left(\begin{array}{cc}\ops & \opqs \\ \opqs & \ops\end{array}\right)
\left(\begin{array}{cc}\frac{1}{\omega_{p,1}} & -\frac{\opqf}{\opf}\gamma_1 \\ 0 & \gamma_1\end{array}\right)\nonumber\\
& = & \left(\begin{array}{ccc}
\frac{\ops}{\omega_{p,1}} & &\frac{-\ops\opqf+\opqs\opf}{\omega_{p,1}^3}\gamma_1\\
\frac{-\ops\opqf+\opqs\opf}{\omega_{p,1}^3}\gamma_1 & &\frac{\omega_{pq,1}^2\ops-2\opqf\opqs\opf+\omega_{p,1}^4\ops}{\omega_{p,1}^4}\gamma_1^2
\end{array}\right)\nonumber
\end{eqnarray}
with $\gamma_1$ defined as in Eq. (\ref{eq:gamma}). The following step is to calculate the eigenvalues of the previous matrix, that, after some algebra, corresponds to find the solutions of the following quadratic equation of the standard form $a\lambda^2+b\lambda+c=0$:
\begin{equation}
\label{eq:EigenvaluesEq}
\lambda^2+\left(\frac{-\opf\oqs-\ops\oqf+2\opqf\opqs}{\opf\oqf-\omega_{pq,1}^2}\right)\lambda+\left(\frac{\ops\oqs-\omega_{pq,2}^2}{\opf\oqf-\omega_{pq,1}^2}\right)=0.
\end{equation}
In solving the quadratic equation it is crucial to focus on the discriminant $\Delta = b^2-4ac$ of the equation
\begin{equation}
\label{eq:discrim}
\Delta=\frac{\left(\opf\oqs-\ops\oqf\right)^2+4\left(\opf\opqs-\ops\opqf\right)\left(\oqf\opqs-\oqs\opqf\right)}{\left(\opf\oqf-\omega_{pq,1}^2\right)^2}.
\end{equation}
Given that the original matrix $\Omega_{i,tr}^{-1}\Omega_2 \Omega_{i,tr}^{-1\prime}$ is symmetric, then the two eigenvalues are clearly real, and this implies that the discriminant will be not negative. However, if we want the solutions to be distinct (distinct eigenvalues), then we need to find the conditions for $\Delta$ to be strictly positive. Firstly, the denominator in Eq. (\ref{eq:discrim}) is clearly a real positive number beign the square of the determinant of $\Omega_1$, that is clearly different from zero. The condition of distinct eigenvalues, thus, has to be find on the positiveness of the numerator of Eq. (\ref{eq:discrim}). The first term of the sum is clearly non-negative, and, if we show that the second one cannot be negative, too, then the eigenvalues will be clearly distinct as $\Delta>0$. The non-negativeness of the second term of Eq. (\ref{eq:discrim}), with simple algebra, can be seen as:
\begin{equation}
\label{eq:PositiveDiscr}
\left(\opf\opqs-\ops\opqf\right)\left(\oqf\opqs-\oqs\opqf\right)\geq0.
\end{equation}
It immediately emerges that if $\opqf$ and $\opqs$ are of different sign, the previous quantity becomes negative. If, instead, they maintain the same sign, we need to consider the two terms separately
\begin{eqnarray}
\opf\opqs-\ops\opqf\geq0 & \Longleftrightarrow & \frac{\opqs}{\opqf}\geq\frac{\ops}{\opf}\label{eq:diseq1}\\
\oqf\opqs-\oqs\opqf\geq0 & \Longleftrightarrow & \frac{\opqs}{\opqf}\geq\frac{\oqs}{\oqf}\label{eq:diseq2}.
\end{eqnarray}
In order to prove when these quantities are positive, it can be useful to consider the definition of $\omega_1$ and $\omega_2$ as a function of the structural parameters contained in $C=A_{0}^{-1}$, i.e.
\begin{eqnarray}
\label{eq:Omega1C}
\Omega_1 & = & CC^\prime = \left(\begin{array}{cc}c_{11}^2+c_{12}^2 & c_{11}c_{21}+c_{12}c_{22} \\ c_{11}c_{21}+c_{12}c_{22} & c_{21}^2+c_{22}^2\end{array}\right)\\
\label{eq:Omega2C}
\Omega_2 & = & C\Lambda C^\prime = \left(\begin{array}{cc}c_{11}^2\Lambda_{11}+c_{12}^2\lambda_{22} & c_{11}c_{21}\Lambda_{11}+c_{12}c_{22}\Lambda_{22} \\ c_{11}c_{21}\Lambda_{11}+c_{12}c_{22}\Lambda_{22} & c_{21}^2\Lambda_{11}+c_{22}^2\Lambda_{22}\end{array}\right)
\end{eqnarray}
where
\[
\Lambda=\left(\begin{array}{cc}\lambda_{11} & 0 \\ 0 & \lambda_{22}\end{array}\right) = \left(\begin{array}{cc}\sigma_{\varepsilon_2}^2/\sigma_{\varepsilon_1}^2 & 0 \\ 0 & \sigma_{\eta_2}^2/\sigma_{\eta_1}^2\end{array}\right) 
\]
and collects the relative shifts in the variances of the structural shocks across the two regimes. From these relationships we obtain 
\begin{eqnarray}
\frac{\opqs}{\opqf} & = & \frac{c_{11}c_{21}\Lambda_{11}+c_{12}c_{22}\Lambda_{22}}{c_{11}c_{21}+c_{12}c_{22}}\nonumber\\
\frac{\ops}{\opf}   & = & \frac{c_{11}^2\Lambda_{11}+c_{12}^2\lambda_{22}}{c_{11}^2+c_{12}^2}\nonumber\\
\frac{\oqs}{\oqf}   & = & \frac{c_{21}^2\Lambda_{11}+c_{22}^2\Lambda_{22}}{c_{21}^2+c_{22}^2}\nonumber
\end{eqnarray}
that allow to investigate the previous inequalities in Eq.s (\ref{eq:diseq1})-(\ref{eq:diseq2}) as follows
\begin{eqnarray}
\frac{\opqs}{\opqf}\geq\frac{\ops}{\opf} & \Longleftrightarrow & \frac{c_{11}c_{21}\Lambda_{11}+c_{12}c_{22}\Lambda_{22}}{c_{11}c_{21}+c_{12}c_{22}}\geq \frac{c_{11}^2\Lambda_{11}+c_{12}^2\lambda_{22}}{c_{11}^2+c_{12}^2}\nonumber\\
                                         & \Longleftrightarrow & c_{12}c_{11}\left(c_{12}c_{21}-c_{11}c_{22}\right)\left(\lambda_{11}-\lambda_{22}\right)\geq0\label{eq:diseq11}\\
\frac{\opqs}{\opqf}\geq\frac{\oqs}{\oqf} & \Longleftrightarrow & \frac{c_{11}c_{21}\Lambda_{11}+c_{12}c_{22}\Lambda_{22}}{c_{11}c_{21}+c_{12}c_{22}}\geq \frac{c_{21}^2\Lambda_{11}+c_{22}^2\Lambda_{22}}{c_{21}^2+c_{22}^2}\nonumber\\
                                         & \Longleftrightarrow & c_{21}c_{22}\left(c_{11}c_{22}-c_{12}c_{21}\right)\left(\lambda_{11}-\lambda_{22}\right)\geq0.\label{eq:diseq21}
\end{eqnarray}
Given that $c_{11}$ and $c_{22}$ are positive by construction, the system of inequalities becomes
\begin{eqnarray}
c_{12}\left(c_{12}c_{21}-c_{11}c_{22}\right)\left(\lambda_{11}-\lambda_{22}\right)\geq0\label{eq:diseq12}\\
c_{21}\left(c_{11}c_{22}-c_{12}c_{21}\right)\left(\lambda_{11}-\lambda_{22}\right)\geq0\label{eq:diseq22}.
\end{eqnarray}
At this point it is important to remember that, from the definition of $C = A_0^{-1}$, $c_{12}\geq0$ and $c_{21}\leq0$, due to the sign restrictions on $\alpha$ and $\beta$. The previous inequalities, thus, are either always jointly satisfied or jointly never, depending on the sign of $\left(\lambda_{11}-\lambda_{22}\right)$. This result shows that the inequality in Eq. (\ref{eq:PositiveDiscr}) is always satisfied and thus, being the two addends of the discriminant in Eq. (\ref{eq:discrim}) both non-negative, the only possibility we have to exclude for having distinct eigenvalues is when both of them are null, i.e. we have the following system of equations:
\begin{eqnarray}
\opf\oqs-\ops\oqf & = & 0\label{eq:eq1}\\
\left(\opf\opqs-\ops\opqf\right)\left(\oqf\opqs-\oqs\opqf\right) & = & 0\label{eq:eq2}.
\end{eqnarray}
for which the solutions are:
\begin{equation}
\label{eq:solutions}
\frac{\opf}{\ops}=\frac{\oqf}{\oqs}, \hspace{0.5cm} \frac{\opf}{\ops}=\frac{\opqf}{\opqs} , \hspace{0.5cm} \frac{\opqf}{\opqs}=\frac{\oqf}{\oqs}
\end{equation}
that corresponds to the case $\Omega_{1}=a\Omega_{2}$, that has been excluded in the theorem.

The eigenvectors, instead, can be calculated from the two systems
\begin{equation}
\label{eq:EigenvSys}
\left(\Omega_{i,tr}^{-1}\Omega_2 \Omega_{i,tr}^{-1\prime}-I_2\lambda_i\right)q_i=\left(\begin{array}{c}0\\0\end{array}\right), \hspace{0.5cm}i=\left\{1,2\right\}
\end{equation}
where $\lambda_i$ is the \textit{i}-th eigenvalue and $q_i$ is the \textit{i}-th eigenvector. Tedious algebra, not reported here to save space, but available from the authors upon request, proves the following result:
\begin{equation}
\label{eq:eigenvectors}
	\resizebox{0.91\hsize}{!}{%
        $q_1 = \left(
	\begin{array}{c}
	\frac{\Delta \opf+D_1}{2D_2\left[\left(\Delta\opf+D_1\right)^2/\left(2D_3^2\Delta^2\right)+1\right]^{1/2}D_3}\\
	\\
	\frac{1}{\left[\left(\Delta\opf+D_1\right)^2/\left(2D_3^2\Delta^2\right)+1\right]^{1/2}}
	\end{array}
	\right),
	\hspace{0.5cm}
	q_2 = \left(
	\begin{array}{c}
	\frac{-\Delta \opf+D_1}{2D_2\left[\left(\Delta\opf-D_1\right)^2/\left(2D_3^2\Delta^2\right)+1\right]^{1/2}D_3}\\
	\\
	\frac{1}{\left[\left(\Delta\opf-D_1\right)^2/\left(2D_3^2\Delta^2\right)+1\right]^{1/2}}
	\end{array}
	\right)$}.
\end{equation}
where
\begin{eqnarray}
	\Delta &=& \bigg[\left(\opf\oqs-\ops\oqf\right)^2+4\left(\opf\opqs-\ops\opqf\right)\left(\oqf\opqs-\oqs\opqf\right)\bigg]^{1/2}\nonumber\\
	D_1    &=& -2\ops\opqf-\omega_{p,1}^4\oqs+2\opf\opqf\opqs+\opf\ops\oqf \nonumber\\
	D_2    &=& \left(\opf\oqf-\opqf^2\right)^{1/2}\nonumber\\
	D_3    &=& \left(\opf\opqs-\ops\opqf\right).\nonumber
\end{eqnarray}
The $q_1$ and $q_2$ unit vectors form the columns of the orthogonal matrix $Q=(q_1\,,\,q_2)$ such that $C=\Omega_{1,tr}Q$.

If all the $\lambda$s are equal, the relation $Q^\prime\Omega_{1tr}^{-1}\omega_{2tr}\Omega_{2tr}^{\prime}\Omega_{1tr}^{-1\prime}Q=\Lambda$ will become
\begin{equation}
\label{eq:LambdaEqual} 
\begin{array}{rrcl}
& \Omega_{1tr}^{-1}\Omega_{2tr}\Omega_{2tr}^{\prime}\Omega_{1tr}^{-1\prime} &=& \lambda QQ^\prime\\
\Rightarrow & \Omega_{1tr}^{-1}\Omega_{2tr}\Omega_{2tr}^{\prime}\Omega_{1tr}^{-1\prime} &=& \lambda I_n\\
\Rightarrow & \Omega_{2tr}\Omega_{2tr}^{\prime} &=& \lambda \Omega_{1tr}\Omega_{1tr}^{\prime}\\
\Rightarrow & \Omega_{2} &=& \lambda \Omega_{1},
\end{array}\nonumber
\end{equation}
which implies that the condition for identification fails. In other words, the two covariance matrices, once rescaled for the factor $\lambda$, contain the same amount of information. While 
$\lambda$ can be used for estimating the variance of the structural shocks, the remaining part of information will be used for estimating the parameters of the conditional expected value of the structural form, i.e. $A_0$. However, this amount of information is the same as in standard bivariate SVARs, indicating that the results of Theorem \ref{theo:IdSVAR} can be applied. This completes the proof.
\end{proof}

\section{Frequentist and Bayesian estimators for reduced-form HVAR}
\label{app:Est}

We present three estimators for the parameters of the HVAR: two frequentist ones and a Bayesian one. This last is at the heart of our procedure for making inference in the case of set identification.

\subsection{GLS and ML estimators}
\label{app:GLS_ML}

Let the $nm\times 1$ vector of parameters $\fb=\ve (B)$ and the $n\times T$ matrices $Y = [y_1,\,y_2\,\ldots,\,y_T]$ containing the data, and $U = [u_1\,u_1\,\ldots,\,u_t]$ containing the error terms. We can define $y=\ve (Y)$ and $u=\ve (U)$. Now, the presence of volatility clusters 
allows to write 
\begin{equation}
\label{eq:Vu}
	V(U)=\left(\begin{array}{ccc}
	I_{T_1}\otimes \Omega_1 &&0\\0&&I_{T_2}\otimes \Omega_2
	\end{array}\right)
\end{equation}
where $T_1=T_B$ and $T_2=T-T_B$. Given the initial observations $y_{-l+1},\ldots,y_0$, the $m\times T$ 
matrix $X=[x_1,\,\ldots,\,x_t\,\ldots,\,x_T]$, with $x_t=(1,y_{t-1}^\prime,\,\ldots\,y_{t-l}^\prime)^\prime$. 

Given these definitions, the reduced-form HVAR in Eq. (\ref{eq:HVAR}) can be written as
\begin{equation}
\label{eq:HVARcomp}
	y = (X^\prime\otimes I_n)\fb+u \hspace{1cm}\text{or}\hspace{1cm}Y=BX+U.
\end{equation}
These compact notations, as well as a suitable partitioning of $y$ and $X$ as follows
\begin{equation}
\label{eq:partitioning}
	y=\left(\begin{array}{c}\underset{nT_1\times 1}{y_1}\\\underset{nT_2\times 1}{y_2}\end{array}\right)\hspace{1cm}\text{and}\hspace{1cm}
	X=\left(\begin{array}{cc}\underset{m\times T_1}{X_1}&\underset{m\times T_2}{X_2}\end{array}\right)
\end{equation}
allow to define a feasible generalized least squares (GLS) estimator. In particular, using the well known formula for the GLS estimator
\begin{eqnarray}
\label{eq:GLSformula}
	\hat{\phi}_{B,GLS} &=& \left(\big(X^\prime\otimes I_n\big)^\prime
	\left(\begin{array}{cc}I_{T_1}\otimes\Omega_1&0\\0&I_{T_2}\otimes\Omega_2\end{array}\right)^{-1}\big(X^\prime\otimes I_n\big)\right)^{-1}
	\nonumber\\
	& & \hspace{2cm} \big(X^\prime\otimes I_n\big)^\prime \left(\begin{array}{cc}I_{T_1}\otimes\Omega_1&0\\0&I_{T_2}\otimes\Omega_2
	\end{array}\right)^{-1}y.
\end{eqnarray}	
and according to the partitioning of $y$ and $X$ as given in Eq. (\ref{eq:partitioning}), the formula for the GLS estimator becomes:
\begin{equation}
\label{eq:GLS}
	\hat{\phi}_{B,GLS}=\left[\big(X_1X_1^\prime\otimes \hat{\Omega}_1^{-1}\big)+\big(X_2X_2^\prime\otimes \hat{\Omega}_2^{-1}\big)\right]^{-1}
	\left[\big(X_1\otimes \hat{\Omega}_1^{-1}\big)y_1+\big(X_2\otimes \hat{\Omega}_2^{-1}\big)y_2\right]
\end{equation}
where $\hat{\Omega}_i$, $i=\{1,\,2\}$, is the covariance matrix of the residuals when Eq. (\ref{eq:HVARcomp}) is estimated with equation-wise ordinary least squares in a first step.

Moreover, apart from a constant term, and conditional on the initial observations $y_{-l+1},\ldots,y_0$, the reduced-form Gaussian likelihood function can be written as
\begin{align}
	\small{L(Y|\fb,\Omega_1,\Omega_2)} & \small{\propto |\Omega_1|^{-\frac{T_1}{2}}\,|\Omega_2|^{-\frac{T_2}{2}}  
			\ef\Bigg\{-\frac{1}{2} \big[y - (X^\prime\otimes I_n)\fb\big]^\prime \ldots} \nonumber\\
			&  \small{\ldots  \left(\begin{array}{cc}I_{T_1}\otimes \Omega_1 &0\\0&I_{T_2}\otimes \Omega_2 \end{array}\right)^{-1}\Bigg.
			\Bigg.\big[y - (X^\prime\otimes I_n)\fb\big]\Bigg\}}\label{eq:ML}
\end{align}
If the data generating process is Gaussian, maximizing $L(Y|\fb,\Omega_1,\Omega_2)$ with respect to the parameters $(\fb,\Omega_1,\Omega_2)$ gives the maximum likelihood (ML) estimators. Instead, if departures from gaussianity do arise, the resulting estimators are quasi-ML estimators.

The Gaussian likelihood function reported in Eq. (\ref{eq:ML}) can be transformed in a more convenient way to derive the posterior distributions discussed in the next section. The term in the exponent can be written as
\begin{equation}
	\begin{array}{l}
	y^\prime \left(\begin{array}{cc}I_{T_1}\otimes \Omega_1 &0\\0&I_{T_2}\otimes \Omega_2 \end{array}\right)^{-1}y\:
	-2\fb^\prime(X^\prime\otimes I_n)^\prime \left(\begin{array}{cc}I_{T_1}\otimes \Omega_1 &0\\0&I_{T_2}\otimes \Omega_2 \end{array}\right)^{-1}
	y+\\
	\hspace{5cm}\fb^\prime(X^\prime\otimes I_n)^\prime \left(\begin{array}{cc}I_{T_1}\otimes \Omega_1 &0\\0&I_{T_2}\otimes \Omega_2 
	\end{array}\right)^{-1}(X^\prime\otimes I_n)\fb.
	\end{array}\nonumber
\end{equation}	
Using the mentionend partitioning of the $X$ and $y$ allows to write the addends as follows
\begin{eqnarray} 
	y^\prime \left(\begin{array}{cc}I_{T_1}\otimes \Omega_1 &0\\0&I_{T_2}\otimes \Omega_2 \end{array}\right)^{-1}y & = &
	y_1(I_{T_1}\otimes \Omega_1)^{-1}y_1+y_2(I_{T_2}\otimes \Omega_2)^{-1}y_2\nonumber\\
	\fb^\prime(X^\prime\otimes I_n)^\prime \left(\begin{array}{cc}I_{T_1}\otimes \Omega_1 &0\\0&I_{T_2}\otimes \Omega_2 
	\end{array}\right)^{-1}y & = &
	\fb^\prime(X_1^\prime\otimes \Omega_1^{-1})^\prime y_1+\fb^\prime(X_2^\prime\otimes \Omega_2^{-1})^\prime y_2\nonumber\\
	\fb^\prime(X^\prime\otimes I_n)^\prime \left(\begin{array}{cc}I_{T_1}\otimes \Omega_1 &0\\0&I_{T_2}\otimes \Omega_2 
	\end{array}\right)^{-1}(X^\prime\otimes I_n)\fb & = &
	\fb^\prime(X_1X_1^\prime\otimes \Omega_1^{-1})\fb+\fb^\prime(X_2X_2^\prime\otimes \Omega_2^{-1})\fb.\nonumber
\end{eqnarray} 	
Thus, an alternative expression for the likelihood function becomes
\begin{eqnarray} 
	L(Y|\fb,\Omega_1,\Omega_2) &\propto& |\Omega_1|^{-\frac{T_1}{2}}\,|\Omega_2|^{-\frac{T_2}{2}}
			\ef\Big\{-\frac{1}{2}\big[y_1^\prime(I_{T_1}\otimes \Omega_1)^{-1}y_1+y_2^\prime(I_{T_2}\otimes \Omega_2)^{-1}y_2+\big.\big.\nonumber\\
		& & \hspace{2cm}-2\fb^\prime(X_1^\prime\otimes \Omega_1^{-1})^\prime y_1-2\fb^\prime(X_2^\prime\otimes \Omega_2^{-1})^\prime y_2+\nonumber\\
		& & \hspace{3cm}\Big.\big.+\fb^\prime(X_1X_1^\prime\otimes \Omega_1^{-1})\fb+\fb^\prime(X_2X_2^\prime\otimes \Omega_2^{-1})\fb\big]\Big\}.
        \label{eq:MLalt}
\end{eqnarray} 

\subsection{Bayesian estimators}
\label{app:Bayes}

Combining the likelihood function in Eq. (\ref{eq:MLalt}) with the following Normal and inverse Wishart priors for $\fb$, $\Omega_1$ and $\Omega_2$
\begin{eqnarray}
	\fb      & \sim & \mathcal{N} \big(\mu_{\phi},\,V_\phi\big)\nonumber\\
	\Omega_1 & \sim & i\mathcal{W}\big(S_1,\,d_1\big)\nonumber\\
	\Omega_2 & \sim & i\mathcal{W}\big(S_2,\,d_2\big)\nonumber
\end{eqnarray}
allows to obtain the following posterior distributions for $\fb$, $\Omega_1$ and $\Omega_2$
\begin{align}
 \tiny{P(\fb,\Omega_1,\Omega_2|Y)} & %
 \tiny{\ \propto \ } %
\tiny{|\Omega_1|^{-\frac{T_1}{2}}\,|\Omega_2|^{-\frac{T_2}{2}}
			|\Omega_1|^{-\frac{d_1+n+1}{2}}\,|\Omega_2|^{-\frac{d_2+n+1}{2}}}\nonumber\\
			& \tiny{\ef\Bigg\{-\frac{1}{2} \big[y - (X^\prime\otimes I_n)\fb\big]^\prime 
			\left(\begin{array}{cc}I_{T_1}\otimes\Omega_1&0\\0&I_{T_2}\otimes\Omega_2\end{array}\right)^{-1}
			\big[y-(X^\prime\otimes I_n)\fb\big]\Bigg\}}\nonumber\\
			& \tiny{\ef\Big\{-\frac{1}{2}\tr \big[\Omega_1^{-1}S_1\big]\Big\}\:\ef\Big\{-\frac{1}{2}\tr \big[\Omega_2^{-1}S_2\big]\Big\}}.
			\label{eq:Post}
\end{align}

This joint distribution is not of a known form and drawing directly from it is very hard. However, given the conditional distributions for $\fb$ given $\Omega_1$ and $\Omega_2$, and those of $\Omega_1$ and $\Omega_2$ given $\fb$, derived in the following subsections, we can explore the posterior joint distribution by using a Gibbs sampler.

\subsubsection{Case I) Inference on $\fb$ with $\Omega_1$ and $\Omega_2$ known}
\label{app:CaseI}

If $\Omega_1$ and $\Omega_2$ are known parameters, the kernel of the likelihood that is relevant for $\fb$ is
\begin{eqnarray} 
	L(Y|\fb) &\propto& \ef\Big\{-\frac{1}{2}
		\big[-2\fb^\prime(X_1^\prime\otimes \Omega_1^{-1})^\prime y_1-2\fb^\prime(X_2^\prime\otimes \Omega_2^{-1})^\prime y_2+\big.\Big.\nonumber\\
		& & \hspace{2cm}\Big.\big.+\fb^\prime(X_1X_1^\prime\otimes \Omega_1^{-1})\fb+\fb^\prime(X_2X_2^\prime\otimes \Omega_2^{-1})\fb
		\big]\Big\}.\nonumber
\end{eqnarray} 
As a prior distribution for $\fb$ we can use 
\begin{equation}
	\fb \sim \mathcal{N} \big(\mu_{\phi},\,V_\phi\big),\nonumber\\
\end{equation}
where $P(\fb)\propto \ef \big\{-\frac{1}{2}\big[(\fb-\mu_\phi)^\prime V_\phi^{-1}(\fb-\mu_\phi)\big]\big\}$, with the argument of the 
exponential function that can be written as
\begin{equation}
	\big[(\fb-\mu_\phi)^\prime V_\phi^{-1}(\fb-\mu_\phi)\big] = \fb^\prime V_\phi^{-1}\fb-2\fb^\prime V_\phi^{-1}\mu_\phi
	+\mu_{\phi}^\prime V_\phi^{-1}\mu_\phi,\nonumber\\
\end{equation}
where the last addend of the sum is not informative about $\fb$.

The posterior distribution, thus, can be written as
\begin{eqnarray} 
	P(\fb|Y) &\propto& \ef\Big\{-\frac{1}{2}
		\big[-2\fb^\prime(X_1^\prime\otimes \Omega_1^{-1})^\prime y_1-2\fb^\prime(X_2^\prime\otimes \Omega_2^{-1})^\prime y_2+\big.\Big.\nonumber\\
			& & \hspace{2cm}\Big.\big.+\fb^\prime(X_1X_1^\prime\otimes \Omega_1^{-1})\fb+\fb^\prime(X_2X_2^\prime\otimes 
			\Omega_2^{-1})\fb+\big.\Big.\nonumber\\
			& & \hspace{3cm}+\fb^\prime V_\phi^{-1}\fb-2\fb^\prime V_\phi^{-1}\mu_\phi\Big.\big.\big]\Big\}.\nonumber
\end{eqnarray} 
However, it is possible to show that:

\resizebox{.9\linewidth}{!}{%
	\begin{minipage}{\linewidth}
\begin{eqnarray} 
		\fb^\prime(X_1X_1^\prime\otimes \Omega_1^{-1})\fb+\fb^\prime(X_2X_2^\prime\otimes \Omega_2^{-1})\fb+\fb^\prime V_\phi^{-1}\fb & = &
		\fb^{\prime}\underbrace{\big[(X_1X_1^\prime\otimes \Omega_1^{-1})+(X_2X_2^\prime\otimes \Omega_2^{-1})+V_\phi^{-1}\big]}_{V_\phi^{\ast-1}}
		\fb\nonumber\\
		& = &
		\fb^{\prime} V_\phi^{\ast-1} \fb.\nonumber
\end{eqnarray}
\end{minipage}}

\noindent Moreover:
\begin{equation}
		\resizebox{\linewidth}{!}{%
	$\begin{array}{l}
		-2\fb^\prime(X_1^\prime\otimes \Omega_1^{-1})^\prime y_1-2\fb^\prime(X_2^\prime\otimes \Omega_2^{-1})^\prime y_2
				-2\fb^\prime V_\phi^{-1}\mu_\phi\\
		\hspace{1cm}=-2\fb^{\prime}\big[(X_1^\prime\otimes \Omega_1^{-1})^\prime y_1+(X_2^\prime\otimes \Omega_2^{-1})^\prime y_2+
				V_\phi^{-1}\mu_\phi\big]\\
		\hspace{1cm}=-2\fb^{\prime} V_\phi^{\ast-1} \underbrace{V_\phi^{\ast}\big[(X_1^\prime\otimes \Omega_1^{-1})^\prime y_1+(X_2^\prime\otimes \Omega_2^{-1})^\prime y_2+
		V_\phi^{-1}\mu_\phi\big]}_{\mu_\phi^{\ast}}\\
		\hspace{1cm}=-2\fb^{\prime} V_\phi^{\ast-1} \underbrace{\big[(X_1X_1^\prime\otimes \Omega_1^{-1})+(X_2X_2^\prime\otimes \Omega_2^{-1})+V_\phi^{-1}\big]^{-1}
			\big[(X_1^\prime\otimes \Omega_1^{-1})^\prime y_1+(X_2^\prime\otimes \Omega_2^{-1})^\prime y_2+V_\phi^{-1}\mu_\phi\big]}_{\mu^{\ast}} \\
			\hspace{1cm}=-2\fb^{\prime} V_\phi^{\ast-1} \mu_\phi^{\ast}.
		\end{array}$}\nonumber
\end{equation}

If we add the term $\mu_\phi^{\ast\prime}V_\phi^{\ast-1}\mu_\phi^{\ast}$, that is however not informative for the parameter $\phi$, the 
posterior is proportional to a Normal distribution
\begin{equation}
\label{eq:PostPhi} 
	\fb|Y\sim \mathcal{N} \big(\mu_{\phi}^\ast,\,V_\phi^\ast\big)\nonumber
\end{equation}
where
\begin{eqnarray}
	V_\phi^\ast & = & \Big[\big(X_1X_1^\prime\otimes \Omega_1^{-1}\big)+\big(X_2X_2^\prime\otimes \Omega_2^{-1}\big)+V_\phi^{-1}\Big]^{-1}
		\nonumber\\
	\mu_{\phi}^\ast & = & V_\phi^{\ast}\big(V_\phi^{-1}\mu_\phi+(X_1^\prime\otimes \Omega_1)^\prime y_1
		+(X_2^\prime\otimes \Omega_2)^\prime y_2\big)\nonumber.
\end{eqnarray}

\subsubsection{Case II) Inference on $\Omega_1$ and $\Omega_2$ with $\fb$ known}
\label{app:CaseII}

In the literature it is quite common to use inverse Wishart priors for covariance matrices. In our case we follow this approach and proceed 
in the same way for each of the two covariance matrices $\Omega_1$ and $\Omega_2$:
\begin{equation}
	\label{eq:Pr}
		\begin{array}{rcl}
		\Omega_1 & \sim & i\mathcal{W}\big(S_1,\,d_1\big)\\
		\Omega_2 & \sim & i\mathcal{W}\big(S_2,\,d_2\big)
		\end{array}
\end{equation}
with $P(\Omega_i)\propto |\Omega_i|^{-\frac{d_i+n+1}{2}}\ef\big\{-\frac{1}{2}\tr\big[\Omega_i^{-1}S_i\big]$, for $i={1,\,2}$, and where
$E(\Omega_i)=\frac{S_i}{d_i-n-1}$.

The first step consists in re-writing the likelihood function in a more convenient way. If the underlying model is written as
\begin{equation}
	y=(X^\prime\otimes I_n)\fb+u\nonumber
\end{equation}
the likelihood function is as in Eq. (\ref{eq:ML}) before. The exponent can be re-written as
\begin{equation}
\label{eq:ExpTerms}
	\begin{array}{l}
	\big[y_1(I_{T_1}\otimes \Omega_1^{-1})y_1+\fb^{\prime}(X_1X_1^\prime)\otimes\Omega_1^{-1})\fb-y_1(X_1^{\prime}\otimes \Omega_1^{-1})\fb-
		\fb^\prime(X_1\otimes \Omega_1^{-1})y_1\big]\\
	\hspace{2cm}\big[y_2(I_{T_2}\otimes \Omega_2^{-1})y_2+\fb^{\prime}(X_2X_2^\prime)\otimes\Omega_2^{-1})\fb-y_2(X_2^{\prime}\otimes 
	\Omega_2^{-1})\fb-\fb^\prime(X_2\otimes \Omega_2^{-1})y_2\big].	
	\end{array}
\end{equation}
Using simple properties of the trace and vec operators, the first part under brackets can be written as
\begin{eqnarray}
	y_1(I_{T_1}\otimes \Omega_1^{-1})y_1 &=& \tr\{Y'\Omega_1^{-1}Y_1\}=\tr\{\Omega_1^{-1}Y_1Y_1'\}\nonumber\\
	\fb'(X_1X_1'\otimes \Omega_1^{-1})\fb &=& \tr\{B'\Omega_1^{-1}BX_1X_1'\}=\tr\{\Omega_1^{-1}BX_1X_1'B'\}\nonumber\\
	y_1(X_1'\otimes \Omega_1^{-1})y_1 &=& \tr\{Y'\Omega_1^{-1}BX_1\}=\tr\{\Omega_1^{-1}BX_1Y_1'\}\nonumber\\
	\fb'(X_1\Omega_1^{-1})y_1 &=& \tr\{B'\Omega_1^{-1}Y_1X_1'\}=\tr\{\Omega_1^{-1}Y_1X_1'B\}\nonumber
\end{eqnarray}
where we have used the decomposition
\begin{equation}
	\underset{n\times T}{Y} = \big[\underset{n\times T_1}{Y_1}\:\:\:\underset{n\times T_2}{Y_2}\big]\nonumber 
\end{equation}
and the fact that $\fb = \ve(B)$. Obviously, these transformations can be replicated for the second part under the brackets of the exponent
of the likelihood function.
Definitely, the two exponents in Eq. (\ref{eq:ExpTerms}) become
\begin{equation}
	\begin{array}{l}
	\tr\big[\Omega_1^{-1}(Y_1Y_1'+BX_1X_1'B'-BX_1Y_1'-Y_1X_1'B)\big]+\tr\big[\Omega_2^{-1}(Y_2Y_2'+BX_2X_2'B'-BX_2Y_2'-Y_2X_2'B)\big]\\
	\hspace{2cm}=\tr\big[\Omega_1^{-1}(Y_1-BX_1)(Y_1-BX_1)'\big]+\tr\big[\Omega_2^{-1}(Y_2-BX_2)(Y_2-BX_2)'\big].
	\end{array}\nonumber
\end{equation}
Now, if we combine the likelihood function with the two priors in Eq. (\ref{eq:Pr}) it becomes rather simple to derive the posterior 
distributions for the two variables $\Omega_1$ and $\Omega_2$. Overall, the joint posterior distribution for $\Omega_1$ and $\Omega_2$
can be written as
\begin{eqnarray}
	P(\Omega_1,\Omega_2|Y)&\propto& P(\Omega_1)\:P(\Omega_1)\:P(Y|\Omega_1,\Omega_2)\nonumber\\
	& = & |\Omega_1|^{-\frac{d_1+n+1}{2}}|\Omega_2|^{-\frac{d_2+n+1}{2}}|\Omega_1|^{-\frac{T_1}{2}}|\Omega_2|^{-\frac{T_2}{2}}\nonumber\\
	&& \ef\big\{-\frac{1}{2}\tr(\Omega_1^{-1}S_1)\big\} \ef\big\{-\frac{1}{2}\tr(\Omega_2^{-1}S_2)\big\}\nonumber\\
	&& \ef\big\{-\frac{1}{2}\tr\big[\Omega_1^{-1}(Y_1-BX_1)(Y_1-BX_1)'\big]\big\}\nonumber\\
	&& \ef\big\{-\frac{1}{2}\tr\big[\Omega_2^{-1}(Y_2-BX_2)(Y_2-BX_2)'\big]\big\}.
\end{eqnarray}
However, focusing on the posterior distribution of each of the two covariance matrices, we obtain that
\begin{eqnarray}
	P(\Omega_1|Y)&\propto& |\Omega_1|^{-\frac{T_1+d_1+n+1}{2}}\ef\big\{-\frac{1}{2}\tr\big[\Omega_1^{-1}
	\big(S_1+(Y_1-BX_1)(Y_1-BX_1)'\big)\big]\big\}\nonumber\\
	P(\Omega_2|Y)&\propto& |\Omega_2|^{-\frac{T_2+d_2+n+1}{2}}\ef\big\{-\frac{1}{2}\tr\big[\Omega_2^{-1}
	\big(S_2+(Y_2-BX_2)(Y_2-BX_2)'\big)\big]\big\},\nonumber
\end{eqnarray}
or, more compactly
\begin{eqnarray}
	\Omega_1&\sim& i\mathcal{W}\big(S_1^\ast,\,d_1^\ast\big)\nonumber\\
	\Omega_2&\sim& i\mathcal{W}\big(S_2^\ast,\,d_2^\ast\big),\nonumber
\nonumber
\end{eqnarray}
where
\begin{eqnarray}
	S_1^\ast&=& S_1+(Y_1-BX_1)(Y_1-BX_1)'=S_1+\hat{\Omega}_{1,OLS}(T_1-nm)\nonumber\\
	S_2^\ast&=& S_2+(Y_2-BX_2)(Y_2-BX_2)'=S_2+\hat{\Omega}_{2,OLS}(T_2-nm)\nonumber\\
	d_1^\ast&=& T_1+d_1\nonumber\\
	d_2^\ast&=& T_2+d_2\nonumber
\end{eqnarray}
and where 
\begin{equation}
	\hat{\Omega}_{i,OLS}=(Y_i-BX_i)(Y_i-BX_i)'/(T_i-nm), \hspace{0.5cm}\text{with}\hspace{0.3cm} i=\{1,\,2\}.
\end{equation}


\section{The test for identification via heteroskedasticity of \citetalias{LMNS20}}
\label{app:hsvartest}
\citetalias{LMNS20} develop their test for identification via heteroskedasticity under the assumption that reduced-form error terms $u_{t}$ have an elliptically symmetric distribution with density $\left(\sqrt{det \ \Omega_m}\right)^{-1} g(u_t^\prime \Omega_m^{-1} u_t)$ where $\Omega_m$ is the covariance matrix in regime $m=1,2$, $g(.)$ is positive function such that the density integrates to one and the fourth moments of the distribution exist. A characteristic of elliptical distributions is that to impose the same kurtosis parameter for all the $n$ elements of $u_t$. Formally, if we denote with $\omega_{im}^2$ the $i$-th diagonal element of $\Omega_{m}$, the kurtosis parameter $\kappa_m = \left[E(u_{it}^4)/3\omega_{im}^4\right]-1$ is the same for all $i=1,\ldots,n$ but can be different for different volatility regimes, $m$.

To implement the test of \citetalias{LMNS20}, estimates of the kurtosis parameters are obtained as follows:

\begin{equation*}
    \hat{\kappa}_m = \frac{1}{3n}\sum_{i=1}^n{\frac{z_i^m}{w_i^m} - 1}, \quad m=1,2
\end{equation*}
with 
\begin{equation*}
    z_i^m = \frac{\sum_{t\in T_m}\left(\hat{u}_{it}-\bar{u}_i^m\right)^4-6\hat{\omega}_i^4}{T_m -4} \text{ and } %
    w_i^m = \frac{T_m}{T_m-1}\left(\hat{\omega}_i^4-\frac{z_i^m}{T_m}\right) \quad m=1,2
\end{equation*}
where $\bar{u}_i=T_m^{-1}\sum_{t\in T_m} \hat{u}_{it}$ is the sample average of reduced-form residuals, $\hat{u}_{it}^m$, for the $m$-th regime, $T_1 = 1,\ldots,T_B$ and $T_2 = T_B + 1,\ldots,T$.

Denoting the estimated eigenvalues -- ordered from largest to smallest -- as $\hat{\lambda}_i$ for $i=1,...,n$ we write the test statistic as:
\begin{align*}
H_{r}\left(\hat{\kappa}_1,\hat{\kappa}_2\right) & = %
-c\left(\tau,\hat{\kappa}_1,\hat{\kappa}_2\right)^2%
T r \log\left(\frac{\prod_{k=s+1}^{s+r}\hat{\lambda}_k^{1/r}}{\frac{1}{r}\sum_{k=s+1}^{s+r}\hat{\lambda}_k}\right)\\
& = -c\left(\tau,\hat{\kappa}_1,\hat{\kappa}_2\right)^2 %
\left[T\sum_{k=s+1}^{s+r}\log\hat{\lambda}_k %
-Tr\log\left(\frac{1}{r}\sum_{k=s+1}^{s+r}\hat{\lambda}_k\right)\right]
\end{align*}
with $s=0,\ldots,n-1$ and $r = 2,\ldots,n-s$. Note that the first line of the equation highlights that the statistic is based on the ratio of the geometric mean to the arithmetic mean of the estimators of the eigenvalues assumed to be identical under the null. The term $c\left(\tau,\hat{\kappa}_1,\hat{\kappa}_2\right)^2$ is defined as follows:
\[c\left(\tau,\hat{\kappa}_1,\hat{\kappa}_2\right)^2 = \left(\frac{1+\hat{\kappa}_1}{\tau}+\frac{1+\hat{\kappa}_2}{1-\tau}\right)^{-1}%
\quad \text{with } \tau \equiv T_B/T\]
Note that the fraction $\tau$ is assumed to be known and fixed. The test statistic converges in distribution to a $\chi^2((r+2)(r-1)/2)$ and involves the following pair of hypotheses:
\[H_0: \lambda_{s+1}=\lambda_{s+2} = \ldots = \lambda_{s+r} \text{ against } H_1: \neg H_0\]

where ``$\neg$'' denotes negation.

Let us consider the case of testing identification via heteroskedasticity with $n=3$ variables. Then we rely on $H_{3}(\hat{\kappa}_1,\hat{\kappa}_2)$ with a $\chi^2(5)$ distribution to test: $H_0: \lambda_1=\lambda_2=\lambda_3$. If the null is rejected we test $H_0: \lambda_1=\lambda_2$ and $H_0: \lambda_2=\lambda_3$ using $H_{2}(\hat{\kappa}_1,\hat{\kappa}_2)$ with a $\chi^2(2)$ distribution. If also these hypotheses are rejected, the SVAR model is fully identified via heteroskedasticity.


\section{Empirical application -- Further details and results}
\label{app:Empirics}
\subsection{Data}
The data entering the VAR model in Section \ref{sec:empirics} are the following:
\begin{itemize}

\item $\Delta prod_t$ is percent change in world crude oil production and is defined as $100 \times \ln(prod_t /prod_{t-1})$. World oil production, $prod_t$, is sourced from the Monthly Energy Review maintained by the U.S. Energy Information Administration.

\item The index of real economic activity, $rea_t$, is based on dry cargo ocean shipping rates and is available on the website of Lutz Kilian. It is used to proxy monthly changes in the world demand for industrial commodities, including crude oil.

\item The real price of crude oil, $rpo_t$, is the refiner's acquisition cost of imported crude oil and it is available from the U.S. EIA. Deflation is carried out using the CPI for All Urban Consumers, as reported by the Bureau of Labor Statistics.
\end{itemize}

The time series included in the VAR and the reduced-form residuals of the VAR(6) are shown in Figures \ref{fig:DataPlot} and \ref{fig:ResidPlot} that also displays a vertical bar in correspondence of the break date, October 1987.

\begin{figure}[ht!]
 \caption{Data used in the SVAR model for the global market of crude oil (January 1973-December 2007)}
  \includegraphics[width=1\linewidth]{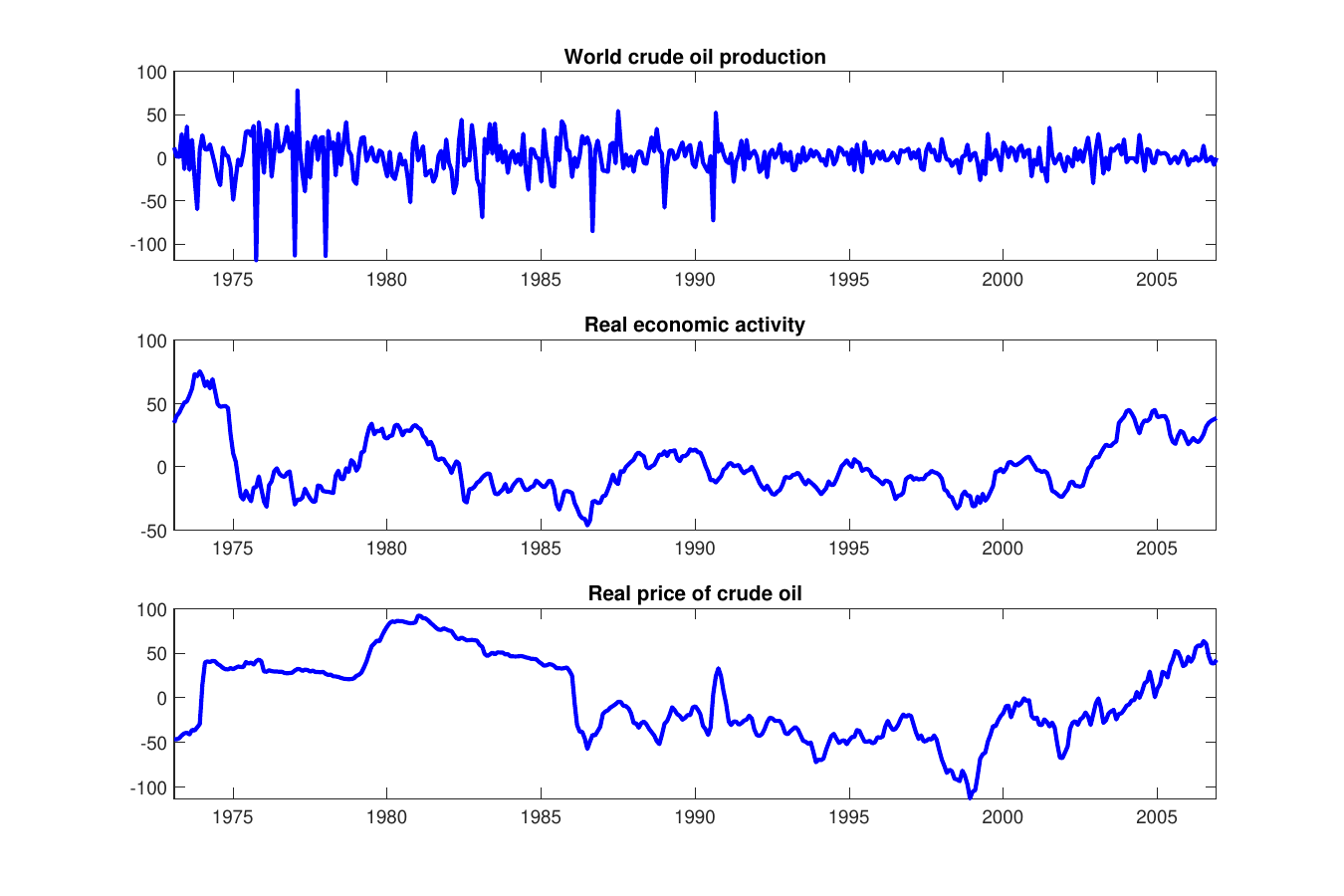}%
\label{fig:DataPlot}
\end{figure}

\begin{figure}[ht!]
\caption{Reduced form residuals and break date}
  \includegraphics[width=1\linewidth]{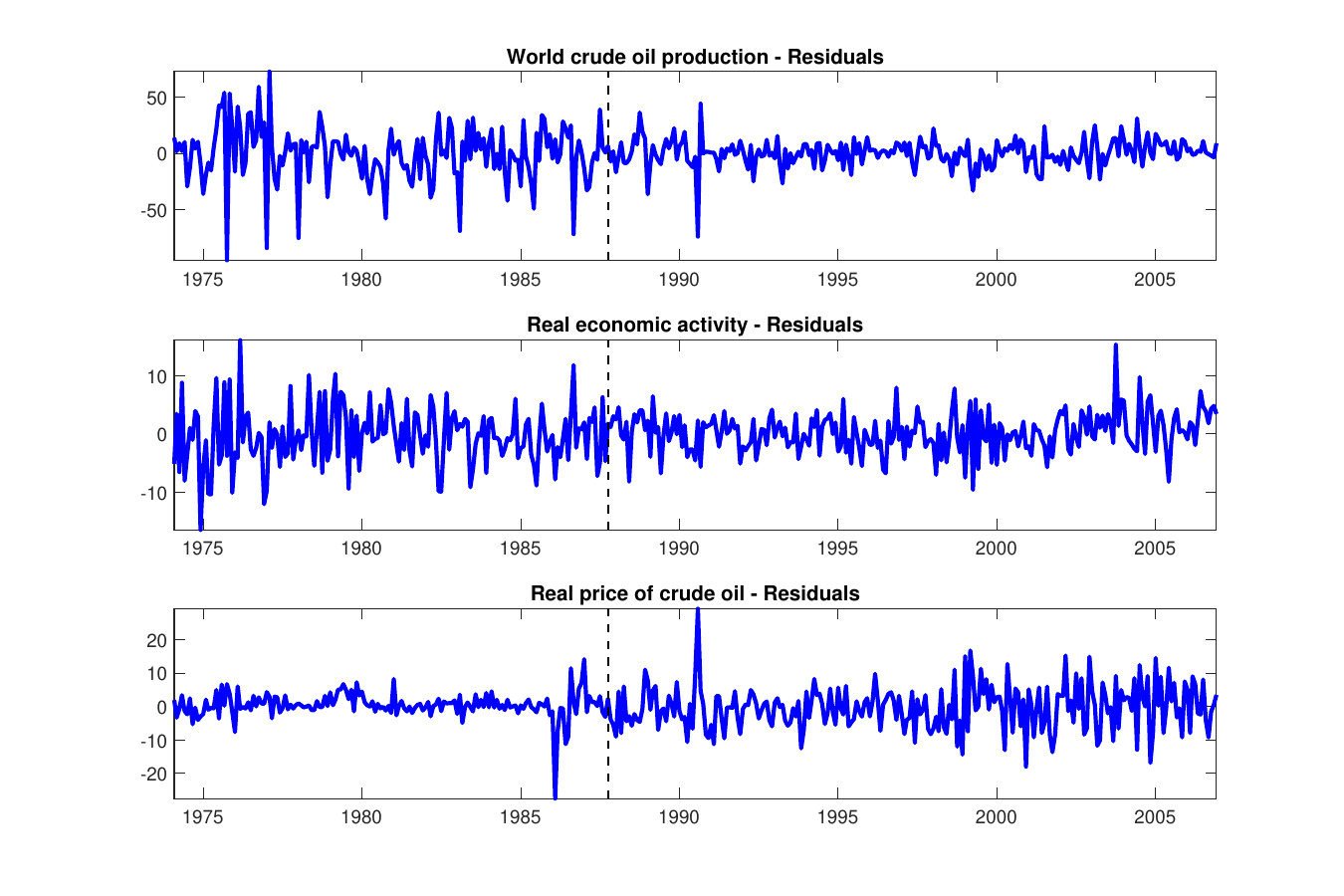}%
\caption*{\scriptsize\textit{Notes}: Reduced form residuals and time of the break (1st October 1987). Monthly data.}
\label{fig:ResidPlot}
\end{figure}

\begin{table}[t]
\centering
\newcolumntype{L}[1]{>{\footnotesize\hsize=#1\hsize\raggedright\arraybackslash}X}
\newcolumntype{C}[1]{>{\footnotesize\hsize=#1\hsize\centering\arraybackslash}X}
\renewcommand{\arraystretch}{1.2}
\caption{Estimated eigenvalues and tests for identification through heteroskedasticity}
\begin{tabularx}{\textwidth}{L{1}C{1}C{1}C{1}}\hline
\multicolumn{4}{c}{\footnotesize{Panel (\textit{a}). Estimated eigenvalues}}\\\hline
\multicolumn{2}{l}{\footnotesize{$\hat{\lambda}_1$}} & 3.712 & (1.032)\\
\multicolumn{2}{l}{\footnotesize{$\hat{\lambda}_2$}} & 0.341 & (0.095)\\
\multicolumn{2}{l}{\footnotesize{$\hat{\lambda}_3$}} & 0.159 & (0.046)\\\hline
\multicolumn{4}{c}{\footnotesize{Panel (\textit{b}). Tests for identification through heteroskedasticity}}\\\hline
  $H_0$ & $H_r(\hat{\kappa}_1,\hat{\kappa}_2)$ & Degrees of freedom ($r$) & $p$-value \\\hline
 $\lambda_1=\lambda_2=\lambda_3$ & 79.166 & 5 & 0.0000 \\ 
  $\lambda_1=\lambda_2$ & 35.569 & 2 & 0.0000 \\
  $\lambda_2=\lambda_3$ & 4.2758 & 2 & 0.1179 \\\hline
\end{tabularx}
\label{tab:TabLam}
\caption*{\scriptsize{\textit{Notes}: Panel (a) shows the estimated eigenvalues, $\hat{\lambda}_j$ for $j=1,2,3$ and their standard errors in brackets. \\
Panel (b) shows the test for identification through heteroskedasticity of \citet{LMNS20}. $H_r(\hat{\kappa}_1,\hat{\kappa}_2)$ is the test statistics with $r-1$ degrees of freedom, where $\hat{\kappa}_m$ for $m=1,2$ is an estimate of the kurtosis of reduced-form residuals in the $m$-th volatility regime. See Appendix \ref{app:hsvartest}.}}
\end{table}

Table \ref{tab:TabLam}(\textit{a}) shows the estimated eigenvalues and their standard errors. Recall that the variances of structural shocks are normalized to unity before the break and hence estimates in Table \ref{tab:TabLam}(\textit{a}) represent the change in variances from the first to the second volatility regime. We see that the volatility of the structural shock associated with the first eigenvalue is larger after the break, while the remaining structural shocks have relative variances lower than unity in the second regime.

Table \ref{tab:TabLam}(\textit{b}) illustrates that the test for identification through heteroskedasticity of \citetalias{LMNS20} does not allow to reject the null hypothesis $H_0: \lambda_2 = \lambda_3$. Eigenvalue multiplicity implies that standard identification through heteroskedasticity, presented in Theorem \ref{theo:HSVAR_Ident}, fails.

\begin{figure}[ht!]
\caption{Impulse response functions $\mathcal{M}_{0}$}
    \includegraphics[width=.33\linewidth]{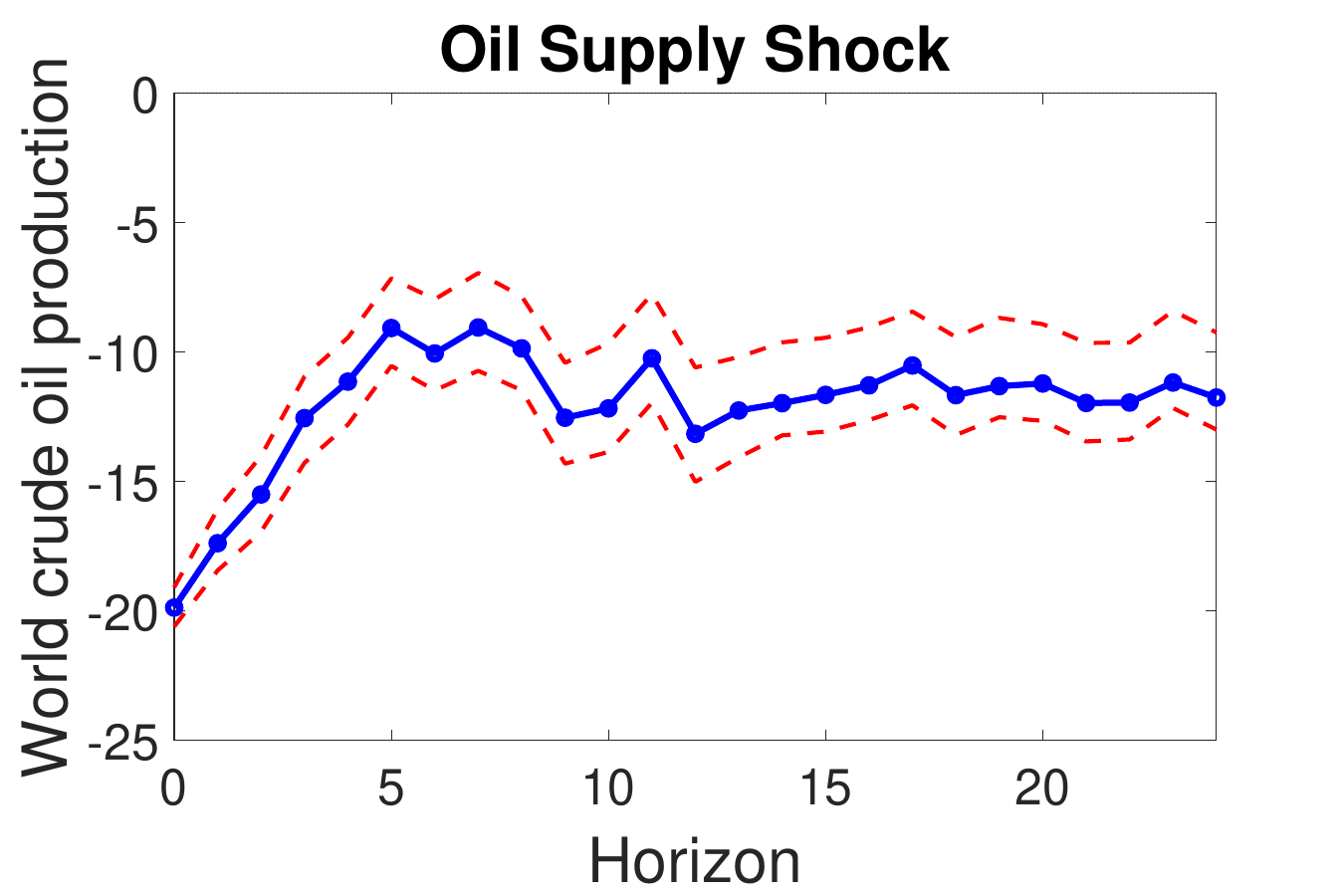}%
  \includegraphics[width=.33\linewidth]{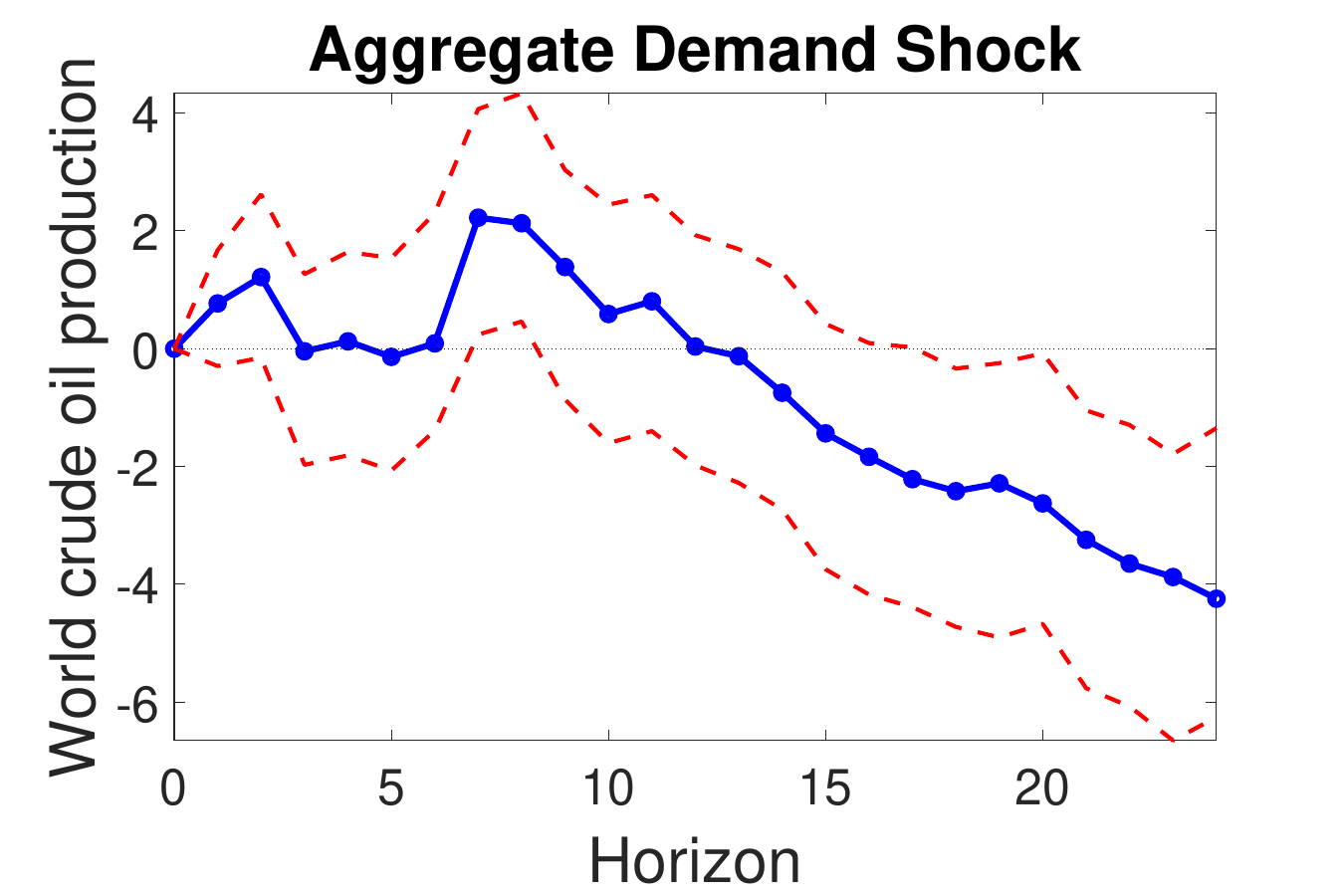}
  \includegraphics[width=.33\linewidth]{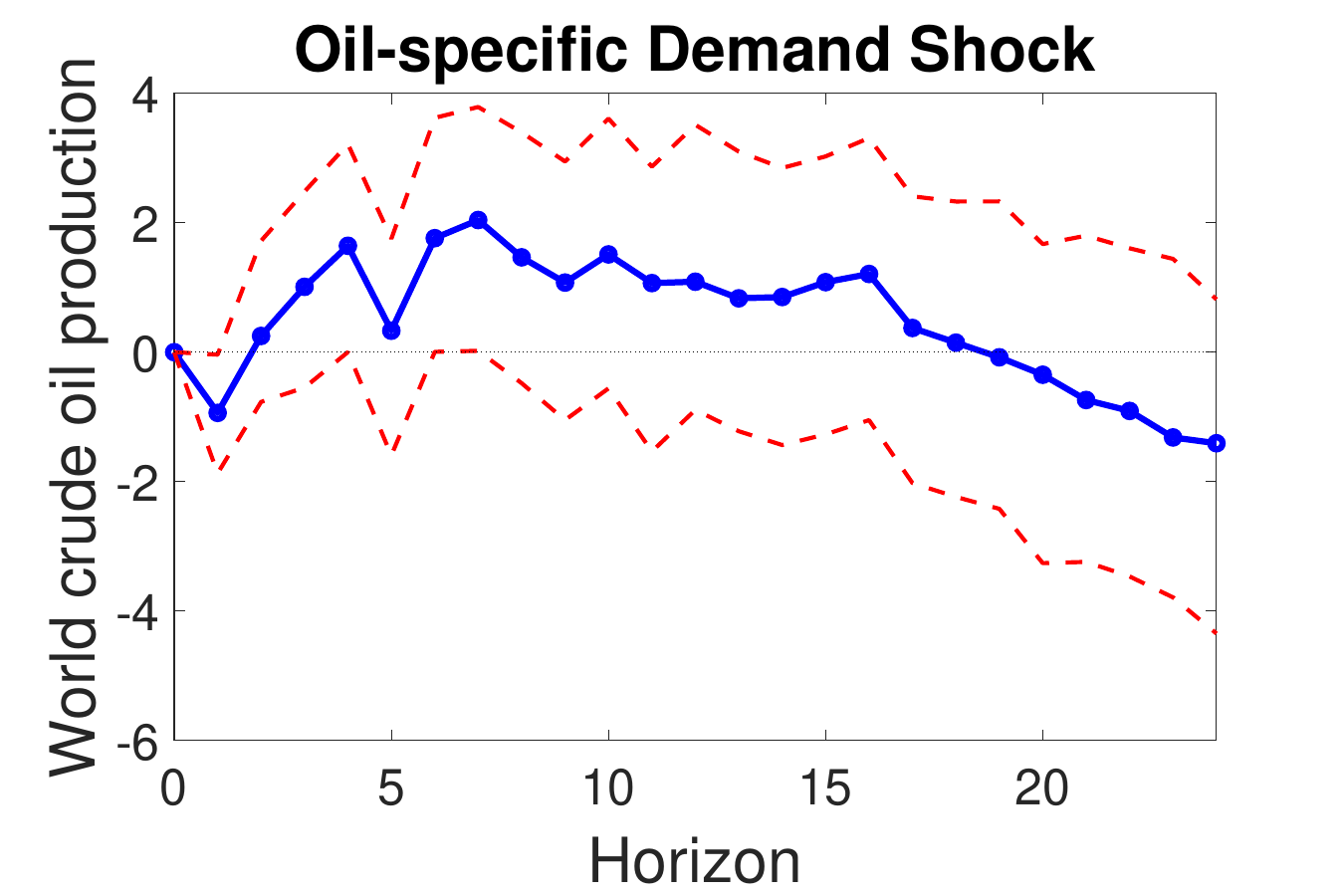}\\
  \includegraphics[width=.33\linewidth]{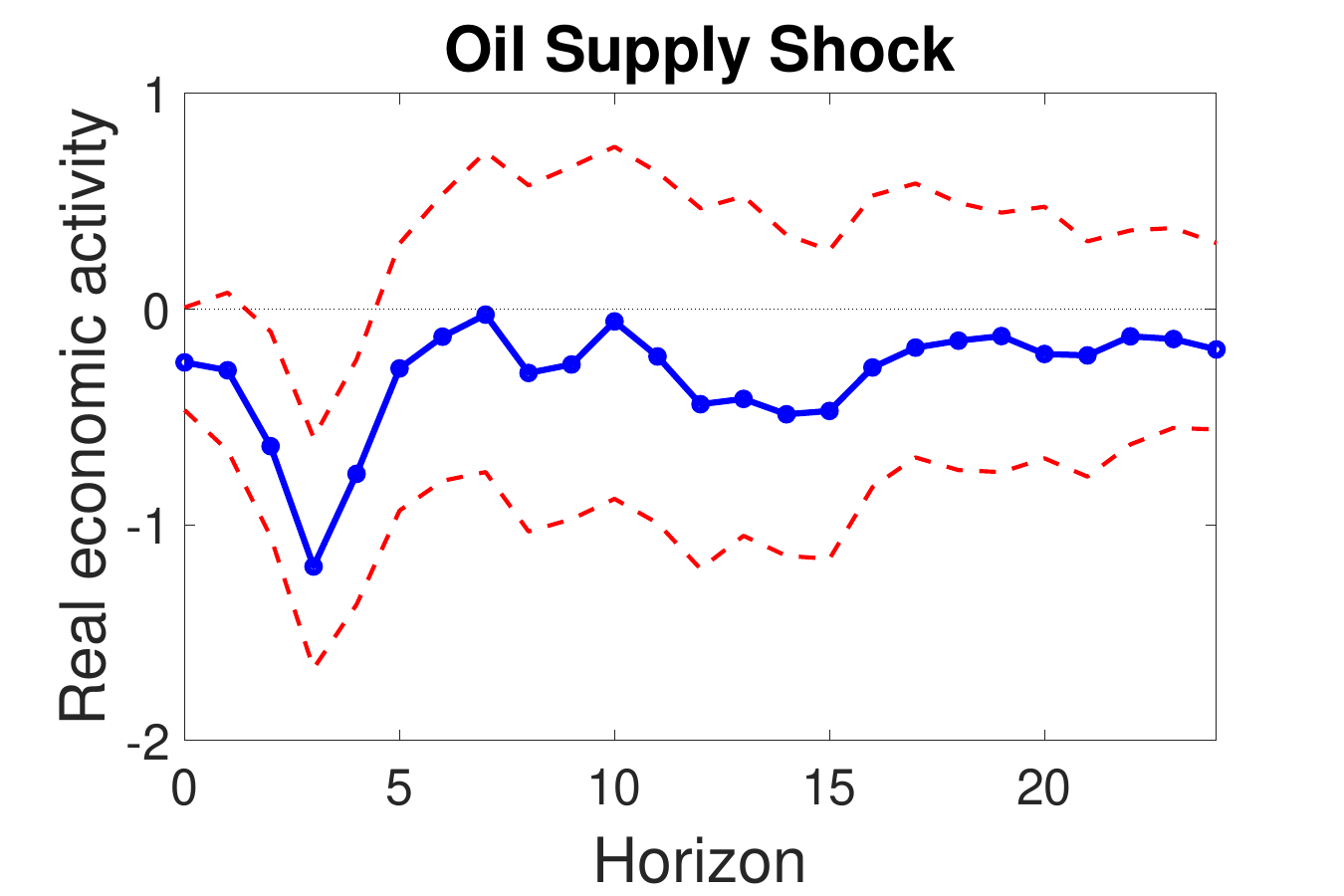}%
  \includegraphics[width=.33\linewidth]{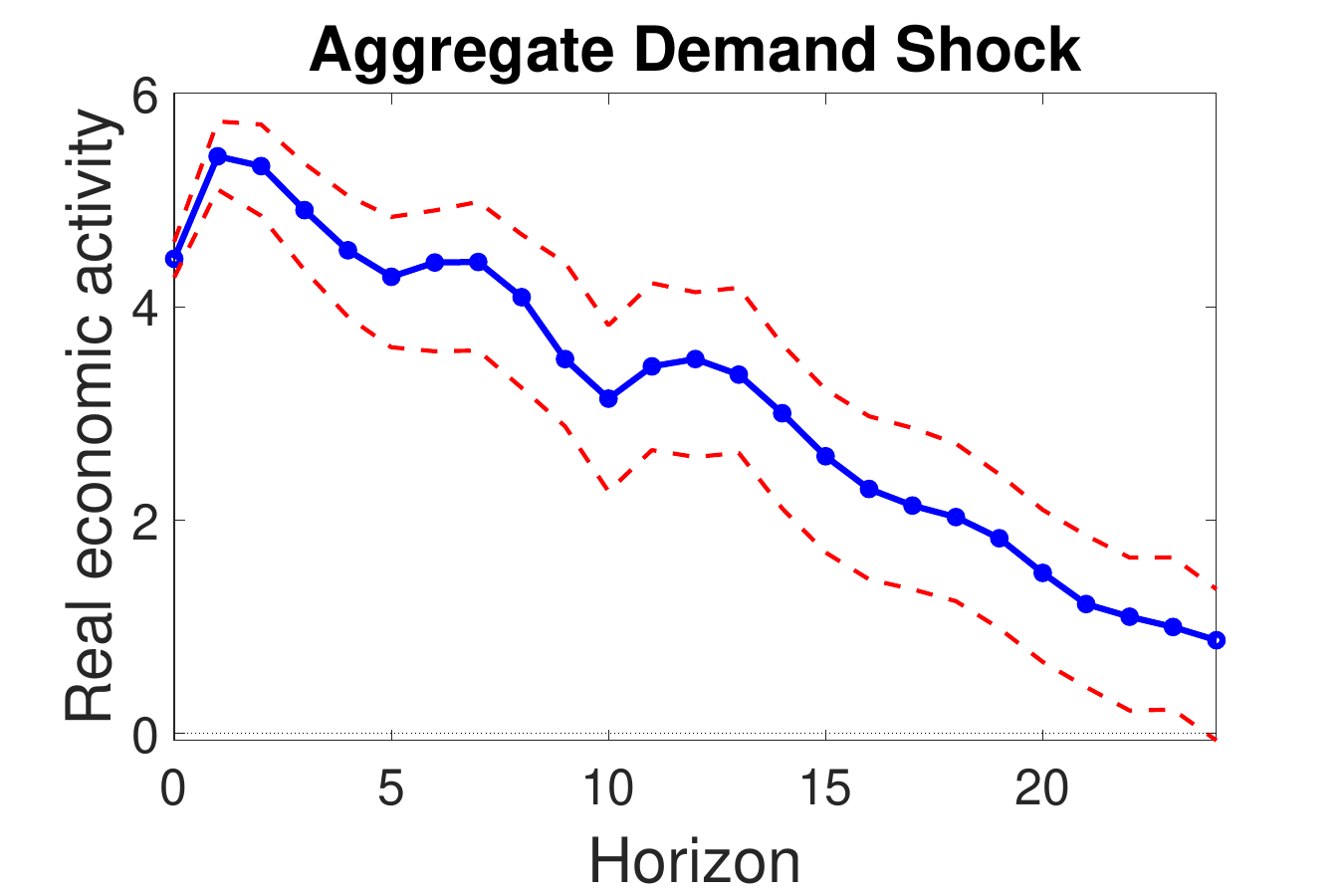}%
  \includegraphics[width=.33\linewidth]{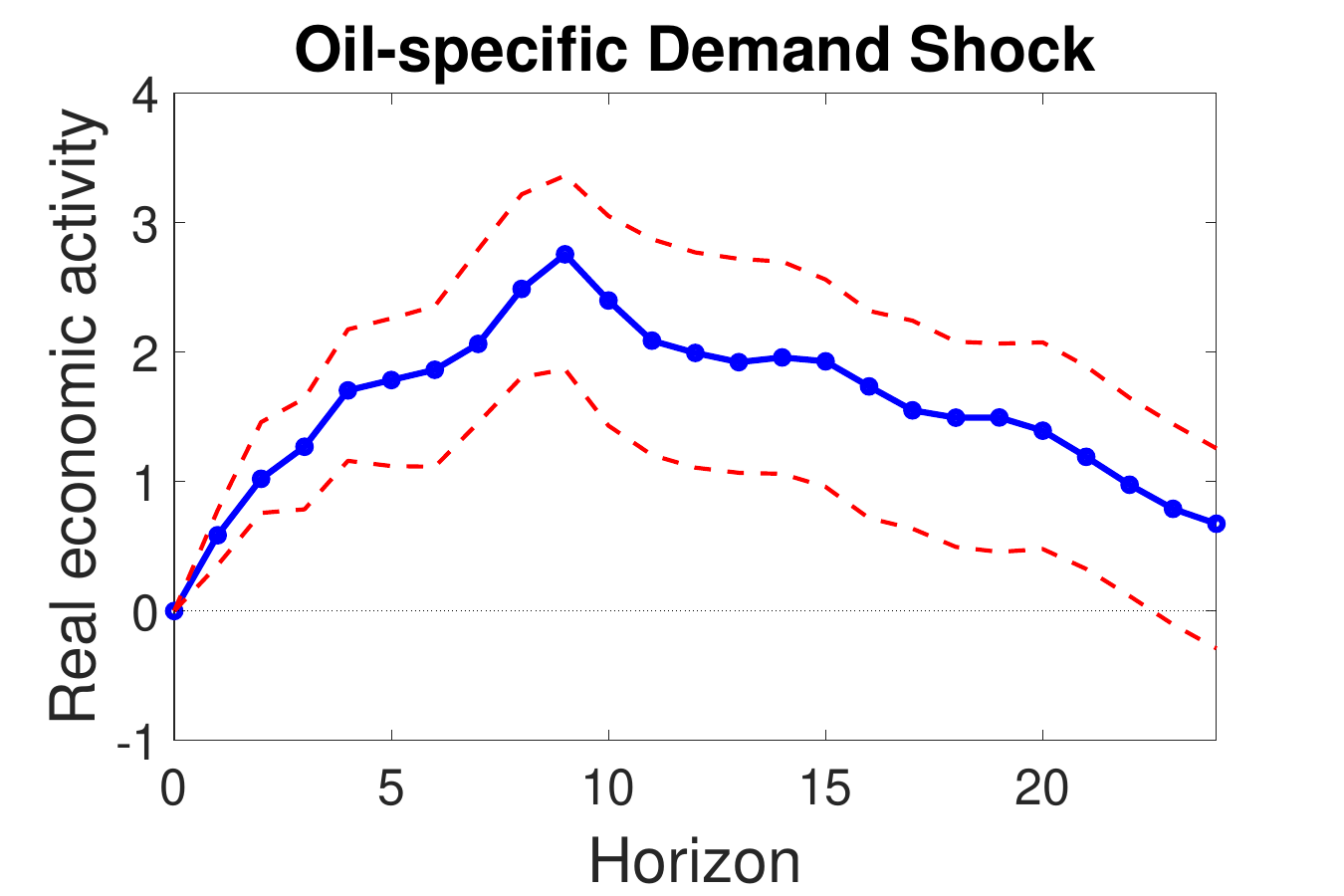}\\
  \includegraphics[width=.33\linewidth]{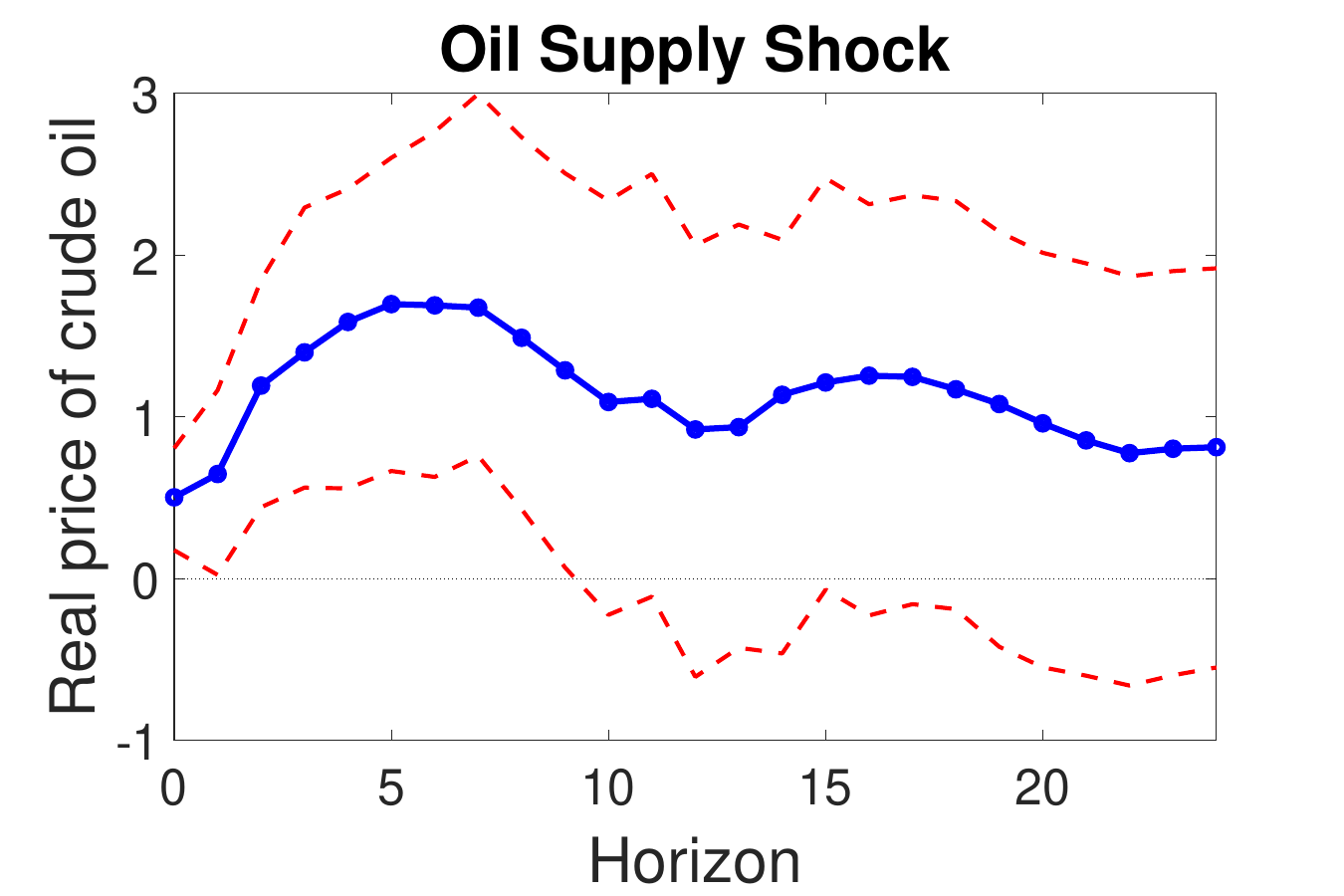}%
  \includegraphics[width=.33\linewidth]{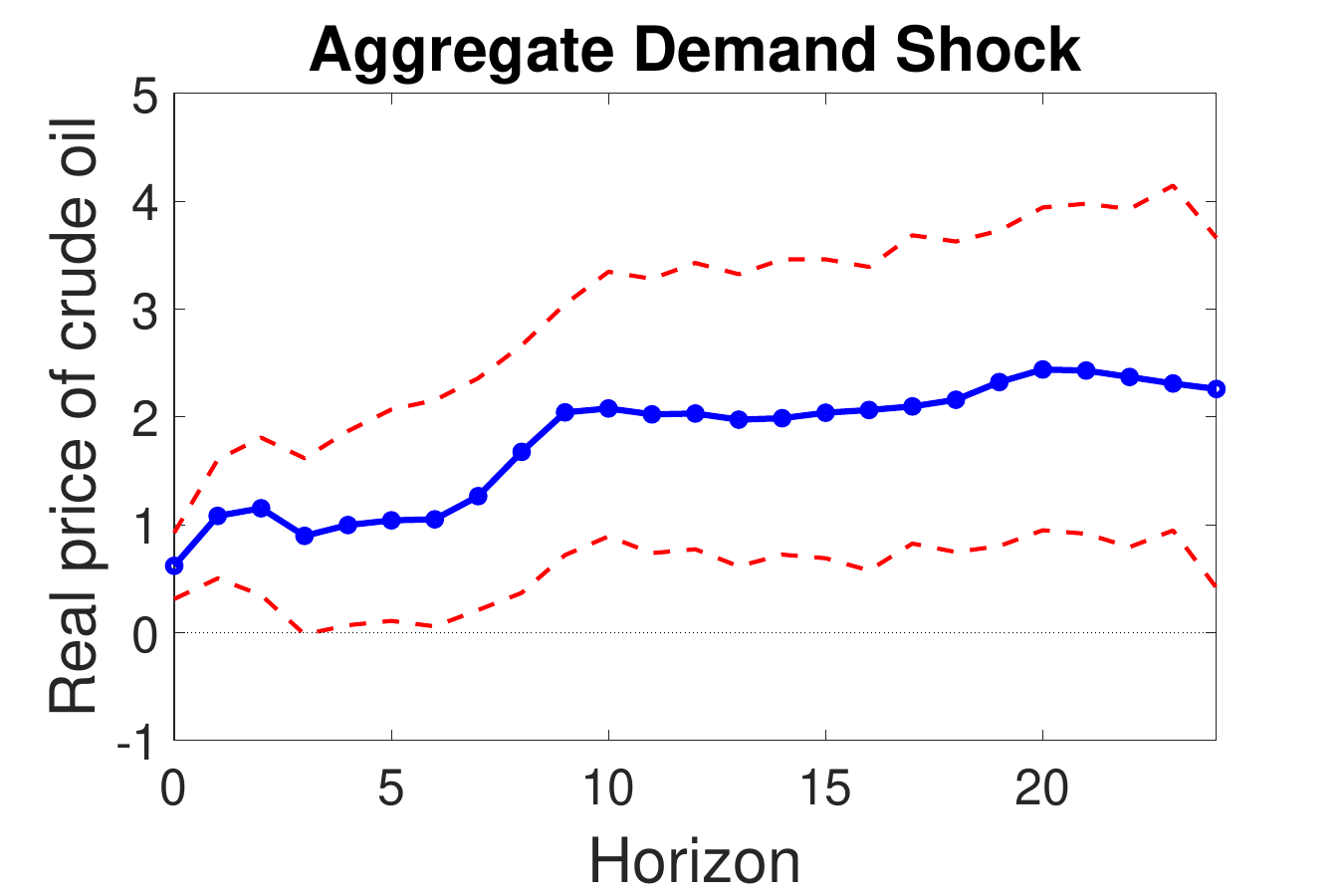}%
  \includegraphics[width=.33\linewidth]{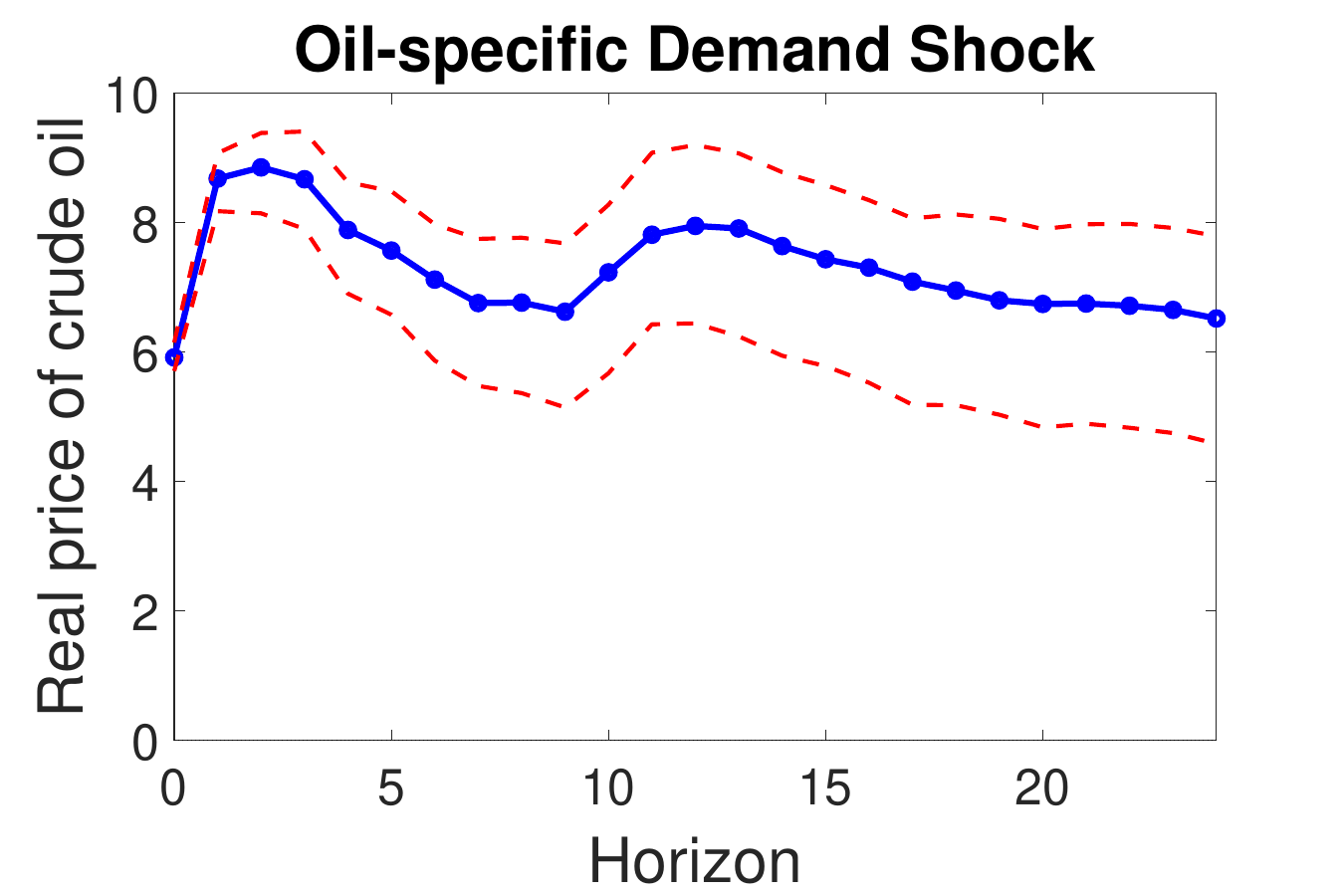}
\caption*{\scriptsize\textit{Notes}: the blue line with dots represents the posterior mean response, the dashed red lines identify upper and lower bounds of the highest posterior density region with credibility 68\%. Recursive identification imposing $c_{12}=c_{13}=c_{23}=0$. The model is point-identified}
\label{fig:IRF_M0}
\end{figure}

\begin{figure}[ht!]
\caption{Impulse response functions $\mathcal{M}_{1}$}

    \includegraphics[width=.33\linewidth]{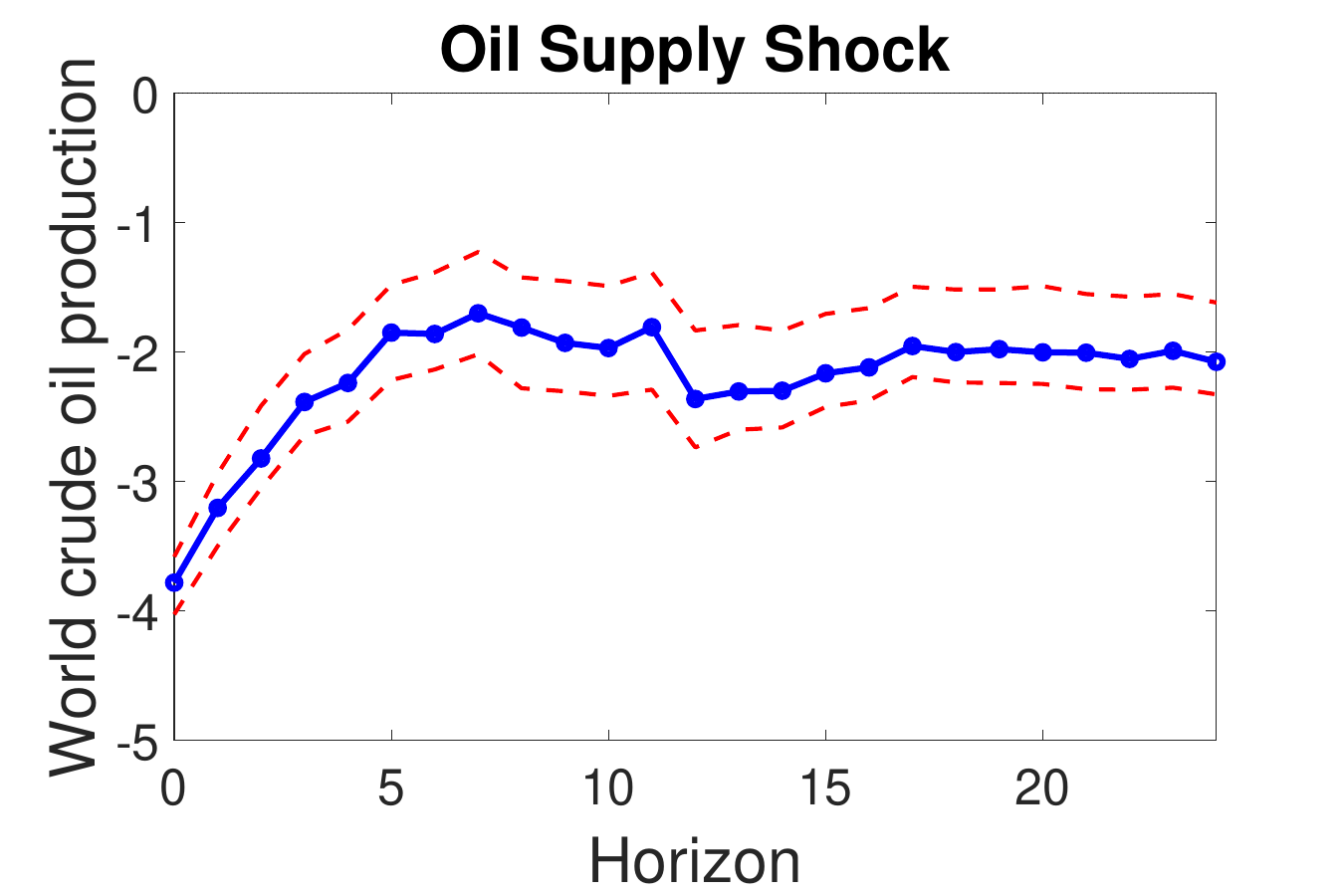}%
  \includegraphics[width=.33\linewidth]{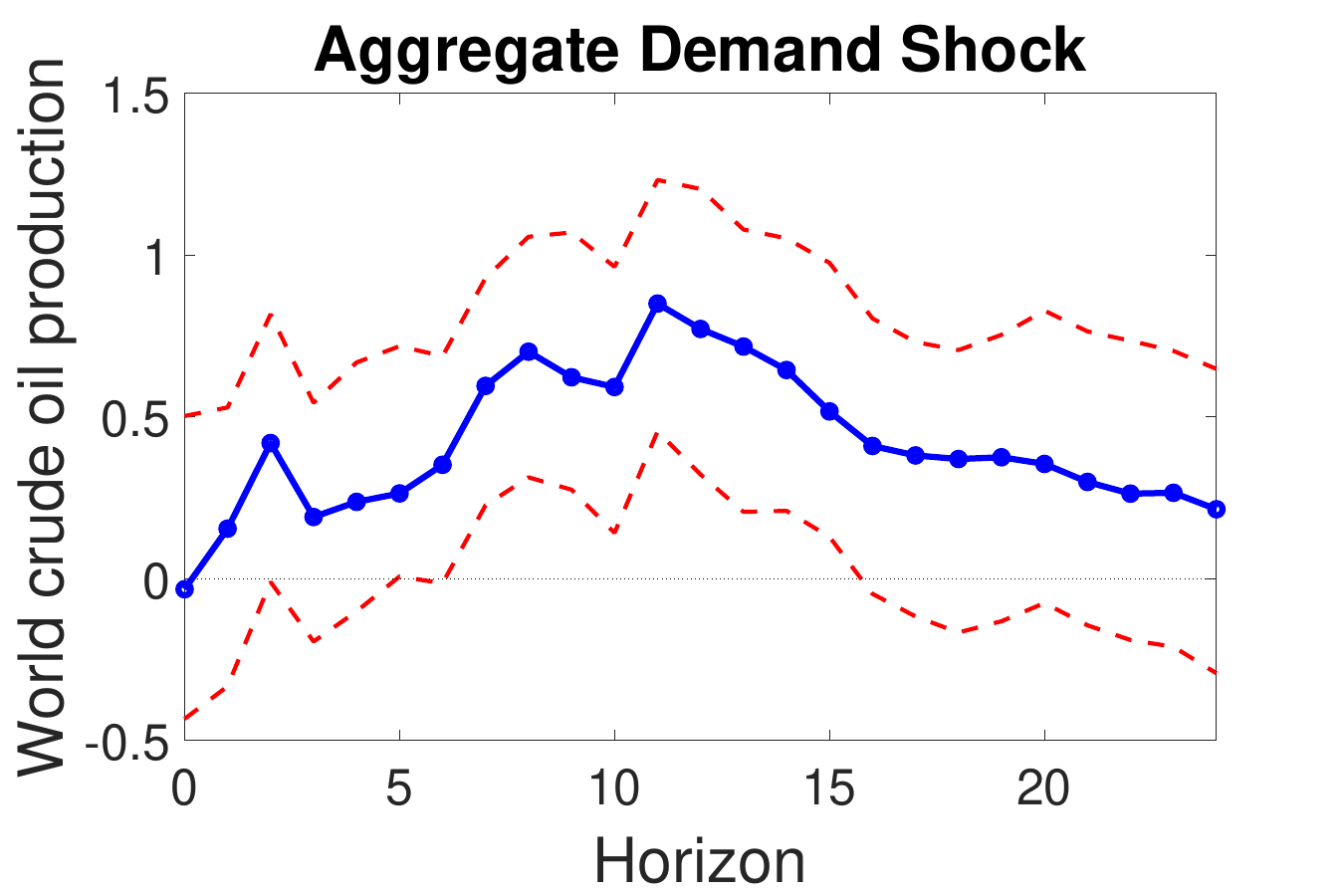}
  \includegraphics[width=.33\linewidth]{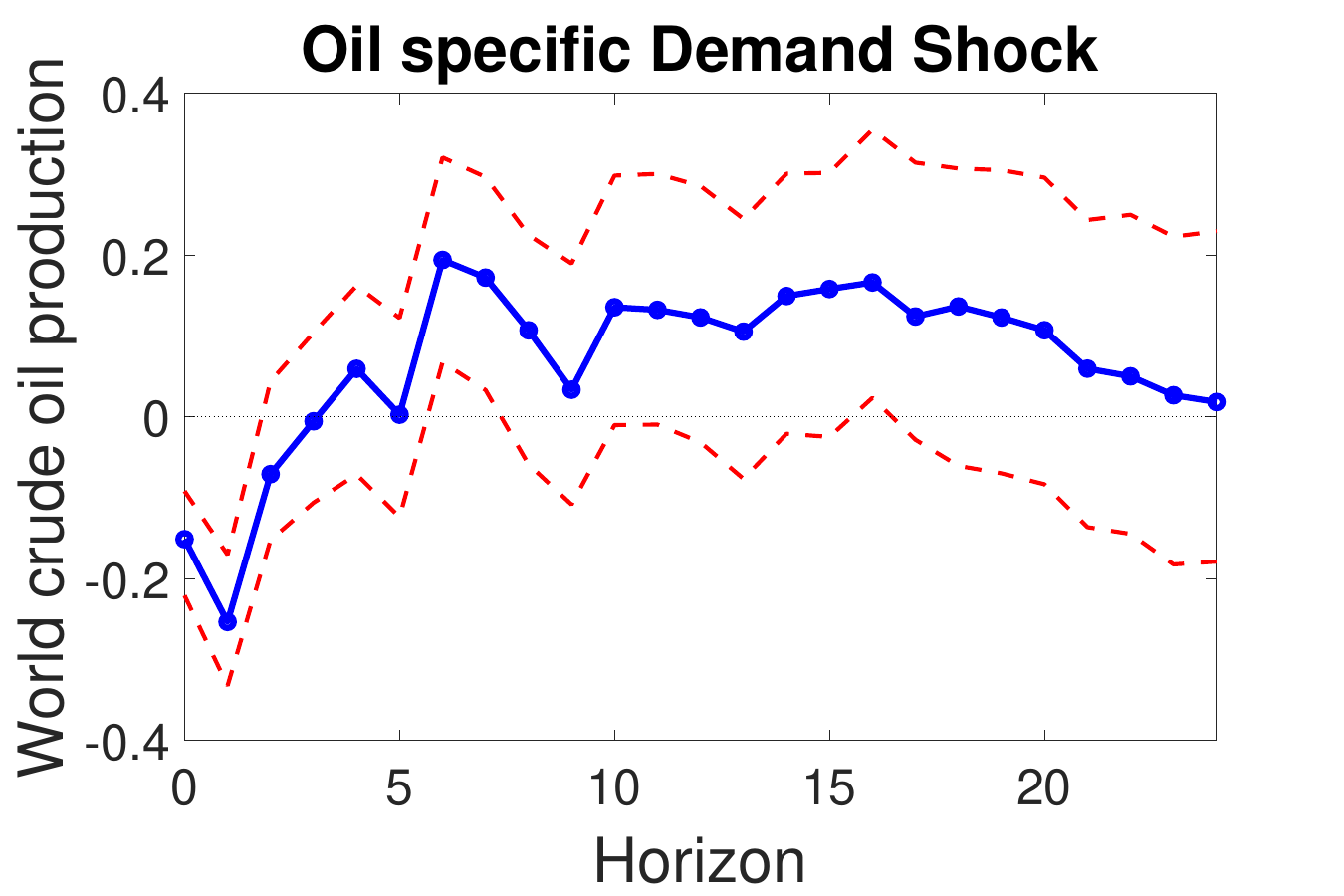}\\
  \includegraphics[width=.33\linewidth]{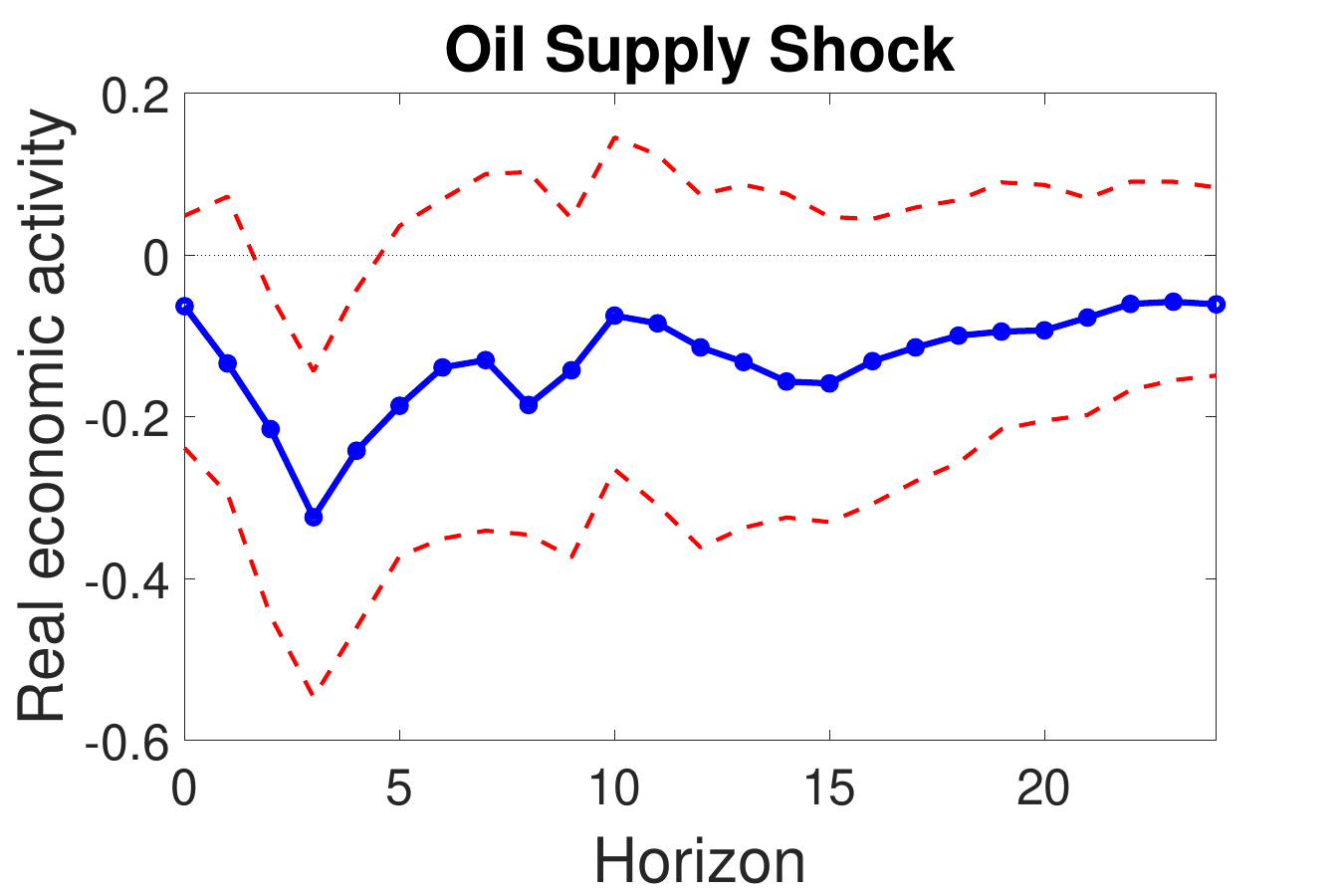}%
  \includegraphics[width=.33\linewidth]{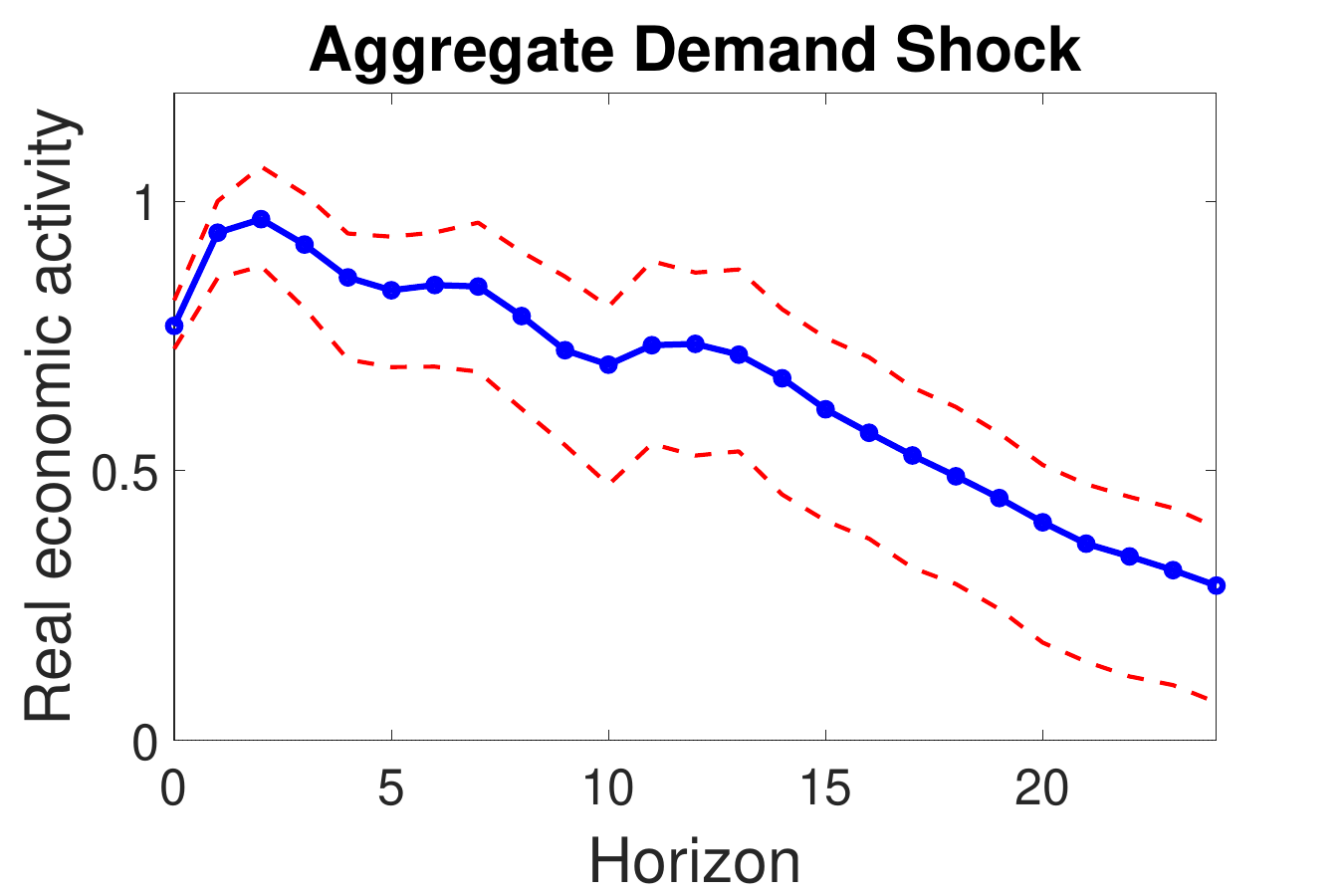}%
  \includegraphics[width=.33\linewidth]{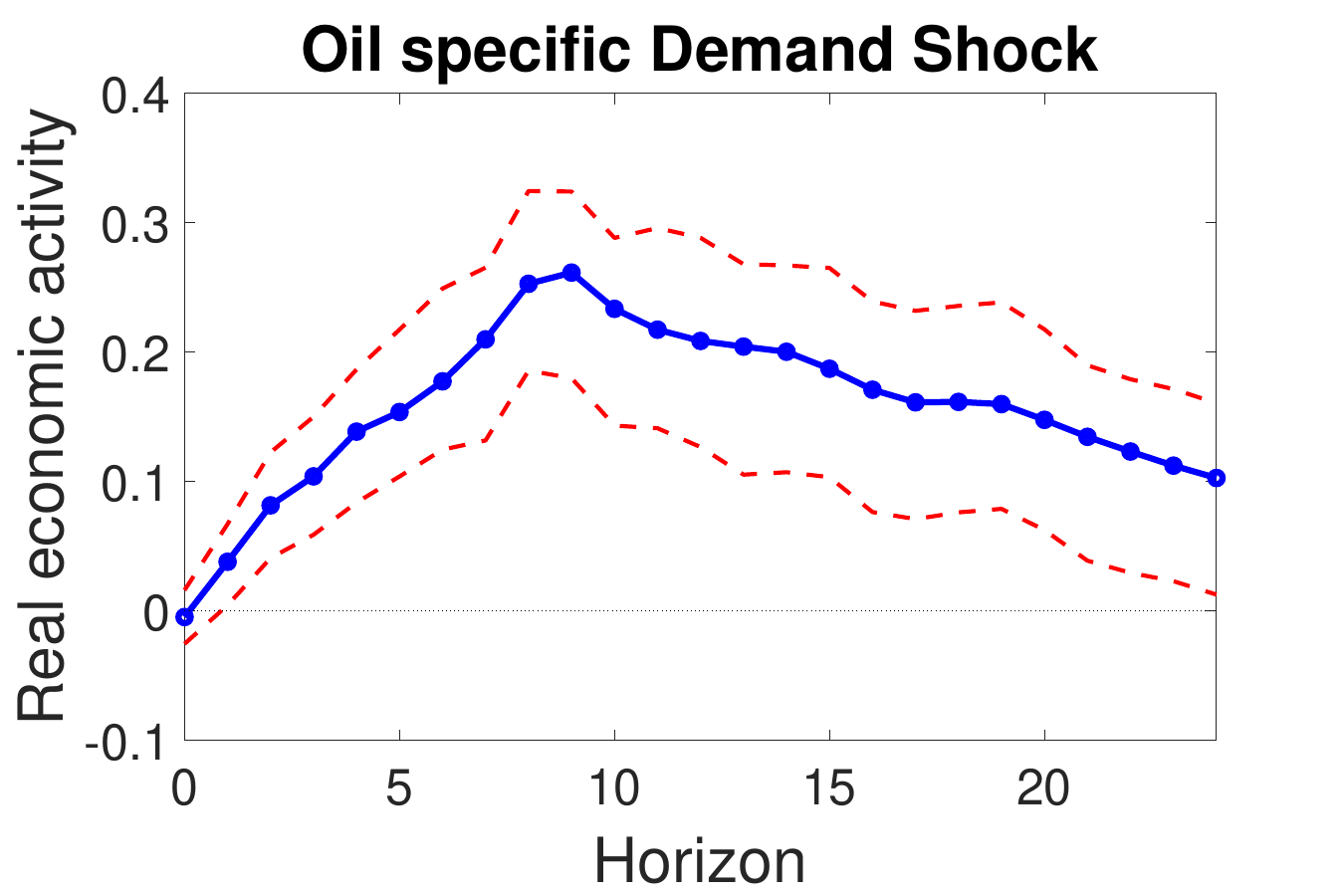}\\
  \includegraphics[width=.33\linewidth]{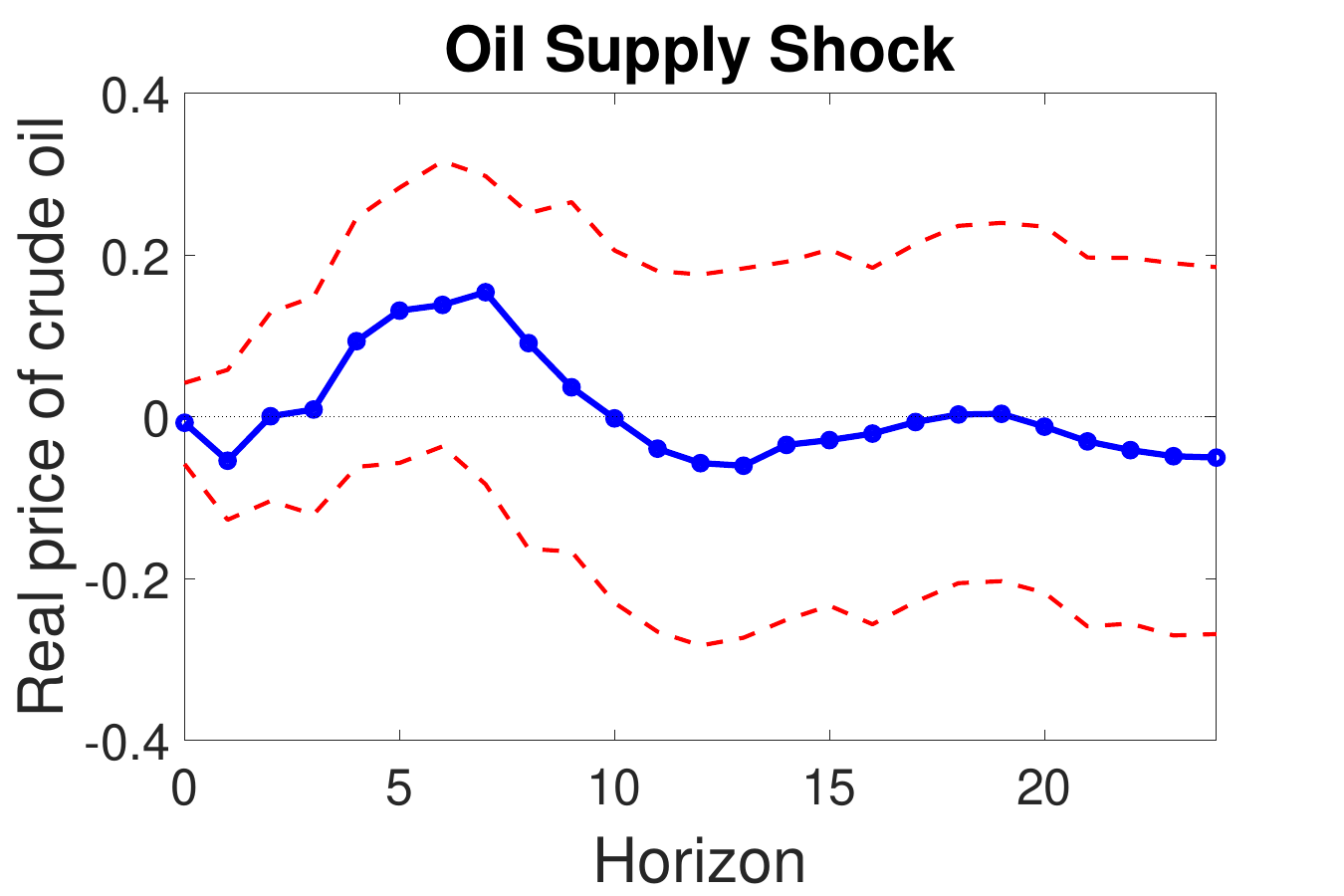}%
  \includegraphics[width=.33\linewidth]{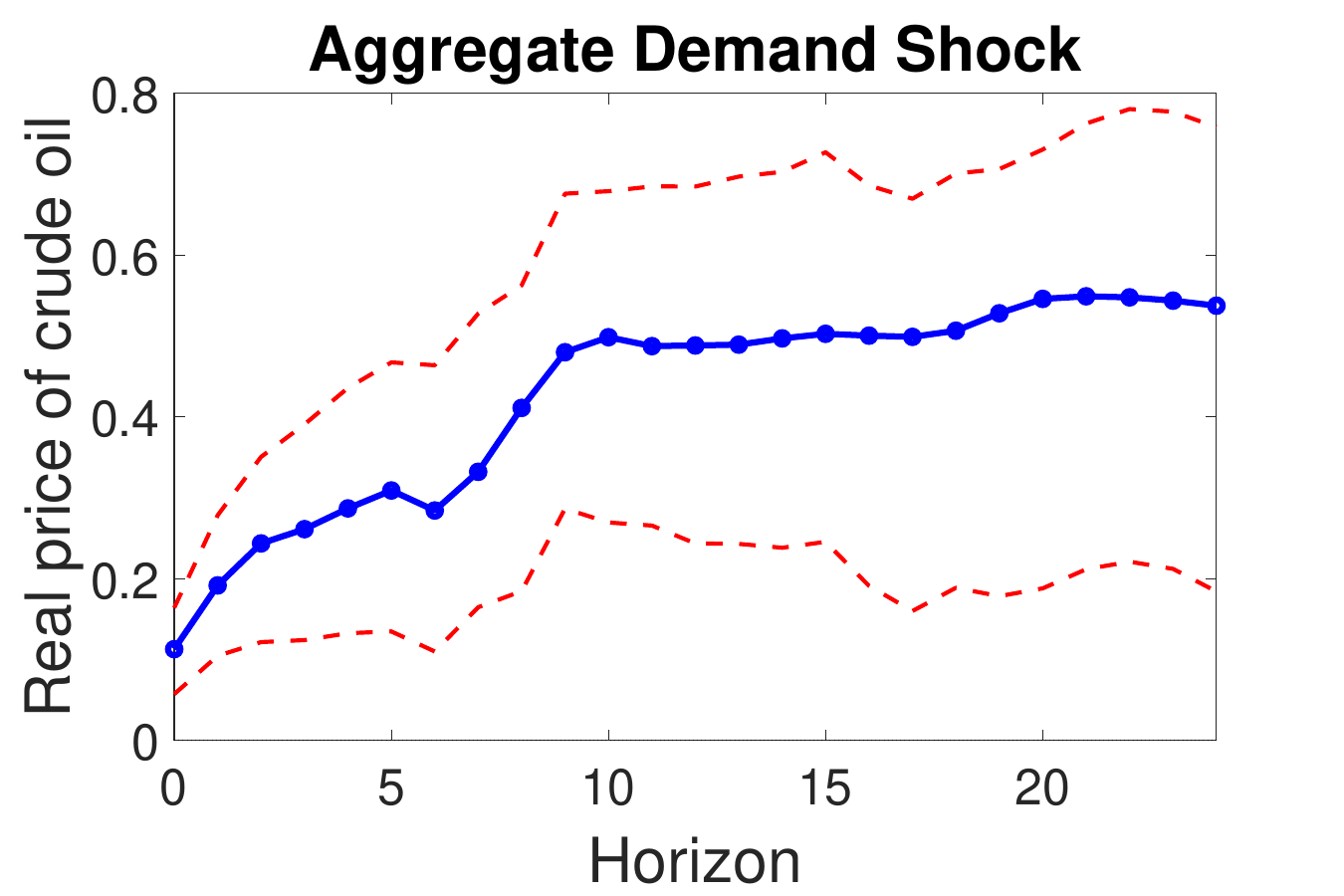}%
  \includegraphics[width=.33\linewidth]{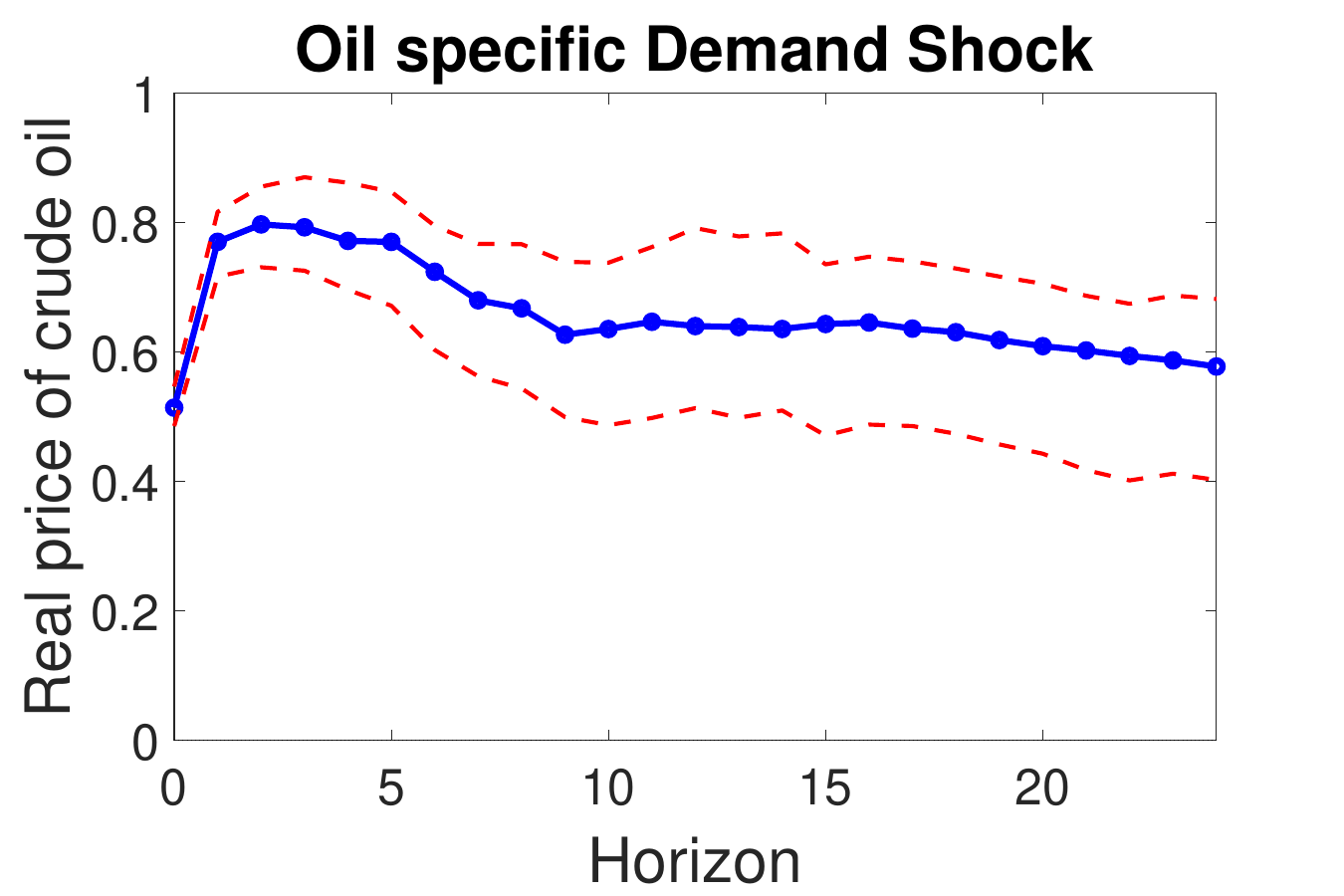}
\caption*{\scriptsize\textit{Notes}: the blue line with dots represents the standard Bayesian posterior mean response, the dashed red lines identify upper and lower bounds of the highest posterior density region with credibility 68\%. 
Identification is obtained via heteroskedasticity assuming distinct eigenvalues}
\label{fig:IRF_M1}
\end{figure}

Figure \ref{fig:IRF_M0} shows impulse responses\footnote{As for the estimation, we rely on a noninformative improper Jeffreys' prior that allows to draw reduced-form parameters from a normal-inverse-Wishart posterior.} for the recursively identified model of \citet{kilian200AER}, $\mathcal{M}_0$, along with the highest posterior density (HPD) region with credibility 68\%. An oil supply shock causes an immediate and long-lasting decline in global oil production, a decrease in real economic activity and a transitory increase in the real price of crude oil that peaks six months after the shock. Notice that the 68\% HPD region of the price response does not include zero only for the first eight months. A shock boosting aggregate demand causes a small temporary increase in global oil production and large and persistent increase in the index of real economic activity and in the price of crude oil. For the latter two responses the 68\% HPD region never contains zero. An unexpected rise in oil-specific demand generates a long-lasting increase in the real price of crude oil and a temporary jump in the index of real economic activity. Lastly, an oil market demand shock causes a small and only transitory positive effect on global oil production. Notice that in this case the 68\% HPD always includes zero.

Model $\mathcal{M}_1$ assumes that all eigenvalues are distinct and hence we estimate the reduced form of the model with the Gibbs sampler discussed in Appendix \ref{app:Est}. The last column of Figure \ref{fig:IRF_M1} shows the responses to the shock that is associated with the only distinct eigenvalue. Comparing the shape of these responses to the one in Figure \ref{fig:IRF_M0}, we see that they are consistent with those following an oil-specific demand shock. Impulse responses in the second and third column of Figure \ref{fig:IRF_M1} are consistent with those induced by an aggregate demand shock and an oil supply shock respectively. Focusing on the response of the real price of crude oil to an oil supply shock, we see that the 68\% HPD region always contains zero. Similarly, the response of real economic activity to an oil supply disruption is very modest. All in all, these results highlight how heteroskedasticity conveys information that is useful to the purpose of identifying structural shocks.

\begin{figure}[ht!]
\caption{Impulse response functions $\mathcal{M}_{2}$ -- Alternative implementation of Algorithm \ref{algo:PostBounds}}

   \includegraphics[width=.33\linewidth]{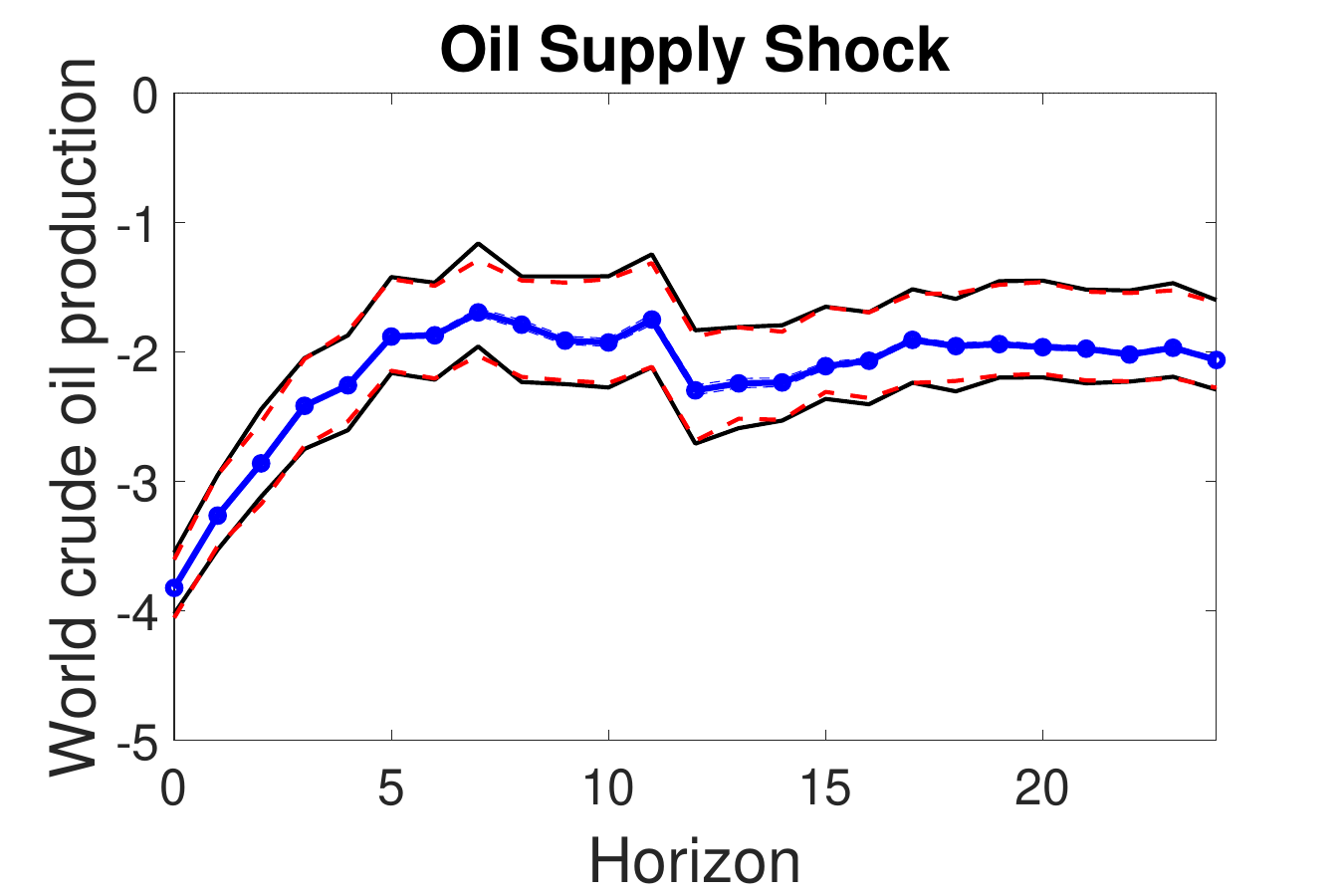}%
  \includegraphics[width=.33\linewidth]{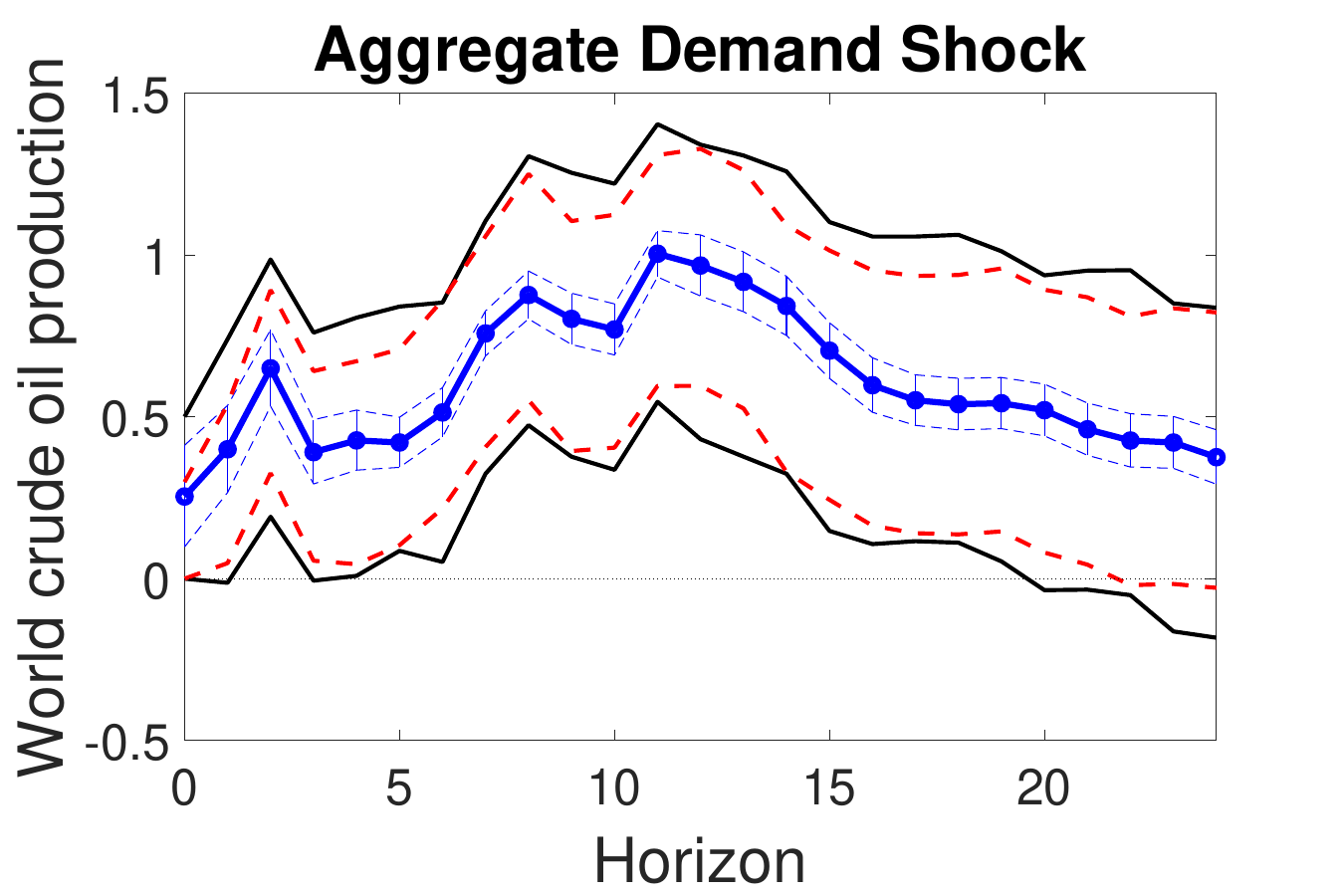}
  \includegraphics[width=.33\linewidth]{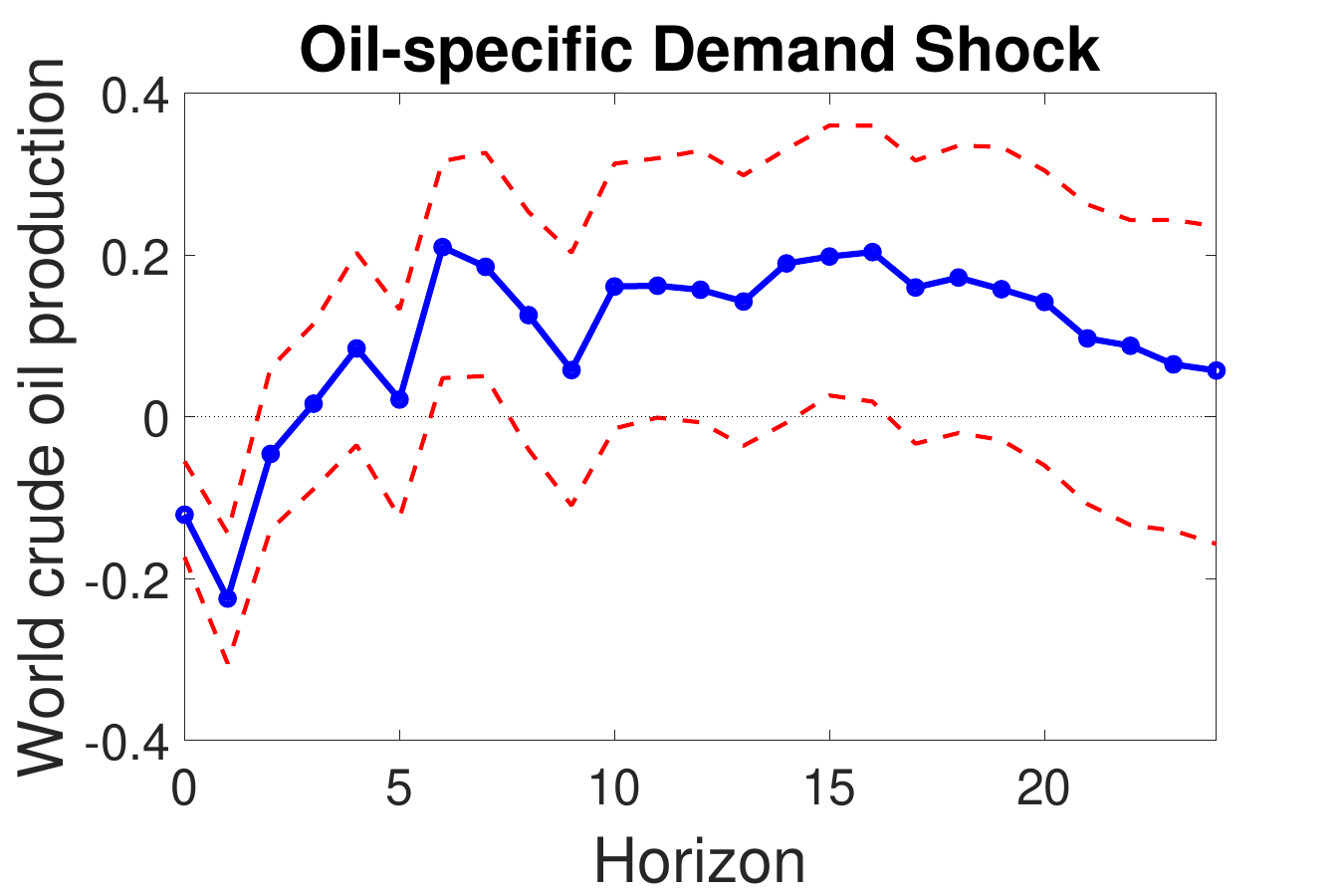}\\
  \includegraphics[width=.33\linewidth]{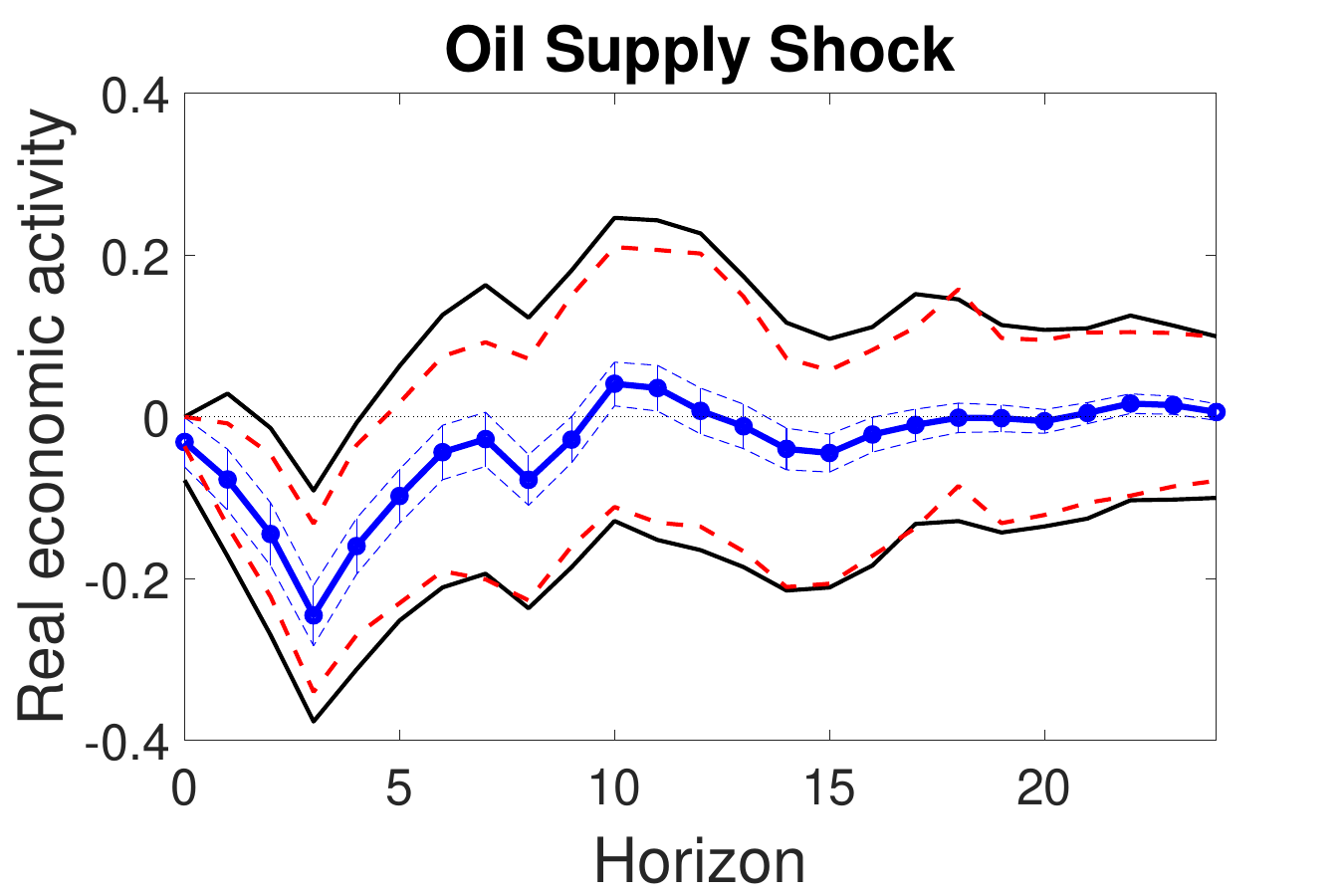}%
  \includegraphics[width=.33\linewidth]{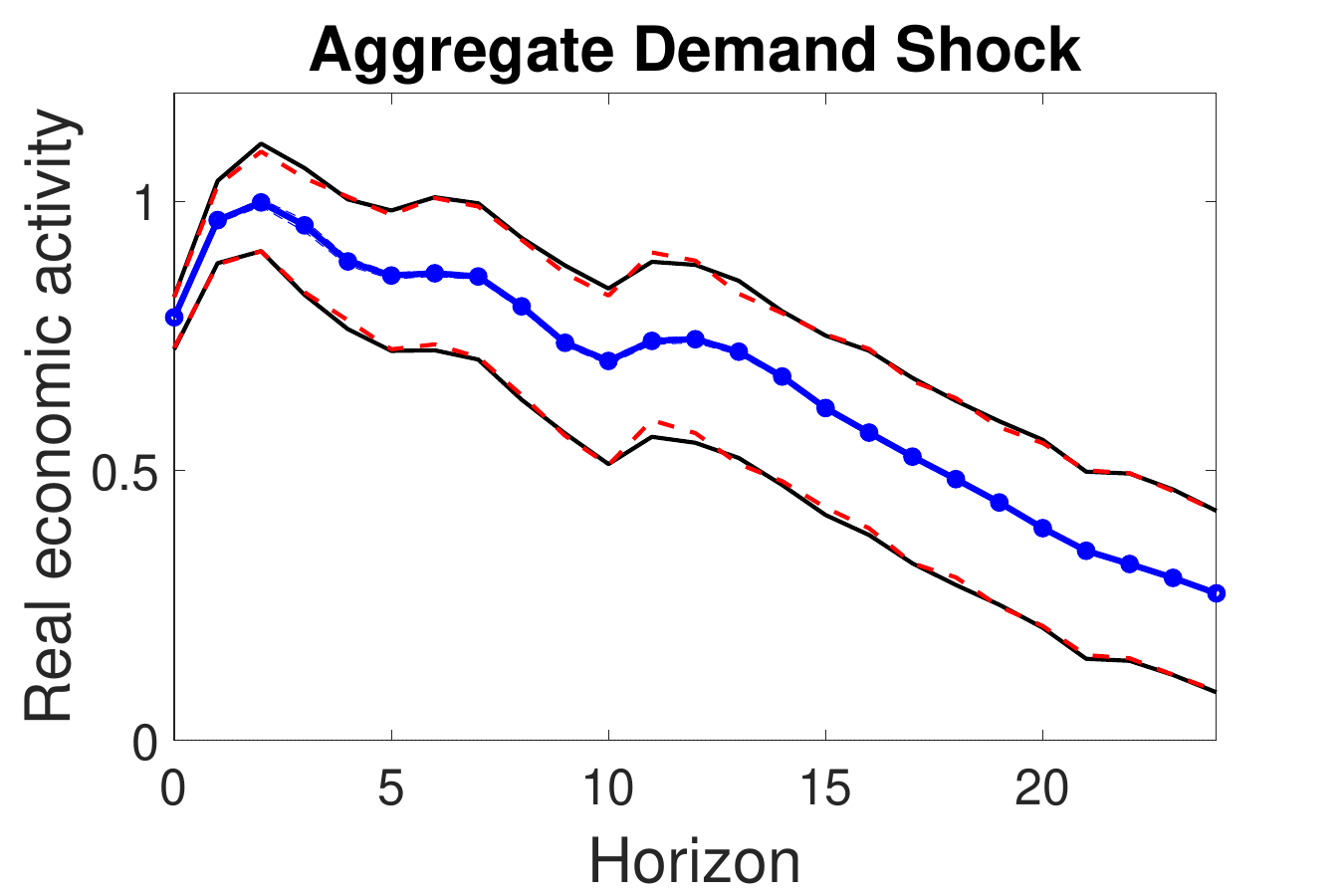}%
  \includegraphics[width=.33\linewidth]{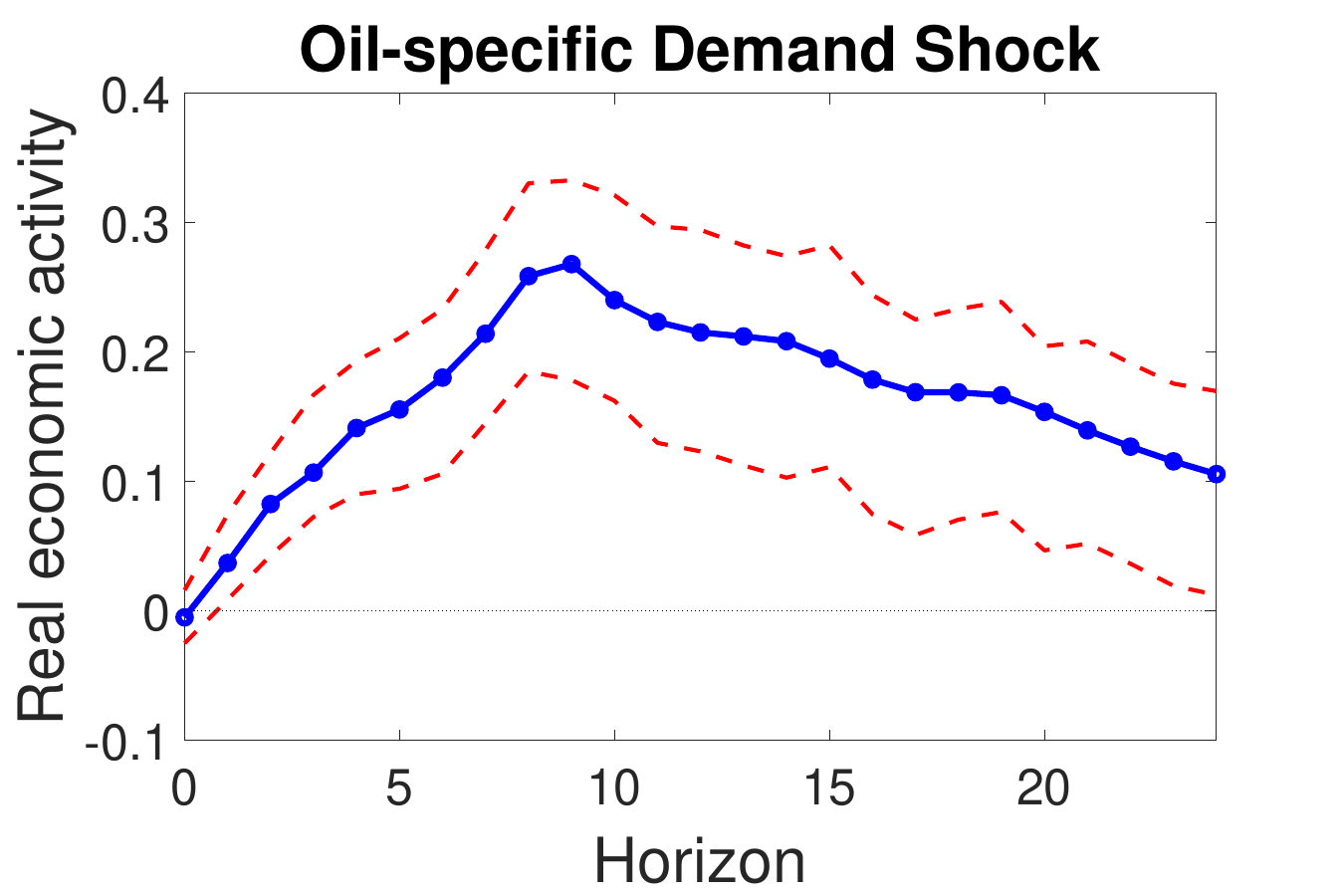}\\
  \includegraphics[width=.33\linewidth]{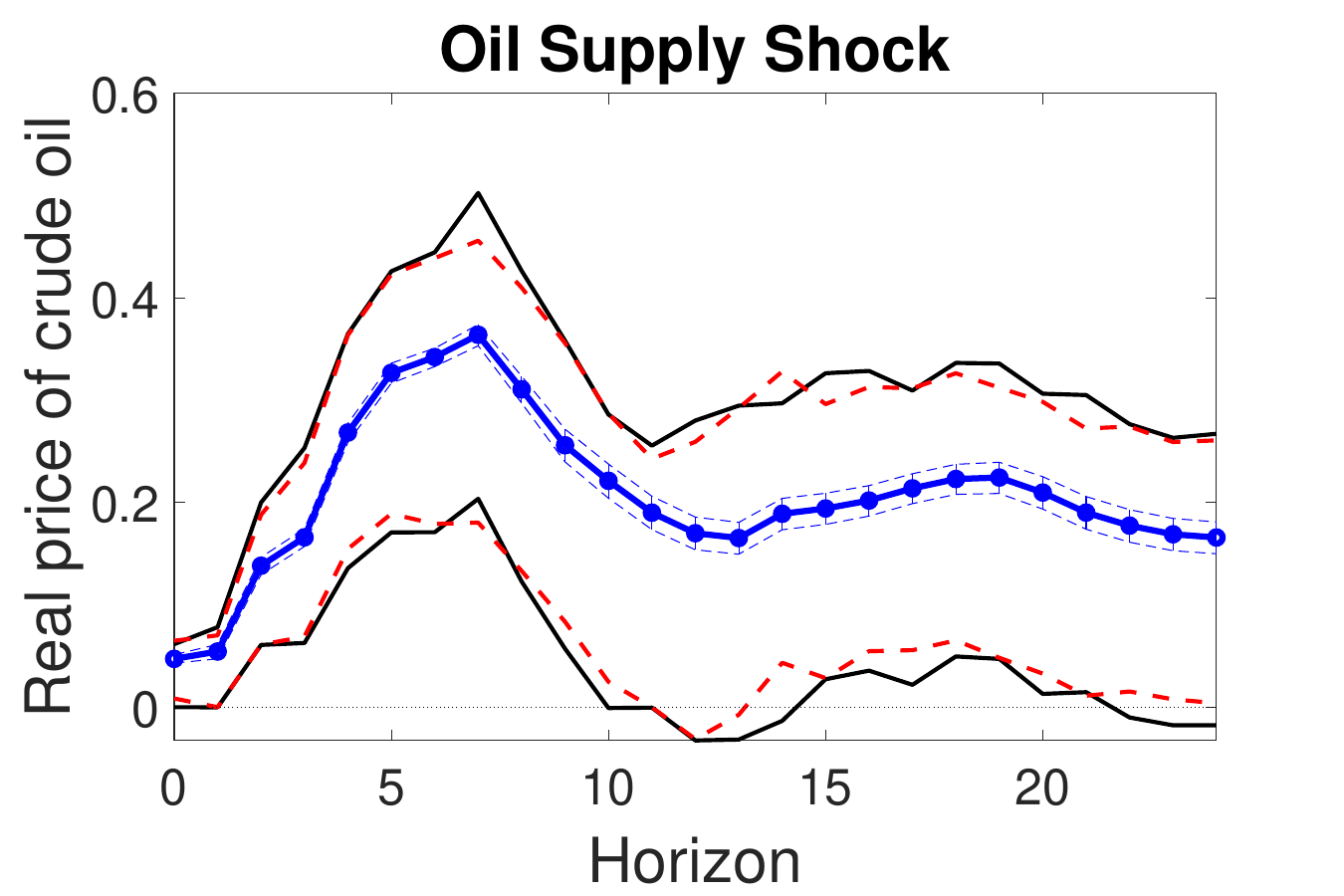}%
  \includegraphics[width=.33\linewidth]{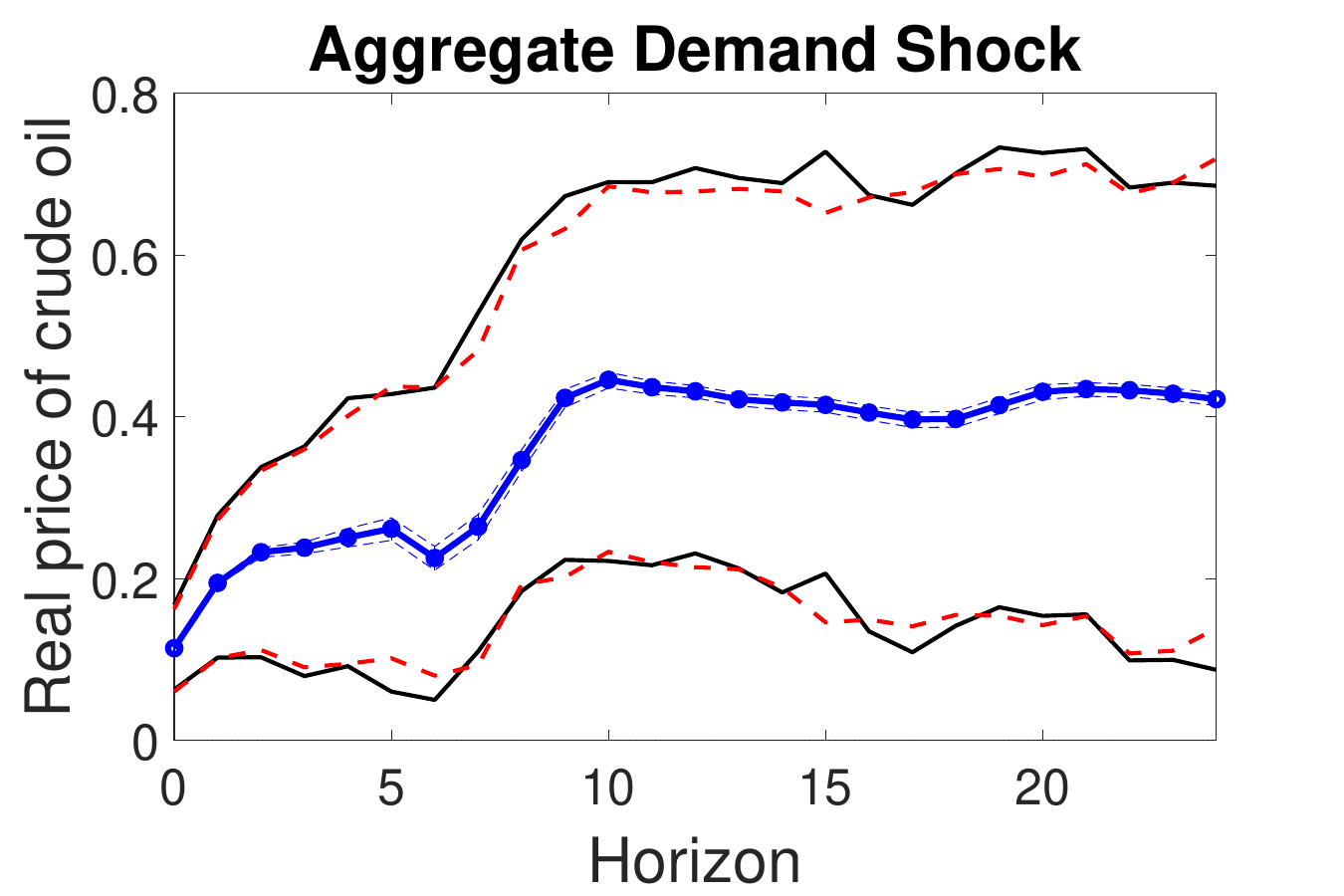}%
  \includegraphics[width=.33\linewidth]{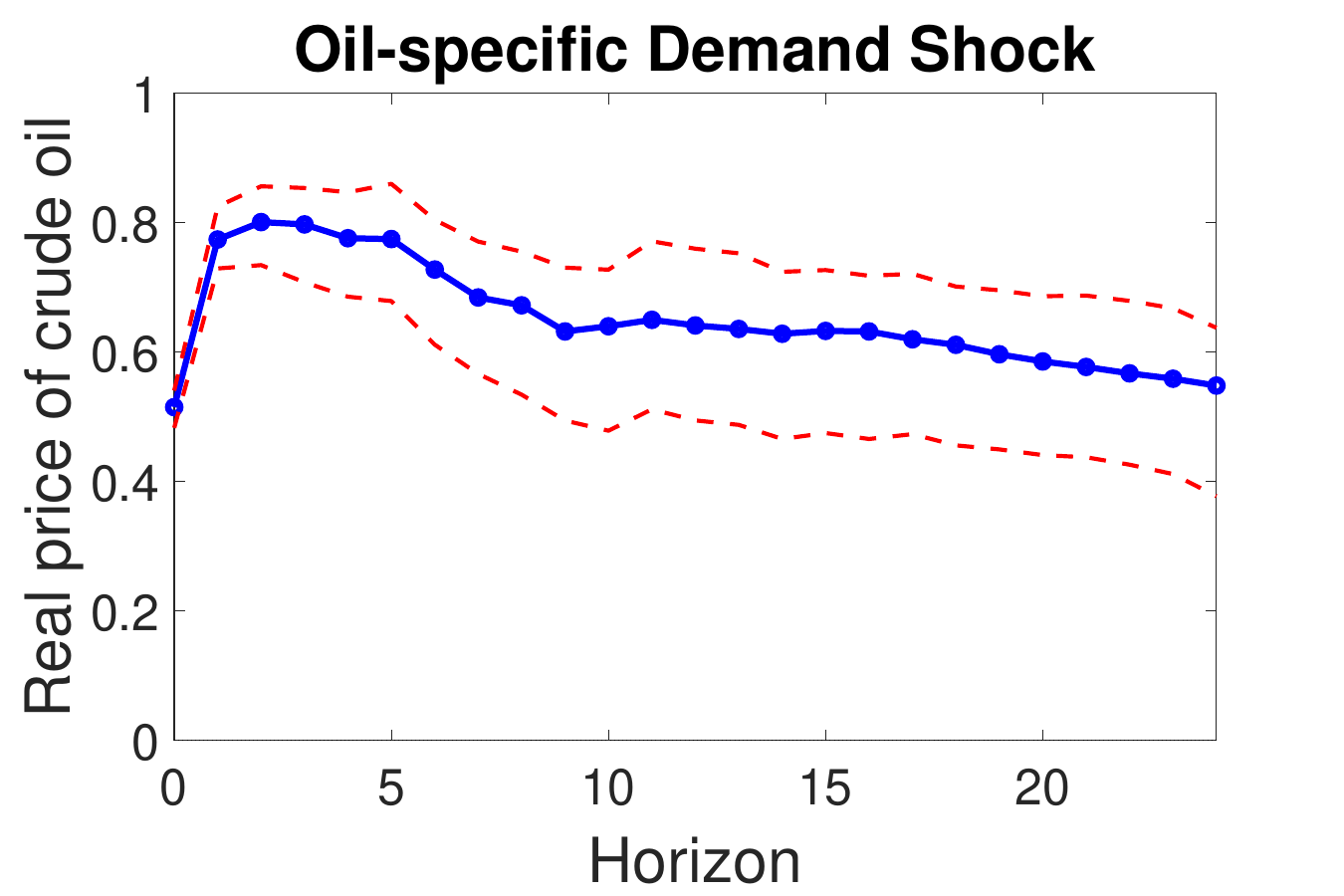}
\caption*{\scriptsize\textit{Notes}: the blue line with dots represents the standard Bayesian posterior mean response, the dashed red lines identify upper and lower bounds of the highest posterior density region with credibility 68\%. Plots in first and second columns of the figure also report the set of posterior means (blue vertical bars) and the bounds of the robust credible region with credibility 68\% (solid black curves). Identification via heteroskedasticity with multiple eigenvalues (i.e. only one shock is point identified), static and dynamic sign restrictions. We substitute Step 5 of Algorithm \ref{algo:PostBounds} with 10000 iterations of Step 4.1-Step 4.3. The interval $\big[\ell(\phi_m),\: u(\phi_m)\big]$ is then approximated by the minimum and maximum values over such iterations.}
\label{fig:IRF_dynsign_approx}
\end{figure}
\begin{figure}[ht!]
\caption{Impulse response functions $\mathcal{M}_{2}$ - Testing the eigenvalues}

    \includegraphics[width=.33\linewidth]{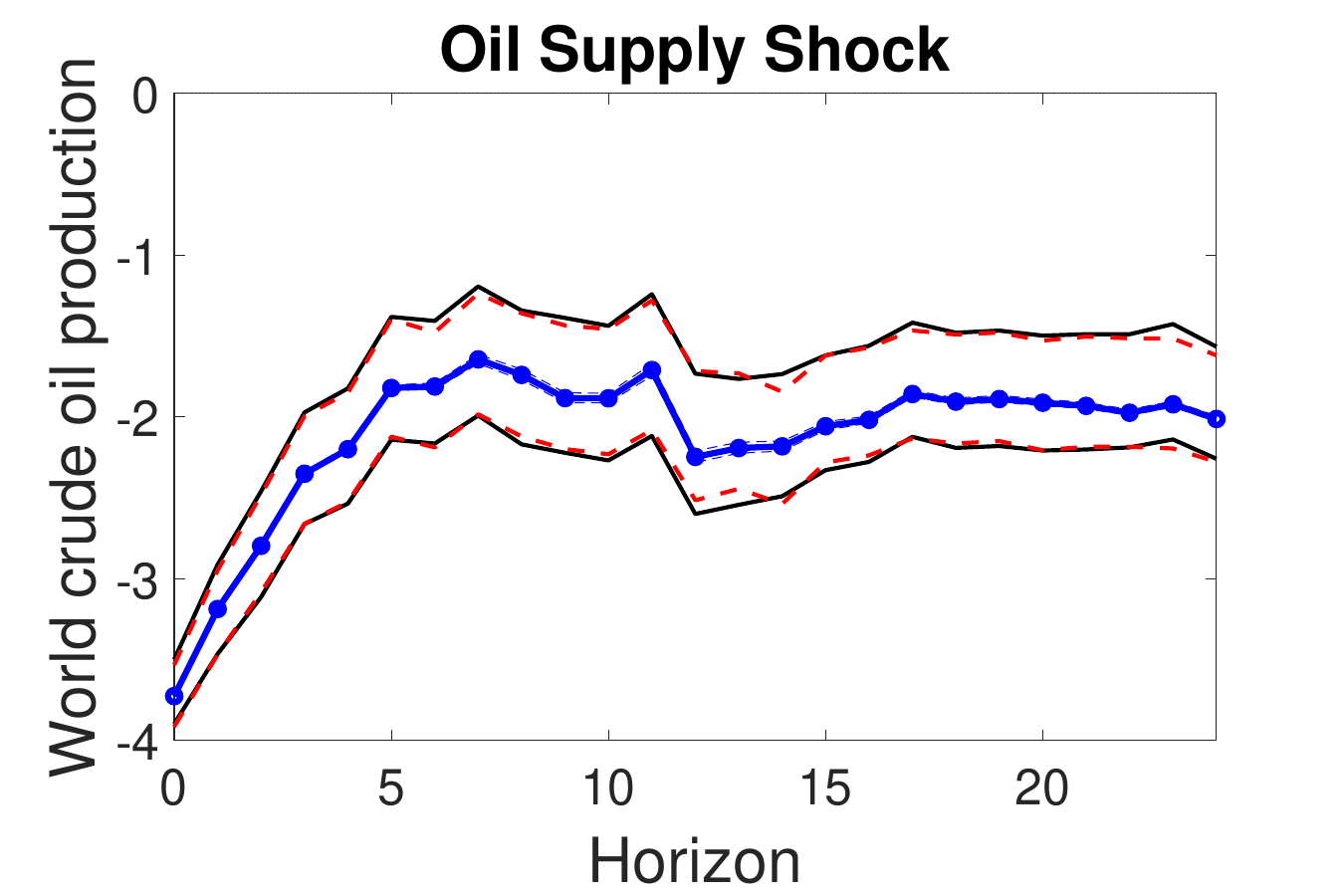}%
  \includegraphics[width=.33\linewidth]{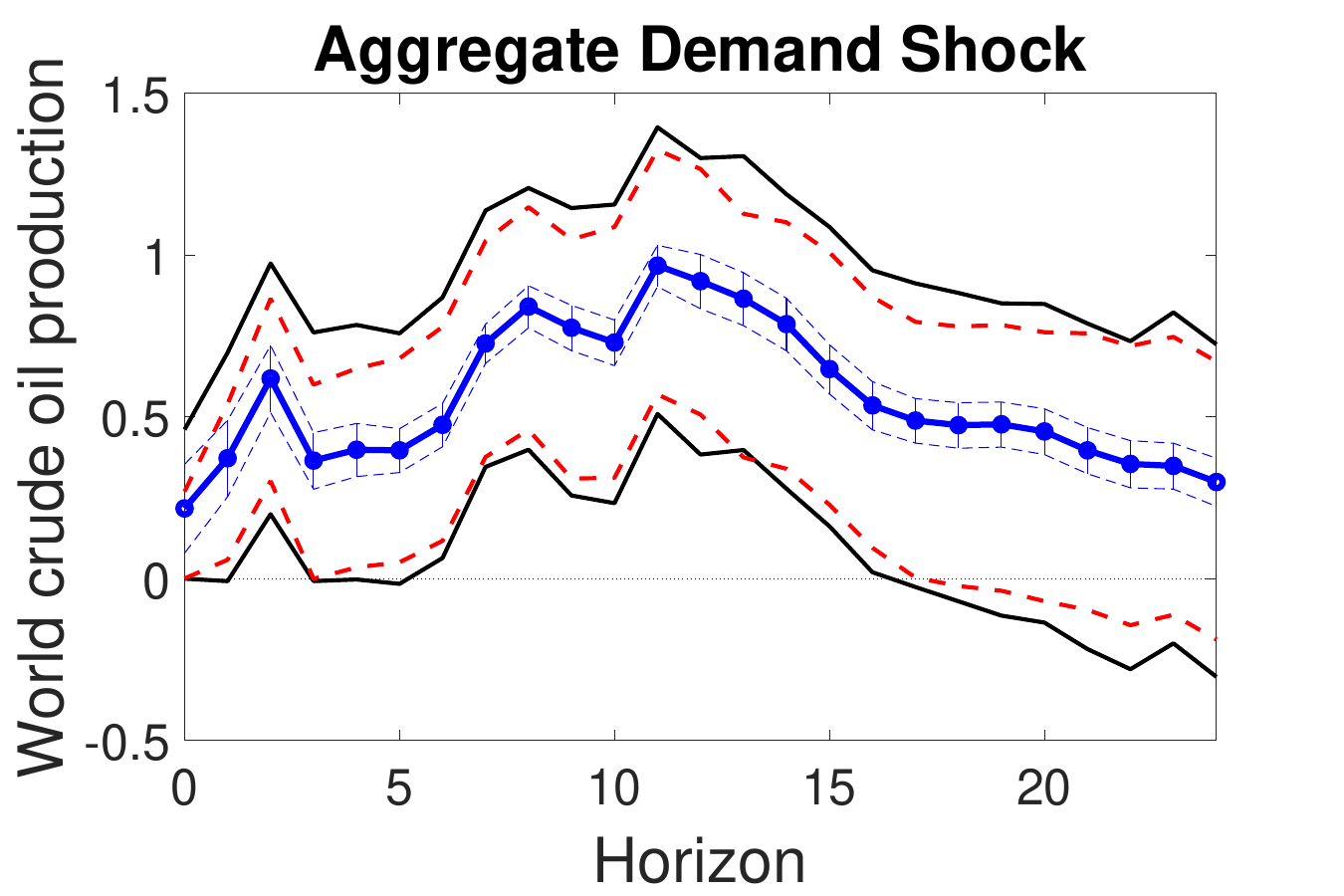}
  \includegraphics[width=.33\linewidth]{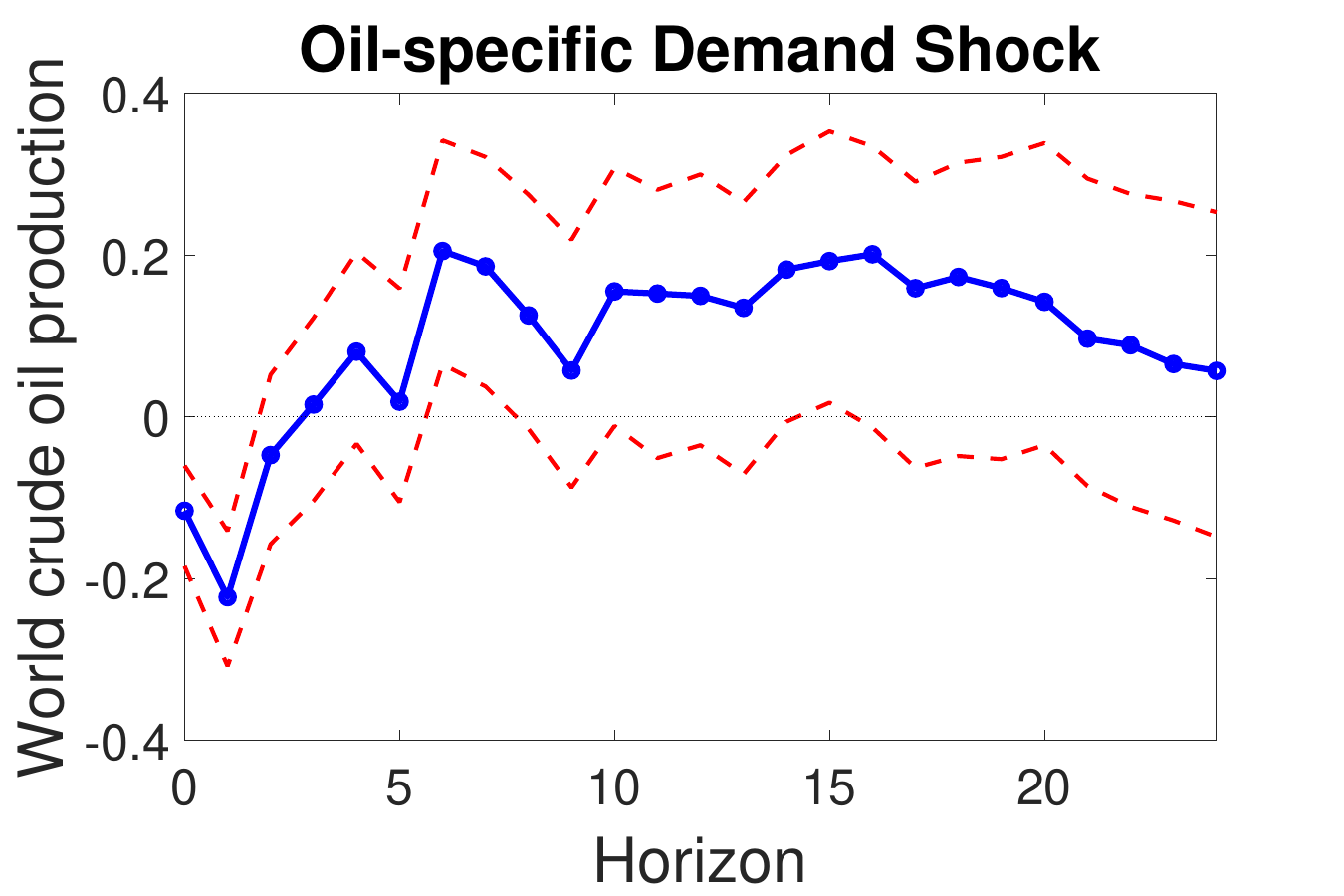}\\
  \includegraphics[width=.33\linewidth]{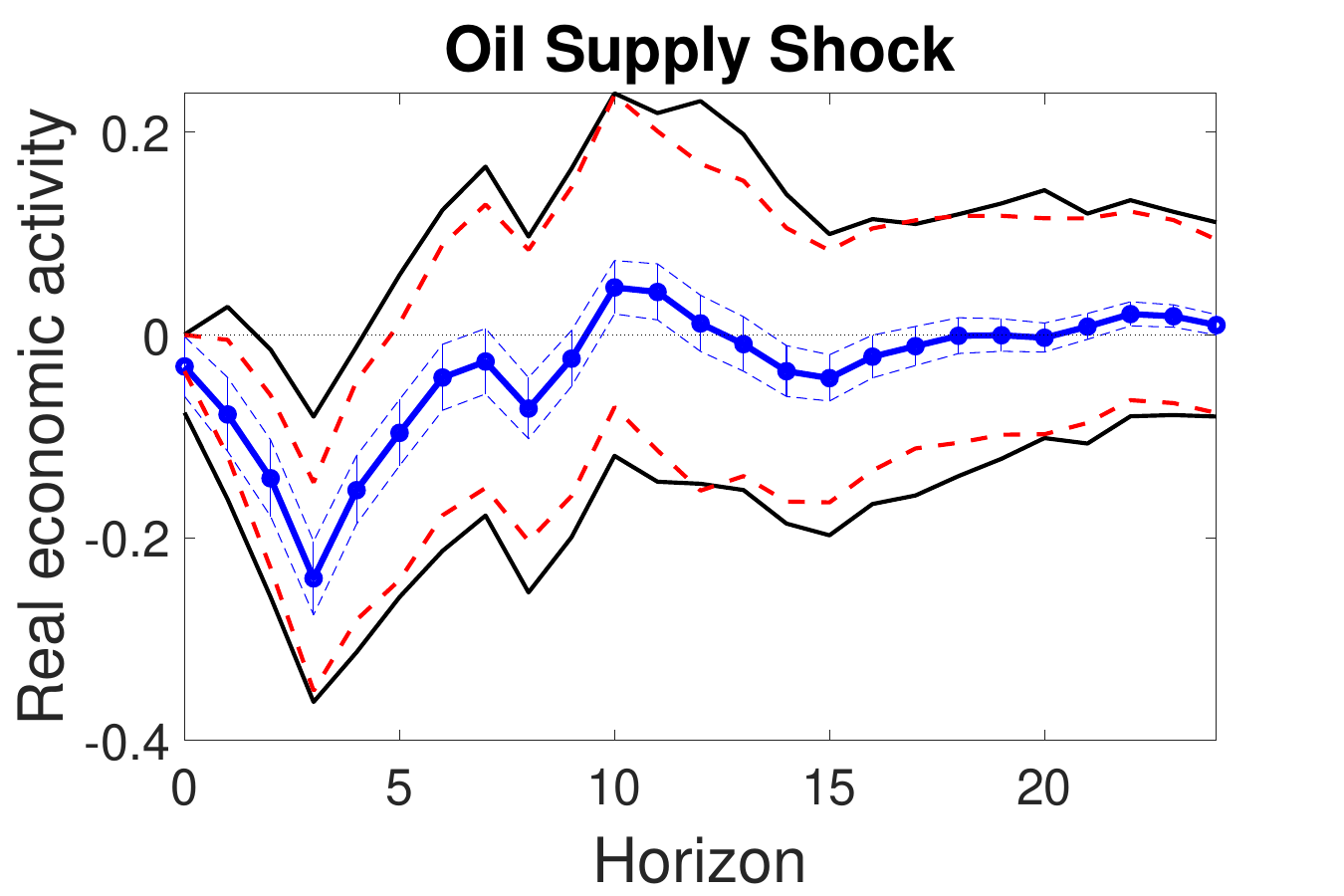}%
  \includegraphics[width=.33\linewidth]{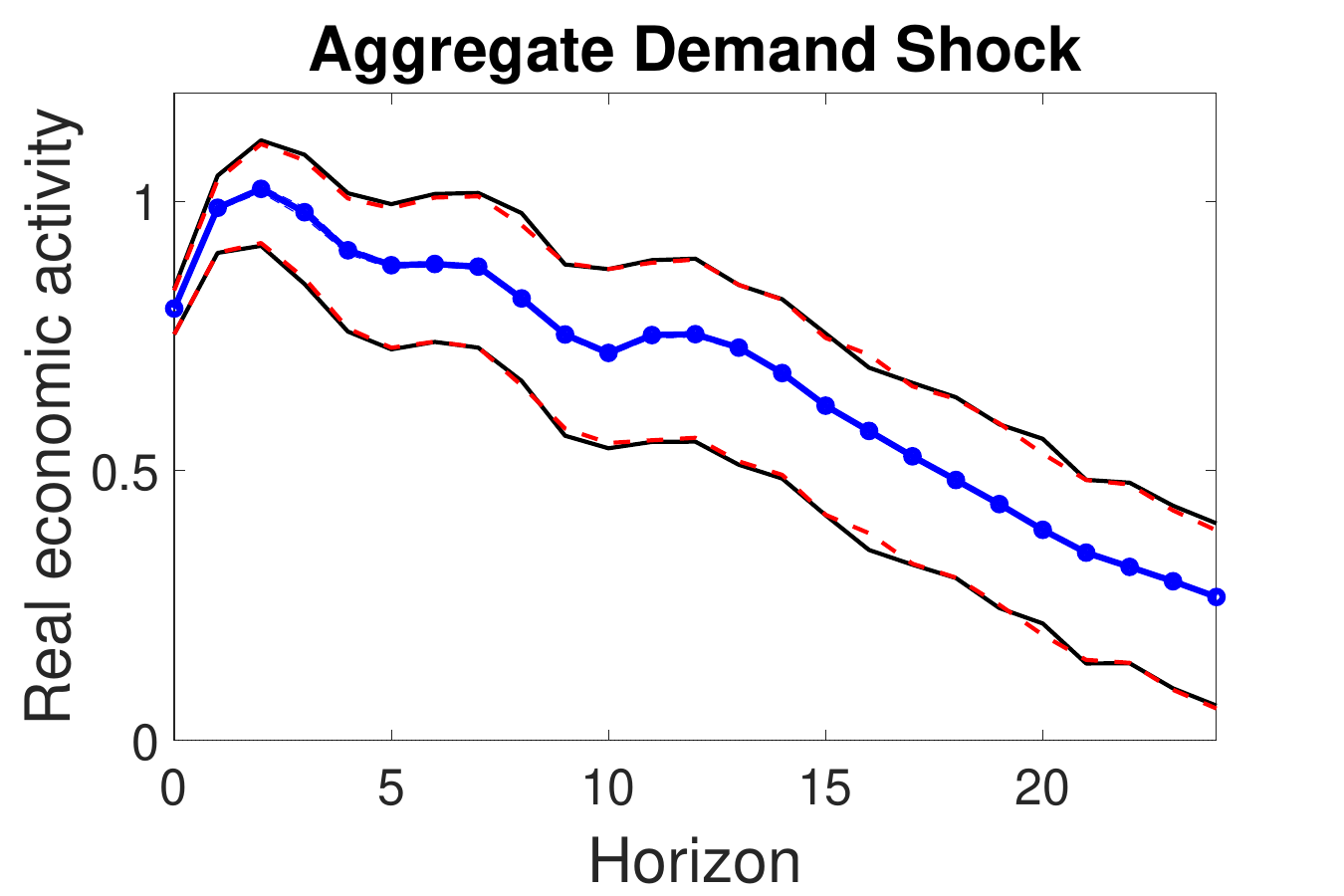}%
  \includegraphics[width=.33\linewidth]{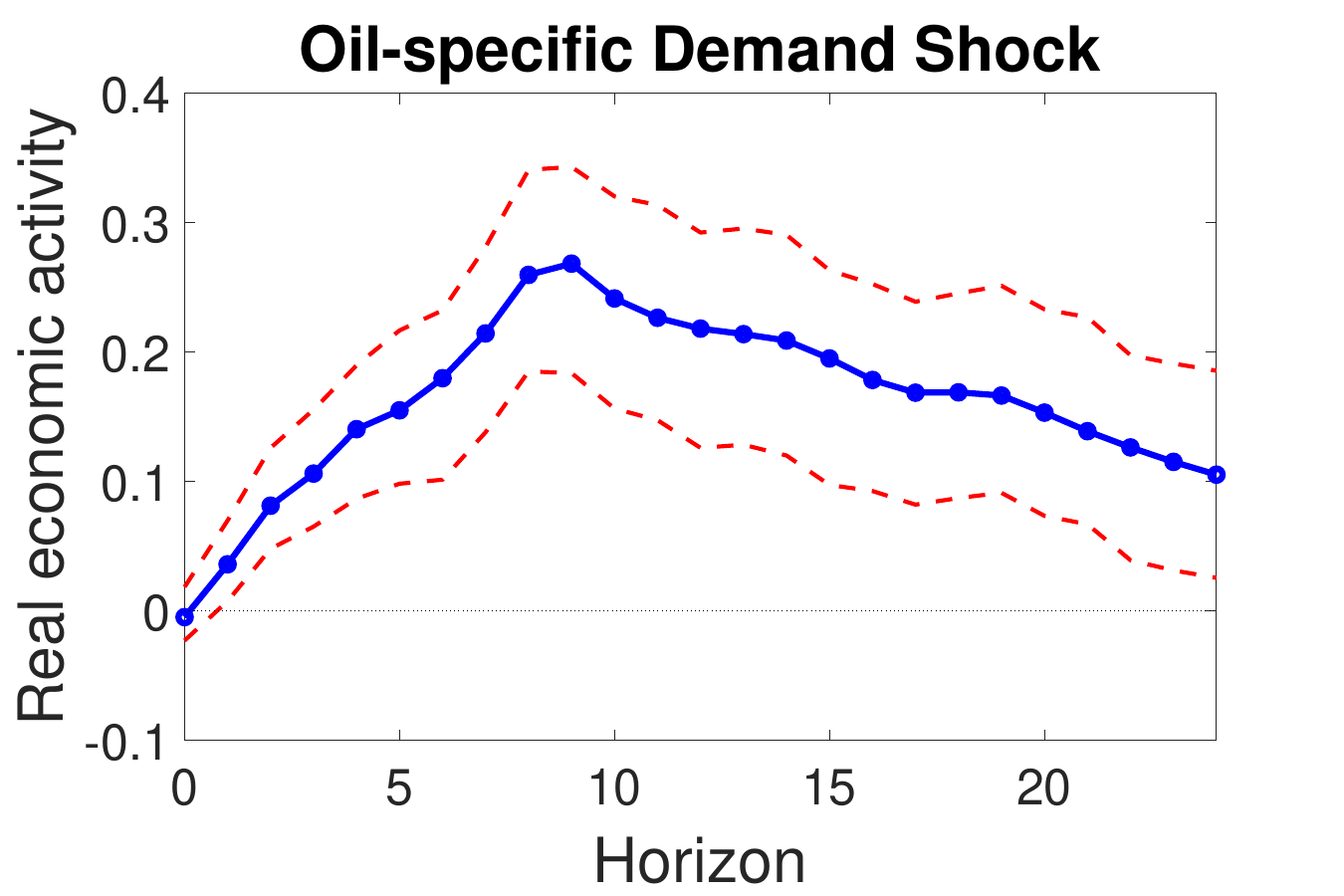}\\
  \includegraphics[width=.33\linewidth]{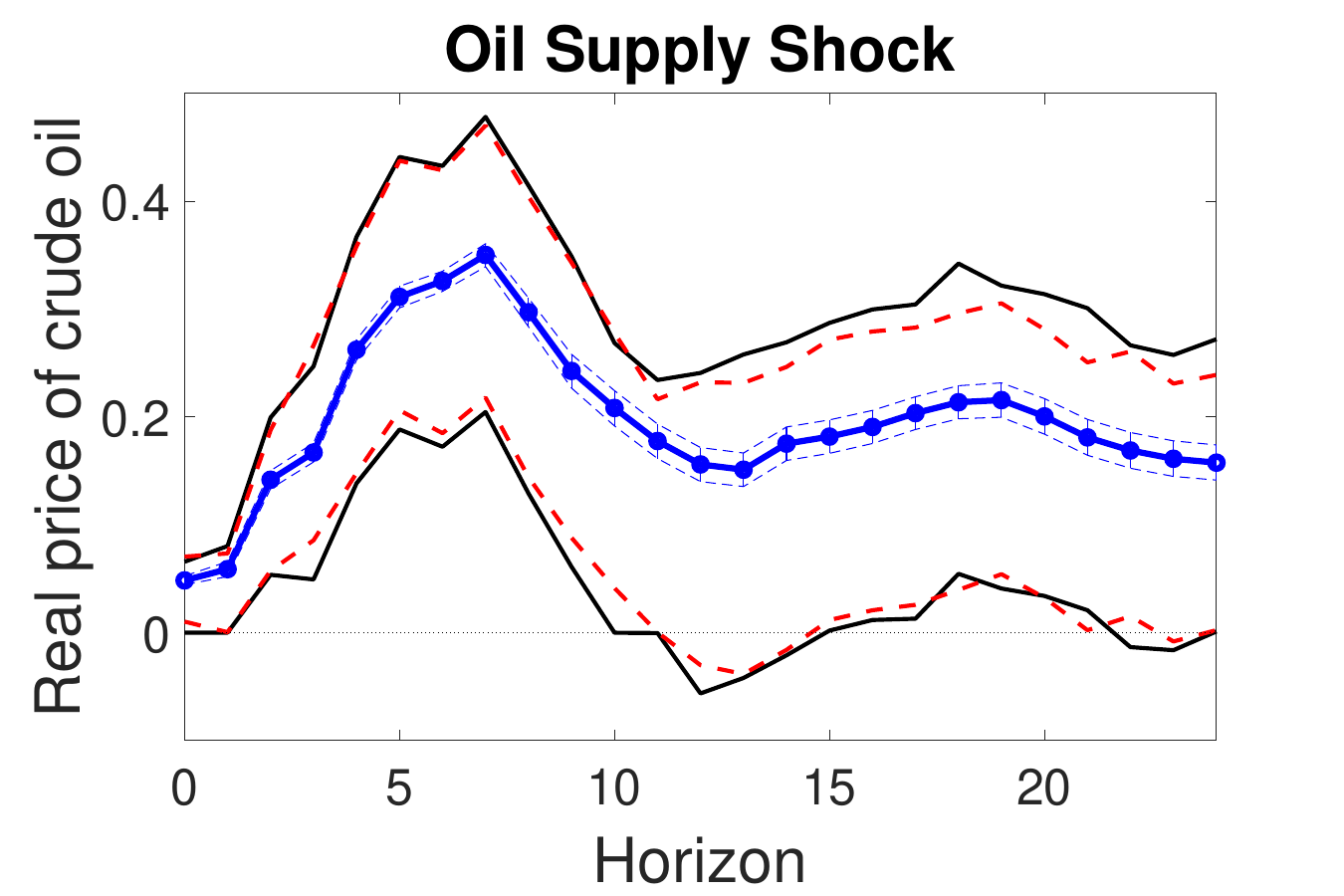}%
  \includegraphics[width=.33\linewidth]{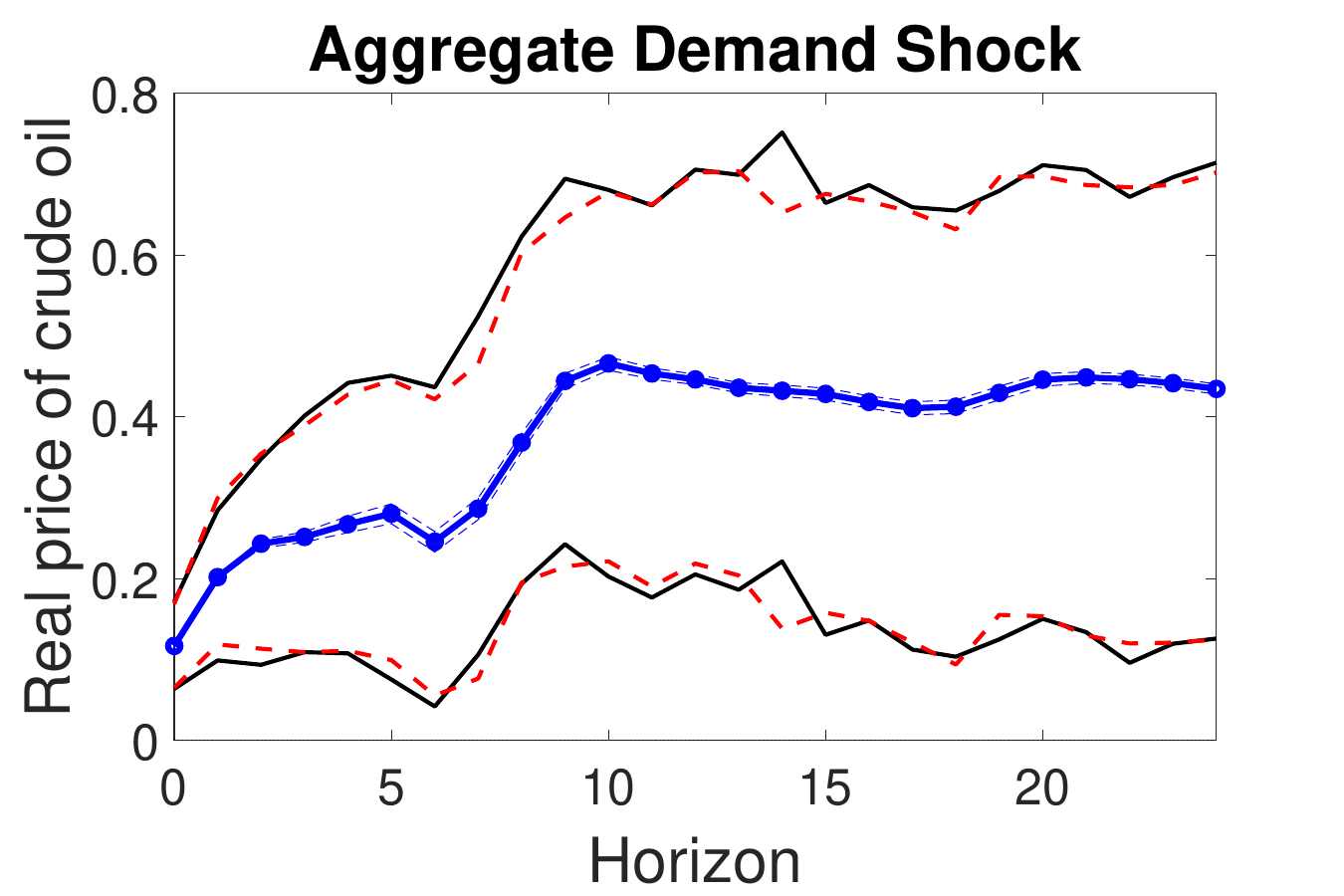}%
  \includegraphics[width=.33\linewidth]{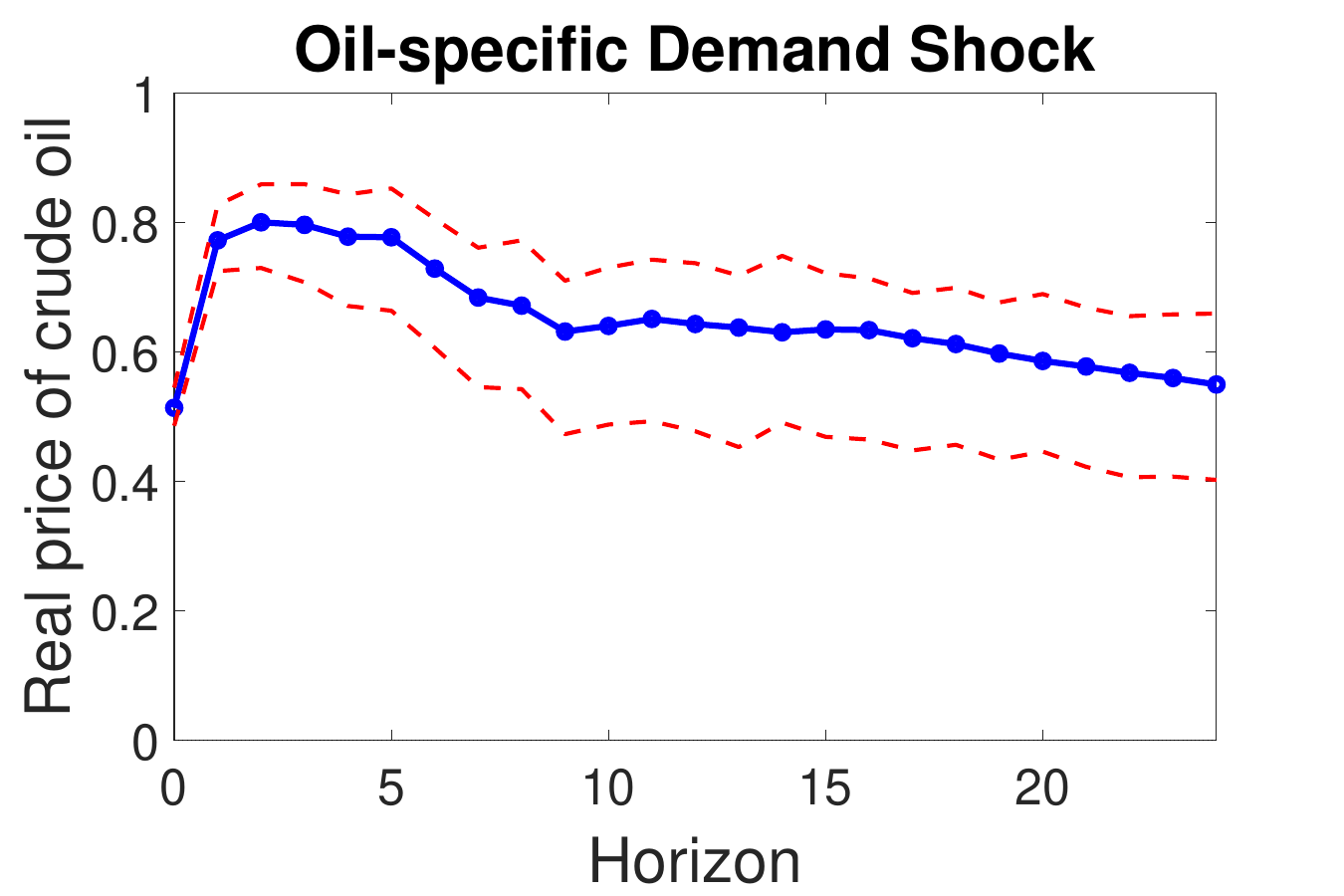}
\caption*{\scriptsize\textit{Notes}: the blue line with dots represents the standard Bayesian posterior mean response, the dashed red lines identify upper and lower bounds of the highest posterior density region with credibility 68\%. Plots in first and second columns of the figure also report the set of posterior means (blue vertical bars) and the bounds of the robust credible region with credibility 68\% (solid black curves). Identification via heteroskedasticity with multiple eigenvalues (i.e. only one shock is point identified), static and dynamic sign restrictions.}
\label{fig:IRF_M2_DynSign}
\end{figure}

\end{appendices}

\end{document}